\newtheorem{definition}{Definition}
\newtheorem{remark}{Remark}
\newtheorem{lemma}{Lemma}
\newcommand{\norm}[1]{\left\lVert#1\right\rVert}
\begin{document}
%
% paper title
% Titles are generally capitalized except for words such as a, an, and, as,
% at, but, by, for, in, nor, of, on, or, the, to and up, which are usually
% not capitalized unless they are the first or last word of the title.
% Linebreaks \\ can be used within to get better formatting as desired.
% Do not put math or special symbols in the title.
%\title{Massive MIMO Enabled Non-Coherent Modulation With One-bit ADCs}

%\title{Joint Positioning and Communication with Reconfigurable Intelligent Surfaces and Time-Resistant   }

\title{Joint $9$D Receiver Localization and Ephemeris Correction using LEO and $5$G Base Stations}
\author{Don-Roberts~Emenonye, Wasif J. Hussain,
        Harpreet~S.~Dhillon,
        and~R.~Michael~Buehrer% <-this % stops a space
\thanks{D.-R. Emenonye, Wasif. J. Hussain, H. S. Dhillon, and R. M.  Buehrer are with Wireless@VT,  Bradley Department of Electrical and Computer Engineering, Virginia Tech,  Blacksburg,
VA, 24061, USA. Email: \{donroberts, wasif, hdhillon, rbuehrer\}@vt.edu. The support of the US National Science Foundation (Grants ECCS-2030215, CNS-1923807, and CNS-2107276) is gratefully acknowledged. Part of this work has been accepted to IEEE MILCOM 2024, Washington, DC, USA \cite{emenonye2023_MILCOM_conf_9D_localization,emenonye2023_MILCOM_conf_9D_localization_1}.
}
}

\maketitle
\IEEEpeerreviewmaketitle

\begin{abstract}
This paper leverages Fisher information to examine the interaction between low-Earth orbit (LEO) satellites and 5G base stations (BSs) in enabling 9D receiver localization and refining LEO ephemeris. First, we propose a channel model that incorporates all relevant links: LEO-receiver, LEO-BS, and BS-receiver. Then, we utilize the Fisher information matrix (FIM) to quantify the information available about the channel parameters in these links. By transforming these FIMs, we derive the FIM for 9D receiver localization parameters—comprising 3D position, 3D orientation, and 3D velocity—along with LEO position and velocity offsets. We present closed-form expressions for the FIM entries corresponding to these localization parameters. Our identifiability analysis based on the FIM reveals that:
i) With a single LEO, three BSs, and three time slots are required to estimate the 9D localization parameters and correct the LEO position and velocity.
ii) With two LEOs, the same configuration (three BSs and three time slots) suffices for both tasks.
iii) With three LEOs, three BSs and four time slots are needed to achieve the same goal. A key insight from the Cramér-Rao lower bound (CRLB) analysis is that, under the configuration of one LEO, three BSs, and three time slots, the estimated errors for receiver positioning, velocity, and orientation, as well as LEO position and velocity offsets, are $0.1 \text{ cm}$, $1 \text{ mm/s}$, $10^{-3} \text{ rad}$, $0.01 \text{ m}$, and $1 \text{ m/s}$, respectively. The receiver localization parameters are estimated after $1 \text{ s}$ while the LEO offset parameters are estimated after $20 \text{ s}.$ Additionally, our CRLB analysis indicates that operating frequency has minimal impact on receiver orientation estimation accuracy, and the number of receive antennas has a negligible effect on LEO velocity estimation accuracy.

\begin{IEEEkeywords}
6G, LEO, $9$D localization, FIM, $3$D position, $3$D velocity, and $3$D orientation, theoretical LEO ephemeris correction.
\end{IEEEkeywords}

%Large antenna arrays and reconfigurable intelligent surfaces (RIS) have been made available due to the use of higher frequency bands, and there is the possibility that these arrays can become disturbed. Hence, their orientation could change after deployment. Since low earth orbits (LEO) are being proposed to provide position, navigation, and timing services, and LEOs from different constellations could be unsynchronized in time and frequency and experience a high Doppler rate. We ask,  ``can unsynchronized LEOs provide $3$D orientation for a ground receiver?" To answer this question, we introduce the Fisher information matrix (FIM) and use the FIM to quantify the available information needed for $3$D orientation estimation utilizing signals received from LEOs during multiple transmission time slots across multiple receive antennas. We observe by analyzing the positive definitiveness of the FIM for the $3$D orientation that irrespective of the presence or absence of both time and frequency offsets, the $3$D orientation of the receiver can be estimated through the multiple TOA measurements received across the receive antennas from two LEO satellites during a single transmission time slot. We also observe by analyzing the positive definitiveness of the FIM for the $3$D orientation that irrespective of the presence or absence of both time and frequency offsets, the $3$D orientation of the receiver can be estimated through the multiple TOA measurements received across the receive antennas during two transmission time slots from a single LEO satellite.

\end{abstract}

\section{Introduction}

The challenge of providing quality global connectivity has led to the deployment of additional satellites in existing low-earth orbit (LEO) constellations and the creation of new constellations. This can be seen in older constellations such as Orbcomm, Iridium, and Globalstar, as well as in newly deployed constellations such as Boeing, SpaceMobile, OneWeb, Telesat, Kuiper, and Starlink. Since these constellations will be closer to the earth than medium earth orbit (MEO) satellites, they will encounter a shorter propagation time and lower propagation losses than the current Global Navigation Satellite System (GNSS); their utility for positioning is an area of natural interest. Moreover, these LEOs could be used as a backup during the inevitable scenarios where GNSS becomes unavailable, such as in deep urban canyons, under dense foliage, during unintentional interference, and intentional jamming.  Because of these reasons, there has been a surge in research on the utility of LEOs for localization.

Current research ranges from opportunistic approaches, where the signal structure is unknown or partially known, to dedicated approaches, where the structure is fully known. The current state of the art has failed to rigorously characterize the interplay between the information from the LEOs and terrestrial transceivers, such as $5$G base station (BSs) for both $9$D localization and LEO ephemeris correction.  Hence, in this paper, we utilize the Fisher information matrix (FIM) to rigorously present the available information in the LEO-receiver link, LEO-BS link, and BS-receiver and the utility of this information for joint $9$D receiver localization and LEO ephemeris correction, which leads to critical insights into the interplay between number of LEOs, BSs, operating frequency, number of transmission time slots, the combination of synchronized BSs and unsynchronized LEOs, and the number of receive antennas.

\subsection{Prior Art}
The following three research directions are of interest to this paper: i) LEO-based localization, ii) Localization using large antenna arrays, and iii) Opportunistic localization using $5$G BSs. We now summarize the relevant works in these directions.
\subsubsection{LEO-based localization}
Research on LEO-based localization varies from the dedicated signals \cite{Fundamentals_of_LEO_Based_Localization,Fundamental_Performance_Bounds_for_Carrier_Phase_Positioning_LEO_PNT,Broadband_LEO_Constellations_for_Navigation,Economical_Fused_LEO_GNSS,Empowering_the_Tracking_Performance_of_LEOBased_Positioning_by_Means_of_Meta_Signals,Performance_Analysis_of_a_Multi_Slope_Chirp_Spread_Spectrum,Integrated_Communications_and_Localization_for_Massive_MIMO_LEO_Satellite} to the opportunistic signals \cite{Psiaki2020NavigationUC,Kassas2019NewAgeSN,Navigation_With_Differential_Carrier_Phase_Measurements_From_Megaconstellation_LEO_Satellites,A_Hybrid_Analytical_Machine_Learning_Approach_for_LEO_Satellite_Orbit_Prediction,Doppler_effect_Downlink_Receivers_OFDM_Low_earth_orbit_satellites_Bandwidth_Synchronization_Doppler_positioning_low_Earth_orbit,Ad_Astra_STAN_With_Megaconstellation_LEO_Satellites,A_Hybrid_Analytical_Machine_Learning_Approach_for_LEO_Satellite_Orbit_Prediction_1,Assessing_Machine_Learning_for_LEO_Satellite_Orbit_Determination_in_Simultaneous_Tracking_and_Navigation,Cognitive_Navigation_With_Unknown_OFDM_signals_With_Application_Terrestrial_5G_Starlink,Observability_Analysis_of_Receiver_Localization_Pseudorange,Positioning_with_Starlink_LEO_Satellites_A_Blind_Doppler_Spectral_Approach,Receiver_Design_for_Doppler_Positioning_with_Leo_Satellites,Unveiling_Starlink_LEO_Satellite_OFDM_Like_Signal_Structure_Enabling_Precise_Positioning,dureppagari2023ntn,dureppagari2024leo}. On the opportunistic end, the signal structure (length, values, and periodicity) is completely unknown, while on the dedicated end, the signal structure is known. In \cite{Fundamentals_of_LEO_Based_Localization}, a FIM based rigorous investigation of the utility of LEOs for $9$D localization is presented where it is shown that obtaining delay and Doppler measurements from three satellites over three times slots using multiple receive antenna enables $9$D localization. In \cite{Fundamental_Performance_Bounds_for_Carrier_Phase_Positioning_LEO_PNT}, the signal structure is assumed to be known, and delay measurements are used to localize a receiver. The authors in \cite{Broadband_LEO_Constellations_for_Navigation} investigate using satellites deployed to provide broadband Internet connectivity to assist localization. The proposed framework in \cite{Broadband_LEO_Constellations_for_Navigation} uses delay measurements and describe the positioning errors as a function of the geometric dilution of precision (GDOP) to provide a benchmark. In \cite{Empowering_the_Tracking_Performance_of_LEOBased_Positioning_by_Means_of_Meta_Signals},  Doppler measurements obtained from Amazon Kuiper Satellites are used for receiver positioning. In \cite{Integrated_Communications_and_Localization_for_Massive_MIMO_LEO_Satellite}, LEOs met integrated sensing and localization, and the positioning information obtained from the LEOs is used to improve the transmission rate. The authors in \cite{Psiaki2020NavigationUC} utilize eight Doppler measurements to estimate the $3$D position, $3$D velocity, clock rate, and clock offset. In \cite{Kassas2019NewAgeSN}, an opportunistic experimental framework is developed to estimate position, clock, and correct LEO ephemeris. An unmanned
aerial vehicle (UAV) is tracked for two minutes using the received signals from two Orbcomm satellites in \cite{Navigation_With_Differential_Carrier_Phase_Measurements_From_Megaconstellation_LEO_Satellites}. In \cite{A_Hybrid_Analytical_Machine_Learning_Approach_for_LEO_Satellite_Orbit_Prediction}, a machine learning framework is developed for LEO ephemeris correction as well as receiver positioning utilizing Doppler measurements from two Orbcomm satellites. A detection algorithm is built, and six Starlink satellites are detected and used to localize a
ground receiver to an error of $20 \text{ m}.$ In \cite{Ad_Astra_STAN_With_Megaconstellation_LEO_Satellites}, a framework is developed that integrates IMUs, delay, and Doppler measurements. A receiver positioning error of $27.1 \text{ m}$ and $18.4 \text{ m}$ is achieved with two Orbcomm, one Iridium and three Starlink satellites, respectively. The authors in \cite{A_Hybrid_Analytical_Machine_Learning_Approach_for_LEO_Satellite_Orbit_Prediction_1} develop a machine-learning framework that localizes itself through Doppler measurements from a single satellite over multiple time slots. In \cite{Assessing_Machine_Learning_for_LEO_Satellite_Orbit_Determination_in_Simultaneous_Tracking_and_Navigation}, an opportunistic framework is developed to utilize Doppler for joint receiver positioning and LEO ephemeris correction. The authors develop a Doppler-based framework in \cite{Cognitive_Navigation_With_Unknown_OFDM_signals_With_Application_Terrestrial_5G_Starlink} that jointly use $5$G BSs and LEOs to localize a receiver. In \cite{Observability_Analysis_of_Receiver_Localization_Pseudorange}, a single Orbocmm satellite with a known orbit is used to localize a receiver. The authors in \cite{Positioning_with_Starlink_LEO_Satellites_A_Blind_Doppler_Spectral_Approach} characterize the received signal, propose a Doppler discriminator, and use the Dopplers to position a receiver with an error of $\text{4.3 m}.$ The received signal characterization in that work accounts for the extremely dynamic nature of the channel due to the speed of the satellites. In \cite{Receiver_Design_for_Doppler_Positioning_with_Leo_Satellites}, the LEO ephemeris is assumed to be perfectly known, and then Doppler measurements from two Orbcomm satellites are used to position a receiver to an error of $11 \text{ m}.$ The authors in \cite{Unveiling_Starlink_LEO_Satellite_OFDM_Like_Signal_Structure_Enabling_Precise_Positioning} investigate the signal structure of the satellites and discover that some use tones while others use Orthogonal Frequency Division Multiplexing (OFDM). Subsequently, the signals are used to position a receiver to an error of $6.5 \text{ m}$.
{\em Although there has been a surge in research using LEOs for localization, the interplay of LEO signals and signals from $5$G BSs has yet to be studied. Hence, in this work, we utilize the FIM to study the information in the LEO-receiver, LEO-BS, and BS-receiver links and their utility for joint $9$D localization and LEO ephemeris correction.}
%Prior research can be divided into: i) delay-based \cite{Fundamental_Performance_Bounds_for_Carrier_Phase_Positioning_LEO_PNT,Broadband_LEO_Constellations_for_Navigation,Economical_Fused_LEO_GNSS,Performance_Analysis_of_a_Multi_Slope_Chirp_Spread_Spectrum,Integrated_Communications_and_Localization_for_Massive_MIMO_LEO_Satellite,Kassas2019NewAgeSN}, Doppler based, and delay and Doppler based, ii)

\subsubsection{Localization using large antenna arrays}
The need for more bandwidth has mandated the usage of higher center frequencies. The corresponding small wavelengths, in turn, have enabled more elements/antennas on the arrays. The resulting large antenna arrays have been investigated for localization purposes and due to the sparsity of the channels in systems with large antenna arrays, the received signal can be parameterized by the angle of departure (AOD), angle of arrival (AOA), and time of arrival (TOA), and localization using this parameterization has been studied in \cite{garcia2017direct,8240645,8515231,8356190,guerra2018single,emenonye2023limits,fascista2021downlink,8755880,li2019massive,9082200}. In \cite{garcia2017direct}, distributed anchors are used to measure the TOAs and AOAs from a single agent. The TOAs restrict the agent's position to a convex set, while subsequently, the AOAs provide the position estimate. The authors in \cite{8240645} provide a seminal contribution to this area, and the FIM for the estimation of AOD, AOA, and TOA are provided. Subsequently, the FIM of channel parameters is transformed into the FIM of the location parameter, and the Cramer Rao lower bound (CRLB) is obtained along with an algorithm that achieves the bound at a high signal-to-noise ratio. The authors in \cite{8515231} utilize the bounds in \cite{8240645} to show that the presence of non-line of sight parameters does not decrease the information available for localization. The authors in \cite{8356190} extend the bounds in \cite{8240645,8515231} from the $2$D case to the $3$D case, after which the FIM is shown to have a definite structure that can be decomposed into the information from the transmitter and the information from the receiver. In \cite{guerra2018single}, the FIM is derived, but for the uplink, and the CRLB is shown to be unique in the limit of the number of BS antennas (this is because all receiver position leads to a unique CRLB). The authors in \cite{emenonye2023limits} show that while in the near-field, the transmitter/receiver $3$D orientation can be estimated, only the transmitter/receiver $2$D orientation can be estimated in the far-field. Also, that paper shows that beamforming is required to estimate the $2$D transmitter orientation in the far-field. The challenging problem of single antenna receiver positioning is tackled with a single observation \cite{fascista2021downlink} and with multiple observations \cite{8755880}. Simultaneous localization and mapping are tackled in \cite{li2019massive} while \cite{9082200} tackles localization under hardware impairments. A different parameterization is presented in \cite{5571889,9606768,7364259}. In \cite{5571889}, a Bayesian approach is developed, incorporating {\em a priori} information about the channel parameters and the locations of the transmitter and receiver. Also, in \cite{5571889}, the non-line of sight component is shown only to contribute when we have {\em a priori} information about them. The work in \cite{5571889} is extended to the cooperative case in\cite{9606768}. The cooperative case is extended to the collaborative case in massive networks \cite{7364259}. Localization with reconfigurable intelligent surfaces is presented in \cite{8264743,9729782,9781656
,9508872,9625826,9500663,9782100,9528041,emenonye2022fundamentals,emenonye2023_ICC_conf_workshop,emenonye2022ris,emenonye2023_ICC_conf, RIS_Aided_Kinematic,OTFS_Enabled_RIS,9774917}. These works can be grouped into i) continuous RIS\cite{8264743,9729782,9781656} and discrete RIS \cite{emenonye2022fundamentals,emenonye2022ris,emenonye2023_ICC_conf,emenonye2023_ICC_conf_workshop, RIS_Aided_Kinematic, OTFS_Enabled_RIS,9508872,9625826,9500663,9782100,9528041,9774917}, and ii) near-field\cite{emenonye2022ris,emenonye2023_ICC_conf, RIS_Aided_Kinematic, OTFS_Enabled_RIS,8264743,9729782,9781656,9508872,9500663,9625826,9774917} and far-field propagation \cite{emenonye2022fundamentals,emenonye2023_ICC_conf_workshop,9782100,9528041}. {\em Although extensive work has been done in these areas, we make the following contributions: i) we present the FIM for the case when a group of anchors is unsynchronized in both time and frequency with the agent while the other group of anchors is synchronized with the agent but unsynchronized with the other group of anchors, and ii) we present the FIM for joint $9$D localization and anchor position and velocity correction.}

\subsubsection{Opportunistic localization using $5$G BSs}
$5$G BSs transmit several signals that have good correlation properties. The transmission of the signals will either be i) always-on or ii) on-demand. Examples of always-on signals are primary and secondary synchronization signals and the physical broadcast channel
block, while examples of on-demand signals are demodulation reference signals, phase tracking reference signals, and sounding reference signals. In \cite{9573365}, a comprehensive receiver is developed that can detect BSs and their associated reference signals (which could be always on or on-demand). In addition, a sequential generalized likelihood ratio detector is used to detect the co-channel BSs. Subsequently, the Dopplers of the BSs are used to define a signal subspace, after which the reference signals are estimated. Finally, a UAV is tracked for over $400 \text{ m}$ with a positioning error of $4.15 \text{ m}.$ In \cite{9369049}, a framework is developed that uses always-on signals for localization. After removing the clock bias from the estimated range measurements, the ranging error standard deviation is calculated as $1.15 \text{ m}$. Finally, authors in \cite{kassas2021carpe}, the always-on signals are obtained, and the observable parameters are extracted using a software-defined radio. A framework utilizing the extended Kalman filter is used to provide the receiver position.

\subsection{Contribution}
\label{Joint_9D_Receiver_Localization_and_Ephemeris_Correction_with_LEO_and_5G_Base_Stations_Contribution}
This paper focuses on the $9$D localization of a receiver and LEO ephemeris correction using the signals in the LEO-receiver, LEO-BSs, and BSs-receiver links. With this setup, our main contributions are:
\subsubsection{Determining the available information about the channel parameters in the LEO-receiver, LEO-BSs, and BSs-receiver links}
We derive the FIM for the channel parameters in these links. To enable these derivations, we develop a channel model that captures: i) the frequency and time offset between a LEO and a receiver, and a LEO and the collection of BSs,
ii) the frequency and time offset between the collection of BSs and the receiver, and iii) the unknown offset in the LEO position and velocity due to outdated LEO ephemeris information.

\subsubsection{Determining the available information for $9$D localization and LEO ephemeris correction}
We transform the FIM for the channel parameters to the FIM for the location parameters ($9$D location parameters and the LEOs' position and velocity). We provide closed-form expressions of all the elements in the FIM for the location parameters. Next, we derive the information loss terms due to i) lack of time and frequency synchronization among the LEOs, ii) lack of time and frequency synchronization between the LEOs and receiver, iii) lack of time and frequency synchronization between the LEOs and BSs, and iv) lack of time and frequency synchronization between the BSs and the receiver. With this information loss term, we compute the equivalent FIM (EFIM), focusing on the parameters of interest. Closed-form expressions of the elements in the EFIM are presented.

\subsubsection{Determining the minimal infrastructure required for $9$D localization and LEO ephemeris correction}

With the EFIM, we perform an identifiability analysis by determining the combination of the number of LEOs, number of BSs, number of receive antennas, and number of transmission time slots that make the EFIM positive definite. Based on the identifiability analyses, we conclude the following: i) with a single LEO, at least three BSs and three time slots are required to estimate the 9D location parameters and correct the LEO’s position and velocity, ii) with two LEOs, a minimum of three BSs and three time slots are necessary to estimate the 9D location parameters and correct the LEO’s position and velocity, and iii) with three LEOs, at least three BSs and four time slots are needed to estimate the 9D location parameters and correct the LEO’s position and velocity. Next, we invert the EFIM and obtain the CRLB. We show that with a single LEO, three time slots, and three BSs, the receiver positioning error, velocity estimation error, orientation error, LEO position offset estimation error, and LEO velocity offset estimation error are $0.1 \text{ cm}$, $1 \text{ mm/s}$, $10^{-3} \text{ rad}$, $0.01 \text{ m}$, and $1 \text{ m/s}$, respectively. The receiver localization parameters are estimated after $1 \text{ s}$ while the LEO offset parameters are estimated after $20 \text{ s}.$ We also notice from the CRLB that the operating frequency and number of receive antennas have negligible impact on the estimation accuracy of the orientation of the receiver and the LEO velocity, respectively.

\textit{Notation:}
 The function $
 \bm{F}_{\bm{v}}(\bm{w} ; \bm{x}, \bm{y}) \triangleq \mathbb{E}_{\bm{v}}\left\{\left[\nabla_{ \bm{x}} \ln f(\bm{w})\right]\left[\nabla_{\bm{y}} \ln f(\bm{w})\right]^{\mathrm{T}}\right\}
 $. $ \bm{G}_{\bm{v}}(\bm{w} ; \bm{x}, \bm{y})$
 describes the loss of information in the FIM defined by $
 \bm{F}_{\bm{v}}(\bm{w} ; \bm{x}, \bm{y})$ due to uncertainty in the nuisance parameters. The inner product of $\bm{x}$ is $\norm{\bm{x}}^2$ and the outer product of $\bm{x}$ is $\norm{\bm{x}^{\mathrm{T}}}^2$. $\nabla_{x} y$ is the first derivative of $y$ with respect to $x$. 
\section{System Model}
We consider $N_B$ single antenna LEO satellites, $N_Q$ single antenna BSs, and a receiver with $N_U$ antennas. The $N_B$ LEOs communicate with the $N_Q$ BSs and the receiver over $N_K$ transmission time slots. Similarly, the $N_Q$ BSs communicate over $N_K$ transmission time slots with the receiver. There is a $\Delta_{t}$ spacing between the $N_K$ transmission slots. The LEOs are located at $\bm{p}_{b,k} \; \; b \in \{1,2,\cdots,N_{B}\} \text{ and} \; \; k \in \{1,2,\cdots,N_{K}\}$ while the BSs are located at $\bm{p}_{q,k} \;  q \in \{1,2,\cdots,N_{Q}\} \text{ and} \;  k \in \{1,2,\cdots,N_{K}\}$ as shown in Fig. \ref{System_model_1}. Finally, the receiver is located at $\bm{p}_{U,k} \; \; k \in \{1,2,\cdots,N_{K}\}.$ 
\begin{figure}
\centering
    \fbox{\includegraphics[clip, trim=9.5cm 7.6cm 10.5cm 2.6cm,scale=0.9]{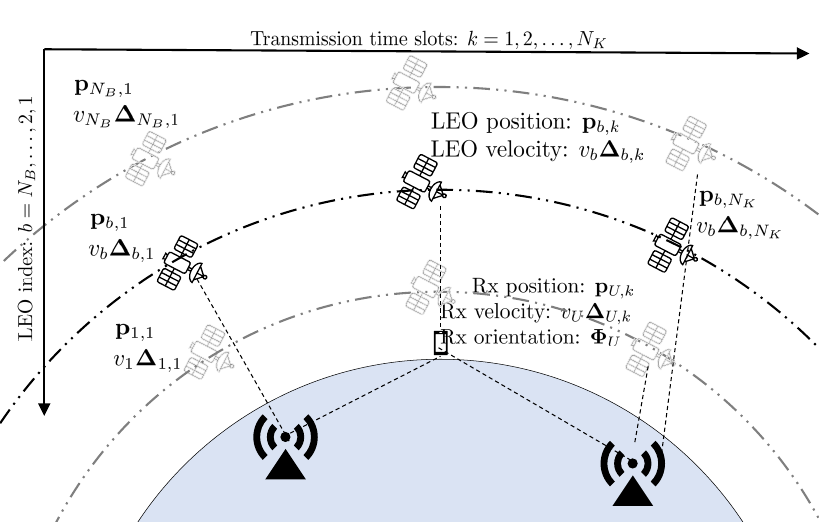}}
    \caption{Joint $9$D Receiver Localization and Ephemeris Correction with $N_B$ LEO and $N_Q$ $5$G Base Stations transmitting during $N_K$ transmission time slots to a receiver with $N_U$ antennas.}
    \label{System_model_1}
\end{figure}

The LEOs, BSs, and receiver positions are defined with respect to a global origin and a global reference axis. The position of the antennas on the receiver can be defined with respect to the receiver's centroid as $\bm{s}_{u} = \bm{Q}_{U}\Tilde{\bm{s}}_{u}$. It aligns with the global reference axis, with $\bm{Q}_{U} = \bm{Q}\left(\alpha_{U}, \psi_{U}, \varphi_{U}\right)$ serving as the 3D rotation matrix \cite{lavalle2006planning}. The orientation angles of the receiver are vectorized as $\bm{\Phi}_{U} = \left[\alpha_{U}, \psi_{U}, \varphi_{U}\right]^{\mathrm{T}}$. The point $\bm{s}_{u}$ can be described with respect to the global origin as $\bm{p}_{u,k} = \bm{p}_{U,k} + \bm{s}_{u}.$ The receiver's centroid can be described with respect to the position of the $b^{\text{th}}$ LEO satellite as $\bm{p}_{U,k} = \bm{p}_{b,k} + d_{bU,k}\bm{\Delta}_{bU,k}$, where  $d_{bU,k}$ is the distance from point $\bm{p}_{b,k}$ to point $\bm{p}_{U,k}$ and $\bm{\Delta}_{bU,k}$ is the appropriate unit direction vector $\bm{\Delta}_{bU,k} = [\cos \phi_{bU,k} \sin \theta_{bU,k}, \sin \phi_{bU,k} \sin \theta_{bU,k}, \cos \theta_{bU,k}]^{\mathrm{T}}$. The receiver's $u^{\text{th}}$ antenna can be described with respect to the position of the $b^{\text{th}}$ LEO satellite as $\bm{p}_{u,k} = \bm{p}_{b,k} + d_{bu,k}\bm{\Delta}_{bu,k}$ where  $d_{bu,k}$ is the distance from point $\bm{p}_{b,k}$ to point $\bm{p}_{u,k}$ and $\bm{\Delta}_{bu,k}$ is the appropriate unit direction vector $\bm{\Delta}_{bu,k} = [\cos \phi_{bu,k} \sin \theta_{bu,k}, \sin \phi_{bu,k} \sin \theta_{bu,k}, \cos \theta_{bu,k}]^{\mathrm{T}}$. The receiver's centroid can be described with respect to the position of the $q^{\text{th}}$ BS as $\bm{p}_{U,k} = \bm{p}_{q,k} + d_{qU,k}\bm{\Delta}_{qU,k}$, where  $d_{qU,k}$ is the distance from point $\bm{p}_{q,k}$ to point $\bm{p}_{U,k}$ and $\bm{\Delta}_{qU,k}$ is the appropriate unit direction vector $\bm{\Delta}_{qU,k} = [\cos \phi_{qU,k} \sin \theta_{qU,k}, \sin \phi_{qU,k} \sin \theta_{qU,k}, \cos \theta_{qU,k}]^{\mathrm{T}}$. The receiver's $u^{\text{th}}$ antenna can be described with respect to the position of the $q^{\text{th}}$ BS as $\bm{p}_{u,k} = \bm{p}_{q,k} + d_{qu,k}\bm{\Delta}_{qu,k}$, where  $d_{qu,k}$ is the distance from point $\bm{p}_{q,k}$ to point $\bm{p}_{u,k}$ and $\bm{\Delta}_{qu,k}$ is the appropriate unit direction vector $\bm{\Delta}_{qu,k} = [\cos \phi_{qu,k} \sin \theta_{qu,k}, \sin \phi_{qu,k} \sin \theta_{qu,k}, \cos \theta_{qu,k}]^{\mathrm{T}}$. The $q^{\text{th}}$ BS can be described with respect to the position of the $b^{\text{th}}$ LEO as $\bm{p}_{q,k} = \bm{p}_{b,k} + d_{bq,k}\bm{\Delta}_{bq,k}$, where  $d_{bq,k}$ is the distance from point $\bm{p}_{b,k}$ to point $\bm{p}_{q,k}$ and $\bm{\Delta}_{bq,k}$ is the appropriate unit direction vector $\bm{\Delta}_{bq,k} = [\cos \phi_{bq,k} \sin \theta_{bq,k}, \sin \phi_{bq,k} \sin \theta_{bq,k}, \cos \theta_{bq,k}]^{\mathrm{T}}$. 

\subsection{Transmit and Receive Processing}
There are three links of interest to consider: i) LEO-receiver link, ii) LEO-BS link, and iii) BS-receiver link. At time $t$, during $k^{\text{th}}$ transmission time slot, the $N_B$ LEOs communicate with the $N_Q$ BSs and receiver over $N_K$ transmission time slots using quadrature modulation. The $b^{\text{th}}$ LEO transmits the following symbol
\begin{equation}
    \begin{aligned}
        x_{b,k}[t] = s_{b,k}[t] \operatorname{exp}_{}{(j2 \pi f_c t )},
    \end{aligned}
\end{equation}
where $s_{b,k}[t]$ is the modulation symbol and $f_c = \frac{c}{\lambda}$ is the operating frequency. Here, $c$ is the speed of light, and $\lambda$ is the wavelength. In this work, in the LEO-receiver link, only the line of sight paths are considered, and the useful part of the signal received at time $t$, during $k^{\text{th}}$ transmission time slot on the $u^{\text{th}}$ receive antenna is

\begin{equation}
\begin{aligned}
\label{equ:receive_signal_1}
 \mu_{bu,k}[t] &=  \beta_{bu,k}  \sqrt{2} \Re\left\{s_{b,k}[t_{obu,k}]  \operatorname{exp}(j( 2 \pi f_{obU,k} t_{obu,k}))\right\}. \\
\end{aligned}
\end{equation}
Here, $\beta_{bu,k}$ is the channel gain at the $u^{\text{th}}$ receive antenna during the $k^{\text{th}}$ time slot. The effective time duration from the $b^{\text{th}}$ LEO to the $u^{\text{th}}$ receive antenna is $t_{obu,k} = t - \tau_{bu,k} + \delta_{bU}$ and the effective frequency observed at the receiver from the $b^{\text{th}}$ LEO is $f_{obU,k} = f_c (1 - \nu_{bU,k}) + \epsilon_{bU}$. During the $k^{\text{th}}$ time slot, the delay from the $b^{\text{th}}$ LEO to the $u^{\text{th}}$ receive antenna is
$$
\tau_{bu,k} \triangleq  \frac{\left\|\mathbf{p}_{u,k}- (\mathbf{p}_{b,k} + \check{\mathbf{p}}_{b,k})\right\|}{c}.
$$
The time offset and frequency offset of the $b^{\text{th}}$ LEO satellite with respect to the receiver is $\delta_{bU}$ and $\epsilon_{bU}$, respectively. The point, $\check{\mathbf{p}}_{b,k}$ describes the uncertainty associated with the position of the $b^{\text{th}}$ LEO during the $k^{\text{th}}$ time slot. The Doppler observed at the receiver with respect to the $b^{\text{th}}$ LEO satellite is
$$
\nu_{bU,k} = \bm{\Delta}_{bU,k}^{\mathrm{T}} \frac{(\bm{v}_{b,k} + \check{\bm{v}}_{b,k} - \bm{v}_{U,k})}{c}.
$$
Here, $\bm{v}_{b,k} = v_{b} \bm{\Delta}_{b,k}$ and $\bm{v}_{U,k}  = v_{U} \bm{\Delta}_{U,k}$ are the velocities of the $b^{\text{th}}$ LEO and receiver, respectively.   The speeds of the $b^{\text{th}}$ LEO satellite and receiver are $v_{b}$ and $v_{U}$, respectively. The associated directions are defined as $\bm{\Delta}_{b,k} = [\cos \phi_{b,k} \sin \theta_{b,k}, \sin \phi_{b,k} \sin \theta_{b,k}, \cos \theta_{b,k}]^{\mathrm{T}}$ and $\bm{\Delta}_{U,k} = [\cos \phi_{U,k} \sin \theta_{U,k}, \sin \phi_{U,k} \sin \theta_{U,k}, \cos \theta_{U,k}]^{\mathrm{T}}$, respectively. Here, $\check{\bm{v}}_{b,k}$ is the uncertainty related to the velocity of the $b^{\text{th}}$ LEO during the $k^{\text{th}}$ time slot.

In the LEO-BS link, only the line of sight paths are considered, and the useful part of the signal received at time $t$, during $k^{\text{th}}$ transmission time slot on the $q^{\text{th}}$ BS is

\begin{equation}
\begin{aligned}
\label{equ:receive_signal_2}
 \mu_{bq,k}[t] &=  \beta_{bq,k}  \sqrt{2} \Re\left\{s_{b,k}[t_{obq,k}]  \operatorname{exp}(j( 2 \pi f_{obq,k} t_{obq,k}))\right\}. \\
\end{aligned}
\end{equation}
Here, $\beta_{bq,k}$ is the channel gain at the $q^{\text{th}}$ BS during the $k^{\text{th}}$ time slot. The effective time duration from the $b^{\text{th}}$ LEO to the $q^{\text{th}}$ BS is $t_{obq,k} = t - \tau_{bq,k} + \delta_{bQ}$ and the effective frequency observed at the $q^{\text{th}}$ BS from the $b^{\text{th}}$ LEO is $f_{obq,k} = f_c (1 - \nu_{bq,k}) + \epsilon_{bQ}$. During the $k^{\text{th}}$ time slot, the delay from the $b^{\text{th}}$ LEO to the $q^{\text{th}}$ BS is
$$
\tau_{bq,k} \triangleq  \frac{\left\|\mathbf{p}_{q,k}- (\mathbf{p}_{b,k} + \check{\mathbf{p}}_{b,k})\right\|}{c}.
$$
The time offset and frequency offset of the $b^{\text{th}}$ LEO satellite with respect to the $q^{\text{th}}$ BS is $\delta_{bQ}$ and $\epsilon_{bQ}$, respectively. The Doppler observed at the $q^{\text{th}}$ BS with respect to the $b^{\text{th}}$ LEO satellite is
$$
\nu_{bq,k} = \bm{\Delta}_{bq,k}^{\mathrm{T}} \frac{(\bm{v}_{b,k} + \check{\bm{v}}_{b,k} )}{c}.
$$

In the BS-receiver link, only the line of sight paths are considered, and the useful part of the signal received at time $t$, during $k^{\text{th}}$ transmission time slot at the receiver is
\begin{equation}
\begin{aligned}
\label{equ:receive_signal_3}
 \mu_{qu,k}[t] &=  \beta_{qu,k}  \sqrt{2} \Re\left\{s_{q,k}[t_{oqu,k}]  \operatorname{exp}(j( 2 \pi f_{oqU,k} t_{oqu,k}))\right\}. \\
\end{aligned}
\end{equation}
Here, $\beta_{qu,k}$ is the channel gain at the $u^{\text{th}}$ antenna on the receiver from the $q^{\text{th}}$ BS during the $k^{\text{th}}$ time slot. The effective time duration from the $q^{\text{th}}$ BS to the $u^{\text{th}}$ antenna on the receiver is $t_{oqu,k} = t - \tau_{qu,k} + \delta_{QU}$ and the effective frequency observed at the $u^{\text{th}}$ receive antenna from the $q^{\text{th}}$ BS is $f_{oqU,k} = f_c (1 - \nu_{qU,k}) + \epsilon_{QU}$. During the $k^{\text{th}}$ time slot, the delay from the $q^{\text{th}}$ BS to the $u^{\text{th}}$ receive antenna is
$$
\tau_{qu,k} \triangleq  \frac{\left\|\mathbf{p}_{u,k}- \mathbf{p}_{q,k} \right\|}{c}.
$$
The time offset and frequency offset of BSs with respect to the receiver are $\delta_{QU}$ and $\epsilon_{QU}$, respectively. The Doppler observed at the receiver with respect to the $q^{\text{th}}$ BS is
$$
\nu_{qU,k} = \bm{\Delta}_{qU,k}^{\mathrm{T}} \frac{(0 - \bm{v}_{U,k} )}{c}.
$$ 
With this formulation, the received signal at the $u^{\text{th}}$ receive antenna during the $k^{th}$ time slot from the LEOs and BSs is
\begin{equation}
\begin{aligned}
\label{equ:receive_signalA}
y_{1u,k}[t] &= \sum_{q}^{N_Q} y_{qu,k}[t] = \sum_{q}^{N_Q} \mu_{qu,k}[t]+ {n}_{u,k}[t],\\
y_{2u,k}[t] &= \sum_{b}^{N_B} y_{bu,k}[t] = \sum_{b}^{N_B} \mu_{bu,k}[t] + {n}_{u,k}[t],
\end{aligned}
\end{equation}
where  ${n}_{u,k}[t] \sim \mathcal{C}\mathcal{N}(0,N_{01})$ is the Fourier transformed thermal noise local to the receiver's antenna array. The received signal at the $q^{\text{th}}$ BS during the $k^{th}$ time slot from the LEOs is
\begin{equation}
\begin{aligned}
\label{equ:receive_signalB}
y_{q,k}[t] &= \sum_{b}^{N_B} y_{bq,k}[t] = \sum_{b}^{N_B} \mu_{bq,k}[t] + {n}_{q,k}[t],
\end{aligned}
\end{equation}
where  ${n}_{q,k}[t] \sim \mathcal{C}\mathcal{N}(0,N_{02})$ is the Fourier transformed thermal noise local to the $q^{\text{th}}$ BS.

%\begin{remark}
 %   We have two cases to consider: i) in the first case, both the signals received at the receiver and BSs are available, i.e., $y_{u,k}[t]$ and $y_{q,k}[t]$ are available, and ii) in the second case, only the signals at the receiver is available, i.e., $y_{u,k}[t]$  is available.
%\end{remark}
\begin{remark}
    The offset $\delta_{bU}$ captures the unknown ionospheric and tropospheric delay concerning the $b^{\text{th}}$ LEO satellite as well as the time offset in the LEO-receiver link. Similarly, the offset $\delta_{bQ}$ captures the unknown ionospheric and tropospheric delay concerning the $b^{\text{th}}$ LEO satellite as well as the time offset in the LEO-BS link.
\end{remark}

\begin{remark}
    The BSs are all synchronized in time and frequency. Hence, a single offset term describes the time offset between the receiver and all the BSs. Also, a single offset term describes the frequency offset between the receiver and all the BSs.
\end{remark}

The position of the $b^{\text{th}}$ LEO satellite and the $u^{\text{th}}$ receive antenna at the $k^{\text{th}}$ time slot is
$$
\begin{aligned}
    \mathbf{p}_{b,k} &= \mathbf{p}_{b,o} + \Tilde{\mathbf{p}}_{b,k}, \\
    \mathbf{p}_{u,k} &= \mathbf{p}_{u,o} + \Tilde{\mathbf{p}}_{U,k},
\end{aligned}
$$
where $\mathbf{p}_{b,o}$ and $\mathbf{p}_{u,o}$ serve as the reference points for the $b^{\text{th}}$ LEO satellite and the $u^{\text{th}}$ receive antenna, respectively. The distances covered by the $b^{\text{th}}$ LEO satellite and the $u^{\text{th}}$ receive antenna are $\Tilde{\mathbf{p}}{b,k}$ and $\Tilde{\mathbf{p}}{u,k}$, respectively. These traveled distances are defined as
$$
\begin{aligned}
\Tilde{\mathbf{p}}_{b,k} &= (k) \Delta_{t} v_{b} \mathbf{\Delta}_{b,k}, \\
    \Tilde{\mathbf{p}}_{U,k} &= (k) \Delta_{t} v_{U} \bm{\Delta}_{U,k}. \\
    \end{aligned}
$$

\subsection{Received Signal Properties}
The properties of the signal received across all $N_K$ antennas from the $N_B$ LEOs are described with the aid of the: i) Fourier transform of the baseband signal (spectral density) that is transmitted by the $b^{\text{th}}$ LEO satellite at time $t$ during the $k^{\text{th}}$ time slot,
$$
S_{b,k}[f] \triangleq \frac{1}{\sqrt{2 \pi}} \int_{-\infty}^{\infty} s_{b,k}[t] \operatorname{exp}_{}{(-j2 \pi f t )} \; \;    {\rm d} t,
$$
and
ii) Fourier transform of the baseband signal (spectral density) that is transmitted by the $q^{\text{th}}$ BS at time $t$ during the $k^{\text{th}}$ time slot,
$$
S_{q,k}[f] \triangleq \frac{1}{\sqrt{2 \pi}} \int_{-\infty}^{\infty} s_{q,k}[t] \operatorname{exp}_{}{(-j2 \pi f t )} \; \;    {\rm d} t.
$$
Some useful properties of the received signals are summarized below.
\subsubsection{Effective Baseband Bandwidth}
This relates to the variance of all the occupied frequencies. From the system definition, we have two effect baseband bandwidths
$$
\alpha_{1b,k} \triangleq\left(\frac{\int_{-\infty}^{\infty} f^2\left|S_{b,k}[f]\right|^2 d f}{\int_{-\infty}^{\infty}\left|S_{b,k}[f]\right|^2 d f}\right)^{\frac{1}{2}},
$$
and
$$
\alpha_{1q,k} \triangleq\left(\frac{\int_{-\infty}^{\infty} f^2\left|S_{q,k}[f]\right|^2 d f}{\int_{-\infty}^{\infty}\left|S_{q,k}[f]\right|^2 d f}\right)^{\frac{1}{2}}.
$$

\subsubsection{Baseband-Carrier Correlation (BCC)}
This property helps to provide a compact representation of the mathematical description of the available information in the received signals $$
\alpha_{2b,k} \triangleq\frac{\int_{-\infty}^{\infty} f\left|S_{b,k}[f]\right|^2 d f}{\left(\int_{-\infty}^{\infty} f^2\left|S_{b,k}[f]\right|^2 d f \right)^{\frac{1}{2}} \left(\int_{-\infty}^{\infty}\left|S_{b,k}[f]\right|^2 d f\right)^{\frac{1}{2}}},
$$
and
$$\alpha_{2q,k} \triangleq\frac{\int_{-\infty}^{\infty} f\left|S_{q,k}[f]\right|^2 d f}{\left(\int_{-\infty}^{\infty} f^2\left|S_{q,k}[f]\right|^2 d f \right)^{\frac{1}{2}} \left(\int_{-\infty}^{\infty}\left|S_{q,k}[f]\right|^2 d f\right)^{\frac{1}{2}}}.
$$

\subsubsection{Root Mean Squared Time Duration}
The root mean squared time duration from the $b^{\text{th}}$ LEO satellite to the $u^{\text{th}}$ receive antenna during the $k^{\text{th}}$ time slot is 
$$
\alpha_{obu,k} \triangleq\left(\frac{\int_{-\infty}^{\infty}  2 t_{obu,k}^{2} \left|s(t_{obu,k})\right|^2  \; dt_{obu,k}}{\int_{-\infty}^{\infty}  \left|s(t_{obu,k})\right|^2  \; dt_{obu,k}}\right)^{\frac{1}{2}}.
$$
The root mean squared time duration from the $b^{\text{th}}$ LEO satellite to the $q^{\text{th}}$ BS during the $k^{\text{th}}$ time slot is 
$$
\alpha_{obq,k} \triangleq\left(\frac{\int_{-\infty}^{\infty}  2 t_{obq,k}^{2} \left|s(t_{obq,k})\right|^2  \; dt_{obq,k}}{\int_{-\infty}^{\infty}  \left|s(t_{obq,k})\right|^2  \; dt_{obq,k}}\right)^{\frac{1}{2}}.
$$
The root mean squared time duration from the $q^{\text{th}}$ BS to the $u^{\text{th}}$ receive antenna during the $k^{\text{th}}$ time slot is 
$$
\alpha_{oqu,k} \triangleq\left(\frac{\int_{-\infty}^{\infty}  2 t_{oqu,k}^{2} \left|s(t_{oqu,k})\right|^2  \; dt_{oqu,k}}{\int_{-\infty}^{\infty}  \left|s(t_{oqu,k})\right|^2  \; dt_{oqu,k}}\right)^{\frac{1}{2}}.
$$

\subsubsection{Received Signal-to-Noise Ratio}
The SNR measures the power ratio of the signal across its frequencies to the noise spectral density. In mathematical terms, based on the system model, the SNRs are
$$
\underset{bu,k}{\operatorname{SNR}} \triangleq \frac{8 \pi^2 \left|\beta_{bu,k}\right|^2}{N_{01}} \int_{-\infty}^{\infty}\left|S_{b,k}[f]\right|^2 d f,
$$
$$
\underset{bq,k}{\operatorname{SNR}} \triangleq \frac{8 \pi^2 \left|\beta_{bq,k}\right|^2}{N_{02}} \int_{-\infty}^{\infty}\left|S_{b,k}[f]\right|^2 d f,
$$
and
$$
\underset{qu,k}{\operatorname{SNR}} \triangleq \frac{8 \pi^2 \left|\beta_{qu,k}\right|^2}{N_{01}} \int_{-\infty}^{\infty}\left|S_{q,k}[f]\right|^2 d f.
$$
If there is no beam split, the channel gain is constant across all receive antennas and we have
$$
\underset{b,k}{\operatorname{SNR}} \triangleq \frac{8 \pi^2 \left|\beta_{b,k}\right|^2}{N_{01}} \int_{-\infty}^{\infty}\left|S_{b,k}[f]\right|^2 d f,
$$
$$
\underset{bq,k}{\operatorname{SNR}} \triangleq \frac{8 \pi^2 \left|\beta_{bq,k}\right|^2}{N_{02}} \int_{-\infty}^{\infty}\left|S_{b,k}[f]\right|^2 d f,
$$
and
$$
\underset{q,k}{\operatorname{SNR}} \triangleq \frac{8 \pi^2 \left|\beta_{q,k}\right|^2}{N_{01}} \int_{-\infty}^{\infty}\left|S_{q,k}[f]\right|^2 d f.
$$

If the same signal is transmitted across all $N_{K}$ time slots, and the channel gain is constant across all receive antennas and time slots, we have
$$
\underset{b}{\operatorname{SNR}} \triangleq \frac{8 \pi^2 \left|\beta_{b}\right|^2}{N_{01}} \int_{-\infty}^{\infty}\left|S_{b}[f]\right|^2 d f,
$$
$$
\underset{bq}{\operatorname{SNR}} \triangleq \frac{8 \pi^2 \left|\beta_{bq}\right|^2}{N_{02}} \int_{-\infty}^{\infty}\left|S_{b}[f]\right|^2 d f,
$$
and
$$
\underset{q}{\operatorname{SNR}} \triangleq \frac{8 \pi^2 \left|\beta_{q}\right|^2}{N_{01}} \int_{-\infty}^{\infty}\left|S_{q,k}[f]\right|^2 d f.
$$
The subsequent sections rely heavily on these signal properties.

\section{Available Information about Channel Parameters in the Received Signal}
 The information about the channel parameters in the received signal is presented in this section and serves as an intermediate step to investigate the information needed for localization. 

 \subsection{Geometric and nuisance channel parameters}
To derive the available information about the channel parameters in i) the signal received across the $N_U$ antennas from both the $N_B$ LEOs and the $N_Q$ BSs, and ii) the signal received at the $N_Q$ BSs from the $N_B$ LEOs during the $N_K$ transmission time slots, we highlight both the geometric and nuisance channel parameters. We start with the delays in the LEO-receiver link. We can vectorize the delays received across all the antennas during the $k^{\text{th}}$ time slot as
$$
\bm{\tau}_{bU,k}
\triangleq\left[{\tau}_{b1,k}, {\tau}_{b2,k}, \cdots,
{\tau}_{bN_U,k}\right]^{\mathrm{T}}, 
$$
the next vectorization occurs considering the time slots and the $b^{\text{th}}$ LEO
$$
\bm{\tau}_{bU}
\triangleq\left[\bm{\tau}_{bU,1}^{\mathrm{T}}, \bm{\tau}_{bU,2}^{\mathrm{T}}, \cdots, \bm{\tau}_{bU,N_K}^{\mathrm{T}}\right]^{\mathrm{T}}.
$$
Focusing on the $b^{\text{th}}$ LEO satellite, the Doppler across all the $N_{K}$ transmission time slots is
$$
\bm{\nu}_{bU}
\triangleq\left[{\nu}_{bU,1}, {\nu}_{bU,2}, \cdots, \nu_{bU,N_K}\right]^{\mathrm{T}}.
$$
The channel gain in the LEO-receiver link can be placed in vector form as
$$
\bm{\beta}_{bU,k}
\triangleq\left[{\beta}_{b1,k}, {\beta}_{b2,k}, \cdots, {\beta}_{bN_U,k}\right]^{\mathrm{T}}, 
$$
and 
$$
\bm{\beta}_{bU}
\triangleq\left[\bm{\beta}_{bU,1}^{\mathrm{T}}, \bm{\beta}_{bU,2}^{\mathrm{T}}, \cdots, \bm{\beta}_{bU,N_K}^{\mathrm{T}}\right]^{\mathrm{T}}.
$$
It is important to note that if the channel gain remains constant across the $N_K$ time slots and $N_U$ receive antennas, we can represent the channel gain by a scalar $\beta_{bU}$. We can also represent the observable parameters in signals from the $N_B$ LEOs across the $N_U$ antennas during the $N_K$ time slots in vector form as:
$$
\bm{\eta}_{bU} \triangleq\left[\bm{\tau}_{bU}^{\mathrm{T}}, \bm{\nu}_{bU}^{\mathrm{T}}, \bm{\beta}_{bU}^{\mathrm{T}}, \delta_{bU}, \epsilon_{bU}\right]^{\mathrm{T}}.
$$
Next, we focus on parameters in the BSs-receiver link. We can vectorize the delays received across all the antennas during the $k^{\text{th}}$ time slot as
$$
\bm{\tau}_{qU,k}
\triangleq\left[{\tau}_{q1,k}, {\tau}_{q2,k}, \cdots,
{\tau}_{qN_U,k}\right]^{\mathrm{T}}, 
$$
the next vectorization occurs considering the time slots and the $q^{\text{th}}$ BS
$$
\bm{\tau}_{qU}
\triangleq\left[\bm{\tau}_{qU,1}^{\mathrm{T}}, \bm{\tau}_{qU,2}^{\mathrm{T}}, \cdots, \bm{\tau}_{qU,N_K}^{\mathrm{T}}\right]^{\mathrm{T}}.
$$
Focusing on the $q^{\text{th}}$ BS, the Doppler across all the $N_{K}$ transmission time slots is
$$
\bm{\nu}_{qU}
\triangleq\left[{\nu}_{qU,1}, {\nu}_{qU,2}, \cdots, \nu_{qU,N_K}\right]^{\mathrm{T}}.
$$
The channel gain in the BSs-receiver link can be placed in vector form as
$$
\bm{\beta}_{qU,k}
\triangleq\left[{\beta}_{q1,k}, {\beta}_{q2,k}, \cdots, {\beta}_{qN_U,k}\right]^{\mathrm{T}}, 
$$
and 
$$
\bm{\beta}_{qU}
\triangleq\left[\bm{\beta}_{qU,1}^{\mathrm{T}}, \bm{\beta}_{qU,2}^{\mathrm{T}}, \cdots, \bm{\beta}_{qU,N_K}^{\mathrm{T}}\right]^{\mathrm{T}}.
$$
It is essential to highlight that if the channel gain remains unchanged across the $N_K$ time slots and $N_U$ receive antennas, it can be expressed as a scalar $\beta_{qU}$. Consequently, the observable parameters in signals from the $N_Q$ BSs across the $N_U$ antennas during the $N_K$ time slots  can be represented in vector form as:
$$
\bm{\eta}_{qU} \triangleq\left[\bm{\tau}_{qU}^{\mathrm{T}}, \bm{\nu}_{qU}^{\mathrm{T}}, \bm{\beta}_{qU}^{\mathrm{T}}, \delta_{QU}, \epsilon_{QU}\right]^{\mathrm{T}}.
$$
Lastly, we focus on parameters in the LEOs-BSs links. We can vectorize the delays received across all the BSs during the $k^{\text{th}}$ time slot as
$$
\bm{\tau}_{bQ,k}
\triangleq\left[{\tau}_{b1,k}, {\tau}_{b2,k}, \cdots,
{\tau}_{bN_Q,k}\right]^{\mathrm{T}}, 
$$
the next vectorization occurs considering the time slots and the $q^{\text{th}}$ BS
$$
\bm{\tau}_{bQ}
\triangleq\left[\bm{\tau}_{bQ,1}^{\mathrm{T}}, \bm{\tau}_{bQ,2}^{\mathrm{T}}, \cdots, \bm{\tau}_{bQ,N_K}^{\mathrm{T}}\right]^{\mathrm{T}}.
$$
Focusing on the $q^{\text{th}}$ BS, the Doppler observed with respect to the $b^{\text{th}}$ LEO across all the $N_{K}$ transmission time slots is
$$
\bm{\nu}_{bq}
\triangleq\left[{\nu}_{bq,1}, {\nu}_{bq,2}, \cdots, \nu_{bq,N_K}\right]^{\mathrm{T}},
$$
vectorizing all the Dopplers observed with respect to the $b^{\text{th}}$ LEO produces
$$
\bm{\nu}_{bQ}
\triangleq\left[\bm{\nu}_{b1}^{\mathrm{T}}, \bm{\nu}_{b2}^{\mathrm{T}}, \cdots, \bm{\nu}_{bN_Q}^{\mathrm{T}}\right]^{\mathrm{T}}.
$$
The channel gain in the LEOs-BSs links can be placed in vector form as
$$
\bm{\beta}_{bq}
\triangleq\left[{\beta}_{bq,1}, {\beta}_{bQ,2}, \cdots, {\beta}_{bQ,N_{K}}\right]^{\mathrm{T}}, 
$$
and 
$$
\bm{\beta}_{bQ}
\triangleq\left[\bm{\beta}_{b1}^{\mathrm{T}}, \bm{\beta}_{b2}^{\mathrm{T}}, \cdots, \bm{\beta}_{bN_Q}^{\mathrm{T}}\right]^{\mathrm{T}}.
$$
We can represent the observable parameters in signals from the $N_B$ LEOs at all the $N_Q$ BSs during the $N_K$ time slots in vector form as:
$$
\bm{\eta}_{bQ} \triangleq\left[\bm{\tau}_{bQ}^{\mathrm{T}}, \bm{\nu}_{bQ}^{\mathrm{T}}, \bm{\beta}_{bQ}^{\mathrm{T}}, \delta_{bQ}, \epsilon_{bQ}\right]^{\mathrm{T}}.
$$

\begin{remark}
    We have two cases to consider for parameterization: i) in the first case, both the signals received at the receiver and BSs are available, and we have
    $$
    \bm{\eta} = \left[\bm{\eta}_{1U}^{\mathrm{T}},\cdots,\bm{\eta}_{N_BU}^{\mathrm{T}}, \bm{\eta}_{1U}^{\mathrm{T}},\cdots,\bm{\eta}_{N_QU}^{\mathrm{T}}, \bm{\eta}_{1Q}^{\mathrm{T}},\cdots,\bm{\eta}_{N_BQ}^{\mathrm{T}}\right]^{\mathrm{T}},
    $$
    ii) in the second case, only the signals at the receiver are available, and we have
        $$
    \bm{\eta} = \left[\bm{\eta}_{1U}^{\mathrm{T}},\cdots,\bm{\eta}_{N_BU}^{\mathrm{T}}, \bm{\eta}_{1U}^{\mathrm{T}},\cdots,\bm{\eta}_{N_QU}^{\mathrm{T}}\right]^{\mathrm{T}}.
    $$
\end{remark}
We have specified all the parameters that are observable in the received signals. In the next section, we present mathematical preliminaries that help determine the information available about these parameters in the received signals.

\subsection{Mathematical Preliminaries}
In estimation theory, two questions of paramount importance are the parameters that can be estimated and the conditions that allow for the estimation of these parameters.
One way of answering these questions is through the FIM. To present the FIM, we assume that for the parameters and system model in our work, there exists an unbiased estimate $\hat{\bm{\eta}}$ such that the error covariance matrix satisfies the following information inequality
$
\mathbb{E}_{\bm{y}; \boldsymbol{\eta}}\left\{(\hat{\boldsymbol{\eta}}-\boldsymbol{\eta})(\hat{\boldsymbol{\eta}}-\boldsymbol{\eta})^{\mathrm{T}}\right\} \succeq \mathbf{J}_{ \bm{\bm{y}}; \bm{\eta}}^{-1},
$
where $\mathbf{J}_{ \bm{\bm{y}}; \bm{\eta}}$ is the FIM for the parameter vector $\boldsymbol{\eta}.$

\begin{definition}
\label{definition_FIM_1}
The FIM obtained from the likelihood due to the observations is defined as  $\mathbf{J}_{\bm{y};\bm{\eta}} =  \bm{F}_{{\bm{y} }}(\bm{y}| \bm{\eta} ;\bm{\eta},\bm{\eta})$. In mathematical terms, we have
\begin{equation}
\label{definition_equ:definition_FIM_1}
\begin{aligned}
\mathbf{J}_{ \bm{\bm{y}}; \bm{\eta}} &\triangleq %\mathbb{E}_{\bm{r} \mid \bm{\eta}_{}}-\frac{\partial^{2} \ln \chi(\bm{r} \mid \bm{\eta}_{})}{\partial \bm{\eta}_{} \partial \bm{\eta}_{}^{\mathrm{T}}}
-\mathbb{E}_{\bm{y};\bm{\eta}_{}}\left[\frac{\partial^{2} \ln \chi(\bm{y}_{};  \bm{\eta}_{} )}{\partial \bm{\eta}_{} \partial \bm{\eta}_{}^{\mathrm{T}}}\right]
\end{aligned}
\end{equation}
where  $\chi(\bm{y}_{};  \bm{\eta}_{} )$ denotes the likelihood function considering $\bm{y}$ and $\bm{\eta}$.
\end{definition}
%\subsection{Equivalent Fisher Information Matrix}
The FIM is a very useful tool, however it grows quadratically with the size of the parameter vector. Hence, it might be advantageous to focus on a subset of the FIM. One way to do this is to use the equivalent FIM (EFIM) \cite{horn2012matrix}.

\begin{definition}
\label{definition_EFIM}
Given a parameter vector, $ \bm{\eta}_{} \triangleq\left[\bm{\eta}_{1}^{\mathrm{T}}, \bm{\eta}_{2}^{\mathrm{T}}\right]^{\mathrm{T}}$, where $\bm{\eta}_{1}$ is the parameter of interest, the resultant FIM has the structure 
$$
\mathbf{J}_{ \bm{\bm{y}}; \bm{\eta}}=\left[\begin{array}{cc}
\mathbf{J}_{ \bm{\bm{y}}; \bm{\eta}_1}^{}  & \mathbf{J}_{ \bm{\bm{y}}; \bm{\eta}_1, \bm{\eta}_2}^{} \\
 \mathbf{J}_{ \bm{\bm{y}}; \bm{\eta}_1, \bm{\eta}_2}^{\mathrm{T}} &\mathbf{J}_{ \bm{\bm{y}}; \bm{\eta}_2}^{}
\end{array}\right],
$$
where $\bm{\eta} \in \mathbb{R}^{N}, \bm{\eta}_{1} \in \mathbb{R}^{n}, \mathbf{J}_{ \bm{\bm{y}}; \bm{\eta}_1}^{} \in \mathbb{R}^{n \times n},  \mathbf{J}_{ \bm{\bm{y}}; \bm{\eta}_1, \bm{\eta}_2}\in \mathbb{R}^{n \times(N-n)}$, and $\mathbf{J}_{ \bm{\bm{y}}; \bm{\eta}_2}^{}\in$ $\mathbb{R}^{(N-n) \times(N-n)}$ with $n<N$, 
and the EFIM \cite{5571900} of  parameter ${\bm{\eta}_{1}}$ is given by 
%\begin{equation}
%\label{equ:definition_EFIM_1}
%\begin{aligned}
$\mathbf{J}_{ \bm{\bm{y}}; \bm{\eta}_1}^{\mathrm{e}} =\mathbf{J}_{ \bm{\bm{y}}; \bm{\eta}_1}^{} - \mathbf{J}_{ \bm{\bm{y}}; \bm{\eta}_1}^{nu} =\mathbf{J}_{ \bm{\bm{y}}; \bm{\eta}_1}^{}-
\mathbf{J}_{ \bm{\bm{y}}; \bm{\eta}_1, \bm{\eta}_2}^{} \mathbf{J}_{ \bm{\bm{y}}; \bm{\eta}_2}^{-1} \mathbf{J}_{ \bm{\bm{y}}; \bm{\eta}_1, \bm{\eta}_2}^{\mathrm{T}}.$
 %   \end{aligned}
%\end{equation}

Note that the term $\mathbf{J}_{ \bm{\bm{y}}; \bm{\eta}_1}^{nu}  = \mathbf{J}_{ \bm{\bm{y}}; \bm{\eta}_1, \bm{\eta}_2}^{} \mathbf{J}_{ \bm{\bm{y}}; \bm{\eta}_2}^{-1} \mathbf{J}_{ \bm{\bm{y}}; \bm{\eta}_1, \bm{\eta}_2}^{\mathrm{T}}$ describes the loss of information about ${\bm{\eta}_{1}}$  due to uncertainty in the nuisance parameters ${\bm{\eta}_{2}}$. This EFIM captures all the required information about the parameters of interest present in the FIM; as observed from the relation $(\mathbf{J}_{ \bm{\bm{y}}; \bm{\eta}_1}^{\mathrm{e}})^{-1} = [\mathbf{J}_{ \bm{\bm{y}}; \bm{\eta}}^{-1}]_{[1:n,1:n]}$.
\end{definition}

\subsection{FIM for channel parameters}
To derive the FIM for the channel parameters, we present the likelihood for two cases of parameterization: i)  both the signals received at the receiver and BSs are available, which results in (\ref{equ:likelihood}) as the likelihood function, and ii) only the signals received at the receiver are available, which results in (\ref{equ:likelihood_1}) as the likelihood function.
\begin{figure*}
\begin{align}
\begin{split}
\label{equ:likelihood}
    \chi(\bm{y}_{}[t]|  \bm{\eta}_{})  &\propto  \prod_{b = 1}^{N_B}\prod_{u = 1}^{N_U}\prod_{k = 1}^{N_K} \prod_{q = 1}^{N_Q} \prod_{u^{'} = 1}^{N_U}\prod_{k^{'} = 1}^{N_K}  \prod_{b^{'} = 1}^{N_B} \prod_{q^{'} = 1}^{N_Q} \prod_{k^{''} = 1}^{N_K}    \operatorname{exp} \left\{\frac{2}{N_{01}} \int_0^{T_{}} \Re\left\{ \mu_{bu,k}[t]
^{\mathrm{H}}{y}_{bu,k}[t]\right\} d t - \frac{1}{N_{01}} \int_0^{T_{}}|\mu_{bu,k}[t]|^{2} \; d t \right\}
\\&    \operatorname{exp} \left\{\frac{2}{N_{01}} \int_0^{T_{}} \Re\left\{ \mu_{qu^{'},k^{'}}[t]
^{\mathrm{H}}{y}_{qu^{'},k^{'}}[t]\right\} d t - \frac{1}{N_{01}} \int_0^{T_{}}|\mu_{qu^{'},k^{'}}[t]|^{2} \; d t \right\}
\\&    \operatorname{exp} \left\{\frac{2}{N_{02}} \int_0^{T_{}} \Re\left\{ \mu_{b^{'}q^{'},k^{''}}[t]
^{\mathrm{H}}{y}_{b^{'}q^{'},k^{''}}[t]\right\} d t - \frac{1}{N_{02}} \int_0^{T_{}}|\mu_{b^{'}q^{'},k^{''}}[t]|^{2} \; d t \right\}.
\end{split}
\end{align}
\end{figure*}
\begin{figure*}
\begin{align}
\begin{split}
\label{equ:likelihood_1}
    \chi(\bm{y}_{}[t]|  \bm{\eta}_{})  &\propto  \prod_{b = 1}^{N_B}\prod_{u = 1}^{N_U}\prod_{k = 1}^{N_K} \prod_{q = 1}^{N_Q} \prod_{u^{'} = 1}^{N_U}\prod_{k^{'} = 1}^{N_K}     \operatorname{exp} \left\{\frac{2}{N_{01}} \int_0^{T_{}} \Re\left\{ \mu_{bu,k}[t]
^{\mathrm{H}}{y}_{bu,k}[t]\right\} d t - \frac{1}{N_{01}} \int_0^{T_{}}|\mu_{bu,k}[t]|^{2} \; d t \right\}
\\&    \operatorname{exp} \left\{\frac{2}{N_{01}} \int_0^{T_{}} \Re\left\{ \mu_{qu^{'},k^{'}}[t]
^{\mathrm{H}}{y}_{qu^{'},k^{'}}[t]\right\} d t - \frac{1}{N_{01}} \int_0^{T_{}}|\mu_{qu^{'},k^{'}}[t]|^{2} \; d t \right\}
\end{split}
\end{align}
\end{figure*}
Considering all the $N_B$ LEOs, $N_K$ transmission time slots, and $N_U$ receive antennas, we can derive the FIMs for both cases of parameterization. 
The FIMs for both parameterization cases result in a block diagonal. The first case of parameterization produces
\begin{equation}
%\label{equ:likelihood}
\begin{aligned}
&\mathbf{J}_{\bm{y}|\bm{\eta}} =  \bm{F}_{{\bm{y} }}(\bm{y}| \bm{\eta} ;\bm{\eta},\bm{\eta}) = \\ & \operatorname{diag}\left\{\bm{F}_{{\bm{y} }}(\bm{y}| \bm{\eta} ;\bm{\eta}_{1U},\bm{\eta}_{1U}), \ldots, \bm{F}_{{\bm{y} }}(\bm{y}| \bm{\eta} ;\bm{\eta}_{N_{B}U},\bm{\eta}_{N_{B}U}) \right. \\ & \left. 
\bm{F}_{{\bm{y} }}(\bm{y}| \bm{\eta} ;\bm{\eta}_{1U},\bm{\eta}_{1U}), \ldots, \bm{F}_{{\bm{y} }}(\bm{y}| \bm{\eta} ;\bm{\eta}_{N_{Q}U},\bm{\eta}_{N_{Q}U})
\right. \\ & \left. 
\bm{F}_{{\bm{y} }}(\bm{y}| \bm{\eta} ;\bm{\eta}_{1Q},\bm{\eta}_{1Q}), \ldots, \bm{F}_{{\bm{y} }}(\bm{y}| \bm{\eta} ;\bm{\eta}_{N_{B}Q},\bm{\eta}_{N_{B}Q})
\right\},
    \end{aligned}
\end{equation}
and the second case of parameterization produces
\begin{equation}
%\label{equ:likelihood}
\begin{aligned}
&\mathbf{J}_{\bm{y}|\bm{\eta}} =  \bm{F}_{{\bm{y} }}(\bm{y}| \bm{\eta} ;\bm{\eta},\bm{\eta}) = \\ & \operatorname{diag}\left\{\bm{F}_{{\bm{y} }}(\bm{y}| \bm{\eta} ;\bm{\eta}_{1U},\bm{\eta}_{1U}), \ldots, \bm{F}_{{\bm{y} }}(\bm{y}| \bm{\eta} ;\bm{\eta}_{N_{B}U},\bm{\eta}_{N_{B}U}) \right. \\ & \left. 
\bm{F}_{{\bm{y} }}(\bm{y}| \bm{\eta} ;\bm{\eta}_{1U},\bm{\eta}_{1U}), \ldots, \bm{F}_{{\bm{y} }}(\bm{y}| \bm{\eta} ;\bm{\eta}_{N_{Q}U},\bm{\eta}_{N_{Q}U})
\right\}.
    \end{aligned}
\end{equation}
Considering the LEOs-receiver link, the entries in FIM due to the observations of the signals at the receiver from $b^{\text{th}}$ LEO satellite can be obtained through the simplified expression
$$
\begin{aligned}
     \bm{F}_{{\bm{y} }}(\bm{y}| \bm{\eta} ;\bm{\eta}_{bU},\bm{\eta}_{bU}) &=   \frac{1}{N_{01}} \times \\&  \sum_{u,k}^{N_U N_K}\Re\left\{ \int \nabla_{\bm{\eta}_{bU}}{\mu}_{bu,k} [t]\nabla_{\bm{\eta}_{bU}}{\mu}_{bu,k}^{\mathrm{H}}[t] \; \; d t \right\}.
\end{aligned}
$$
We now present the non-zero entries focusing on the $b^{\text{th}}$ LEO satellite. We start with the delays focusing on the FIM for the delay from the $b^{\text{th}}$ LEO satellite during the $k^{\text{th}}$ time slot on the $u^{\text{th}}$ receive antenna
$$
\begin{aligned}
\bm{F}_{{\bm{y} }}(\bm{y}| \bm{\eta} ;{\tau}_{bu,k},{\tau}_{bu,k}) &= -\bm{F}_{{\bm{y} }}(\bm{y}| \bm{\eta} ;{\tau}_{bu,k},{\delta}_{bU}) = \underset{bu,k}{\operatorname{SNR}} \; \omega_{bU,k},
\end{aligned}
$$
where $\omega_{bU,k} = \Bigg[ \alpha_{1b,k}^2 +  2f_{obU,k} \alpha_{1b,k} \alpha_{2b,k} +  f_{obU,k}^2   \Bigg].$
All other entries in the FIM focusing on delays in the $b^{\text{th}}$ LEO link are zero. The FIM focusing on the Dopplers related to the $b^{\text{th}}$ LEO satellite is
$$
\begin{aligned}
\bm{F}_{{\bm{y} }}(\bm{y}| \bm{\eta} ;{\nu}_{bU,k}, {\nu}_{bU,k}) = 0.5 \times \underset{bu,k}{\operatorname{SNR}} f_{c}^2 \alpha_{obu,k}^2.
\end{aligned}
$$
The FIM of the Doppler observed with respect to the $b^{\text{th}}$ LEO satellite and the corresponding frequency offset during the $k^{\text{th}}$ time slot is
$$
\begin{aligned}
\bm{F}_{{\bm{y} }}(\bm{y}| \bm{\eta} ;{\nu}_{bU,k}, {\epsilon}_{bU}) = - 0.5 \times \underset{bu,k}{\operatorname{SNR}} f_{c} \alpha_{obu,k}^2.
\end{aligned}
$$
All other entries in the FIM related to the Dopplers in the $b^{\text{th}}$ LEO link are zero. Now, we focus on the channel gain in the $b^{\text{th}}$ LEO link. The FIM of the channel gain considering the received signals from $b^{\text{th}}$ LEO satellite to the $u^{\text{th}}$ receive antenna during the $k^{\text{th}}$ time slot is
$$
\begin{aligned}
\bm{F}_{{\bm{y} }}(\bm{y}| \bm{\eta} ; {\beta}_{bu,k}, {\beta}_{bu,k}) = \frac{1}{4 \pi^2 \left|\beta_{bu,k}\right|^2}\underset{bu,k}{\operatorname{SNR}}.
\end{aligned}
$$
All other entries in the FIM related to the channel gain in the $b^{\text{th}}$ LEO link are zero. Now, we focus on the time offset in the $b^{\text{th}}$ LEO link. The FIM between the time offset and the delay in the FIM due to the observations of the received signals from $b^{\text{th}}$ LEO satellite to the $u^{\text{th}}$ receive antenna during the $k^{\text{th}}$ time slot is
$$
\begin{aligned}
\bm{F}_{{\bm{y} }}(\bm{y}| \bm{\eta} ;{\delta}_{bU}, {\tau}_{bu,k}) = \bm{F}_{{\bm{y} }}(\bm{y}| \bm{\eta} ; {\tau}_{bu,k}, {\delta}_{bU}).
\end{aligned}
$$
The FIM of the time offset in the FIM due to the observations of the received signals from $b^{\text{th}}$ LEO satellite to the $u^{\text{th}}$ receive antenna during the $k^{\text{th}}$ time slot is
$$
\begin{aligned}
\bm{F}_{{\bm{y} }}(\bm{y}| \bm{\eta} ;{\delta}_{bU}, {\delta}_{bU}) = \bm{F}_{{\bm{y} }}(\bm{y}| \bm{\eta} ; {\tau}_{bu,k},{\tau}_{bu,k}) = -\bm{F}_{{\bm{y} }}(\bm{y}| \bm{\eta} ;{\delta}_{bU}, {\tau}_{bu,k}).
\end{aligned}
$$
All other entries in the FIM related to the time offset in the $b^{\text{th}}$ LEO link are zero.  The FIMs related to the frequency offset are presented next. The FIM of the frequency offset and the corresponding Doppler observed with respect to the $b^{\text{th}}$ LEO satellite  during the $k^{\text{th}}$ time slot is
$$
\begin{aligned}
\bm{F}_{{\bm{y} }}(\bm{y}| \bm{\eta} ;{\epsilon}_{bU}, {\nu}_{b,k}) = - 0.5 \times \underset{bu,k}{\operatorname{SNR}} f_{c}\alpha_{obu,k}^2.
\end{aligned}
$$
The FIM of the frequency offset in the FIM due to the observations of the received signals from $b^{\text{th}}$ LEO satellite to the $u^{\text{th}}$ receive antenna during the $k^{\text{th}}$ time slot is
$$
\begin{aligned}
\bm{F}_{{\bm{y} }}(\bm{y}| \bm{\eta} ;{\epsilon}_{bU}, {\epsilon}_{bU}) = 0.5 \times \underset{bu,k}{\operatorname{SNR}}  \alpha_{obu,k}^2.
\end{aligned}
$$

Considering the BSs-receiver link, the entries in FIM due to the observations of the signals at the receiver from $q^{\text{th}}$ BS can be obtained through the simplified expression.
$$
\begin{aligned}
     \bm{F}_{{\bm{y} }}(\bm{y}| \bm{\eta} ;\bm{\eta}_{qU},\bm{\eta}_{qU}) &=   \frac{1}{N_{01}} \times \\&  \sum_{u,k}^{N_U N_K}\Re\left\{ \int \nabla_{\bm{\eta}_{qU}}{\mu}_{qu,k} [t]\nabla_{\bm{\eta}_{qU}}{\mu}_{qu,k}^{\mathrm{H}}[t] \; \; d t \right\}.
\end{aligned}
$$
We now present the non-zero entries focusing on the $q^{\text{th}}$ BS. We start with the delays focusing on the FIM for the delay from the $q^{\text{th}}$ BS during the $k^{\text{th}}$ time slot on the $u^{\text{th}}$ receive antenna
$$
\begin{aligned}
\bm{F}_{{\bm{y} }}(\bm{y}| \bm{\eta} ;{\tau}_{qu,k},{\tau}_{qu,k}) &= -\bm{F}_{{\bm{y} }}(\bm{y}| \bm{\eta} ;{\tau}_{qu,k},{\delta}_{QU}) = \underset{qu,k}{\operatorname{SNR}} \; \omega_{qU,k},
\end{aligned}
$$
where $\omega_{qU,k} = \Bigg[ \alpha_{1q,k}^2 +  2f_{oqU,k} \alpha_{1q,k} \alpha_{2q,k} +  f_{oqU,k}^2   \Bigg].$

All other entries in the FIM focusing on delays in the $q^{\text{th}}$ BS link are zero. The FIM focusing on the Dopplers related to the $q^{\text{th}}$ BS is
$$
\begin{aligned}
\bm{F}_{{\bm{y} }}(\bm{y}| \bm{\eta} ;{\nu}_{qU,k}, {\nu}_{qU,k}) = 0.5 \times \underset{qu,k}{\operatorname{SNR}} f_{c}^2 \alpha_{oqu,k}^2.
\end{aligned}
$$
The FIM of the Doppler observed with respect to the $q^{\text{th}}$ BS and the corresponding frequency offset during the $k^{\text{th}}$ time slot is
$$
\begin{aligned}
\bm{F}_{{\bm{y} }}(\bm{y}| \bm{\eta} ;{\nu}_{qU,k}, {\epsilon}_{QU}) = - 0.5 \times \underset{qu,k}{\operatorname{SNR}} f_{c} \alpha_{oqu,k}^2.
\end{aligned}
$$
All other entries in the FIM related to the Dopplers in the $q^{\text{th}}$ BS link are zero. Now, we focus on the channel gain in the $q^{\text{th}}$ BS link. The FIM of the channel gain considering the received signals from $q^{\text{th}}$ BS to the $u^{\text{th}}$ receive antenna during the $k^{\text{th}}$ time slot is
$$
\begin{aligned}
\bm{F}_{{\bm{y} }}(\bm{y}| \bm{\eta} ; {\beta}_{qu,k}, {\beta}_{qu,k}) = \frac{1}{4 \pi^2 \left|\beta_{qu,k}\right|^2}\underset{qu,k}{\operatorname{SNR}}.
\end{aligned}
$$

All other entries in the FIM related to the channel gain in the $q^{\text{th}}$ BS link are zero. Now, we focus on the time offset in the $q^{\text{th}}$ BS link. The FIM between the time offset and the delay in the FIM due to the observations of the received signals from $q^{\text{th}}$ BS to the $u^{\text{th}}$ receive antenna during the $k^{\text{th}}$ time slot is
$$
\begin{aligned}
\bm{F}_{{\bm{y} }}(\bm{y}| \bm{\eta} ;{\delta}_{QU}, {\tau}_{qu,k}) = \bm{F}_{{\bm{y} }}(\bm{y}| \bm{\eta} ; {\tau}_{qu,k}, {\delta}_{QU}).
\end{aligned}
$$
The FIM of the time offset in the FIM due to the observations of the received signals from $q^{\text{th}}$ BS to the $u^{\text{th}}$ receive antenna during the $k^{\text{th}}$ time slot is
$$
\begin{aligned}
\bm{F}_{{\bm{y} }}(\bm{y}| \bm{\eta} ;{\delta}_{QU}, {\delta}_{QU}) &= \bm{F}_{{\bm{y} }}(\bm{y}| \bm{\eta} ; {\tau}_{qu,k},{\tau}_{qu,k}) \\&= -\bm{F}_{{\bm{y} }}(\bm{y}| \bm{\eta} ;{\delta}_{QU}, {\tau}_{qu,k}).
\end{aligned}
$$
All other entries in the FIM related to the time offset in the $q^{\text{th}}$ BS link are zero.  The FIMs related to the frequency offset are presented next. The FIM of the frequency offset and the corresponding Doppler observed with respect to the $q^{\text{th}}$ BS  during the $k^{\text{th}}$ time slot is
$$
\begin{aligned}
\bm{F}_{{\bm{y} }}(\bm{y}| \bm{\eta} ;{\epsilon}_{QU}, {\nu}_{qU,k}) = - 0.5 \times \underset{bu,k}{\operatorname{SNR}} f_{c}\alpha_{oqu,k}^2.
\end{aligned}
$$
The FIM of the frequency offset in the FIM due to the observations of the received signals from $q^{\text{th}}$ BS to the $u^{\text{th}}$ receive antenna during the $k^{\text{th}}$ time slot is
$$
\begin{aligned}
\bm{F}_{{\bm{y} }}(\bm{y}| \bm{\eta} ;{\epsilon}_{QU}, {\epsilon}_{QU}) = 0.5 \times \underset{qu,k}{\operatorname{SNR}}  \alpha_{oqu,k}^2.
\end{aligned}
$$

Considering the LEO-BS link, the entries in FIM due to the observations of the signals at the receiver from $b^{\text{th}}$ LEO can be obtained through the simplified expression.
$$
\begin{aligned}
     \bm{F}_{{\bm{y} }}(\bm{y}| \bm{\eta} ;\bm{\eta}_{bQ},\bm{\eta}_{bQ}) &=   \frac{1}{N_{02}} \times \\&  \sum_{k}^{ N_K}\Re\left\{ \int \nabla_{\bm{\eta}_{bQ}}{\mu}_{bq,k} [t]\nabla_{\bm{\eta}_{bQ}}{\mu}_{bq,k}^{\mathrm{H}}[t] \; \; d t \right\}.
\end{aligned}
$$
We now present the non-zero entries focusing on the $b^{\text{th}}$ LEO. We start with the delays focusing on the FIM for the delay from the $b^{\text{th}}$ LEO during the $k^{\text{th}}$ time slot at the $q^{\text{th}}$ BS
$$
\begin{aligned}
\bm{F}_{{\bm{y} }}(\bm{y}| \bm{\eta} ;{\tau}_{bq,k},{\tau}_{bq,k}) &= -\bm{F}_{{\bm{y} }}(\bm{y}| \bm{\eta} ;{\tau}_{bq,k},{\delta}_{bQ}) = \underset{bq,k}{\operatorname{SNR}} \; \omega_{bq,k},
\end{aligned}
$$
where $\omega_{bq,k} = \Bigg[ \alpha_{1q,k}^2 +  2f_{obq,k} \alpha_{1q,k} \alpha_{2q,k} +  f_{obq,k}^2   \Bigg].$

All other entries in the FIM focusing on delays in the $b^{\text{th}}$ LEO link are zero. The FIM focusing on the Dopplers related to the $b^{\text{th}}$ LEO is
$$
\begin{aligned}
\bm{F}_{{\bm{y} }}(\bm{y}| \bm{\eta} ;{\nu}_{bq,k}, {\nu}_{bq,k}) = 0.5 \times \underset{bq,k}{\operatorname{SNR}} f_{c}^2 \alpha_{obq,k}^2.
\end{aligned}
$$
The FIM of the Doppler observed with respect to the $b^{\text{th}}$ LEO and the corresponding frebqency offset during the $k^{\text{th}}$ time slot is
$$
\begin{aligned}
\bm{F}_{{\bm{y} }}(\bm{y}| \bm{\eta} ;{\nu}_{bq,k}, {\epsilon}_{bQ}) = - 0.5 \times \underset{bq,k}{\operatorname{SNR}} f_{c} \alpha_{obq,k}^2.
\end{aligned}
$$
All other entries in the FIM related to the Dopplers in the $b^{\text{th}}$ LEO link are zero. Now, we focus on the channel gain in the $b^{\text{th}}$ LEO link. The FIM of the channel gain considering the received signals from $b^{\text{th}}$ LEO to the $q^{\text{th}}$ BS during the $k^{\text{th}}$ time slot is
$$
\begin{aligned}
\bm{F}_{{\bm{y} }}(\bm{y}| \bm{\eta} ; {\beta}_{bq,k}, {\beta}_{bq,k}) = \frac{1}{4 \pi^2 \left|\beta_{bq,k}\right|^2}\underset{bq,k}{\operatorname{SNR}}.
\end{aligned}
$$

All other entries in the FIM related to the channel gain in the $b^{\text{th}}$ LEO link are zero. Now, we focus on the time offset in the $b^{\text{th}}$ LEO link. The FIM between the time offset and the delay in the FIM due to the observations of the received signals from $b^{\text{th}}$ LEO to the $q^{\text{th}}$ BS during the $k^{\text{th}}$ time slot is
$$
\begin{aligned}
\bm{F}_{{\bm{y} }}(\bm{y}| \bm{\eta} ;{\delta}_{bQ}, {\tau}_{bq,k}) = \bm{F}_{{\bm{y} }}(\bm{y}| \bm{\eta} ; {\tau}_{bq,k}, {\delta}_{bQ}).
\end{aligned}
$$
The FIM of the time offset in the FIM due to the observations of the received signals from $b^{\text{th}}$ LEO to the $q^{\text{th}}$ BS the $k^{\text{th}}$ time slot is
$$
\begin{aligned}
\bm{F}_{{\bm{y} }}(\bm{y}| \bm{\eta} ;{\delta}_{bQ}, {\delta}_{bQ}) = \bm{F}_{{\bm{y} }}(\bm{y}| \bm{\eta} ; {\tau}_{bq,k},{\tau}_{bq,k}) = -\bm{F}_{{\bm{y} }}(\bm{y}| \bm{\eta} ;{\delta}_{bQ}, {\tau}_{bq,k}).
\end{aligned}
$$
All other entries in the FIM related to the time offset in the $b^{\text{th}}$ LEO link are zero.  The FIMs related to the frequency offset are presented next. The FIM of the frequency offset and the corresponding Doppler observed with respect to the $b^{\text{th}}$ LEO  during the $k^{\text{th}}$ time slot is
$$
\begin{aligned}
\bm{F}_{{\bm{y} }}(\bm{y}| \bm{\eta} ;{\epsilon}_{bQ}, {\nu}_{bq,k}) = - 0.5 \times \underset{bq,k}{\operatorname{SNR}} f_{c}\alpha_{obq,k}^2.
\end{aligned}
$$
The FIM of the frequency offset in the FIM due to the observations of the received signals from $b^{\text{th}}$ LEO to the $q^{\text{th}}$ BS during the $k^{\text{th}}$ time slot is
$$
\begin{aligned}
\bm{F}_{{\bm{y} }}(\bm{y}| \bm{\eta} ;{\epsilon}_{bQ}, {\epsilon}_{bQ}) = 0.5 \times \underset{bq,k}{\operatorname{SNR}}  \alpha_{obq,k}^2.
\end{aligned}
$$
We have derived the FIM for the channel parameters. In the next section, we will use these derivations to present the FIM for the location parameters.

\section{FIM for Location Parameters}
In the previous sections, we have presented a system model that captures unsynchronized LEOs in time and frequency communicating with a receiver and a set of synchronized BSs. We also presented a system model incorporating the BSs communicating with the receiver. Subsequently, we derived the available information in the received signals using the FIM. In this section, we first highlight the location parameters and transform the FIM for the channel parameters into the FIM for location parameters. To highlight the location parameters, we i) focus on the unknown receiver position at $k = 0$, $\bm{p}_{U,0}$, ii) assume that the receiver velocity remains constant across all $N_K$ time slots, $\bm{v}_{U,k} = \bm{v}_{U,0} \; \; \forall k$, iii)  assume that the uncertainty associated with the
position of the $b^{\text{th}}$ LEO remains constant across all $N_K$ time slots $\check{\bm{p}}_{b,k} = \check{\bm{p}}_{b,0} \; \; \forall k$, and iv) assume that the uncertainty associated with the
velocity of the $b^{\text{th}}$ LEO remains constant across all $N_K$ time slots $\check{\bm{v}}_{b,k} = \check{\bm{v}}_{b,0} \; \; \forall k$. Now, we gather the location parameters as 

$$
\begin{aligned}
   & \bm{\kappa} = \\ &[\bm{p}_{U,0}, \bm{v}_{U,0}, \bm{\Phi}_{U},\check{\bm{p}}_{B,0}, \check{\bm{v}}_{B,0}, \bm{\zeta}_{1U},\cdots, \bm{\zeta}_{N_BU}, \bm{\zeta}_{1Q},\cdots, \bm{\zeta}_{N_BQ}, \\& \bm{\zeta}_{1U}, \cdots, \bm{\zeta}_{NQU}   ],
    \end{aligned}
    $$
$$\text{where}$$ 
$$
\begin{aligned}
\check{\bm{p}}_{B,0} &= \left[\check{\bm{p}}_{1,0}^{\mathrm{T}}, \cdots, \check{\bm{p}}_{N_{B},0}^{\mathrm{T}}\right]^{\mathrm{T}}, \\
\check{\bm{v}}_{B,0} &= \left[\check{\bm{v}}_{1,0}^{\mathrm{T}}, \cdots, \check{\bm{v}}_{N_{B},0}^{\mathrm{T}}\right]^{\mathrm{T}}, \\
\bm{\zeta}_{bU} &= \left[\bm{\beta}_{bU}^{\mathrm{T}}, \delta_{bU}, \epsilon_{bU}\right]^{\mathrm{T}}, \\
\bm{\zeta}_{bQ} &= \left[\bm{\beta}_{bQ}^{\mathrm{T}}, \delta_{bQ}, \epsilon_{bQ}\right]^{\mathrm{T}}, \\
\bm{\zeta}_{qU} &= \left[\bm{\beta}_{qU}^{\mathrm{T}}, \delta_{QU}, \epsilon_{QU}\right]^{\mathrm{T}}.
\end{aligned}
$$
The location parameter vector, $\bm{\kappa}$, can be divided into $\bm{\kappa}_{1} = [\bm{p}_{U,0}, \bm{v}_{U,0}, \bm{\Phi}_{U},\check{\bm{p}}_{B,0}, \check{\bm{v}}_{B,0}]$ and $\bm{\kappa}_{2} =   [\bm{\zeta}_{1U},\cdots, \bm{\zeta}_{N_BU}, \bm{\zeta}_{1Q}, \\ \cdots, \bm{\zeta}_{N_BQ},  \bm{\zeta}_{1U}, \cdots, \bm{\zeta}_{NQU}   ].$ The FIM for the location parameters is extracted from the FIM for the channel parameters, $\mathbf{J}{\bm{y}|\bm{\eta}}$, through the bijective transformation $\mathbf{J}{\bm{y}|\bm{\kappa}} \triangleq \mathbf{\Upsilon}{\bm{\kappa}} \mathbf{J}{\bm{y}|\bm{\eta}} \mathbf{\Upsilon}{\bm{\kappa}}^{\mathrm{T}}$. The matrix $\mathbf{\Upsilon}{\bm{\kappa}}$ captures the derivatives of the non-linear relationship between the geometric channel parameters, $ \bm{\eta}$, and the location parameters \cite{kay1993fundamentals}. The entries of the transformation matrix $\mathbf{\Upsilon}_{\bm{\kappa}}$ are laid out in Appendix \ref{Appendix_Entries_in_transformation_matrix}. The EFIM, taking  $\bm{\kappa}_{1} = [\bm{p}_{U,0}, \bm{v}_{U,0}, \bm{\Phi}_{U},\check{\bm{p}}_{b,0}, \check{\bm{v}}_{b,0}]$ as the parameter of interest and $\bm{\kappa}_{2} = [\bm{\zeta}_{1U},\cdots, \bm{\zeta}_{N_BU}, \bm{\zeta}_{1Q},\cdots, \bm{\zeta}_{N_BQ}, \bm{\zeta}_{1U}, \cdots, \bm{\zeta}_{NQU}     ]$ as the nuisance parameters, is now derived. %The elements in $\mathbf{J}_{\bm{y}; \bm{\kappa}_{1}}$ are found in Appendix \ref{Appendix_Entries_in_the_EFIM_Location_1}. 

\subsection{FIM for the parameters of interest}
Here, we present the FIM for the parameters of interest, $\bm{\kappa}_{1}.$ This FIM is represented by $\mathbf{J}_{\bm{y}|\bm{\kappa_{1}}}$, and the entries in this FIM are presented in the following Lemmas.

\begin{lemma}
\label{lemma:FIM_3D_position}
The FIM of the $3$D position of the receiver is

\begin{equation}
\begin{aligned}
\label{equ_lemma:FIM_3D_position}
&{\bm{F}_{{{y} }}(\bm{y}_{}| \bm{\eta} ;\bm{p}_{U,0},\bm{p}_{U,0}) = } \\&{\sum_{b,k^{},u^{}} \underset{bu,k}{\operatorname{SNR}}  \Bigg[\frac{\omega_{bU,k} }{c^2}   \bm{\Delta}_{bu,k} 
 \bm{\Delta}_{bu^{},k^{}}^{\mathrm{T}}}  + {\frac{f_{c}^2 \alpha_{obu,k}^2\nabla_{\bm{p}_{U,0}} \nu_{bU,k} \nabla_{\bm{p}_{U,0}}^{\mathrm{T}}\nu_{bU,k}}{2}  \Bigg]} + \\  &{\sum_{q,k^{},u^{}} \underset{qu,k}{\operatorname{SNR}}  \Bigg[\frac{\omega_{qU,k} }{c^2}   \bm{\Delta}_{qu,k} 
 \bm{\Delta}_{qu^{},k^{}}^{\mathrm{T}}}  + {\frac{f_{c}^2 \alpha_{oqu,k}^2\nabla_{\bm{p}_{U,0}} \nu_{qU,k} \nabla_{\bm{p}_{U,0}}^{\mathrm{T}}\nu_{qU,k}}{2}  \Bigg]} 
\end{aligned}
\end{equation}

\end{lemma}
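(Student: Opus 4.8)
The plan is to read off the $(\bm{p}_{U,0},\bm{p}_{U,0})$ block directly from the bijective transformation $\mathbf{J}_{\bm{y}|\bm{\kappa}}=\mathbf{\Upsilon}_{\bm{\kappa}}\mathbf{J}_{\bm{y}|\bm{\eta}}\mathbf{\Upsilon}_{\bm{\kappa}}^{\mathrm{T}}$. Writing $\mathbf{\Upsilon}_{\bm{p}_{U,0}}$ for the rows of $\mathbf{\Upsilon}_{\bm{\kappa}}$ that differentiate the channel parameters with respect to $\bm{p}_{U,0}$, the quantity sought is $\bm{F}_{\bm{y}}(\bm{y}|\bm{\eta};\bm{p}_{U,0},\bm{p}_{U,0})=\mathbf{\Upsilon}_{\bm{p}_{U,0}}\mathbf{J}_{\bm{y}|\bm{\eta}}\mathbf{\Upsilon}_{\bm{p}_{U,0}}^{\mathrm{T}}$, which by the chain rule equals $\sum_{i,j}\nabla_{\bm{p}_{U,0}}\eta_i\,\bm{F}_{\bm{y}}(\bm{y}|\bm{\eta};\eta_i,\eta_j)\,\nabla_{\bm{p}_{U,0}}^{\mathrm{T}}\eta_j$, the scalars $\eta_i,\eta_j$ ranging over the channel parameters.

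First I would identify which channel parameters actually depend on $\bm{p}_{U,0}$. The receiver position enters the observations at the receiver only through the LEO-receiver delays $\tau_{bu,k}$ and Dopplers $\nu_{bU,k}$ and the BS-receiver delays $\tau_{qu,k}$ and Dopplers $\nu_{qU,k}$; the channel gains $\bm{\beta}$ and the offsets $\delta_{bU},\epsilon_{bU},\delta_{QU},\epsilon_{QU}$ are geometry-independent, so their gradients vanish. Combining this with the block-diagonal channel FIM derived in the preceding section---whose only nonzero entries are the diagonal terms $\bm{F}_{\bm{y}}(\bm{y}|\bm{\eta};\tau,\tau)$ and $\bm{F}_{\bm{y}}(\bm{y}|\bm{\eta};\nu,\nu)$ together with the delay-offset and Doppler-offset couplings---the double sum collapses dramatically: the $\tau$-$\delta$ and $\nu$-$\epsilon$ cross terms are annihilated by $\nabla_{\bm{p}_{U,0}}\delta=\bm{0}$ and $\nabla_{\bm{p}_{U,0}}\epsilon=\bm{0}$, there is no $\tau$-$\nu$ coupling, and distinct antennas and time slots are uncoupled. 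Only the diagonal delay and Doppler self-terms of each link survive.

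The remaining step is to substitute the channel FIM values and evaluate the gradients. Inserting $\bm{F}_{\bm{y}}(\bm{y}|\bm{\eta};\tau_{bu,k},\tau_{bu,k})=\underset{bu,k}{\operatorname{SNR}}\,\omega_{bU,k}$ and $\bm{F}_{\bm{y}}(\bm{y}|\bm{\eta};\nu_{bU,k},\nu_{bU,k})=\tfrac12\,\underset{bu,k}{\operatorname{SNR}}\,f_c^2\alpha_{obu,k}^2$, together with their BS analogues, produces the four summands. The delay gradient follows from $\tau_{bu,k}=\norm{\bm{p}_{u,k}-(\bm{p}_{b,k}+\check{\bm{p}}_{b,k})}/c$ and $\bm{p}_{u,k}=\bm{p}_{U,0}+\tilde{\bm{p}}_{U,k}+\bm{s}_u$: differentiating the Euclidean norm gives $\nabla_{\bm{p}_{U,0}}\tau_{bu,k}=\bm{\Delta}_{bu,k}/c$, so its SNR-weighted outer product yields the $\tfrac{\omega_{bU,k}}{c^2}\bm{\Delta}_{bu,k}\bm{\Delta}_{bu,k}^{\mathrm{T}}$ contribution, and $\tau_{qu,k}=\norm{\bm{p}_{u,k}-\bm{p}_{q,k}}/c$ gives the analogous $\bm{\Delta}_{qu,k}$ term. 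The Doppler self-terms are retained in the symbolic form $\nabla_{\bm{p}_{U,0}}\nu_{bU,k}\nabla_{\bm{p}_{U,0}}^{\mathrm{T}}\nu_{bU,k}$ (and likewise for the BS link). The main obstacle is bookkeeping rather than analysis: one must track which of the many $(\eta_i,\eta_j)$ pairs have both a nonzero channel-FIM entry and nonzero position gradients, and in particular verify that the offset couplings truly drop out. The one genuinely nontrivial computation is the Doppler gradient $\nabla_{\bm{p}_{U,0}}\nu_{bU,k}$, since the Doppler depends on position both explicitly and through the line-of-sight direction $\bm{\Delta}_{bU,k}$; the lemma sidesteps this by leaving that gradient unexpanded, deferring its closed form to the appendix.
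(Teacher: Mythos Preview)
Your proposal is correct and follows essentially the same approach as the paper: write the $(\bm{p}_{U,0},\bm{p}_{U,0})$ block of $\mathbf{\Upsilon}_{\bm{\kappa}}\mathbf{J}_{\bm{y}|\bm{\eta}}\mathbf{\Upsilon}_{\bm{\kappa}}^{\mathrm{T}}$ as a sum of $\nabla_{\bm{p}_{U,0}}\eta_i\,\bm{F}_{\bm{y}}(\bm{y}|\bm{\eta};\eta_i,\eta_j)\,\nabla_{\bm{p}_{U,0}}^{\mathrm{T}}\eta_j$ terms, drop the cross-$(u,k)$ and offset-coupling terms by the diagonality of the channel FIM and the vanishing gradients $\nabla_{\bm{p}_{U,0}}\delta=\nabla_{\bm{p}_{U,0}}\epsilon=\bm 0$, and substitute the delay and Doppler self-information entries. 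The paper's appendix carries this out by first displaying the full double sum over $(k,u,k',u')$ including all four $\tau$--$\tau$, $\tau$--$\nu$, $\nu$--$\tau$, $\nu$--$\nu$ products, then collapsing it to the diagonal, then plugging in the channel FIM values---exactly the bookkeeping you describe.
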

\begin{proof}
See Appendix \ref{Appendix_lemma_FIM_3D_position}.
\end{proof}

\begin{lemma}
\label{lemma:FIM_3D_position_3D_velocity}
The FIM relating the $3$D position and $3$D velocity of the receiver is
\begin{equation}
\label{equ_lemma:FIM_3D_position_3D_velocity}
\begin{aligned}
&{\bm{F}_{{{y} }}(\bm{y}_{}| \bm{\eta} ;\bm{p}_{U,0},\bm{v}_{U,0}) = } \\ & 
\medmath{{\sum_{b,k^{},u^{}} \underset{bu,k}{\operatorname{SNR}}  \Bigg[ \frac{(k) \omega_{bU,k}\Delta_{t}}{c^2}    \bm{\Delta}_{bu,k} 
\bm{\Delta}_{bu,k}^{\mathrm{T}} }    - {\frac{f_{c}^2 \alpha_{obu,k}^2\nabla_{\bm{p}_{U,0}} \nu_{bU,k} \bm{\Delta}_{bU,k}^{\mathrm{T}}}{2 \; c}  \Bigg]}} + \\
&\medmath{{\sum_{q,k^{},u^{}} \underset{qu,k}{\operatorname{SNR}}  \Bigg[ \frac{(k) \omega_{qU,k}\Delta_{t}}{c^2}    \bm{\Delta}_{qu,k} 
\bm{\Delta}_{qu,k}^{\mathrm{T}} }    - {\frac{f_{c}^2 \alpha_{oqu,k}^2\nabla_{\bm{p}_{U,0}} \nu_{qU,k} \bm{\Delta}_{qU,k}^{\mathrm{T}}}{2 \; c}  \Bigg]}}.
\end{aligned}
\end{equation}

\end{lemma}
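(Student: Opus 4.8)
The plan is to derive this cross-block exactly as Lemma~\ref{lemma:FIM_3D_position} was derived, namely by pushing the channel-parameter FIM through the Jacobian transformation $\mathbf{J}_{\bm{y}|\bm{\kappa}} = \mathbf{\Upsilon}_{\bm{\kappa}} \mathbf{J}_{\bm{y}|\bm{\eta}} \mathbf{\Upsilon}_{\bm{\kappa}}^{\mathrm{T}}$ and reading off the $(\bm{p}_{U,0},\bm{v}_{U,0})$ entry. Writing the chain rule as $\bm{F}(\bm{y}|\bm{\eta};\bm{p}_{U,0},\bm{v}_{U,0}) = \sum_{m,n} (\nabla_{\bm{p}_{U,0}}\eta_m)\,\bm{F}(\bm{y}|\bm{\eta};\eta_m,\eta_n)\,(\nabla_{\bm{v}_{U,0}}\eta_n)^{\mathrm{T}}$, I would first prune this double sum. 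The channel gains and all the time and frequency offsets are functionally independent of both $\bm{p}_{U,0}$ and $\bm{v}_{U,0}$, so their gradients vanish; and because the channel-parameter FIM derived in the previous section carries no delay--Doppler cross entries, the only surviving contributions are the delay--delay block and the Doppler--Doppler block. This collapses the cross-block into a single sum over $u,k$ for each LEO $b$ (and, identically, over $u,k$ for each BS $q$) of exactly two terms, with the per-antenna delays and the shared-across-antennas Doppler both accumulated under a common $\sum_{u,k}$.

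Second, I would compute the three required Jacobian entries. Using $\bm{p}_{u,k} = \bm{p}_{U,0} + k\Delta_t \bm{v}_{U,0} + \bm{s}_u$ together with the constant-velocity assumption $\bm{v}_{U,k}=\bm{v}_{U,0}$, the gradient of the norm gives $\nabla_{\bm{p}_{U,0}}\tau_{bu,k} = \bm{\Delta}_{bu,k}/c$, while the \emph{same} chain rule applied through the $k\Delta_t \bm{v}_{U,0}$ dependence yields $\nabla_{\bm{v}_{U,0}}\tau_{bu,k} = (k)\Delta_t\,\bm{\Delta}_{bu,k}/c$. For the Doppler, since $\bm{\Delta}_{bU,k}$ does not depend on velocity, $\nabla_{\bm{v}_{U,0}}\nu_{bU,k} = -\bm{\Delta}_{bU,k}/c$, whereas $\nabla_{\bm{p}_{U,0}}\nu_{bU,k}$ is left in symbolic form (it is the same position gradient already appearing in Lemma~\ref{lemma:FIM_3D_position}). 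The main obstacle is precisely this velocity derivative of the delay: one must recognize that the receiver position at slot $k$ inherits its velocity dependence solely through the traveled distance $\tilde{\bm{p}}_{U,k}=k\Delta_t\bm{v}_{U,0}$, which is what produces the asymmetric $(k)\Delta_t$ factor and makes this cross-block differ structurally from the symmetric position block of Lemma~\ref{lemma:FIM_3D_position}.

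Finally, I would substitute the channel-parameter FIM values $\bm{F}(\tau_{bu,k},\tau_{bu,k})=\underset{bu,k}{\operatorname{SNR}}\,\omega_{bU,k}$ and $\bm{F}(\nu_{bU,k},\nu_{bU,k})=0.5\,\underset{bu,k}{\operatorname{SNR}}\,f_c^2\alpha_{obu,k}^2$. The delay--delay contribution then reads $\underset{bu,k}{\operatorname{SNR}}\,\frac{(k)\omega_{bU,k}\Delta_t}{c^2}\bm{\Delta}_{bu,k}\bm{\Delta}_{bu,k}^{\mathrm{T}}$, and the Doppler--Doppler contribution reads $-\underset{bu,k}{\operatorname{SNR}}\,\frac{f_c^2\alpha_{obu,k}^2}{2c}\nabla_{\bm{p}_{U,0}}\nu_{bU,k}\,\bm{\Delta}_{bU,k}^{\mathrm{T}}$, which are exactly the two summands of the LEO sum in \eqref{equ_lemma:FIM_3D_position_3D_velocity}. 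The BS terms follow verbatim after replacing the index $b$ by $q$ and using $\nu_{qU,k}=\bm{\Delta}_{qU,k}^{\mathrm{T}}(0-\bm{v}_{U,k})/c$, which again gives $\nabla_{\bm{v}_{U,0}}\nu_{qU,k}=-\bm{\Delta}_{qU,k}/c$ along with the delay gradients $\nabla_{\bm{p}_{U,0}}\tau_{qu,k}=\bm{\Delta}_{qu,k}/c$ and $\nabla_{\bm{v}_{U,0}}\tau_{qu,k}=(k)\Delta_t\,\bm{\Delta}_{qu,k}/c$, completing the derivation.
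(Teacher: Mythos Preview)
Your proposal is correct and follows essentially the same route as the paper's own proof in Appendix~\ref{Appendix_lemma_FIM_3D_position_3D_velocity}: apply the bijective transformation $\mathbf{\Upsilon}_{\bm{\kappa}}\mathbf{J}_{\bm{y}|\bm{\eta}}\mathbf{\Upsilon}_{\bm{\kappa}}^{\mathrm{T}}$ via the chain rule, use the Jacobian entries from Appendix~\ref{Appendix_Entries_in_transformation_matrix}, and substitute the channel-parameter FIM values. The only cosmetic difference is that the paper first writes the full four-term expansion (including the delay--Doppler cross terms, equations~\eqref{equ_lemma:FIM_3D_position_3D_velocity_1}--\eqref{equ_lemma:FIM_3D_position_3D_velocity_2}) before letting those cross terms vanish upon substitution, whereas you prune them at the outset; the argument and the computed gradients are identical.
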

\begin{proof}
See Appendix \ref{Appendix_lemma_FIM_3D_position_3D_velocity}.
\end{proof}

\begin{lemma}
\label{lemma:FIM_3D_position_3D_orientation}
The FIM relating the $3$D position and $3$D orientation of the receiver is
\begin{equation}
\label{equ_lemma:FIM_3D_position_3D_orientation}
\begin{aligned}
{\bm{F}_{{{y} }}(\bm{y}_{}| \bm{\eta} ;\bm{p}_{U,0},\bm{\Phi}_{U})  } &=  {\sum_{b,k^{},u^{}} \underset{bu,k}{\operatorname{SNR}}  \Bigg[ \frac{\omega_{bU,k}}{c}   \bm{\Delta}_{bu,k} 
 \nabla_{\bm{\Phi}_{U}}^{\mathrm{T}} \tau_{bu^{},k^{}}}    \Bigg] \\ &+ {\sum_{q,k^{},u^{}} \underset{qu,k}{\operatorname{SNR}}  \Bigg[ \frac{\omega_{qU,k}}{c}   \bm{\Delta}_{qu,k} 
 \nabla_{\bm{\Phi}_{U}}^{\mathrm{T}} \tau_{qu^{},k^{}}}    \Bigg].
\end{aligned}
\end{equation}

\end{lemma}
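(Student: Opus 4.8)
The plan is to read this block off the bijective transformation $\mathbf{J}_{\bm{y}|\bm{\kappa}} = \mathbf{\Upsilon}_{\bm{\kappa}} \mathbf{J}_{\bm{y}|\bm{\eta}} \mathbf{\Upsilon}_{\bm{\kappa}}^{\mathrm{T}}$, which is the chain rule applied to the FIM. Written entrywise, the position--orientation block is
\[
\bm{F}_{{{y} }}(\bm{y}| \bm{\eta} ;\bm{p}_{U,0},\bm{\Phi}_{U}) = \sum_{m,n} \left(\nabla_{\bm{p}_{U,0}} \eta_m\right) \left[\mathbf{J}_{\bm{y}|\bm{\eta}}\right]_{mn} \left(\nabla_{\bm{\Phi}_{U}} \eta_n\right)^{\mathrm{T}},
\]
where $m,n$ index all channel parameters. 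The first step is to observe that the receiver orientation $\bm{\Phi}_{U}$ enters the model \emph{only} through the antenna displacements $\bm{s}_{u} = \bm{Q}_{U}\tilde{\bm{s}}_{u}$, hence only through the per-antenna positions $\bm{p}_{u,k}$, and therefore only through the delays $\tau_{bu,k}$ and $\tau_{qu,k}$. The Dopplers $\nu_{bU,k},\nu_{qU,k}$ are built from the centroid directions together with the receiver speed and heading, none of which involve $\bm{Q}_{U}$, so $\nabla_{\bm{\Phi}_{U}}\nu_{bU,k} = \nabla_{\bm{\Phi}_{U}}\nu_{qU,k} = \bm{0}$. Thus every $\eta_n$ surviving differentiation by $\bm{\Phi}_{U}$ is a delay, which is precisely why --- unlike Lemma~\ref{lemma:FIM_3D_position} --- no Doppler term appears in this block.

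The second step prunes the index $m$ using the sparsity of $\mathbf{J}_{\bm{y}|\bm{\eta}}$ established in the previous section. The only nonzero channel-FIM entries touching a delay $\tau_{bu,k}$ are the self-term $\bm{F}_{{\bm{y} }}(\bm{y}| \bm{\eta} ;\tau_{bu,k},\tau_{bu,k}) = \underset{bu,k}{\operatorname{SNR}}\,\omega_{bU,k}$ and the coupling with the time offset $\bm{F}_{{\bm{y} }}(\bm{y}| \bm{\eta} ;\tau_{bu,k},\delta_{bU})$; delays on distinct antennas or time slots are uncorrelated. Since the time offset $\delta_{bU}$ is an independent nuisance parameter, $\nabla_{\bm{p}_{U,0}}\delta_{bU} = \bm{0}$ and the coupling contributes nothing. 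Only the diagonal delay self-terms therefore survive, and the identical argument with $b \to q$ disposes of the BS link, collapsing the double sum to single sums over $(b,u,k)$ and $(q,u,k)$.

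The final step is geometric. Differentiating $\tau_{bu,k} = \norm{\bm{p}_{u,k} - (\bm{p}_{b,k} + \check{\bm{p}}_{b,k})}/c$ with respect to $\bm{p}_{U,0}$, via $\bm{p}_{u,k} = \bm{p}_{U,0} + \tilde{\bm{p}}_{U,k} + \bm{s}_{u}$, returns the unit line-of-sight direction from the effective LEO position to the $u^{\text{th}}$ antenna scaled by $1/c$, i.e.\ $\nabla_{\bm{p}_{U,0}}\tau_{bu,k} = \bm{\Delta}_{bu,k}/c$, and likewise $\nabla_{\bm{p}_{U,0}}\tau_{qu,k} = \bm{\Delta}_{qu,k}/c$. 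Substituting these gradients and the delay self-FIM entries reproduces (\ref{equ_lemma:FIM_3D_position_3D_orientation}), with $\omega_{bU,k}/c$ (resp.\ $\omega_{qU,k}/c$) multiplying $\bm{\Delta}_{bu,k}\nabla_{\bm{\Phi}_{U}}^{\mathrm{T}}\tau_{bu,k}$ and the orientation gradient left in symbolic form, as in the statement. I expect the main obstacle to be the two pruning arguments rather than any computation: one must argue carefully that the Dopplers carry no orientation dependence and that the nonzero delay--offset coupling vanishes under differentiation by position, after which the elementary gradient $\bm{\Delta}_{bu,k}/c$ closes the proof.
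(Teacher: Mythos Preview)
Your proposal is correct and follows essentially the same route as the paper's proof: apply the bijective transformation (chain rule), observe that $\bm{\Phi}_U$ enters only through the delays so the right index is restricted to delays, use the channel-FIM sparsity to kill all but the diagonal delay self-terms, and substitute $\nabla_{\bm{p}_{U,0}}\tau_{bu,k}=\bm{\Delta}_{bu,k}/c$ together with $\bm{F}_{\bm{y}}(\bm{y}|\bm{\eta};\tau_{bu,k},\tau_{bu,k})=\underset{bu,k}{\operatorname{SNR}}\,\omega_{bU,k}$. Your explicit remark that the delay--time-offset coupling is annihilated because $\nabla_{\bm{p}_{U,0}}\delta_{bU}=\bm{0}$ is a detail the paper leaves implicit, but otherwise the arguments coincide.
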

\begin{proof}
See Appendix \ref{Appendix_lemma_FIM_3D_position_3D_orientation}.
\end{proof}

\begin{lemma}
\label{lemma:FIM_3D_position_3D_b_th_position_offset}
The FIM relating the $3$D position of the receiver and $\check{\bm{p}}_{b,0}$ is

\begin{equation}
\begin{aligned}
\label{equ_lemma:FIM_3D_position_3D_b_th_position_offset}
&{\bm{F}_{{{y} }}(\bm{y}_{}| \bm{\eta} ;\bm{p}_{U,0},\check{\bm{p}}_{b,0}) = } \\&{\sum_{k^{},u^{}}  \underset{bu,k}{\operatorname{SNR}}  \Bigg[\frac{-\omega_{bU,k} }{c^2}   \bm{\Delta}_{bu,k} 
 \bm{\Delta}_{bu^{},k^{}}^{\mathrm{T}}}  + {\frac{f_{c}^2 \alpha_{obu,k}^2\nabla_{\bm{p}_{U,0}} \nu_{bU,k} \nabla_{\check{\bm{p}}_{b,0}}^{\mathrm{T}}\nu_{bU,k}}{2}  \Bigg]} .
\end{aligned}
\end{equation}

\end{lemma}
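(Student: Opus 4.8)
The plan is to read off this cross block of $\mathbf{J}_{\bm{y}|\bm{\kappa}}$ directly from the bijective transformation $\mathbf{J}_{\bm{y}|\bm{\kappa}} = \mathbf{\Upsilon}_{\bm{\kappa}}\mathbf{J}_{\bm{y}|\bm{\eta}}\mathbf{\Upsilon}_{\bm{\kappa}}^{\mathrm{T}}$, which in terms of the chain rule reads
\[
\bm{F}_{\bm{y}}(\bm{y}|\bm{\eta};\bm{p}_{U,0},\check{\bm{p}}_{b,0})
= \sum_{i,j}\big(\nabla_{\bm{p}_{U,0}}\eta_i\big)\,[\mathbf{J}_{\bm{y}|\bm{\eta}}]_{\eta_i\eta_j}\,\big(\nabla_{\check{\bm{p}}_{b,0}}\eta_j\big)^{\mathrm{T}},
\]
where the sum runs over all channel parameters. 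First I would determine which links survive. Of the three links, only the LEO-receiver link depends on \emph{both} $\bm{p}_{U,0}$ and $\check{\bm{p}}_{b,0}$: the LEO-BS link involves $\check{\bm{p}}_{b,0}$ but not the receiver position, so $\nabla_{\bm{p}_{U,0}}(\cdot)=\bm{0}$ there, and the BS-receiver link involves $\bm{p}_{U,0}$ but not the LEO offset, so $\nabla_{\check{\bm{p}}_{b,0}}(\cdot)=\bm{0}$ there. Hence only the $bU$ terms contribute, explaining why a single LEO-receiver sum appears.

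Within the LEO-receiver link the only channel parameters whose values move with the positions are the delays $\tau_{bu,k}$ (through the range) and the Dopplers $\nu_{bU,k}$ (through the unit direction $\bm{\Delta}_{bU,k}$); the offsets $\delta_{bU},\epsilon_{bU}$ and the gains $\beta_{bu,k}$ are geometry-independent and drop out. Invoking the channel-FIM entries already derived, namely $\bm{F}(\tau_{bu,k},\tau_{bu,k})=\underset{bu,k}{\operatorname{SNR}}\,\omega_{bU,k}$ and $\bm{F}(\nu_{bU,k},\nu_{bU,k})=0.5\,\underset{bu,k}{\operatorname{SNR}}\,f_c^2\alpha_{obu,k}^2$, together with $\bm{F}(\tau_{bu,k},\nu_{bU,k})=0$ and the vanishing of all off-diagonal antenna/time-slot pairings, the double sum collapses to $\sum_{u,k}$ of one delay piece plus one Doppler piece.

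The key computation is the pair of delay gradients. Since $\mathbf{p}_{u,k}$ translates rigidly with $\bm{p}_{U,0}$ (unit Jacobian) while $\check{\mathbf{p}}_{b,k}=\check{\mathbf{p}}_{b,0}$ enters the range argument with the opposite sign, differentiating $\tau_{bu,k}=\|\mathbf{p}_{u,k}-(\mathbf{p}_{b,k}+\check{\mathbf{p}}_{b,k})\|/c$ yields $\nabla_{\bm{p}_{U,0}}\tau_{bu,k}=\bm{\Delta}_{bu,k}/c$ but $\nabla_{\check{\bm{p}}_{b,0}}\tau_{bu,k}=-\bm{\Delta}_{bu,k}/c$. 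Multiplying by $\underset{bu,k}{\operatorname{SNR}}\,\omega_{bU,k}$ then produces the \emph{negative} first term $-\underset{bu,k}{\operatorname{SNR}}\,\omega_{bU,k}\,\bm{\Delta}_{bu,k}\bm{\Delta}_{bu,k}^{\mathrm{T}}/c^2$, in contrast with the positive position-position term of Lemma~\ref{lemma:FIM_3D_position}. For the Doppler contribution I would leave $\nabla_{\bm{p}_{U,0}}\nu_{bU,k}$ and $\nabla_{\check{\bm{p}}_{b,0}}\nu_{bU,k}$ in symbolic form, so that scaling by the positive coefficient $0.5\,\underset{bu,k}{\operatorname{SNR}}\,f_c^2\alpha_{obu,k}^2$ reproduces the stated second term verbatim.

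I expect the main obstacle to be the sign bookkeeping rather than any hard analysis: one must track that the receiver and LEO positions enter the range with opposite signs (giving the minus on the delay term) while the Doppler coefficient remains positive with the asymmetry absorbed into $\nabla_{\check{\bm{p}}_{b,0}}\nu_{bU,k}$, and simultaneously confirm that every other pairing (the cross delay-Doppler terms, the offset and gain blocks, and the off-diagonal antenna/time-slot couplings) vanishes. Once the surviving terms are isolated, differentiating the norm and the direction vector is routine and mirrors the computation behind Lemma~\ref{lemma:FIM_3D_position}.
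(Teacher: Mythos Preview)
Your proposal is correct and follows essentially the same approach as the paper: write the cross block via the chain rule through the channel parameters, observe that only the $b$th LEO-receiver link carries both $\bm{p}_{U,0}$ and $\check{\bm{p}}_{b,0}$, keep the four delay/Doppler pairings, drop the vanishing $\tau$--$\nu$ cross terms and off-diagonal $(u,k)$ couplings, and then substitute the channel-FIM scalars together with $\nabla_{\bm{p}_{U,0}}\tau_{bu,k}=\bm{\Delta}_{bu,k}/c$ and $\nabla_{\check{\bm{p}}_{b,0}}\tau_{bu,k}=-\bm{\Delta}_{bu,k}/c$. The paper's appendix proceeds in exactly this order (a double sum with all four pairings, collapsed to a single $\sum_{u,k}$, then the final substitution), so there is no methodological difference.
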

\begin{proof}
See Appendix \ref{Appendix_lemma_FIM_3D_position_3D_b_th_position_offset}.
\end{proof}

\begin{lemma}
\label{lemma:FIM_3D_position_3D_b_th_velocity_offset}
The FIM relating the $3$D position of the receiver and $\check{\bm{v}}_{b,0}$ is
\begin{equation}
\label{equ_lemma:FIM_3D_position_3D_b_th_velocity_offset}
\begin{aligned}
&{\bm{F}_{{{y} }}(\bm{y}_{}| \bm{\eta} ;\bm{p}_{U,0},\check{\bm{v}}_{b,0}) = } \\ & 
\medmath{{\sum_{k^{},u^{}} \underset{bu,k}{\operatorname{SNR}}  \Bigg[ \frac{-(k) \omega_{bU,k}\Delta_{t}}{c^2}    \bm{\Delta}_{bu,k} 
\bm{\Delta}_{bu,k}^{\mathrm{T}} }    + {\frac{f_{c}^2 \alpha_{obu,k}^2\nabla_{\bm{p}_{U,0}} \nu_{bU,k} \bm{\Delta}_{bU,k}^{\mathrm{T}}}{2 \; c^{}}  \Bigg]}}.
\end{aligned}
\end{equation}

\end{lemma}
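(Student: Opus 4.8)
The plan is to read off this cross-information block from the bijective transformation $\mathbf{J}_{\bm{y}|\bm{\kappa}} = \mathbf{\Upsilon}_{\bm{\kappa}} \mathbf{J}_{\bm{y}|\bm{\eta}} \mathbf{\Upsilon}_{\bm{\kappa}}^{\mathrm{T}}$ introduced above, which by the chain rule writes the $(\bm{p}_{U,0},\check{\bm{v}}_{b,0})$ block as $\bm{F}_{{{y} }}(\bm{y}_{}| \bm{\eta} ;\bm{p}_{U,0},\check{\bm{v}}_{b,0}) = \sum_{i,j} \nabla_{\bm{p}_{U,0}} \eta_i \, \bm{F}_{{{y} }}(\bm{y}_{}| \bm{\eta} ;\eta_i,\eta_j) \, \nabla_{\check{\bm{v}}_{b,0}}^{\mathrm{T}} \eta_j$, the sum running over all channel parameters $\eta_i,\eta_j$. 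First I would exploit the block-diagonal structure of $\mathbf{J}_{\bm{y}|\bm{\eta}}$ established earlier: since $\check{\bm{v}}_{b,0}$ enters only the channel parameters of the $b^{\text{th}}$-LEO links (the Doppler $\nu_{bU,k}$ in the LEO--receiver link and $\nu_{bq,k}$ in the LEO--BS link) while $\bm{p}_{U,0}$ enters only receiver-side parameters, the sole surviving contributions come from channel parameters of the LEO$_b$--receiver link that \emph{both} location quantities influence. The LEO--BS block drops out because $\bm{p}_{U,0}$ does not appear there, and the BS--receiver block drops out because $\check{\bm{v}}_{b,0}$ does not appear there. The channel gains $\bm{\beta}_{bU}$, being their own coordinates in $\bm{\kappa}_2$ and orthogonal to all geometric entries, also contribute nothing.

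Next I would restrict attention to the two coupled sub-blocks of the LEO$_b$--receiver FIM, the delay/time-offset block $\{\tau_{bu,k},\delta_{bU}\}$ and the Doppler/frequency-offset block $\{\nu_{bU,k},\epsilon_{bU}\}$. For the delay block I would use the Jacobian entries from Appendix \ref{Appendix_Entries_in_transformation_matrix}: $\bm{p}_{U,0}$ acts on $\tau_{bu,k}$ through $\nabla_{\bm{p}_{U,0}} \tau_{bu,k} = \bm{\Delta}_{bu,k}/c$, whereas $\check{\bm{v}}_{b,0}$ reaches $\tau_{bu,k}$ only through the LEO position accumulated over the $k^{\text{th}}$ slot, giving $\nabla_{\check{\bm{v}}_{b,0}} \tau_{bu,k} = -(k)\Delta_{t} \bm{\Delta}_{bu,k}/c$, while neither quantity perturbs $\delta_{bU}$. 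Combining these with $\bm{F}_{{{y} }}(\bm{y}_{}| \bm{\eta} ;\tau_{bu,k},\tau_{bu,k}) = \underset{bu,k}{\operatorname{SNR}} \, \omega_{bU,k}$ yields the first term, $-(k)\omega_{bU,k}\Delta_{t}/c^2 \, \bm{\Delta}_{bu,k}\bm{\Delta}_{bu,k}^{\mathrm{T}}$. For the Doppler block I would use $\nabla_{\check{\bm{v}}_{b,0}} \nu_{bU,k} = \bm{\Delta}_{bU,k}/c$ together with $\bm{F}_{{{y} }}(\bm{y}_{}| \bm{\eta} ;\nu_{bU,k},\nu_{bU,k}) = 0.5\,\underset{bu,k}{\operatorname{SNR}} f_{c}^2 \alpha_{obu,k}^2$; since neither $\bm{p}_{U,0}$ nor $\check{\bm{v}}_{b,0}$ perturbs $\epsilon_{bU}$, the $\nu$--$\epsilon$ cross entries contribute nothing, and the block produces the second term, $f_{c}^2 \alpha_{obu,k}^2 \nabla_{\bm{p}_{U,0}} \nu_{bU,k} \bm{\Delta}_{bU,k}^{\mathrm{T}}/(2c)$. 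Summing over $u$ and $k$ then assembles the claimed expression, which I note mirrors Lemma \ref{lemma:FIM_3D_position_3D_velocity} with flipped signs, reflecting that $\check{\bm{v}}_{b,0}$ and $\bm{v}_{U,0}$ enter the Doppler, and the LEO and receiver positions enter the delay, with opposite signs.

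The main obstacle I anticipate is the careful bookkeeping of exactly which channel parameters couple the two location quantities, and in particular justifying that $\check{\bm{v}}_{b,0}$ induces a delay contribution at all. This rests on the modeling convention that the LEO velocity offset propagates into the LEO position over the $N_K$ slots, which supplies the $(k)\Delta_{t}$ weighting and the minus sign; I would verify this Jacobian entry explicitly and confirm that the time- and frequency-offset nuisance parameters, being unaffected by both $\bm{p}_{U,0}$ and $\check{\bm{v}}_{b,0}$, introduce no extra cross terms, so that only the delay--delay and Doppler--Doppler diagonal channel entries survive.
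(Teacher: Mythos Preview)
Your proposal is correct and follows essentially the same route as the paper: apply the chain-rule transformation $\mathbf{J}_{\bm{y}|\bm{\kappa}} = \mathbf{\Upsilon}_{\bm{\kappa}} \mathbf{J}_{\bm{y}|\bm{\eta}} \mathbf{\Upsilon}_{\bm{\kappa}}^{\mathrm{T}}$, restrict to the LEO$_b$--receiver block, and pair the Jacobian entries $\nabla_{\bm{p}_{U,0}}\tau_{bu,k}$, $\nabla_{\check{\bm{v}}_{b,0}}\tau_{bu,k}$, $\nabla_{\bm{p}_{U,0}}\nu_{bU,k}$, $\nabla_{\check{\bm{v}}_{b,0}}\nu_{bU,k}$ with the channel-parameter FIM entries. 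The only cosmetic difference is that the paper first writes out all four $\tau$--$\tau$, $\tau$--$\nu$, $\nu$--$\tau$, $\nu$--$\nu$ cross terms and then lets the two mixed ones vanish, whereas you invoke the decoupling of the delay and Doppler sub-blocks upfront; both arrive at the same two surviving summands.
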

\begin{proof}
See Appendix \ref{Appendix_lemma_FIM_3D_position_3D_b_th_velocity_offset}.
\end{proof}

\begin{lemma}
\label{lemma:FIM_3D_velocity_3D_velocity}
The FIM of the $3$D velocity of the receiver is
\begin{equation}
\label{equ_lemma:FIM_3D_velocity_3D_velocity}
\begin{aligned}
&{\bm{F}_{{{y} }}(\bm{y}_{}| \bm{\eta} ;\bm{v}_{U,0},\bm{v}_{U,0}) = } \\ & 
\medmath{{\sum_{b,k^{},u^{}} \underset{bu,k}{\operatorname{SNR}}  \Bigg[ \frac{(k)^2 \omega_{bU,k}\Delta_{t}^{2}}{c^2}    \bm{\Delta}_{bu,k} 
\bm{\Delta}_{bu,k}^{\mathrm{T}} }    + {\frac{f_{c}^2 \alpha_{obu,k}^2 \bm{\Delta}_{bU,k} \bm{\Delta}_{bU,k}^{\mathrm{T}}}{2 \; c^{2}}  \Bigg]}} + \\
&\medmath{{\sum_{q,k^{},u^{}} \underset{qu,k}{\operatorname{SNR}}  \Bigg[ \frac{(k)^{2} \omega_{qU,k}\Delta_{t}^{2}}{c^2}    \bm{\Delta}_{qu,k} 
\bm{\Delta}_{qu,k}^{\mathrm{T}} }    + {\frac{f_{c}^2 \alpha_{oqu,k}^2\bm{\Delta}_{qU,k} \bm{\Delta}_{qU,k}^{\mathrm{T}}}{2 \; c^{2}}  \Bigg]}}.
\end{aligned}
\end{equation}

\end{lemma}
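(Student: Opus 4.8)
The plan is to read off the $(\bm{v}_{U,0},\bm{v}_{U,0})$ sub-block of the location FIM produced by the Jacobian transformation $\mathbf{J}_{\bm{y}|\bm{\kappa}} = \mathbf{\Upsilon}_{\bm{\kappa}} \mathbf{J}_{\bm{y}|\bm{\eta}} \mathbf{\Upsilon}_{\bm{\kappa}}^{\mathrm{T}}$. Since $\mathbf{J}_{\bm{y}|\bm{\eta}}$ is block diagonal across links and $\bm{v}_{U,0}$ appears only in the LEO-receiver and BS-receiver blocks, I would first determine exactly which channel parameters carry velocity dependence. Using $\bm{v}_{U,k} = \bm{v}_{U,0}$ and the traveled displacement $\Tilde{\bm{p}}_{U,k} = (k)\Delta_t\,\bm{v}_{U,0}$, the velocity enters in precisely two ways: through the delays $\tau_{bu,k}$ (via the displaced antenna position $\bm{p}_{u,k} = \bm{p}_{u,o} + (k)\Delta_t\,\bm{v}_{U,0}$) and directly through the Dopplers $\nu_{bU,k}$. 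The channel gains and the offsets $\delta_{bU},\epsilon_{bU},\delta_{QU},\epsilon_{QU}$ have no velocity dependence.

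Next I would compute the two relevant columns of $\mathbf{\Upsilon}_{\bm{\kappa}}$ by the chain rule. For the delay, using that the position-to-velocity Jacobian is $(k)\Delta_t\,\bm{I}$,
\begin{equation*}
\nabla_{\bm{v}_{U,0}}\tau_{bu,k} = \left(\frac{\partial \bm{p}_{u,k}}{\partial \bm{v}_{U,0}}\right)\nabla_{\bm{p}_{u,k}}\tau_{bu,k} = \frac{(k)\Delta_t}{c}\,\bm{\Delta}_{bu,k},
\end{equation*}
and, since $\bm{v}_{U,0}$ enters $\nu_{bU,k} = \bm{\Delta}_{bU,k}^{\mathrm{T}}(\bm{v}_{b,k}+\check{\bm{v}}_{b,k}-\bm{v}_{U,0})/c$ linearly,
\begin{equation*}
\nabla_{\bm{v}_{U,0}}\nu_{bU,k} = -\frac{1}{c}\,\bm{\Delta}_{bU,k},
\end{equation*}
with the analogous expressions for the BS link obtained by replacing $\bm{\Delta}_{bu,k},\bm{\Delta}_{bU,k}$ with $\bm{\Delta}_{qu,k},\bm{\Delta}_{qU,k}$.

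I would then exploit the sparsity of $\mathbf{J}_{\bm{y}|\bm{\eta}}$. Because the offsets satisfy $\nabla_{\bm{v}_{U,0}}\delta_{bU} = \nabla_{\bm{v}_{U,0}}\epsilon_{bU} = \bm{0}$, the contributions of the off-diagonal channel entries $\bm{F}(\tau_{bu,k},\delta_{bU})$ and $\bm{F}(\nu_{bU,k},\epsilon_{bU})$ vanish from the $(\bm{v}_{U,0},\bm{v}_{U,0})$ block, leaving only the self-information terms $\bm{F}(\tau_{bu,k},\tau_{bu,k}) = \underset{bu,k}{\operatorname{SNR}}\,\omega_{bU,k}$ and $\bm{F}(\nu_{bU,k},\nu_{bU,k}) = 0.5\,\underset{bu,k}{\operatorname{SNR}}\,f_c^2\alpha_{obu,k}^2$ derived earlier. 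Forming the rank-one outer products $(\nabla_{\bm{v}_{U,0}}\tau_{bu,k})\,\bm{F}(\tau_{bu,k},\tau_{bu,k})\,(\nabla_{\bm{v}_{U,0}}\tau_{bu,k})^{\mathrm{T}}$ and $(\nabla_{\bm{v}_{U,0}}\nu_{bU,k})\,\bm{F}(\nu_{bU,k},\nu_{bU,k})\,(\nabla_{\bm{v}_{U,0}}\nu_{bU,k})^{\mathrm{T}}$ and summing over $b,u,k$ gives the LEO contribution, and the identical construction over $q,u,k$ gives the BS contribution, reproducing (\ref{equ_lemma:FIM_3D_velocity_3D_velocity}).

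The main obstacle is the delay Jacobian: I must verify that $\bm{v}_{U,0}$ affects $\tau_{bu,k}$ only through the displaced position $\bm{p}_{u,k}$ and that the chain rule yields the factor $(k)\Delta_t/c$ multiplying the instantaneous unit direction $\bm{\Delta}_{bu,k}$, treating this first-order term as dominant. The Doppler derivative is immediate by linearity. Once both Jacobian columns are in hand, the remainder is a routine algebraic substitution of the previously derived channel-FIM entries.
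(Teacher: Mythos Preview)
Your proposal is correct and follows essentially the same route as the paper: compute the Jacobians $\nabla_{\bm{v}_{U,0}}\tau_{bu,k}=(k)\Delta_t\bm{\Delta}_{bu,k}/c$ and $\nabla_{\bm{v}_{U,0}}\nu_{bU,k}=-\bm{\Delta}_{bU,k}/c$ (and their BS analogues), expand the $(\bm{v}_{U,0},\bm{v}_{U,0})$ block of $\mathbf{\Upsilon}_{\bm{\kappa}}\mathbf{J}_{\bm{y}|\bm{\eta}}\mathbf{\Upsilon}_{\bm{\kappa}}^{\mathrm{T}}$, and use the sparsity of the channel FIM (in particular $\bm{F}(\tau_{bu,k},\nu_{bU,k})=0$) so that only the delay--delay and Doppler--Doppler self-terms survive. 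The paper's appendix writes the full four-term expansion first and then kills the cross terms upon substitution, whereas you prune them up front, but the content is identical.
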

\begin{proof}
See Appendix \ref{Appendix_lemma_FIM_3D_velocity_3D_velocity}.
\end{proof}

\begin{lemma}
\label{lemma:FIM_3D_velocity_3D_orientation}
The FIM relating the $3$D velocity and $3$D orientation of the receiver is
\begin{equation}
\label{equ_lemma:FIM_3D_velocity_3D_orientation}
\begin{aligned}
&{\bm{F}_{{{y} }}(\bm{y}_{}| \bm{\eta} ;\bm{v}_{U,0},\bm{\Phi}_{U})  } =
\\ &{\sum_{b,k^{},u^{}} \underset{bu,k}{\operatorname{SNR}}  \Bigg[ \frac{(k) \omega_{bU,k}\Delta_{t}^{}}{c}  \bm{\Delta}_{bu,k} 
 \nabla_{\bm{\Phi}_{U}}^{\mathrm{T}} \tau_{bu^{},k^{}}}    \Bigg] \\ &+ {\sum_{q,k^{},u^{}} \underset{qu,k}{\operatorname{SNR}}  \Bigg[  \frac{(k) \omega_{qU,k}\Delta_{t}^{}}{c}  \bm{\Delta}_{qu,k} 
 \nabla_{\bm{\Phi}_{U}}^{\mathrm{T}} \tau_{qu^{},k^{}}}    \Bigg].
\end{aligned}
\end{equation}

\end{lemma}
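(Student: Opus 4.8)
The plan is to read this cross block straight off the channel-to-location transformation $\mathbf{J}_{\bm{y}|\bm{\kappa}} = \mathbf{\Upsilon}_{\bm{\kappa}}\,\mathbf{J}_{\bm{y}|\bm{\eta}}\,\mathbf{\Upsilon}_{\bm{\kappa}}^{\mathrm{T}}$, which by the chain rule expresses the $(\bm{v}_{U,0},\bm{\Phi}_U)$ block as $\sum_{i,j} \nabla_{\bm{v}_{U,0}}\eta_i\,[\mathbf{J}_{\bm{y}|\bm{\eta}}]_{ij}\,\nabla_{\bm{\Phi}_U}^{\mathrm{T}}\eta_j$, with $\eta_i,\eta_j$ ranging over the channel parameters in $\bm{\eta}$. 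The first step is to record which channel parameters each location parameter touches: the receiver velocity $\bm{v}_{U,0}$ enters only the delays $\tau_{bu,k},\tau_{qu,k}$ and the Dopplers $\nu_{bU,k},\nu_{qU,k}$, whereas the orientation $\bm{\Phi}_U$ enters only the delays $\tau_{bu,k},\tau_{qu,k}$ through the antenna offset $\bm{s}_u=\bm{Q}_U\Tilde{\bm{s}}_u$. Crucially, the centroid-referenced Doppler $\nu_{bU,k}=\bm{\Delta}_{bU,k}^{\mathrm{T}}(\bm{v}_{b,k}+\check{\bm{v}}_{b,k}-\bm{v}_{U,k})/c$ and the offsets $\delta,\epsilon$ carry no orientation dependence, so $\nabla_{\bm{\Phi}_U}\nu_{bU,k}=\bm{0}$, $\nabla_{\bm{\Phi}_U}\delta=\bm{0}$, and $\nabla_{\bm{\Phi}_U}\epsilon=\bm{0}$.

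Next I would prune the double sum using the block structure of $\mathbf{J}_{\bm{y}|\bm{\eta}}$ established earlier. Within a single LEO link the delays are mutually uncorrelated across antennas and time slots, the only nonzero delay couplings being the diagonal term $\bm{F}_{{\bm{y}}}(\bm{y}|\bm{\eta};\tau_{bu,k},\tau_{bu,k})=\underset{bu,k}{\operatorname{SNR}}\,\omega_{bU,k}$ and the delay--time-offset term $\bm{F}_{{\bm{y}}}(\bm{y}|\bm{\eta};\tau_{bu,k},\delta_{bU})$, while the Doppler couples only to itself and to the frequency offset. Since $\bm{\Phi}_U$ must appear in $\eta_j$ and its gradient annihilates everything but a delay, the only surviving contractions have $j=\tau_{bu,k}$ (resp. $\tau_{qu,k}$); these then pair, through the diagonal delay entry, with $i=\tau_{bu,k}$ alone, because the delay--offset coupling and every Doppler coupling are killed by the vanishing orientation gradients. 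The cross block therefore collapses to $\sum_{b,u,k}\nabla_{\bm{v}_{U,0}}\tau_{bu,k}\,\underset{bu,k}{\operatorname{SNR}}\,\omega_{bU,k}\,\nabla_{\bm{\Phi}_U}^{\mathrm{T}}\tau_{bu,k}$ plus its BS analogue.

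Finally I would supply $\nabla_{\bm{v}_{U,0}}\tau_{bu,k}$ by the same range-gradient computation used in Lemma~\ref{lemma:FIM_3D_position_3D_velocity} and Lemma~\ref{lemma:FIM_3D_velocity_3D_velocity}. Because the velocity is held constant across slots, $\mathbf{p}_{u,k}=\mathbf{p}_{u,0}+(k)\Delta_t\,\bm{v}_{U,0}$, giving $\partial\mathbf{p}_{u,k}/\partial\bm{v}_{U,0}=(k)\Delta_t\,\mathbf{I}$, whereas differentiating the range norm yields the unit line-of-sight direction $\nabla_{\mathbf{p}_{u,k}}\tau_{bu,k}=\bm{\Delta}_{bu,k}/c$. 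Composing the two gives $\nabla_{\bm{v}_{U,0}}\tau_{bu,k}=\tfrac{(k)\Delta_t}{c}\bm{\Delta}_{bu,k}$, and substituting this together with the factor $\nabla_{\bm{\Phi}_U}^{\mathrm{T}}\tau_{bu,k}$ (already evaluated in Lemma~\ref{lemma:FIM_3D_position_3D_orientation}) reproduces the claimed outer-product expression, with the identical construction for the BS link. The only genuine obstacle is the bookkeeping in the second step: one must verify that every Doppler, time-offset, and frequency-offset cross term drops out, which rests entirely on the orientation-independence of the centroid Doppler and of the offsets; once that is settled, the remainder is the routine chain rule for the delay.
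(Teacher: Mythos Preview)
Your proposal is correct and follows essentially the same route as the paper: write the $(\bm{v}_{U,0},\bm{\Phi}_U)$ block via the transformation $\mathbf{\Upsilon}_{\bm{\kappa}}\,\mathbf{J}_{\bm{y}|\bm{\eta}}\,\mathbf{\Upsilon}_{\bm{\kappa}}^{\mathrm{T}}$, observe that only the diagonal delay--delay channel-FIM entries survive (since $\bm{\Phi}_U$ touches only delays and the delay FIM is diagonal in $(u,k)$), and substitute $\nabla_{\bm{v}_{U,0}}\tau_{bu,k}=\tfrac{(k)\Delta_t}{c}\bm{\Delta}_{bu,k}$ together with $\bm{F}_{\bm{y}}(\bm{y}|\bm{\eta};\tau_{bu,k},\tau_{bu,k})=\underset{bu,k}{\operatorname{SNR}}\,\omega_{bU,k}$. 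One small wording slip: the $i=\delta_{bU}$ contribution is killed not by a vanishing \emph{orientation} gradient but by $\nabla_{\bm{v}_{U,0}}\delta_{bU}=\bm{0}$; the conclusion is unaffected.
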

\begin{proof}
See Appendix \ref{Appendix_lemma_FIM_3D_velocity_3D_orientation}.
\end{proof}

\begin{lemma}
\label{lemma:FIM_3D_velocity_3D_b_th_position_offset}
The FIM relating the $3$D velocity of the receiver and $\check{\bm{p}}_{b,0}$ is

\begin{equation}
\begin{aligned}
\label{equ_lemma:FIM_3D_velocity_3D_b_th_position_offset}
&{\bm{F}_{{{y} }}(\bm{y}_{}| \bm{\eta} ;\bm{v}_{U,0},\check{\bm{p}}_{b,0}) = \sum_{k^{},u^{}}\underset{bu,k}{\operatorname{SNR}}} \\&{    \Bigg[\frac{-(k) \omega_{bU,k}\Delta_{t}^{}}{c^{2}}   \bm{\Delta}_{bu,k} 
 \bm{\Delta}_{bu^{},k^{}}^{\mathrm{T}}}  - {\frac{f_{c}^2 \alpha_{obu,k}^2\bm{\Delta}_{bU,k} \nabla_{\check{\bm{p}}_{b,0}}^{\mathrm{T}}\nu_{bU,k}}{2 \; c}  \Bigg]} .
\end{aligned}
\end{equation}

\end{lemma}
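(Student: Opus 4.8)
The plan is to read this cross-block directly off the bijective transformation $\mathbf{J}_{\bm{y}|\bm{\kappa}} = \mathbf{\Upsilon}_{\bm{\kappa}} \mathbf{J}_{\bm{y}|\bm{\eta}} \mathbf{\Upsilon}_{\bm{\kappa}}^{\mathrm{T}}$, restricting attention to the row block indexed by $\bm{v}_{U,0}$ and the column block indexed by $\check{\bm{p}}_{b,0}$. Expanding the matrix product via the chain rule, this block is
\begin{equation}
\begin{aligned}
\bm{F}_{\bm{y}}(\bm{y}|\bm{\eta};\bm{v}_{U,0},\check{\bm{p}}_{b,0}) = \sum_{m,n} \big(\nabla_{\bm{v}_{U,0}} \eta_m\big) \big[\mathbf{J}_{\bm{y}|\bm{\eta}}\big]_{mn} \big(\nabla_{\check{\bm{p}}_{b,0}} \eta_n\big)^{\mathrm{T}},
\end{aligned}
\end{equation}
with $\eta_m,\eta_n$ ranging over the channel parameters. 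The first and most important step is a structural reduction. The receiver velocity $\bm{v}_{U,0}$ enters only the LEO-receiver and BS-receiver channel parameters, whereas the offset $\check{\bm{p}}_{b,0}$ enters only the $b^{\text{th}}$-LEO channel parameters (the LEO-receiver and LEO-BS links). The sole link sharing a dependence on \emph{both} location parameters is therefore the $b^{\text{th}}$ LEO-receiver link, which is why the final sum runs over $u,k$ for the fixed index $b$, with no outer sum over LEOs and no contribution from either the BS-receiver or the LEO-BS link; this rests on the block-diagonal form of $\mathbf{J}_{\bm{y}|\bm{\eta}}$ across links.

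Within that link, only the delay $\tau_{bu,k}$ and the Doppler $\nu_{bU,k}$ have nonzero gradients with respect to $\bm{v}_{U,0}$ and $\check{\bm{p}}_{b,0}$; the channel gains and the offsets $\delta_{bU},\epsilon_{bU}$ are functionally independent of these location parameters and drop out. Since the channel FIM couples delay only with itself (and with $\delta_{bU}$) and Doppler only with itself (and with $\epsilon_{bU}$), we have $\bm{F}_{\bm{y}}(\bm{y}|\bm{\eta};\tau_{bu,k},\nu_{bU,k}) = 0$, so the double sum collapses to two diagonal contributions weighted by $\bm{F}_{\bm{y}}(\bm{y}|\bm{\eta};\tau_{bu,k},\tau_{bu,k}) = \underset{bu,k}{\operatorname{SNR}}\,\omega_{bU,k}$ and $\bm{F}_{\bm{y}}(\bm{y}|\bm{\eta};\nu_{bU,k},\nu_{bU,k}) = 0.5\,\underset{bu,k}{\operatorname{SNR}}\,f_c^2\,\alpha_{obu,k}^2$ respectively, both taken from the channel-FIM derivations established above.

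The remaining task is to evaluate the four gradients. Using the constant-velocity model $\bm{p}_{u,k} = \bm{p}_{u,0} + k\Delta_t \bm{v}_{U,0}$ and differentiating the delay through the norm gives $\nabla_{\bm{v}_{U,0}} \tau_{bu,k} = \frac{k\Delta_t}{c}\bm{\Delta}_{bu,k}$, while the subtracted offset $\check{\bm{p}}_{b,k}=\check{\bm{p}}_{b,0}$ yields $\nabla_{\check{\bm{p}}_{b,0}} \tau_{bu,k} = -\frac{1}{c}\bm{\Delta}_{bu,k}$. For the Doppler, the receiver velocity enters as $-\bm{v}_{U,k}/c$, so to leading order $\nabla_{\bm{v}_{U,0}} \nu_{bU,k} = -\frac{1}{c}\bm{\Delta}_{bU,k}$, and the sensitivity $\nabla_{\check{\bm{p}}_{b,0}} \nu_{bU,k}$ is left in symbolic form exactly as in the statement. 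Substituting these, the delay contribution is $-\frac{k\Delta_t\,\omega_{bU,k}}{c^2}\underset{bu,k}{\operatorname{SNR}}\,\bm{\Delta}_{bu,k}\bm{\Delta}_{bu,k}^{\mathrm{T}}$ and the Doppler contribution is $-\frac{f_c^2\alpha_{obu,k}^2}{2c}\underset{bu,k}{\operatorname{SNR}}\,\bm{\Delta}_{bU,k}\nabla_{\check{\bm{p}}_{b,0}}^{\mathrm{T}}\nu_{bU,k}$, and summing over $u,k$ reproduces the claimed expression.

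I expect the main obstacle to be bookkeeping rather than conceptual: keeping the sign conventions consistent (the receiver velocity and the LEO offset enter the delay with opposite signs, and the receiver velocity enters the Doppler negatively) and correctly propagating the $k\Delta_t$ factor from the constant-velocity position model. The one genuinely nontrivial verification is the structural reduction showing that the BS-receiver and LEO-BS links contribute nothing, which follows from the block-diagonality of $\mathbf{J}_{\bm{y}|\bm{\eta}}$ together with the disjoint parameter dependencies noted above.
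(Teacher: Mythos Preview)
Your proposal is correct and follows essentially the same route as the paper: both expand the $(\bm{v}_{U,0},\check{\bm{p}}_{b,0})$ block of $\mathbf{\Upsilon}_{\bm{\kappa}}\mathbf{J}_{\bm{y}|\bm{\eta}}\mathbf{\Upsilon}_{\bm{\kappa}}^{\mathrm{T}}$, restrict to the $b^{\text{th}}$ LEO--receiver link, collapse the double sum by the diagonality of the channel FIM, and substitute the gradients from the transformation-matrix appendix. The only cosmetic difference is that the paper first writes out all four $\tau$--$\tau$, $\tau$--$\nu$, $\nu$--$\tau$, $\nu$--$\nu$ terms and then lets the cross terms vanish upon substitution, whereas you argue the vanishing up front and go directly to the two surviving diagonal contributions.
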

\begin{proof}
See Appendix \ref{Appendix_lemma_FIM_3D_velocity_3D_b_th_position_offset}.
\end{proof}

\begin{lemma}
\label{lemma:FIM_3D_velocity_3D_b_th_velocity_offset}
The FIM relating the $3$D velocity of the receiver and $\check{\bm{v}}_{b,0}$ is
\begin{equation}
\label{equ_lemma:FIM_3D_velocity_3D_b_th_velocity_offset}
\begin{aligned}
&{\bm{F}_{{{y} }}(\bm{y}_{}| \bm{\eta} ;\bm{v}_{U,0},\check{\bm{v}}_{b,0}) = } \\ & 
\medmath{{\sum_{k^{},u^{}} \underset{bu,k}{\operatorname{SNR}}  \Bigg[ \frac{-(k)^{2} \omega_{bU,k}\Delta_{t}^{2}}{c^2}    \bm{\Delta}_{bu,k} 
\bm{\Delta}_{bu,k}^{\mathrm{T}} }    - {\frac{f_{c}^2 \alpha_{obu,k}^2\bm{\Delta}_{bU,k} \bm{\Delta}_{bU,k}^{\mathrm{T}}}{2 \; c^{2}}  \Bigg]}}.
\end{aligned}
\end{equation}

\end{lemma}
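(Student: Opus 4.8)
The plan is to obtain this cross block directly from the parameter transformation $\mathbf{J}_{\bm{y}|\bm{\kappa}} = \mathbf{\Upsilon}_{\bm{\kappa}} \mathbf{J}_{\bm{y}|\bm{\eta}} \mathbf{\Upsilon}_{\bm{\kappa}}^{\mathrm{T}}$ together with the block-diagonal channel FIM $\mathbf{J}_{\bm{y}|\bm{\eta}}$ established earlier. Expanding the $(\bm{v}_{U,0},\check{\bm{v}}_{b,0})$ block by the chain rule writes it as a sum over the scalar channel parameters $\eta_i,\eta_j$ of the LEO-$b$-receiver link, of the form $\sum (\nabla_{\bm{v}_{U,0}} \eta_i)\, \bm{F}_{\bm{y}}(\bm{y}|\bm{\eta};\eta_i,\eta_j)\, (\nabla_{\check{\bm{v}}_{b,0}} \eta_j)^{\mathrm{T}}$. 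First I would argue that only the LEO-$b$-receiver link survives: $\bm{v}_{U,0}$ appears solely in the LEO-receiver and BS-receiver links, whereas $\check{\bm{v}}_{b,0}$ appears solely in the LEO-receiver and LEO-BS links, so their only common link is the LEO-$b$-receiver one. This accounts immediately for the single sum over $u,k$ and the single LEO index $b$ in the statement.

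Next I would discard every off-diagonal channel-FIM coupling. Since the delay and Doppler blocks are uncorrelated, i.e. $\bm{F}_{\bm{y}}(\bm{y}|\bm{\eta};\tau_{bu,k},\nu_{bU,k}) = 0$, and since the offsets $\delta_{bU},\epsilon_{bU}$ are separate location parameters that are functions of neither velocity, only the two diagonal entries $\bm{F}_{\bm{y}}(\bm{y}|\bm{\eta};\tau_{bu,k},\tau_{bu,k}) = \underset{bu,k}{\operatorname{SNR}}\,\omega_{bU,k}$ and $\bm{F}_{\bm{y}}(\bm{y}|\bm{\eta};\nu_{bU,k},\nu_{bU,k}) = 0.5\,\underset{bu,k}{\operatorname{SNR}}\,f_c^2 \alpha_{obu,k}^2$ contribute. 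The problem then reduces to evaluating four Jacobian columns and assembling a delay piece plus a Doppler piece.

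The crux is the Jacobians. Because the receiver position at slot $k$ is $\bm{p}_{U,k} = \bm{p}_{U,0} + (k)\Delta_{t} \bm{v}_{U,0}$ and $\nabla_{\bm{p}_{u,k}} \tau_{bu,k} = \bm{\Delta}_{bu,k}/c$, the chain rule gives $\nabla_{\bm{v}_{U,0}} \tau_{bu,k} = \tfrac{(k)\Delta_{t}}{c} \bm{\Delta}_{bu,k}$. Symmetrically, the velocity offset $\check{\bm{v}}_{b,0}$ propagates into the effective LEO position as the accumulated displacement $(k)\Delta_{t}\check{\bm{v}}_{b,0}$ over the $k$ slots, and since the gradient of the LEO-antenna distance with respect to the LEO position is $-\bm{\Delta}_{bu,k}$, this yields $\nabla_{\check{\bm{v}}_{b,0}} \tau_{bu,k} = -\tfrac{(k)\Delta_{t}}{c} \bm{\Delta}_{bu,k}$. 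From the explicit Doppler $\nu_{bU,k} = \bm{\Delta}_{bU,k}^{\mathrm{T}}(\bm{v}_{b,k} + \check{\bm{v}}_{b,k} - \bm{v}_{U,k})/c$ I read off $\nabla_{\bm{v}_{U,0}} \nu_{bU,k} = -\tfrac{1}{c}\bm{\Delta}_{bU,k}$ and $\nabla_{\check{\bm{v}}_{b,0}} \nu_{bU,k} = +\tfrac{1}{c}\bm{\Delta}_{bU,k}$, neglecting the $O(1/d)$ terms from the dependence of $\bm{\Delta}_{bU,k}$ on position, consistent with the approximation used in the companion lemmas. Substituting the two diagonal weights gives the delay term $-\tfrac{(k)^2 \omega_{bU,k}\Delta_{t}^{2}}{c^2}\bm{\Delta}_{bu,k}\bm{\Delta}_{bu,k}^{\mathrm{T}}$ and the Doppler term $-\tfrac{f_c^2 \alpha_{obu,k}^2}{2c^2}\bm{\Delta}_{bU,k}\bm{\Delta}_{bU,k}^{\mathrm{T}}$, each scaled by $\underset{bu,k}{\operatorname{SNR}}$ and summed over $u,k$, which reproduces the claimed expression.

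The hard part will be the sign and power-of-$(k)$ bookkeeping rather than any deep estimate. Both cross terms must come out negative because the receiver and the LEO enter the delay and the Doppler with opposite polarity (moving the antenna versus moving the satellite), and the delay piece must carry $(k)^2\Delta_{t}^2$ rather than the single power $(k)\Delta_{t}$ appearing in the position-offset lemma, a distinction that rests entirely on recognizing that $\check{\bm{v}}_{b,0}$, like $\bm{v}_{U,0}$, feeds the delay through a factor of $(k)\Delta_{t}$. I would also be careful to justify dropping the velocity derivatives of $\bm{\Delta}_{bU,k}$, since that is the sole approximation entering the final form.
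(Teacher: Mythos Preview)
Your approach is essentially the same as the paper's: both apply the chain rule through the bijective transformation $\mathbf{J}_{\bm{y}|\bm{\kappa}} = \mathbf{\Upsilon}_{\bm{\kappa}} \mathbf{J}_{\bm{y}|\bm{\eta}} \mathbf{\Upsilon}_{\bm{\kappa}}^{\mathrm{T}}$, restrict to the LEO-$b$-receiver link, use the same Jacobians $\nabla_{\bm{v}_{U,0}}\tau_{bu,k}$, $\nabla_{\check{\bm{v}}_{b,0}}\tau_{bu,k}$, $\nabla_{\bm{v}_{U,0}}\nu_{bU,k}$, $\nabla_{\check{\bm{v}}_{b,0}}\nu_{bU,k}$, and substitute the diagonal channel-FIM entries. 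The only cosmetic difference is ordering: the paper first writes the full four-term expansion including the $\tau$--$\nu$ cross terms and then kills them upon substituting $\bm{F}_{\bm{y}}(\bm{y}|\bm{\eta};\tau_{bu,k},\nu_{bU,k})=0$, whereas you drop those cross terms at the outset; and your remark about neglecting the $O(1/d)$ dependence of $\bm{\Delta}_{bU,k}$ on velocity is simply making explicit what the paper takes as the definition of $\nabla_{\bm{v}_{U,0}}\nu_{bU,k}$ in its transformation-matrix appendix.
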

\begin{proof}
See Appendix \ref{Appendix_lemma_FIM_3D_velocity_3D_b_th_velocity_offset}.
\end{proof}

\begin{lemma}
\label{lemma:FIM_3D_orientation_3D_orientation}
The FIM for the $3$D orientation of the receiver is
\begin{equation}
\label{equ_lemma:FIM_3D_orientation_3D_orientation}
\begin{aligned}
&{\bm{F}_{{{y} }}(\bm{y}_{}| \bm{\eta} ;\bm{\Phi}_{U},\bm{\Phi}_{U})  } =
\\ &{\sum_{b,k^{},u^{}} \underset{bu,k}{\operatorname{SNR}}  \Bigg[\omega_{bU,k}  \nabla_{\bm{\Phi}_{U}}\tau_{bu^{},k^{}} 
 \nabla_{\bm{\Phi}_{U}}^{\mathrm{T}} \tau_{bu^{},k^{}}}    \Bigg] \\ &+ {\sum_{q,k^{},u^{}} \underset{qu,k}{\operatorname{SNR}}  \Bigg[\omega_{qU,k}   \nabla_{\bm{\Phi}_{U}} \tau_{qu^{},k^{}}
 \nabla_{\bm{\Phi}_{U}}^{\mathrm{T}} \tau_{qu^{},k^{}}}    \Bigg].
\end{aligned}
\end{equation}

\end{lemma}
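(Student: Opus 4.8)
The plan is to read off the $(\bm{\Phi}_{U},\bm{\Phi}_{U})$ block of $\mathbf{J}_{\bm{y}|\bm{\kappa}}$ directly from the bijective transformation $\mathbf{J}_{\bm{y}|\bm{\kappa}} = \mathbf{\Upsilon}_{\bm{\kappa}} \mathbf{J}_{\bm{y}|\bm{\eta}} \mathbf{\Upsilon}_{\bm{\kappa}}^{\mathrm{T}}$ introduced above. Expanding this product block-wise and applying the chain rule, the orientation block is
\[
\bm{F}_{{\bm{y}}}(\bm{y}|\bm{\eta};\bm{\Phi}_{U},\bm{\Phi}_{U}) = \sum_{i,j} \big(\nabla_{\bm{\Phi}_{U}}\eta_{i}\big)\, \bm{F}_{{\bm{y}}}(\bm{y}|\bm{\eta};\eta_{i},\eta_{j})\, \big(\nabla_{\bm{\Phi}_{U}}\eta_{j}\big)^{\mathrm{T}},
\]
where the indices run over the channel parameters in $\bm{\eta}$. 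The first task is therefore to identify which channel parameters carry a nonzero gradient with respect to $\bm{\Phi}_{U}$.

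Second, I would show that the orientation enters $\bm{\eta}$ only through the receiver-side delays. Because the antenna offsets obey $\bm{s}_{u}=\bm{Q}_{U}\tilde{\bm{s}}_{u}$ with $\bm{Q}_{U}=\bm{Q}(\alpha_{U},\psi_{U},\varphi_{U})$, and the antenna position is $\bm{p}_{u,k}=\bm{p}_{U,k}+\bm{s}_{u}$, each delay $\tau_{bu,k}$ and $\tau_{qu,k}$ has a nonzero $\nabla_{\bm{\Phi}_{U}}\tau$. By contrast, the Dopplers $\nu_{bU,k}$ and $\nu_{qU,k}$ are built from the centroid unit directions $\bm{\Delta}_{bU,k},\bm{\Delta}_{qU,k}$ and the velocities, all of which reference the receiver centroid $\bm{p}_{U,k}$ rather than individual antennas; hence $\nabla_{\bm{\Phi}_{U}}\nu_{bU,k}=\bm{0}$ and $\nabla_{\bm{\Phi}_{U}}\nu_{qU,k}=\bm{0}$. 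The channel gains, the offsets $\delta_{bU},\delta_{QU},\epsilon_{bU},\epsilon_{QU}$, and the LEO--BS delays $\tau_{bq,k}$ (which involve no receiver antenna) are likewise orientation-free. This is the crucial step, and I expect it to be the main point needing care, since it is precisely what eliminates the Doppler-type $f_{c}^{2}\alpha_{o}^{2}$ contributions that appear in the position and velocity blocks of Lemmas~\ref{lemma:FIM_3D_position}--\ref{lemma:FIM_3D_velocity_3D_b_th_velocity_offset}.

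Third, with only the receiver-side delays surviving on both sides of the chain rule, I would invoke the block-diagonal structure of $\mathbf{J}_{\bm{y}|\bm{\eta}}$ across links and anchors together with the fact that, among the delay entries, $\bm{F}_{{\bm{y}}}(\bm{y}|\bm{\eta};\tau_{bu,k},\tau_{b'u',k'})$ vanishes unless $(b,u,k)=(b',u',k')$. The only remaining nonzero delay couplings are the delay--offset terms with $\delta_{bU}$ and $\delta_{QU}$, and these drop out because $\nabla_{\bm{\Phi}_{U}}\delta_{bU}=\nabla_{\bm{\Phi}_{U}}\delta_{QU}=\bm{0}$. The double sum therefore collapses to a single diagonal sum over the LEO--receiver indices $(b,u,k)$ plus one over the BS--receiver indices $(q,u,k)$.

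Finally, I would substitute the diagonal delay values already derived, namely $\bm{F}_{{\bm{y}}}(\bm{y}|\bm{\eta};\tau_{bu,k},\tau_{bu,k})=\underset{bu,k}{\operatorname{SNR}}\,\omega_{bU,k}$ and $\bm{F}_{{\bm{y}}}(\bm{y}|\bm{\eta};\tau_{qu,k},\tau_{qu,k})=\underset{qu,k}{\operatorname{SNR}}\,\omega_{qU,k}$, into each surviving outer product $(\nabla_{\bm{\Phi}_{U}}\tau)(\nabla_{\bm{\Phi}_{U}}\tau)^{\mathrm{T}}$, which reproduces \eqref{equ_lemma:FIM_3D_orientation_3D_orientation}. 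Obtaining an explicit closed form for $\nabla_{\bm{\Phi}_{U}}\tau_{bu,k}$ through the angle-derivatives of $\bm{Q}_{U}$ is routine and can be relegated to the appendix.
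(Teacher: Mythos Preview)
Your proposal is correct and follows essentially the same route as the paper: write the $(\bm{\Phi}_{U},\bm{\Phi}_{U})$ block via the chain rule $\mathbf{\Upsilon}_{\bm{\kappa}}\mathbf{J}_{\bm{y}|\bm{\eta}}\mathbf{\Upsilon}_{\bm{\kappa}}^{\mathrm{T}}$, observe that only the receiver-side delays carry a nonzero $\nabla_{\bm{\Phi}_{U}}$, collapse the double sum using the diagonality of the delay FIM across $(u,k)$, and substitute $\bm{F}_{\bm{y}}(\bm{y}|\bm{\eta};\tau_{\cdot u,k},\tau_{\cdot u,k})=\underset{\cdot u,k}{\operatorname{SNR}}\,\omega_{\cdot U,k}$. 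Your explicit justification of why the Dopplers $\nu_{bU,k},\nu_{qU,k}$ are orientation-free (they depend only on the centroid direction $\bm{\Delta}_{\cdot U,k}$, not on $\bm{s}_{u}$) is a nice clarification that the paper leaves implicit.
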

\begin{proof}
See Appendix \ref{Appendix_lemma_FIM_3D_orientation_3D_orientation}.
\end{proof}

\begin{lemma}
\label{lemma:FIM_3D_orientation_3D_b_th_position_offset}
The FIM relating the $3$D orientation of the receiver and $\check{\bm{p}}_{b,0}$ is

\begin{equation}
\begin{aligned}
\label{equ_lemma:FIM_3D_orientation_3D_b_th_position_offset}
&{\bm{F}_{{{y} }}(\bm{y}_{}| \bm{\eta} ;\bm{\Phi}_{U},\check{\bm{p}}_{b,0}) = - \sum_{k^{},u^{}}\underset{bu,k}{\operatorname{SNR}}} {    \frac{\omega_{bU,k}}{c^{}}  \nabla_{\bm{\Phi}_{U}}\tau_{bu^{},k^{}}  
 \bm{\Delta}_{bu^{},k^{}}^{\mathrm{T}}}  .
\end{aligned}
\end{equation}

\end{lemma}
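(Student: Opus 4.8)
The plan is to read off this cross block directly from the bijective transformation $\mathbf{J}_{\bm{y}|\bm{\kappa}} = \mathbf{\Upsilon}_{\bm{\kappa}}\mathbf{J}_{\bm{y}|\bm{\eta}}\mathbf{\Upsilon}_{\bm{\kappa}}^{\mathrm{T}}$, whose $(\bm{\Phi}_{U},\check{\bm{p}}_{b,0})$ block is the double sum
$$
\bm{F}_{\bm{y}}(\bm{y}|\bm{\eta};\bm{\Phi}_{U},\check{\bm{p}}_{b,0}) = \sum_{a,a'} \big(\nabla_{\bm{\Phi}_{U}}\eta_a\big)\,\big[\mathbf{J}_{\bm{y}|\bm{\eta}}\big]_{a,a'}\,\big(\nabla_{\check{\bm{p}}_{b,0}}\eta_{a'}\big)^{\mathrm{T}},
$$
with $\eta_a,\eta_{a'}$ ranging over the channel parameters. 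The decisive first step is a dependency audit. The orientation $\bm{\Phi}_{U}$ enters the channel parameters \emph{only} through the receive-antenna positions $\bm{s}_{u}=\bm{Q}_{U}\tilde{\bm{s}}_{u}$, hence only through the delays $\tau_{bu,k}$ (and the BS-link delays $\tau_{qu,k}$), and \emph{not} through the Dopplers, which are evaluated at the centroid, nor through the gains or offsets. Therefore $\nabla_{\bm{\Phi}_{U}}\eta_a\neq\bm{0}$ forces $\eta_a=\tau_{bu,k}$ on the orientation side.

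Next I would invoke the sparsity of $\mathbf{J}_{\bm{y}|\bm{\eta}}$ established in the previous section: among all entries touching a LEO--receiver delay, only $\bm{F}_{\bm{y}}(\bm{y}|\bm{\eta};\tau_{bu,k},\tau_{bu,k})=\underset{bu,k}{\operatorname{SNR}}\,\omega_{bU,k}$ and $\bm{F}_{\bm{y}}(\bm{y}|\bm{\eta};\tau_{bu,k},\delta_{bU})=-\underset{bu,k}{\operatorname{SNR}}\,\omega_{bU,k}$ are nonzero; in particular $\bm{F}_{\bm{y}}(\bm{y}|\bm{\eta};\tau_{bu,k},\nu_{bU,k})=0$. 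Thus on the $a'$ side only $\tau_{bu,k}$ and $\delta_{bU}$ survive. The offset $\delta_{bU}$ is a free nuisance independent of the LEO position, so $\nabla_{\check{\bm{p}}_{b,0}}\delta_{bU}=\bm{0}$ and that cross term drops out. The main point to argue cleanly is the following: although $\check{\bm{p}}_{b,0}$ \emph{does} perturb the Doppler $\nu_{bU,k}$ through the direction $\bm{\Delta}_{bU,k}$, that contribution never reaches the sum, because the orientation side already forces a delay factor and $\bm{F}_{\bm{y}}(\bm{y}|\bm{\eta};\tau,\nu)=0$ decouples delays from Dopplers. This is exactly why no Doppler term appears here, in contrast to Lemma~\ref{lemma:FIM_3D_position_3D_b_th_position_offset}, where the receiver position enters both delay and Doppler. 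The same block-structure argument also removes the entire BS--receiver link: its delays carry $\nabla_{\bm{\Phi}_{U}}\tau_{qu,k}\neq\bm{0}$, but they are independent of $\check{\bm{p}}_{b,0}$, so $\nabla_{\check{\bm{p}}_{b,0}}\tau_{qu,k}=\bm{0}$, leaving only the single LEO-$b$ sum.

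Only the $(\tau_{bu,k},\tau_{bu,k})$ term then remains, and I would close with the Jacobian of the LEO--receiver delay with respect to the offset. Writing $\tau_{bu,k}=\|\bm{p}_{u,k}-\bm{p}_{b,k}-\check{\bm{p}}_{b,0}\|/c$ and differentiating the Euclidean norm gives $\nabla_{\check{\bm{p}}_{b,0}}\tau_{bu,k}=-\bm{\Delta}_{bu,k}/c$, since the unit vector from the offset LEO position to the $u$-th antenna is precisely $\bm{\Delta}_{bu,k}$; the leading minus sign of the claimed expression originates here. Substituting this and $\bm{F}_{\bm{y}}(\bm{y}|\bm{\eta};\tau_{bu,k},\tau_{bu,k})=\underset{bu,k}{\operatorname{SNR}}\,\omega_{bU,k}$ into the surviving term yields
$$
\bm{F}_{\bm{y}}(\bm{y}|\bm{\eta};\bm{\Phi}_{U},\check{\bm{p}}_{b,0}) = -\sum_{k,u}\underset{bu,k}{\operatorname{SNR}}\,\frac{\omega_{bU,k}}{c}\,\big(\nabla_{\bm{\Phi}_{U}}\tau_{bu,k}\big)\,\bm{\Delta}_{bu,k}^{\mathrm{T}},
$$
which is the stated result. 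The orientation--delay Jacobian $\nabla_{\bm{\Phi}_{U}}\tau_{bu,k}$ (involving the derivative of $\bm{Q}_{U}$) is left in symbolic form exactly as in Lemmas~\ref{lemma:FIM_3D_position_3D_orientation} and~\ref{lemma:FIM_3D_velocity_3D_orientation}, so no further computation is required.
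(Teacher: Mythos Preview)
Your proposal is correct and follows essentially the same approach as the paper: write the $(\bm{\Phi}_{U},\check{\bm{p}}_{b,0})$ block of $\mathbf{\Upsilon}_{\bm{\kappa}}\mathbf{J}_{\bm{y}|\bm{\eta}}\mathbf{\Upsilon}_{\bm{\kappa}}^{\mathrm{T}}$, reduce to the single surviving delay--delay term via the dependency and sparsity arguments, and substitute $\bm{F}_{\bm{y}}(\bm{y}|\bm{\eta};\tau_{bu,k},\tau_{bu,k})=\underset{bu,k}{\operatorname{SNR}}\,\omega_{bU,k}$ together with $\nabla_{\check{\bm{p}}_{b,0}}\tau_{bu,k}=-\bm{\Delta}_{bu,k}/c$. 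The paper's appendix proof is simply the terse version of your argument, stating the double sum, collapsing it by diagonality, and substituting, whereas you spell out explicitly why the Doppler, offset, and BS-link terms all vanish.
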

\begin{proof}
See Appendix \ref{Appendix_lemma_FIM_3D_orientation_3D_b_th_position_offset}.
\end{proof}

\begin{lemma}
\label{lemma:FIM_3D_orientation_3D_b_th_velocity_offset}
The FIM relating the $3$D orientation of the receiver and $\check{\bm{v}}_{b,0}$ is
\begin{equation}
\label{equ_lemma:FIM_3D_orientation_3D_b_th_velocity_offset}
\begin{aligned}
&{\bm{F}_{{{y} }}(\bm{y}_{}| \bm{\eta} ;\bm{\Phi}_{U},\check{\bm{v}}_{b,0}) = }  
\medmath{-{\sum_{k^{},u^{}} \underset{bu,k}{\operatorname{SNR}}  \frac{(k)^{} \omega_{bU,k}\Delta_{t}^{}}{c} \nabla_{\bm{\Phi}_{U}}\tau_{bu^{},k^{}}
\bm{\Delta}_{bu,k}^{\mathrm{T}} }   }.
\end{aligned}
\end{equation}

\end{lemma}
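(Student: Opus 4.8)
The plan is to evaluate this cross block by pushing the channel-parameter FIM through the bijective transformation $\mathbf{J}_{\bm{y}|\bm{\kappa}} = \mathbf{\Upsilon}_{\bm{\kappa}} \mathbf{J}_{\bm{y}|\bm{\eta}} \mathbf{\Upsilon}_{\bm{\kappa}}^{\mathrm{T}}$, i.e., to apply the chain rule
\[
\bm{F}_{\bm{y}}(\bm{y}|\bm{\eta}; \bm{\Phi}_{U}, \check{\bm{v}}_{b,0}) = \sum_{\eta_i,\eta_j} \left(\nabla_{\bm{\Phi}_{U}} \eta_i\right) \bm{F}_{\bm{y}}(\bm{y}|\bm{\eta}; \eta_i, \eta_j) \left(\nabla_{\check{\bm{v}}_{b,0}} \eta_j\right)^{\mathrm{T}},
\]
where $\eta_i, \eta_j$ run over the channel parameters $\{\tau_{bu,k}, \nu_{bU,k}, \beta_{bu,k}, \delta_{bU}, \epsilon_{bU}\}$ of the $b$-th LEO-receiver link. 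The first task is to identify which channel parameters each location parameter perturbs.

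First I would note that the orientation $\bm{\Phi}_{U}$ enters the model only through the antenna offsets $\bm{p}_{u,k} = \bm{p}_{U,k} + \bm{Q}_{U}\tilde{\bm{s}}_u$, so it perturbs only the delay $\tau_{bu,k}$, producing the gradient $\nabla_{\bm{\Phi}_{U}}\tau_{bu,k}$; the Doppler $\nu_{bU,k}$ is referenced to the centroid direction $\bm{\Delta}_{bU,k}$ and the receiver velocity, both independent of the rotation, so $\nabla_{\bm{\Phi}_{U}}\nu_{bU,k} = \bm{0}$, and the gains and offsets are likewise orientation-free. By contrast, under the standing assumption $\check{\bm{v}}_{b,k} = \check{\bm{v}}_{b,0}$ the velocity offset displaces the LEO position in slot $k$ by $(k)\Delta_{t}\check{\bm{v}}_{b,0}$, so it reaches the delay through the effective LEO position and the Doppler directly. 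Propagating the chain rule through the accumulated displacement gives $\nabla_{\check{\bm{v}}_{b,0}}\tau_{bu,k} = (k)\Delta_{t}\,\nabla_{\check{\bm{p}}_{b,0}}\tau_{bu,k} = -\frac{(k)\Delta_{t}}{c}\bm{\Delta}_{bu,k}$, while $\nabla_{\check{\bm{v}}_{b,0}}\nu_{bU,k} = \frac{1}{c}\bm{\Delta}_{bU,k}$.

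The decisive observation is that the orientation's sensitivity set is $\{\tau_{bu,k}\}$ while the velocity offset's is $\{\tau_{bu,k}, \nu_{bU,k}\}$, so their overlap is the delay alone. Because the channel-parameter FIM of the previous section carries no delay-Doppler cross entry and is diagonal in the index pair $(u,k)$ for the delays, the Doppler sensitivity of $\check{\bm{v}}_{b,0}$ can never couple to the orientation, and only the diagonal delay-delay block $\bm{F}_{\bm{y}}(\bm{y}|\bm{\eta}; \tau_{bu,k}, \tau_{bu,k}) = \underset{bu,k}{\operatorname{SNR}}\,\omega_{bU,k}$ contributes. Assembling the surviving term,
\[
\bm{F}_{\bm{y}}(\bm{y}|\bm{\eta}; \bm{\Phi}_{U}, \check{\bm{v}}_{b,0}) = \sum_{u,k} \underset{bu,k}{\operatorname{SNR}}\,\omega_{bU,k}\,\left(\nabla_{\bm{\Phi}_{U}}\tau_{bu,k}\right)\left(-\frac{(k)\Delta_{t}}{c}\bm{\Delta}_{bu,k}\right)^{\mathrm{T}},
\]
which collapses to the claimed expression once the scalar $-(k)\Delta_{t}/c$ is pulled out in front of $\nabla_{\bm{\Phi}_{U}}\tau_{bu,k}\,\bm{\Delta}_{bu,k}^{\mathrm{T}}$.

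The hard part is not any single computation but getting the dependency bookkeeping exactly right: one must recognize that the orientation leaves the Doppler invariant (since the Doppler is anchored at the centroid, not at the oriented antenna positions), and that the velocity offset influences the delay only through the accumulated displacement, which is precisely what injects the slot-index factor $(k)$ and the spacing $\Delta_{t}$ into the final expression. Given these two structural facts, the annihilation of every remaining cross term is immediate from the block form of $\mathbf{J}_{\bm{y}|\bm{\eta}}$, and only the single delay-delay contribution remains to be assembled.
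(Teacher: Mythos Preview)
Your proposal is correct and follows essentially the same approach as the paper: both apply the bijective transformation $\mathbf{J}_{\bm{y}|\bm{\kappa}} = \mathbf{\Upsilon}_{\bm{\kappa}}\mathbf{J}_{\bm{y}|\bm{\eta}}\mathbf{\Upsilon}_{\bm{\kappa}}^{\mathrm{T}}$, observe that only the delay--delay block survives (since $\bm{\Phi}_U$ affects only the delays and the channel FIM has no delay--Doppler cross terms), reduce the double sum to a single sum via the diagonality of the delay FIM in $(u,k)$, and substitute $\bm{F}_{\bm{y}}(\bm{y}|\bm{\eta};\tau_{bu,k},\tau_{bu,k}) = \underset{bu,k}{\operatorname{SNR}}\,\omega_{bU,k}$. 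Your write-up is in fact more explicit than the paper's about \emph{why} the Doppler contribution of $\check{\bm{v}}_{b,0}$ cannot couple to the orientation, which the paper leaves implicit by simply omitting those terms from the outset.
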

\begin{proof}
See Appendix \ref{Appendix_lemma_FIM_3D_orientation_3D_b_th_velocity_offset}.
\end{proof}

\begin{lemma}
\label{lemma:FIM_3D_b_th_position_offset_3D_b_th_position_offset}
The FIM for the $3$D position uncertainty associated with the $b^{\text{th}}$ LEO, $\check{\bm{p}}_{b,0}$ is

\begin{equation}
\begin{aligned}
\label{equ_lemma:FIM_3D_b_th_position_offset_3D_b_th_position_offset}
&\bm{F}_{{{y} }}(\bm{y}_{}| \bm{\eta} ;\check{\bm{p}}_{b,0},\check{\bm{p}}_{b,0}) = \\&{\sum_{k^{},u^{}} \underset{bu,k}{\operatorname{SNR}}  \Bigg[\frac{\omega_{bU,k} }{c^2}   \bm{\Delta}_{bu,k} 
 \bm{\Delta}_{bu^{},k^{}}^{\mathrm{T}}}  + {\frac{f_{c}^2 \alpha_{obu,k}^2\nabla_{\check{\bm{p}}_{b,0}} \nu_{bU,k} \nabla_{\check{\bm{p}}_{b,0}}^{\mathrm{T}}\nu_{bU,k}}{2}  \Bigg]} + \\  &{\sum_{q,k^{}} \underset{bq,k}{\operatorname{SNR}}  \Bigg[\frac{\omega_{bq,k} }{c^2}   \bm{\Delta}_{bq,k} 
 \bm{\Delta}_{bq,k}^{\mathrm{T}}}  + {\frac{f_{c}^2 \alpha_{obq,k}^2\nabla_{\check{\bm{p}}_{b,0}} \nu_{bq,k} \nabla_{\check{\bm{p}}_{b,0}}^{\mathrm{T}}\nu_{bq,k}}{2}  \Bigg]}.
\end{aligned}
\end{equation}

\end{lemma}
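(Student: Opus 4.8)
The plan is to obtain this diagonal block by the bijective transformation $\mathbf{J}_{\bm{y}|\bm{\kappa}} = \mathbf{\Upsilon}_{\bm{\kappa}} \mathbf{J}_{\bm{y}|\bm{\eta}} \mathbf{\Upsilon}_{\bm{\kappa}}^{\mathrm{T}}$ and reading off the entry corresponding to $\check{\bm{p}}_{b,0}$. By the chain rule this reads $\bm{F}_{\bm{y}}(\bm{y}|\bm{\eta};\check{\bm{p}}_{b,0},\check{\bm{p}}_{b,0}) = \sum_{i,j} (\nabla_{\check{\bm{p}}_{b,0}}\eta_i)\,\bm{F}_{\bm{y}}(\bm{y}|\bm{\eta};\eta_i,\eta_j)\,(\nabla_{\check{\bm{p}}_{b,0}}\eta_j)^{\mathrm{T}}$, so the first step is to identify exactly which channel parameters $\eta_i$ are sensitive to $\check{\bm{p}}_{b,0}$. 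Because $\check{\bm{p}}_{b,k}=\check{\bm{p}}_{b,0}$ for all $k$, the offset enters the delays $\tau_{bu,k}$ and $\tau_{bq,k}$ (through the Euclidean norms defining them) and the Dopplers $\nu_{bU,k}$ and $\nu_{bq,k}$ (through the unit direction vectors $\bm{\Delta}_{bU,k}$, $\bm{\Delta}_{bq,k}$), but it does \emph{not} enter the BS-receiver link nor any time or frequency offset $\delta$, $\epsilon$. This is precisely why the block inherits contributions from both the LEO-receiver and LEO-BS links, in contrast to the purely receiver-side parameters.

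Next I would exploit the block-diagonal structure of $\mathbf{J}_{\bm{y}|\bm{\eta}}$. Within each link the only nonzero cross-couplings of a delay or a Doppler are with its own offset parameter ($\tau_{bu,k}$ with $\delta_{bU}$, $\nu_{bU,k}$ with $\epsilon_{bU}$, and the LEO-BS analogues), and since $\nabla_{\check{\bm{p}}_{b,0}}\delta_{bU} = \nabla_{\check{\bm{p}}_{b,0}}\epsilon_{bU} = 0$, every such cross-term drops out of the double sum. What survives are only the diagonal channel-FIM entries $\bm{F}_{\bm{y}}(\bm{y}|\bm{\eta};\tau_{bu,k},\tau_{bu,k}) = \underset{bu,k}{\operatorname{SNR}}\,\omega_{bU,k}$, $\bm{F}_{\bm{y}}(\bm{y}|\bm{\eta};\nu_{bU,k},\nu_{bU,k}) = 0.5\,\underset{bu,k}{\operatorname{SNR}}\,f_c^2\alpha_{obu,k}^2$, and their LEO-BS counterparts, each sandwiched between the matching gradients. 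It is also important to verify that no delay-Doppler coupling survives, so that the delay and Doppler terms add without interference.

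Finally I would substitute the gradients. Differentiating $\tau_{bu,k} = \norm{\bm{p}_{u,k} - (\bm{p}_{b,k}+\check{\bm{p}}_{b,0})}/c$ yields $\nabla_{\check{\bm{p}}_{b,0}}\tau_{bu,k} = -\bm{\Delta}_{bu,k}/c$, so the product of two such gradients with the scalar $\omega_{bU,k}$ reproduces the positive term $\frac{\omega_{bU,k}}{c^2}\bm{\Delta}_{bu,k}\bm{\Delta}_{bu,k}^{\mathrm{T}}$, the two minus signs cancelling; the LEO-BS delay gradient $-\bm{\Delta}_{bq,k}/c$ produces $\frac{\omega_{bq,k}}{c^2}\bm{\Delta}_{bq,k}\bm{\Delta}_{bq,k}^{\mathrm{T}}$ in the same way. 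Retaining the Doppler gradients $\nabla_{\check{\bm{p}}_{b,0}}\nu_{bU,k}$ and $\nabla_{\check{\bm{p}}_{b,0}}\nu_{bq,k}$ in symbolic form (these are the derivatives of the unit direction vectors with respect to the LEO position and are left unexpanded) and multiplying by $0.5\,f_c^2\alpha^2\,\underset{}{\operatorname{SNR}}$ gives the two outer-product Doppler terms, yielding exactly the claimed expression.

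The main obstacle here is bookkeeping rather than analysis. The delicate point is tracking the sign convention of the direction vectors: the cancelling minus signs are what distinguish this positive diagonal block from the off-diagonal block $\bm{F}_{\bm{y}}(\bm{y}|\bm{\eta};\bm{p}_{U,0},\check{\bm{p}}_{b,0})$ in Lemma~\ref{lemma:FIM_3D_position_3D_b_th_position_offset}, where $\nabla_{\bm{p}_{U,0}}\tau_{bu,k} = +\bm{\Delta}_{bu,k}/c$ opposes $\nabla_{\check{\bm{p}}_{b,0}}\tau_{bu,k} = -\bm{\Delta}_{bu,k}/c$ and flips the delay term negative. I must also confirm that all delay-Doppler and cross-link couplings genuinely vanish, so that the LEO-receiver and LEO-BS links contribute additively with no interference terms, and that the Doppler outer products are formed with the gradients taken with respect to $\check{\bm{p}}_{b,0}$ on both sides.
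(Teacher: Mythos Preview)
Your proposal is correct and follows essentially the same route as the paper: write the block via the bijective transformation $\mathbf{\Upsilon}_{\bm{\kappa}} \mathbf{J}_{\bm{y}|\bm{\eta}} \mathbf{\Upsilon}_{\bm{\kappa}}^{\mathrm{T}}$, expand the double sum over channel parameters, use the block-diagonal structure of $\mathbf{J}_{\bm{y}|\bm{\eta}}$ to drop cross-$(k,u)$ terms and the delay--Doppler cross entries, and then substitute $\nabla_{\check{\bm{p}}_{b,0}}\tau_{bu,k}=-\bm{\Delta}_{bu,k}/c$, $\nabla_{\check{\bm{p}}_{b,0}}\tau_{bq,k}=-\bm{\Delta}_{bq,k}/c$ together with the diagonal channel-FIM values. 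The paper's appendix records exactly this computation in equations (\ref{equ_lemma:FIM_3D_bth_position_1})--(\ref{equ_lemma:FIM_3D_bth_position_2}) before the final substitution; your sign-tracking observation about how this block differs from the off-diagonal block of Lemma~\ref{lemma:FIM_3D_position_3D_b_th_position_offset} is a nice addition not spelled out in the paper.
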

\begin{proof}
See Appendix \ref{Appendix_lemma_FIM_3D_b_th_position_offset_3D_b_th_position_offset}.
\end{proof}

\begin{lemma}
\label{lemma:FIM_3D_b_th_position_offset_3D_b_th_velocity_offset}
The FIM relating the $3$D position uncertainty associated with the $b^{\text{th}}$ LEO, $\check{\bm{p}}_{b,0}$ and the $3$D velocity uncertainty associated with the $b^{\text{th}}$ LEO, $\check{\bm{v}}_{b,0}$ is
\begin{equation}
\label{equ_lemma:FIM_3D_b_th_position_offset_3D_b_th_velocity_offset}
\begin{aligned}
&\bm{F}_{{{y} }}(\bm{y}_{}| \bm{\eta} ;\check{\bm{p}}_{b,0},\check{\bm{v}}_{b,0}) = \\&\medmath{\sum_{k^{},u^{}} \underset{bu,k}{\operatorname{SNR}}  \Bigg[\frac{(k) \omega_{bU,k}\Delta_{t}^{} }{c^2}   \bm{\Delta}_{bu,k} 
 \bm{\Delta}_{bu^{},k^{}}^{\mathrm{T}}}  + \medmath{\frac{f_{c}^2 \alpha_{obu,k}^2\nabla_{\check{\bm{p}}_{b,0}} \nu_{bU,k} \bm{\Delta}_{bU,k}^{\mathrm{T}}}{2 \; c}  \Bigg]} + \\  &\medmath{\sum_{q,k^{}} \underset{bq,k}{\operatorname{SNR}}  \Bigg[\frac{(k) \omega_{bq,k}\Delta_{t}^{} }{c^2}   \bm{\Delta}_{bq,k} 
 \bm{\Delta}_{bq^{},k^{}}^{\mathrm{T}}}  + \medmath{\frac{f_{c}^2 \alpha_{obq,k}^2\nabla_{\check{\bm{p}}_{b,0}} \nu_{bq,k} \bm{\Delta}_{bq,k}^{\mathrm{T}}}{2 \; c}  \Bigg]}.
\end{aligned}
\end{equation}

\end{lemma}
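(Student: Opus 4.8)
The plan is to obtain this cross block of the location FIM by pushing the channel-parameter FIM tabulated above through the bijective transformation $\mathbf{J}_{\bm{y}|\bm{\kappa}} = \mathbf{\Upsilon}_{\bm{\kappa}}\mathbf{J}_{\bm{y}|\bm{\eta}}\mathbf{\Upsilon}_{\bm{\kappa}}^{\mathrm{T}}$ and reading off the rows/columns indexed by $\check{\bm{p}}_{b,0}$ and $\check{\bm{v}}_{b,0}$. Concretely, the chain rule gives $\bm{F}_{\bm{y}}(\bm{y}|\bm{\eta};\check{\bm{p}}_{b,0},\check{\bm{v}}_{b,0}) = \sum_{i,j}(\nabla_{\check{\bm{p}}_{b,0}}\eta_i)\,\bm{F}_{\bm{y}}(\bm{y}|\bm{\eta};\eta_i,\eta_j)\,(\nabla_{\check{\bm{v}}_{b,0}}\eta_j)^{\mathrm{T}}$, where $\eta_i,\eta_j$ range over the channel parameters.

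First I would determine which channel parameters actually carry a dependence on $\check{\bm{p}}_{b,0}$ and $\check{\bm{v}}_{b,0}$. Since the BS-receiver observables $\tau_{qu,k},\nu_{qU,k}$ do not involve the $b$-th LEO's ephemeris offsets at all, that entire link drops out, and only the LEO-receiver pair $(\tau_{bu,k},\nu_{bU,k})$ and the LEO-BS pair $(\tau_{bq,k},\nu_{bq,k})$ remain. This explains why the statement contains precisely one $bu,k$ sum and one $bq,k$ sum. Next I would invoke the structure of the channel FIM: among the geometric parameters the only non-zero entries are the diagonal $\bm{F}_{\bm{y}}(\bm{y}|\bm{\eta};\tau,\tau)=\operatorname{SNR}\,\omega$ and $\bm{F}_{\bm{y}}(\bm{y}|\bm{\eta};\nu,\nu)=0.5\,\operatorname{SNR}\,f_c^2\alpha_o^2$ (the offset couplings live in the nuisance block $\bm{\kappa}_2$). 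Because delay and Doppler are mutually uncorrelated in the channel FIM, the cross block collapses into a pure delay-delay contribution plus a pure Doppler-Doppler contribution on each surviving link.

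The remaining work is to evaluate four Jacobians. Two are immediate: $\nabla_{\check{\bm{p}}_{b,0}}\tau_{bu,k}=-\bm{\Delta}_{bu,k}/c$ and $\nabla_{\check{\bm{v}}_{b,0}}\nu_{bU,k}=\bm{\Delta}_{bU,k}/c$, with the identical pattern on the LEO-BS link. The delicate one is $\nabla_{\check{\bm{v}}_{b,0}}\tau_{bu,k}$: the velocity offset enters the range only through the position the LEO has drifted to by the $k$-th slot, so the effective position offset seen by the delay is $\check{\bm{p}}_{b,0}+k\Delta_t\check{\bm{v}}_{b,0}$, giving $\nabla_{\check{\bm{v}}_{b,0}}\tau_{bu,k}=-k\Delta_t\bm{\Delta}_{bu,k}/c$ --- the same mechanism that produced the $k\Delta_t$ factor in the receiver position-velocity block of Lemma~\ref{lemma:FIM_3D_position_3D_velocity}. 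The Doppler's dependence on $\check{\bm{p}}_{b,0}$ acts only through the bearing $\bm{\Delta}_{bU,k}$ and is therefore left in the symbolic form $\nabla_{\check{\bm{p}}_{b,0}}\nu_{bU,k}$. Substituting these Jacobians together with the two channel-FIM scalars, the delay-delay piece yields $\operatorname{SNR}_{bu,k}\,(k\Delta_t\omega_{bU,k}/c^2)\,\bm{\Delta}_{bu,k}\bm{\Delta}_{bu,k}^{\mathrm{T}}$ and the Doppler-Doppler piece yields $\operatorname{SNR}_{bu,k}\,(f_c^2\alpha_{obu,k}^2/2c)\,\nabla_{\check{\bm{p}}_{b,0}}\nu_{bU,k}\,\bm{\Delta}_{bU,k}^{\mathrm{T}}$, with the analogous $bq,k$ expressions on the LEO-BS link; summing over antennas/BSs and time slots reproduces the claimed equation exactly.

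I expect the main obstacle to be justifying the velocity-to-delay Jacobian cleanly --- that is, making precise how $\check{\bm{v}}_{b,0}$ enters $\tau_{bu,k}$ (and $\tau_{bq,k}$) with weight $k\Delta_t$ through the accumulated drift, despite the nominally time-constant offset assumption, and verifying the sign bookkeeping is consistent with the companion blocks such as Lemma~\ref{lemma:FIM_3D_velocity_3D_b_th_position_offset}. Everything else --- the vanishing of the BS-receiver link, the vanishing of the $\tau$-$\nu$ cross terms, and the two bearing derivatives --- is routine once that Jacobian and the channel-FIM table are in hand.
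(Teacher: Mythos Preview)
Your proposal is correct and follows essentially the same approach as the paper's proof in Appendix~\ref{Appendix_lemma_FIM_3D_b_th_position_offset_3D_b_th_velocity_offset}: the paper writes out the chain-rule expansion over all $\tau$--$\tau$, $\tau$--$\nu$, $\nu$--$\tau$, $\nu$--$\nu$ pairs on the LEO--receiver and LEO--BS links (equations~(\ref{equ_lemma:FIM_3D_bth_position_3D_bth_velocity_1}) and~(\ref{equ_lemma:FIM_3D_bth_position_3D_bth_velocity_2})), then substitutes the channel-parameter FIM to kill the cross terms and recover the stated result. Your anticipated obstacle---the Jacobian $\nabla_{\check{\bm{v}}_{b,0}}\tau_{bu,k}=-k\Delta_t\bm{\Delta}_{bu,k}/c$---is exactly the entry the paper tabulates in Appendix~\ref{Appendix_Entries_in_transformation_matrix} (under ``derivatives of the delays with respect to the velocities,'' modulo a labeling typo there), so the accumulated-drift reasoning you sketch is precisely what the paper uses.
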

\begin{proof}
See Appendix \ref{Appendix_lemma_FIM_3D_b_th_position_offset_3D_b_th_velocity_offset}.
\end{proof}

\begin{lemma}
\label{lemma:FIM_3D_b_th_velocity_offset_3D_b_th_velocity_offset}
The FIM for the $3$D velocity uncertainty associated with the $b^{\text{th}}$ LEO, $\check{\bm{v}}_{b,0}$ is
\begin{equation}
\label{equ_lemma:FIM_3D_b_th_velocity_offset_3D_b_th_velocity_offset}
\begin{aligned}
&\bm{F}_{{{y} }}(\bm{y}_{}| \bm{\eta} ;\check{\bm{v}}_{b,0},\check{\bm{v}}_{b,0}) = \\&\medmath{\sum_{k^{},u^{}} \underset{bu,k}{\operatorname{SNR}}  \Bigg[\frac{(k)^2 \omega_{bU,k}\Delta_{t}^{2} }{c^2}   \bm{\Delta}_{bu,k} 
 \bm{\Delta}_{bu^{},k^{}}^{\mathrm{T}}}  + \medmath{\frac{f_{c}^2 \alpha_{obu,k}^2\bm{\Delta}_{bU,k} \bm{\Delta}_{bU,k}^{\mathrm{T}}}{2 \; c^2}  \Bigg]} + \\  &\medmath{\sum_{q,k^{}} \underset{bq,k}{\operatorname{SNR}}  \Bigg[\frac{(k)^{2} \omega_{bq,k}\Delta_{t}^{2} }{c^{2}}   \bm{\Delta}_{bq,k} 
 \bm{\Delta}_{bq^{},k^{}}^{\mathrm{T}}}  + \medmath{\frac{f_{c}^2 \alpha_{obq,k}^2\bm{\Delta}_{bq,k} \bm{\Delta}_{bq,k}^{\mathrm{T}}}{2 \; c^2}  \Bigg]}.
\end{aligned}
\end{equation}

\end{lemma}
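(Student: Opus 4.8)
The plan is to obtain this block of $\mathbf{J}_{\bm{y}|\bm{\kappa}}$ directly from the transformation $\mathbf{J}_{\bm{y}|\bm{\kappa}} = \mathbf{\Upsilon}_{\bm{\kappa}}\mathbf{J}_{\bm{y}|\bm{\eta}}\mathbf{\Upsilon}_{\bm{\kappa}}^{\mathrm{T}}$ introduced above, reading off only the sub-block associated with $\check{\bm{v}}_{b,0}$. Concretely, I will use the chain-rule expansion
\begin{equation}
\bm{F}_{\bm{y}}(\bm{y}|\bm{\eta};\check{\bm{v}}_{b,0},\check{\bm{v}}_{b,0}) = \sum_{\eta_m,\eta_n}(\nabla_{\check{\bm{v}}_{b,0}}\eta_m)\,\bm{F}_{\bm{y}}(\bm{y}|\bm{\eta};\eta_m,\eta_n)\,(\nabla_{\check{\bm{v}}_{b,0}}\eta_n)^{\mathrm{T}},
\end{equation}
so the first step is to identify which channel parameters $\eta_m$ actually depend on $\check{\bm{v}}_{b,0}$. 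Inspecting the channel model, $\check{\bm{v}}_{b,0}=\check{\bm{v}}_{b,k}$ enters only through (i) the LEO-receiver delays $\tau_{bu,k}$ and Dopplers $\nu_{bU,k}$, and (ii) the LEO-BS delays $\tau_{bq,k}$ and Dopplers $\nu_{bq,k}$. The BS-receiver link is independent of every LEO quantity, so it contributes nothing, which already explains why the statement contains only the two sums over $(u,k)$ and $(q,k)$ and no BS-receiver term.

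The second step is to evaluate the required Jacobian entries. For the Dopplers this is immediate: since $\nu_{bU,k}=\bm{\Delta}_{bU,k}^{\mathrm{T}}(\bm{v}_{b,k}+\check{\bm{v}}_{b,k}-\bm{v}_{U,k})/c$ and $\nu_{bq,k}=\bm{\Delta}_{bq,k}^{\mathrm{T}}(\bm{v}_{b,k}+\check{\bm{v}}_{b,k})/c$, treating the geometric unit vectors as fixed to first order gives $\nabla_{\check{\bm{v}}_{b,0}}\nu_{bU,k}=\bm{\Delta}_{bU,k}/c$ and $\nabla_{\check{\bm{v}}_{b,0}}\nu_{bq,k}=\bm{\Delta}_{bq,k}/c$. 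The subtle ingredient, and the crux of the proof, is the delay dependence: although the delay is written with the static offset $\check{\bm{p}}_{b,k}=\check{\bm{p}}_{b,0}$, the velocity uncertainty must be propagated into the effective LEO position over time as $\bm{p}_{b,k}+\check{\bm{p}}_{b,0}+(k)\Delta_{t}\check{\bm{v}}_{b,0}$, exactly mirroring how $\bm{v}_{U,0}$ enters the receiver position through $(k)\Delta_{t}\bm{v}_{U,0}$. Since differentiating the delay with respect to the effective LEO position yields $-\bm{\Delta}_{bu,k}/c$, the chain rule then gives $\nabla_{\check{\bm{v}}_{b,0}}\tau_{bu,k}=-(k)\Delta_{t}\bm{\Delta}_{bu,k}/c$ and likewise $\nabla_{\check{\bm{v}}_{b,0}}\tau_{bq,k}=-(k)\Delta_{t}\bm{\Delta}_{bq,k}/c$. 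I would cross-check these gradients against the already-derived cross blocks in \eqref{equ_lemma:FIM_3D_b_th_position_offset_3D_b_th_velocity_offset} and \eqref{equ_lemma:FIM_3D_velocity_3D_b_th_velocity_offset} to confirm the $(k)\Delta_{t}$ scaling and the signs are internally consistent.

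The final step is to assemble the quadratic form. Because the channel FIM derived earlier is diagonal in the (delay, Doppler) pair—every delay-Doppler cross entry vanishes, and the only nonzero couplings of $\tau_{bu,k}$ and $\nu_{bU,k}$ are with the offsets $\delta_{bU},\epsilon_{bU}$, which are \emph{independent} of $\check{\bm{v}}_{b,0}$—only the diagonal terms $\bm{F}_{\bm{y}}(\bm{y}|\bm{\eta};\tau_{bu,k},\tau_{bu,k})=\underset{bu,k}{\operatorname{SNR}}\,\omega_{bU,k}$ and $\bm{F}_{\bm{y}}(\bm{y}|\bm{\eta};\nu_{bU,k},\nu_{bU,k})=0.5\,\underset{bu,k}{\operatorname{SNR}}\,f_c^2\alpha_{obu,k}^2$, together with their LEO-BS analogues, survive. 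Substituting the gradients, forming the outer products (which cancels the signs) and summing over $(u,k)$ and $(q,k)$ produces exactly the two bracketed expressions in \eqref{equ_lemma:FIM_3D_b_th_velocity_offset_3D_b_th_velocity_offset}. The main obstacle is therefore not the algebra but the modeling step of correctly propagating $\check{\bm{v}}_{b,0}$ into the delay through the accumulated drift $(k)\Delta_{t}\check{\bm{v}}_{b,0}$, together with careful sign bookkeeping; once the Jacobian is fixed, the orthogonality of the channel FIM reduces the remainder to a direct substitution.
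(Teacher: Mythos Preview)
Your proposal is correct and follows essentially the same route as the paper: the paper writes the $(\check{\bm v}_{b,0},\check{\bm v}_{b,0})$ block of $\mathbf J_{\bm y|\bm\kappa}$ as the full bilinear form in the Jacobians of $\tau_{bu,k},\nu_{bU,k},\tau_{bq,k},\nu_{bq,k}$ with respect to $\check{\bm v}_{b,0}$ (including the delay--Doppler cross terms), then collapses the double sums using the block-diagonality of $\mathbf J_{\bm y|\bm\eta}$ and substitutes the channel-FIM entries, which is exactly your chain-rule-plus-orthogonality argument carried out slightly less compactly. Your identification of the key modeling step, namely that $\check{\bm v}_{b,0}$ enters the delays through the accumulated drift $(k)\Delta_t\check{\bm v}_{b,0}$ so that $\nabla_{\check{\bm v}_{b,0}}\tau_{bu,k}=-(k)\Delta_t\bm\Delta_{bu,k}/c$ and $\nabla_{\check{\bm v}_{b,0}}\tau_{bq,k}=-(k)\Delta_t\bm\Delta_{bq,k}/c$, matches the transformation-matrix entries listed in the appendix, and your Doppler gradients agree as well.
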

\begin{proof}
See Appendix \ref{Appendix_lemma_FIM_3D_b_th_velocity_offset_3D_b_th_velocity_offset}.
\end{proof}

\subsection{Loss in information due to uncertainty about the parameters of interest}
The reduction in information about the parameter of interest due to the nuisance parameters is presented in this section. This reduction in information is defined by $\mathbf{J}_{ \bm{\bm{y}}; \bm{\kappa}_1}^{nu}$.

\begin{lemma}
\label{lemma:information_loss_FIM_3D_position}
The loss of information about $3$D position of the receiver due to uncertainty in the nuisance parameters ${\bm{\kappa}_{2}}$ is given by (\ref{equ_lemma:information_loss_FIM_3D_position}).

\begin{figure*}
\begin{align}
\begin{split}
\label{equ_lemma:information_loss_FIM_3D_position}
&{\bm{G}_{{{y} }}(\bm{y}_{}| \bm{\eta} ;\bm{p}_{U,0},\bm{p}_{U,0}) = } \medmath{\sum_{b}   \norm{\sum_{k^{},u^{}} \underset{bu^{},k^{}}{\operatorname{SNR}}\bm{\Delta}_{bu^{},k^{}}^{\mathrm{T}} \frac{ \omega_{bU,k}}{c} }}^{2}    \medmath{   \left(\sum_{u,k} \underset{bu,k}{\operatorname{SNR}} \omega_{bU,k}\right)^{\mathrm{-1}}  }  +
 \medmath{   \norm{\sum_{q,u^{},k^{}} \underset{qu^{},k^{}}{\operatorname{SNR}}\bm{\Delta}_{qu^{},k^{}}^{\mathrm{T}} \frac{ \omega_{qU,k}}{c} }}^{2}    \medmath{   \left(\sum_{q,u,k} \underset{qu,k}{\operatorname{SNR}} \omega_{qU,k}\right)^{\mathrm{-1}}  } \\& +  
\medmath{\sum_{b}\norm{{\sum_{u^{},k^{} } \underset{bu^{},k^{}}{\operatorname{SNR}} \; \; \nabla_{\bm{p}_{U,0}}^{\mathrm{T}} \nu_{bU,k^{}}   }  \frac{(f_{c}^{}) (\alpha_{obu,k^{}}^{2})}{2} }^{2}  \left(\sum_{u,k} \frac{\underset{bu,k}{\operatorname{SNR}}  \alpha_{obu,k}^{2}}{2}\right)^{-1}}  + \medmath{\norm{{\sum_{q^{},u,k^{} } \underset{q^{},u,k^{}}{\operatorname{SNR}} \; \; \nabla_{\bm{p}_{U,0}}^{\mathrm{T}} \nu_{qU,k^{}}   }  \frac{(f_{c}^{}) (\alpha_{oqu,k^{}}^{2})}{2} }^{2}  \left(\sum_{q,u,k} \frac{\underset{qu,k}{\operatorname{SNR}}  \alpha_{oqu,k}^{2}}{2}\right)^{-1}}
\end{split}
\end{align}
\end{figure*}

\end{lemma}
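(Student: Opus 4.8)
The plan is to evaluate the nuisance-information matrix $\mathbf{J}_{\bm{y};\bm{p}_{U,0}}^{nu}=\mathbf{J}_{\bm{y};\bm{p}_{U,0},\bm{\kappa}_2}\,\mathbf{J}_{\bm{y};\bm{\kappa}_2}^{-1}\,\mathbf{J}_{\bm{y};\bm{p}_{U,0},\bm{\kappa}_2}^{\mathrm{T}}$ from Definition \ref{definition_EFIM}, restricted to the $(\bm{p}_{U,0},\bm{p}_{U,0})$ block, and to show that it collapses onto the four rank-structured terms in (\ref{equ_lemma:information_loss_FIM_3D_position}). The first step is to identify which components of $\bm{\kappa}_2$ actually couple with $\bm{p}_{U,0}$. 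Because $\bm{p}_{U,0}$ enters the channel parameters only through the delays $\tau_{bu,k},\tau_{qu,k}$ and Dopplers $\nu_{bU,k},\nu_{qU,k}$ of the two receiver-terminated links, and because the channel-parameter FIM of the previous section is orthogonal between the gains $\bm{\beta}$ and every geometric and offset parameter, the gains contribute nothing to the loss. Likewise, the LEO--BS offsets $\delta_{bQ},\epsilon_{bQ}$ couple only through $\check{\bm{p}}_{b,0},\check{\bm{v}}_{b,0}$ and never with $\bm{p}_{U,0}$, so the only surviving nuisance parameters are the time offsets $\delta_{bU},\delta_{QU}$ and the frequency offsets $\epsilon_{bU},\epsilon_{QU}$.

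Next I would exploit the block structure of $\mathbf{J}_{\bm{y};\bm{\kappa}_2}$. From the channel-FIM entries, a time offset couples only with delays and a frequency offset only with Dopplers, so within each link the $\delta$ and $\epsilon$ parameters are mutually orthogonal; moreover the per-LEO offsets $(\delta_{bU},\epsilon_{bU})$ are distinct for distinct $b$, while $\delta_{QU},\epsilon_{QU}$ are single offsets shared across all synchronized BSs. After discarding the orthogonal gain block, $\mathbf{J}_{\bm{y};\bm{\kappa}_2}$ is therefore effectively diagonal in these four scalar offsets, its inverse is the reciprocal of each self-information, and the quadratic form splits into an independent sum of rank-one contributions, one per surviving offset.

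The core is then a short per-offset calculation. Using the transformation matrix $\mathbf{\Upsilon}_{\bm{\kappa}}$ together with $\nabla_{\bm{p}_{U,0}}\tau_{bu,k}=\tfrac{1}{c}\bm{\Delta}_{bu,k}$ (the same identity underlying Lemma \ref{lemma:FIM_3D_position}) and the relation $\bm{F}_{\bm{y}}(\bm{y}|\bm{\eta};\tau_{bu,k},\delta_{bU})=-\underset{bu,k}{\operatorname{SNR}}\,\omega_{bU,k}$, I obtain the cross term $\mathbf{J}_{\bm{y};\bm{p}_{U,0},\delta_{bU}}=-\tfrac{1}{c}\sum_{u,k}\underset{bu,k}{\operatorname{SNR}}\,\omega_{bU,k}\bm{\Delta}_{bu,k}$ and the self-information $\mathbf{J}_{\bm{y};\delta_{bU}}=\sum_{u,k}\underset{bu,k}{\operatorname{SNR}}\,\omega_{bU,k}$; their Schur product, summed over $b$, yields the first term of (\ref{equ_lemma:information_loss_FIM_3D_position}) once written with the outer-product notation $\norm{\cdot}^2$. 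Repeating the manipulation for $\delta_{QU}$, but with both the cross term and the self-information aggregated over all $q,u,k$ because the offset is common to every BS, gives the second term. For the frequency offsets I instead invoke $\bm{F}_{\bm{y}}(\bm{y}|\bm{\eta};\nu_{bU,k},\epsilon_{bU})=-\tfrac{1}{2}\underset{bu,k}{\operatorname{SNR}}\,f_c\alpha_{obu,k}^2$ and $\bm{F}_{\bm{y}}(\bm{y}|\bm{\eta};\epsilon_{bU},\epsilon_{bU})=\tfrac{1}{2}\sum_{u,k}\underset{bu,k}{\operatorname{SNR}}\,\alpha_{obu,k}^2$, which produce the third (per-LEO, summed over $b$) and fourth (shared across BSs) terms carrying the $\nabla_{\bm{p}_{U,0}}\nu$ factors.

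The main obstacle I anticipate is the careful bookkeeping of which sums run over a single LEO $b$ versus over all BSs $q$: since $\delta_{QU}$ and $\epsilon_{QU}$ are single offsets common to every synchronized BS, their Schur complement aggregates the full $\sum_{q,u,k}$ coupling into one rank-one term, in sharp contrast to the per-satellite LEO offsets, which generate an outer sum over $b$. Establishing the diagonalization of $\mathbf{J}_{\bm{y};\bm{\kappa}_2}$ rigorously --- in particular verifying that $\delta_{bU}$ and $\delta_{b'U}$ do not interact, that time and frequency offsets remain decoupled after the transformation, and that the shared BS offsets introduce no off-block terms bridging the LEO and BS contributions --- is the delicate part on which the clean four-term form rests.
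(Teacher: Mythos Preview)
Your proposal is correct and follows essentially the same route as the paper's proof: apply Definition \ref{definition_EFIM}, observe that only the four scalar offsets $\delta_{bU},\epsilon_{bU},\delta_{QU},\epsilon_{QU}$ couple with $\bm{p}_{U,0}$ (gains and LEO--BS offsets being orthogonal), exploit the resulting diagonal structure of the nuisance block, and substitute the channel-parameter FIM entries. The paper's proof is considerably terser---it simply writes down the intermediate equality in terms of the $\bm{F}_{\bm{y}}(\cdot;\delta,\tau)$ and $\bm{F}_{\bm{y}}(\cdot;\epsilon,\nu)$ blocks and then substitutes---whereas you spell out explicitly why the gains drop out, why time and frequency offsets decouple, and why the per-LEO sum over $b$ contrasts with the single aggregated BS term; but the underlying mechanism is identical.
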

\begin{proof}
See Appendix \ref{Appendix_lemma_information_loss_FIM_3D_position}.
\end{proof}

\begin{lemma}
\label{lemma:information_loss_FIM_3D_position_3D_veloctiy}
The loss of information about the FIM of the $3$D position and $3$D velocity of the receiver due to uncertainty in the nuisance parameters ${\bm{\kappa}_{2}}$ is given by (\ref{equ_lemma:information_loss_FIM_3D_position_3D_veloctiy}).

\begin{figure*}
\begin{align}
\begin{split}
\label{equ_lemma:information_loss_FIM_3D_position_3D_veloctiy}
&\medmath{\bm{G}_{{{y} }}(\bm{y}_{}| \bm{\eta} ;\bm{p}_{U,0},\bm{v}_{U,0}) = \frac{\Delta_{t}}{c^2} \sum_{b,k^{},u^{}k^{'},u^{'}} \underset{bu^{},k^{}}{\operatorname{SNR}} \underset{bu^{'},k^{'}}{\operatorname{SNR}} } \medmath{   \bm{\Delta}_{bu^{},k^{}}  (k^{'}) {\bm{\Delta}_{bu^{'},k^{'}}^{\mathrm{T}}} \omega_{bU,k} \omega_{bU,k^{'}}}    \medmath{   \left(\sum_{u,k} \underset{bu,k}{\operatorname{SNR}} \omega_{bU,k}\right)^{\mathrm{-1}}  + \frac{\Delta_{t}}{c^2} \sum_{q,q^{'},u^{},k^{}u^{'},k^{'}} \underset{q^{},u,k^{}}{\operatorname{SNR}} \underset{q^{'},u^{'},k^{'}}{\operatorname{SNR}} } \medmath{   \bm{\Delta}_{q^{}u^{},k^{}}}  \\& \medmath{(k^{'})} \medmath{{\bm{\Delta}_{q^{'}u^{'},k^{'}}^{\mathrm{T}}} \omega_{qU,k} \omega_{q^{'}U,k^{'}}}    \medmath{   \left(\sum_{qu,k} \underset{qu,k}{\operatorname{SNR}} \omega_{qU,k}\right)^{\mathrm{-1}} - \frac{1}{c}\sum_{b,k^{},u^{}k^{'},u^{'}}}  \medmath{  \underset{bu^{},k^{}}{\operatorname{SNR}} \underset{bu^{'},k^{'}}{\operatorname{SNR}}  }  \medmath{{ \nabla_{\bm{p}_{U,0}} \nu_{bU,k^{}} \bm{\Delta}_{bU,k^{'}}^{\mathrm{T}}   }  \frac{(f_{c}^{2}) (\alpha_{obu,k^{}}^{2}\alpha_{obu^{'},k^{'}}^{2})}{4}   \left(\sum_{u,k} \frac{\underset{b^{}u,k^{}}{\operatorname{SNR}}  \alpha_{obu,k}^{2}}{2}\right)^{-1}}  \\& - \frac{1}{c}\sum_{q^{},u^{}q^{'},u^{'},k^{},k^{'}} \medmath{  \underset{qu^{},k^{}}{\operatorname{SNR}} \underset{q^{'}u^{'},k^{'}}{\operatorname{SNR}}  }  \medmath{{ \nabla_{\bm{p}_{U,0}} \nu_{qU,k^{}} \bm{\Delta}_{q^{'}U^{},k^{'}}^{\mathrm{T}}   }  \frac{(f_{c}^{2}) (\alpha_{oqu,k^{}}^{2}\alpha_{oq^{'}u^{'},k^{'}}^{2})}{4}  }  \medmath{ \left(\sum_{q,u,k} \frac{\underset{q^{}u,k^{}}{\operatorname{SNR}}  \alpha_{oqu,k}^{2}}{2}\right)^{-1}}
\end{split}
\end{align}
\end{figure*}
\end{lemma}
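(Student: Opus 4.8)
The plan is to recognize $\bm{G}_{{{y}}}(\bm{y}_{}| \bm{\eta};\bm{p}_{U,0},\bm{v}_{U,0})$ as the $(\bm{p}_{U,0},\bm{v}_{U,0})$ sub-block of the nuisance correction $\mathbf{J}_{\bm{y}|\bm{\kappa}_1,\bm{\kappa}_2}\mathbf{J}_{\bm{y}|\bm{\kappa}_2}^{-1}\mathbf{J}_{\bm{y}|\bm{\kappa}_1,\bm{\kappa}_2}^{\mathrm{T}}$ introduced in Definition \ref{definition_EFIM}, where $\bm{\kappa}_2$ collects the channel gains and the time/frequency offsets. First I would exploit the block-diagonal structure of $\mathbf{J}_{\bm{y}|\bm{\kappa}_2}$ to make the inversion explicit. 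From the channel-parameter FIM derived above, the gains $\bm{\beta}$ have no cross terms with delays, Dopplers, or offsets, and $\delta$--$\epsilon$ cross terms vanish; each LEO carries private offsets $\delta_{bU},\epsilon_{bU}$, whereas all BSs share the common offsets $\delta_{QU},\epsilon_{QU}$ by Remark \ref{remark}, and the LEO--BS offsets $\delta_{bQ},\epsilon_{bQ}$ do not couple to the receiver kinematics. Hence $\mathbf{J}_{\bm{y}|\bm{\kappa}_2}^{-1}$ factors into scalar reciprocals, and only the four coordinates $\delta_{bU},\epsilon_{bU}$ (per LEO) and $\delta_{QU},\epsilon_{QU}$ (shared over BSs) yield nonzero cross terms with $(\bm{p}_{U,0},\bm{v}_{U,0})$.

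Next, for each scalar nuisance coordinate $n$ I would compute the cross-FIM vectors $\mathbf{J}_{\bm{p}_{U,0},n}$ and $\mathbf{J}_{\bm{v}_{U,0},n}$ via the chain rule $\mathbf{J}_{\bm{y}|\bm{\kappa}}=\mathbf{\Upsilon}_{\bm{\kappa}}\mathbf{J}_{\bm{y}|\bm{\eta}}\mathbf{\Upsilon}_{\bm{\kappa}}^{\mathrm{T}}$, using the Jacobians tabulated in the transformation-matrix appendix. For the time offsets this combines $\nabla_{\bm{p}_{U,0}}\tau_{bu,k}=\bm{\Delta}_{bu,k}/c$ and $\nabla_{\bm{v}_{U,0}}\tau_{bu,k}=(k)\Delta_{t}\bm{\Delta}_{bu,k}/c$ with the entry $\bm{F}(\tau_{bu,k},\delta_{bU})=-\underset{bu,k}{\operatorname{SNR}}\,\omega_{bU,k}$ and the diagonal $\bm{F}(\delta_{bU},\delta_{bU})=\sum_{u,k}\underset{bu,k}{\operatorname{SNR}}\,\omega_{bU,k}$; for the frequency offsets it combines $\nabla_{\bm{v}_{U,0}}\nu_{bU,k}=-\bm{\Delta}_{bU,k}/c$ with $\bm{F}(\nu_{bU,k},\epsilon_{bU})=-0.5\,\underset{bu,k}{\operatorname{SNR}}\,f_{c}\alpha_{obu,k}^{2}$ and $\bm{F}(\epsilon_{bU},\epsilon_{bU})=\sum_{u,k}0.5\,\underset{bu,k}{\operatorname{SNR}}\,\alpha_{obu,k}^{2}$. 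The rank-one contribution of $n$ is $\mathbf{J}_{\bm{p}_{U,0},n}\mathbf{J}_{\bm{v}_{U,0},n}^{\mathrm{T}}/\mathbf{J}_{n,n}$; the $\delta_{bU}$ and $\delta_{QU}$ pieces produce the two $\omega$-weighted quadratics (with the $\Delta_{t}/c^{2}$ prefactor coming from the velocity Jacobian), while the $\epsilon_{bU},\epsilon_{QU}$ pieces produce the two $\alpha^{2}$-weighted quadratics, reproducing exactly the four summands of (\ref{equ_lemma:information_loss_FIM_3D_position_3D_veloctiy}).

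The main obstacle is bookkeeping the shared-versus-private offset distinction. Because $\delta_{QU},\epsilon_{QU}$ are common to all BSs, the corresponding Schur term does not split across $q$; instead it yields a single rank-one form whose numerator double-sums over $q$ and $q^{\prime}$ and whose denominator sums over all BS--receiver links, which is precisely why the BS summands are entangled in $(q,q^{\prime})$ while the LEO summands factor per $b$. I would therefore maintain separate per-link and shared-link normalizations throughout, explicitly verify that the channel-gain and LEO--BS nuisance blocks contribute zero to this block so they may be dropped from the inversion, and only then assemble the four rank-one pieces into the stated expression.
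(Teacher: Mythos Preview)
Your proposal is correct and follows essentially the same approach as the paper's proof: apply the Schur-complement form of Definition~\ref{definition_EFIM}, use the block-diagonal structure of the nuisance FIM so that only the scalar offsets $\delta_{bU},\epsilon_{bU},\delta_{QU},\epsilon_{QU}$ contribute, form the rank-one cross terms via the chain rule with the Jacobians from Appendix~\ref{Appendix_Entries_in_transformation_matrix}, and substitute the channel-parameter FIM entries. Your explicit distinction between per-LEO private offsets and the shared BS offsets, and your remark that the channel gains and LEO--BS nuisance blocks contribute nothing to this sub-block, are exactly the structural observations the paper relies on (though it states them more tersely).
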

\begin{proof}
See Appendix \ref{Appendix_lemma_information_loss_FIM_3D_position_3D_velocity}.
\end{proof}

\begin{lemma}
\label{lemma:information_loss_FIM_3D_position_3D_orientation}
The loss of information about the FIM of the $3$D position and $3$D orientation of the receiver due to uncertainty in the nuisance parameters ${\bm{\kappa}_{2}}$ is given by (\ref{equ_lemma:information_loss_FIM_3D_position_3D_orientation}).

\begin{figure*}
\begin{align}
\begin{split}
\label{equ_lemma:information_loss_FIM_3D_position_3D_orientation}
\medmath{\bm{G}_{{{y} }}(\bm{y}_{}| \bm{\eta} ;\bm{p}_{U,0},\bm{\Phi}_{U})} &= \medmath{\frac{1}{c}\sum_{b,k^{},u^{}k^{'},u^{'}} \underset{bu^{},k^{}}{\operatorname{SNR}} \underset{bu^{'},k^{'}}{\operatorname{SNR}} } \medmath{   \bm{\Delta}_{bu^{},k^{}}  \nabla_{\bm{\Phi}_{U}}^{\mathrm{T}} \tau_{bu^{'},k^{'}}  \omega_{bU,k} \omega_{bU,k^{'}}}    \medmath{   \left(\sum_{u,k} \underset{bu,k}{\operatorname{SNR}} \omega_{bU,k}\right)^{\mathrm{-1}}} \\ &+  \medmath{\frac{1}{c}\sum_{q,q^{'},u^{},k^{}u^{'},k^{'}} \underset{q^{},u,k^{}}{\operatorname{SNR}} \underset{q^{'},u^{'},k^{'}}{\operatorname{SNR}} } \medmath{   \bm{\Delta}_{q^{}u^{},k^{}}   \nabla_{\bm{\Phi}_{U}}^{\mathrm{T}} \tau_{q^{'}u^{'},k^{'}}} \medmath{ \omega_{qU,k} \omega_{q^{'}U,k^{'}}}    \medmath{   \left(\sum_{qu,k} \underset{qu,k}{\operatorname{SNR}} \omega_{qU,k}\right)^{\mathrm{-1}}} 
\end{split}
\end{align}
\end{figure*}
\end{lemma}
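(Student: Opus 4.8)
The plan is to obtain the claimed expression as the $(\bm{p}_{U,0},\bm{\Phi}_{U})$ sub-block of the information-loss matrix $\mathbf{J}_{\bm{y};\bm{\kappa}_1}^{nu} = \mathbf{J}_{\bm{y};\bm{\kappa}_1,\bm{\kappa}_2}\,\mathbf{J}_{\bm{y};\bm{\kappa}_2}^{-1}\,\mathbf{J}_{\bm{y};\bm{\kappa}_1,\bm{\kappa}_2}^{\mathrm{T}}$ introduced in Definition \ref{definition_EFIM}. Concretely, I would write $\bm{G}_{\bm{y}}(\bm{y}|\bm{\eta};\bm{p}_{U,0},\bm{\Phi}_{U}) = \mathbf{J}_{\bm{y};\bm{p}_{U,0},\bm{\kappa}_2}\,\mathbf{J}_{\bm{y};\bm{\kappa}_2}^{-1}\,\mathbf{J}_{\bm{y};\bm{\Phi}_{U},\bm{\kappa}_2}^{\mathrm{T}}$ and reduce this triple product by identifying exactly which nuisance coordinates in $\bm{\kappa}_2=[\bm{\zeta}_{1U},\ldots,\bm{\zeta}_{N_BU},\bm{\zeta}_{1Q},\ldots,\bm{\zeta}_{N_BQ},\bm{\zeta}_{1U},\ldots,\bm{\zeta}_{N_QU}]$ actually couple to both $\bm{p}_{U,0}$ and $\bm{\Phi}_{U}$.

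First I would exploit the block structure. Because the channel-parameter FIM is block diagonal across the three links and across LEOs/BSs, and because within each block the only nonzero off-diagonal channel couplings are $(\tau,\delta)$ and $(\nu,\epsilon)$ while the gains $\beta$ couple only to themselves, the nuisance FIM $\mathbf{J}_{\bm{y};\bm{\kappa}_2}$ is itself block diagonal in the groups $\{\beta\}$, $\{\delta\}$, $\{\epsilon\}$, so the Schur complement splits into independent contributions that I can treat separately. I would then eliminate the non-contributing groups: the gains contribute nothing since neither $\bm{p}_{U,0}$ nor $\bm{\Phi}_{U}$ enters $\beta$; the LEO--BS nuisances $\bm{\zeta}_{bQ}$ contribute nothing since the receiver position and orientation do not appear in the LEO--BS link, whence $\mathbf{J}_{\bm{y};\bm{p}_{U,0},\bm{\zeta}_{bQ}}=\mathbf{J}_{\bm{y};\bm{\Phi}_{U},\bm{\zeta}_{bQ}}=\bm{0}$. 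The key structural step is that the frequency-offset group also drops out of this particular block: the orientation $\bm{\Phi}_{U}$ rotates only the antenna offsets $\bm{s}_{u}=\bm{Q}_{U}\tilde{\bm{s}}_{u}$ and therefore enters the delays $\tau_{bu,k},\tau_{qu,k}$ but not the Dopplers $\nu_{bU,k},\nu_{qU,k}$, which depend only on the centroid directions; consequently $\nabla_{\bm{\Phi}_{U}}\nu=\bm{0}$ and $\mathbf{J}_{\bm{y};\bm{\Phi}_{U},\epsilon}=\bm{0}$, killing the $\epsilon$-path. This leaves only the time-offset nuisances $\delta_{bU}$ (one per LEO) and $\delta_{QU}$ (a single offset shared by all BSs, since the BSs are time/frequency synchronized), which is precisely why the final expression contains only the delay-type factors $\omega_{bU,k}$, $\omega_{qU,k}$ and the delay gradient $\nabla_{\bm{\Phi}_{U}}\tau$, with no Doppler terms.

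Finally I would evaluate the surviving scalar Schur terms. Using $\nabla_{\bm{p}_{U,0}}\tau_{bu,k}=\bm{\Delta}_{bu,k}/c$, the channel entry $\bm{F}_{\bm{y}}(\bm{y}|\bm{\eta};\tau_{bu,k},\delta_{bU})=-\underset{bu,k}{\operatorname{SNR}}\,\omega_{bU,k}$, and $\bm{F}_{\bm{y}}(\bm{y}|\bm{\eta};\delta_{bU},\delta_{bU})=\sum_{u,k}\underset{bu,k}{\operatorname{SNR}}\,\omega_{bU,k}$, I obtain $\mathbf{J}_{\bm{y};\bm{p}_{U,0},\delta_{bU}}=-\tfrac{1}{c}\sum_{u,k}\underset{bu,k}{\operatorname{SNR}}\,\omega_{bU,k}\bm{\Delta}_{bu,k}$ and $\mathbf{J}_{\bm{y};\bm{\Phi}_{U},\delta_{bU}}=-\sum_{u,k}\underset{bu,k}{\operatorname{SNR}}\,\omega_{bU,k}\nabla_{\bm{\Phi}_{U}}\tau_{bu,k}$. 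Substituting into $\mathbf{J}_{\bm{y};\bm{p}_{U,0},\delta_{bU}}(\mathbf{J}_{\bm{y};\delta_{bU},\delta_{bU}})^{-1}\mathbf{J}_{\bm{y};\bm{\Phi}_{U},\delta_{bU}}^{\mathrm{T}}$ and summing over $b$ yields the LEO term; repeating with the shared $\delta_{QU}$, whose cross-FIMs accumulate over all $q,u,k$ and whose single scalar inverse couples distinct BS indices $q,q'$, yields the BS term with its characteristic double sum. Collecting the two reproduces (\ref{equ_lemma:information_loss_FIM_3D_position_3D_orientation}).

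The main obstacle I expect is the bookkeeping in the second step: rigorously justifying the block-diagonalization of $\mathbf{J}_{\bm{y};\bm{\kappa}_2}$ and, in particular, the vanishing of the frequency-offset contribution for the orientation (i.e.\ $\nabla_{\bm{\Phi}_{U}}\nu=\bm{0}$), together with correctly distinguishing the per-LEO offsets $\delta_{bU}$ from the single shared BS offset $\delta_{QU}$, since this asymmetry is what produces the single sum $\sum_{b}$ versus the double BS sum over $q,q'$ in the final formula.
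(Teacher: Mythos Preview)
Your proposal is correct and follows essentially the same approach as the paper: both obtain the block by applying the Schur-complement/EFIM formula of Definition~\ref{definition_EFIM} and then substituting the channel-parameter FIM entries, with only the time-offset nuisances $\delta_{bU}$ and $\delta_{QU}$ surviving. Your writeup is more explicit than the paper's about \emph{why} the $\beta$, $\epsilon$, and LEO--BS nuisance groups drop out (in particular the observation $\nabla_{\bm{\Phi}_{U}}\nu=\bm{0}$), but the underlying argument and computation are the same.
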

\begin{proof}
See Appendix \ref{Appendix_lemma_information_loss_FIM_3D_position_3D_orientation}.
\end{proof}

\begin{lemma}
\label{lemma:information_loss_FIM_3D_position_3D_bth_position_uncertainty}
The loss of information about the FIM of the $\bm{p}_{U,0}$ and $\check{\bm{p}}_{b,0}$  due to uncertainty in the nuisance parameters ${\bm{\kappa}_{2}}$ is given by (\ref{equ_lemma:information_loss_3D_position_3D_bth_position_uncertainty}).

\begin{figure*}
\begin{align}
\begin{split}
\label{equ_lemma:information_loss_3D_position_3D_bth_position_uncertainty}
&\medmath{\bm{G}_{{{y} }}(\bm{y}_{}| \bm{\eta} ;\bm{p}_{U,0},\check{\bm{p}}_{b,0}) = \frac{-1}{c^2} \sum_{k^{},u^{}k^{'},u^{'}} \underset{bu^{},k^{}}{\operatorname{SNR}} \underset{bu^{'},k^{'}}{\operatorname{SNR}} } \medmath{   \bm{\Delta}_{bu^{},k^{}}  {\bm{\Delta}_{bu^{'},k^{'}}^{\mathrm{T}}} \omega_{bU,k} \omega_{bU,k^{'}}}    \medmath{   \left(\sum_{u,k} \underset{bu,k}{\operatorname{SNR}} \omega_{bU,k}\right)^{\mathrm{-1}}}  \\ &+ \sum_{k^{},u^{}, k^{'},u^{'}}  \medmath{  \underset{bu^{},k^{}}{\operatorname{SNR}} \underset{bu^{'},k^{'}}{\operatorname{SNR}}  }  \medmath{{ \nabla_{\bm{p}_{U,0}} \nu_{bU,k^{}} \nabla_{\check{\bm{p}}_{b,0}}^{\mathrm{T}} \nu_{bU,k^{'}}   }  \frac{(f_{c}^{2}) (\alpha_{obu,k^{}}^{2}\alpha_{obu^{'},k^{'}}^{2})}{4}   \left(\sum_{u,k} \frac{\underset{b^{}u,k^{}}{\operatorname{SNR}}  \alpha_{obu,k}^{2}}{2}\right)^{-1}} 
\end{split}
\end{align}
\end{figure*}
\end{lemma}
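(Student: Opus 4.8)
The plan is to specialize the information-loss construction of Definition~\ref{definition_EFIM} to the single off-diagonal block indexed by $(\bm{p}_{U,0},\check{\bm{p}}_{b,0})$, that is, to read off $\bm{G}_{{{y} }}(\bm{y}_{}| \bm{\eta} ;\bm{p}_{U,0},\check{\bm{p}}_{b,0})$ from $\mathbf{J}^{nu}_{\bm{y};\bm{\kappa}_1} = \mathbf{J}_{\bm{y};\bm{\kappa}_1,\bm{\kappa}_2}\mathbf{J}_{\bm{y};\bm{\kappa}_2}^{-1}\mathbf{J}_{\bm{y};\bm{\kappa}_1,\bm{\kappa}_2}^{\mathrm{T}}$. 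First I would note that, after the bijective transformation $\mathbf{\Upsilon}_{\bm{\kappa}}$, the channel-gain components of $\bm{\kappa}_2$ stay decoupled from every geometric parameter, since their channel-parameter FIM blocks are diagonal and carry no cross terms to the delays, Dopplers, or offsets. Hence they contribute nothing to this block, and the only nuisance parameters that can matter are the time and frequency offsets $\{\delta_{bU},\epsilon_{bU},\delta_{bQ},\epsilon_{bQ},\delta_{QU},\epsilon_{QU}\}$.

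The key structural reduction I would then exploit is that $\mathbf{J}_{\bm{y};\bm{\kappa}_2}$ is block diagonal across these offsets (no offset couples to another, and within a link $\delta$ and $\epsilon$ are uncoupled), so the Schur complement splits into a sum of independent contributions, one per offset. Only offsets that couple to \emph{both} $\bm{p}_{U,0}$ and $\check{\bm{p}}_{b,0}$ survive in this cross block. Since $\bm{p}_{U,0}$ enters only the LEO--receiver and BS--receiver observables (coupling it to $\delta_{bU},\epsilon_{bU},\delta_{QU},\epsilon_{QU}$) while $\check{\bm{p}}_{b,0}$ enters only the $b$th LEO--receiver and $b$th LEO--BS observables (coupling it to $\delta_{bU},\epsilon_{bU},\delta_{bQ},\epsilon_{bQ}$), the single shared pair is $(\delta_{bU},\epsilon_{bU})$; the BS--receiver offsets have zero coupling to $\check{\bm{p}}_{b,0}$ and the LEO--BS offsets have zero coupling to $\bm{p}_{U,0}$, so each such product vanishes.

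Third, I would evaluate the surviving coupling factors by the chain rule applied to the LEO--receiver channel-parameter FIM entries already listed. Using $\nabla_{\bm{p}_{U,0}}\tau_{bu,k}=\tfrac{1}{c}\bm{\Delta}_{bu,k}$ and $\nabla_{\check{\bm{p}}_{b,0}}\tau_{bu,k}=-\tfrac{1}{c}\bm{\Delta}_{bu,k}$ together with $\bm{F}(\tau_{bu,k},\delta_{bU})=-\operatorname{SNR}_{bu,k}\,\omega_{bU,k}$ and $\bm{F}(\delta_{bU},\delta_{bU})=\sum_{u,k}\operatorname{SNR}_{bu,k}\,\omega_{bU,k}$ yields the first (delay) term, where the opposite signs of the two delay gradients are exactly what manufactures the overall factor $-1/c^2$. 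Feeding the Doppler entries $\bm{F}(\nu_{bU,k},\epsilon_{bU})=-0.5\,\operatorname{SNR}_{bu,k}\,f_c\,\alpha_{obu,k}^2$ and $\bm{F}(\epsilon_{bU},\epsilon_{bU})=\sum_{u,k}0.5\,\operatorname{SNR}_{bu,k}\,\alpha_{obu,k}^2$ through the gradients $\nabla_{\bm{p}_{U,0}}\nu_{bU,k}$ and $\nabla_{\check{\bm{p}}_{b,0}}\nu_{bU,k}$ gives the second term, where the two negative signs combine to $+\tfrac14 f_c^2$ and the factor-$\tfrac12$ normalization carries into the inverse, reproducing~(\ref{equ_lemma:information_loss_3D_position_3D_bth_position_uncertainty}).

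The hard part will be the bookkeeping in the second step, namely certifying that every offset other than $(\delta_{bU},\epsilon_{bU})$ contributes zero to this specific cross block. This requires checking, for each remaining offset, that at least one of its two coupling vectors (to $\bm{p}_{U,0}$ or to $\check{\bm{p}}_{b,0}$) is identically zero, which follows from the link membership of each geometric parameter but must be argued link by link and relies on the block-diagonal form of $\mathbf{J}_{\bm{y};\bm{\kappa}_2}$ established earlier. Once that reduction is secured, the remainder is a routine substitution of the closed-form channel-parameter FIM entries with careful sign tracking.
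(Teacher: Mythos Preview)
Your proposal is correct and follows essentially the same approach as the paper: apply the Schur-complement construction of Definition~\ref{definition_EFIM}, observe that only the nuisance pair $(\delta_{bU},\epsilon_{bU})$ couples simultaneously to $\bm{p}_{U,0}$ and $\check{\bm{p}}_{b,0}$, and then substitute the closed-form channel-parameter FIM entries. Your write-up is in fact more explicit than the paper's own proof, which simply states the intermediate expression and invokes the substitution without spelling out the link-by-link elimination of the other offsets.
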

\begin{proof}
See Appendix \ref{Appendix_lemma_information_loss_FIM_3D_position_3D_bth_position_uncertainty}.
\end{proof}

\begin{lemma}
\label{lemma:information_loss_FIM_3D_position_3D_bth_velocity_uncertainty}
The loss of information about the FIM of the $\bm{p}_{U,0}$ and $\check{\bm{v}}_{b,0}$  due to uncertainty in the nuisance parameters ${\bm{\kappa}_{2}}$ is given by (\ref{equ_lemma:information_loss_FIM_3D_position_3D_bth_velocity_uncertainty}).

\begin{figure*}
\begin{align}
\begin{split}
\label{equ_lemma:information_loss_FIM_3D_position_3D_bth_velocity_uncertainty}
\medmath{\bm{G}_{{{y} }}(\bm{y}_{}| \bm{\eta} ;\bm{p}_{U,0},\check{\bm{v}}_{b,0})} =& \medmath{ -\frac{\Delta_{t}}{c^2} \sum_{k^{},u^{}k^{'},u^{'}} \underset{bu^{},k^{}}{\operatorname{SNR}} \underset{bu^{'},k^{'}}{\operatorname{SNR}} } \medmath{   \bm{\Delta}_{bu^{},k^{}}  (k^{'}) {\bm{\Delta}_{bu^{'},k^{'}}^{\mathrm{T}}} \omega_{bU,k} \omega_{bU,k^{'}}}    \medmath{   \left(\sum_{u,k} \underset{bu,k}{\operatorname{SNR}} \omega_{bU,k}\right)^{\mathrm{-1}}} \\&+ \frac{1}{c}\sum_{k^{},u^{}k^{'},u^{'}}  \medmath{  \underset{bu^{},k^{}}{\operatorname{SNR}} \underset{bu^{'},k^{'}}{\operatorname{SNR}}  }  \medmath{{ \nabla_{\bm{p}_{U,0}} \nu_{bU,k^{}} \bm{\Delta}_{bU,k^{'}}^{\mathrm{T}}   }  \frac{(f_{c}^{2}) (\alpha_{obu,k^{}}^{2}\alpha_{obu^{'},k^{'}}^{2})}{4}   \left(\sum_{u,k} \frac{\underset{b^{}u,k^{}}{\operatorname{SNR}}  \alpha_{obu,k}^{2}}{2}\right)^{-1}} 
\end{split}
\end{align}
\end{figure*}
\end{lemma}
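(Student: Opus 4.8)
The plan is to specialize the nuisance-information block $\mathbf{J}_{\bm{y};\bm{\kappa}_1}^{nu}=\mathbf{J}_{\bm{y};\bm{\kappa}_1,\bm{\kappa}_2}\mathbf{J}_{\bm{y};\bm{\kappa}_2}^{-1}\mathbf{J}_{\bm{y};\bm{\kappa}_1,\bm{\kappa}_2}^{\mathrm{T}}$ of Definition \ref{definition_EFIM} to the sub-block pairing $\bm{p}_{U,0}$ (rows) with $\check{\bm{v}}_{b,0}$ (columns). First I would discard the channel gains from $\bm{\kappa}_2$: their only non-zero channel-FIM entries are with themselves, so they are orthogonal to every geometric quantity and contribute nothing to the Schur complement. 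The surviving nuisance parameters are the per-link time and frequency offsets; because each delay couples only to its $\delta$, each Doppler only to its $\epsilon$, and delays and Dopplers are mutually orthogonal in the channel FIM of the previous section, the matrix $\mathbf{J}_{\bm{y};\bm{\kappa}_2}$ is block diagonal across these offsets.

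The decisive structural observation is which offsets are \emph{shared} by the two parameters of interest. The receiver position $\bm{p}_{U,0}$ enters only the LEO--receiver and BS--receiver links, whereas the LEO velocity offset $\check{\bm{v}}_{b,0}$ enters only the LEO--receiver and LEO--BS links; hence the sole offsets coupling to both are $\delta_{bU}$ and $\epsilon_{bU}$. With a block-diagonal nuisance matrix, every other offset annihilates one of the two outer factors, so the loss term collapses to two scalar-denominator contributions. Writing $\bm{F}(a,b)$ for the location-FIM cross block between parameters $a$ and $b$,
$$
\bm{G}_{{y}}(\bm{y}|\bm{\eta};\bm{p}_{U,0},\check{\bm{v}}_{b,0}) = \frac{\bm{F}(\bm{p}_{U,0},\delta_{bU})\,\bm{F}(\check{\bm{v}}_{b,0},\delta_{bU})^{\mathrm{T}}}{\bm{F}(\delta_{bU},\delta_{bU})} + \frac{\bm{F}(\bm{p}_{U,0},\epsilon_{bU})\,\bm{F}(\check{\bm{v}}_{b,0},\epsilon_{bU})^{\mathrm{T}}}{\bm{F}(\epsilon_{bU},\epsilon_{bU})},
$$
which already accounts for the complete absence of BS-- and LEO--BS--link terms in the claimed expression.

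The remaining work is routine chain-rule bookkeeping on the channel-FIM entries of the previous section. For the $\delta_{bU}$ term I would use $\nabla_{\bm{p}_{U,0}}\tau_{bu,k}=\tfrac{1}{c}\bm{\Delta}_{bu,k}$ and $\nabla_{\check{\bm{v}}_{b,0}}\tau_{bu,k}=-\tfrac{k\Delta_{t}}{c}\bm{\Delta}_{bu,k}$ (the velocity offset reaching the delay through the accumulated drift $k\Delta_{t}\check{\bm{v}}_{b,0}$, exactly the mechanism already visible in Lemma \ref{lemma:FIM_3D_position_3D_b_th_velocity_offset}), together with $\bm{F}(\tau_{bu,k},\delta_{bU})=-\underset{bu,k}{\operatorname{SNR}}\,\omega_{bU,k}$ and $\bm{F}(\delta_{bU},\delta_{bU})=\sum_{u,k}\underset{bu,k}{\operatorname{SNR}}\,\omega_{bU,k}$; forming the outer product reproduces the first line of \eqref{equ_lemma:information_loss_FIM_3D_position_3D_bth_velocity_uncertainty}, including the factor $-\Delta_{t}/c^{2}$ and the weight $(k')$. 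For the $\epsilon_{bU}$ term I would use $\nabla_{\check{\bm{v}}_{b,0}}\nu_{bU,k}=\tfrac{1}{c}\bm{\Delta}_{bU,k}$, $\bm{F}(\nu_{bU,k},\epsilon_{bU})=-\tfrac{1}{2}\underset{bu,k}{\operatorname{SNR}}\,f_{c}\,\alpha_{obu,k}^{2}$ and $\bm{F}(\epsilon_{bU},\epsilon_{bU})=\sum_{u,k}\tfrac{1}{2}\underset{bu,k}{\operatorname{SNR}}\,\alpha_{obu,k}^{2}$, keeping $\nabla_{\bm{p}_{U,0}}\nu_{bU,k}$ symbolic as in the statement; this yields the second line with its factor $f_{c}^{2}/(4c)$. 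I expect the main obstacle to be the structural pruning rather than any single derivative: one must verify rigorously that offsets from distinct links neither couple in $\mathbf{J}_{\bm{y};\bm{\kappa}_2}$ nor appear simultaneously in the two cross-FIM vectors, since it is precisely this that reduces the double sum to the LEO--receiver link alone.
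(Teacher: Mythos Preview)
Your proposal is correct and follows essentially the same approach as the paper: the paper's proof invokes Definition~\ref{definition_EFIM}, writes the loss as the two scalar-denominator contributions from $\delta_{bU}$ and $\epsilon_{bU}$ (exactly your displayed expression, expanded through the channel-parameter FIM entries and the Jacobian derivatives), and then substitutes to obtain \eqref{equ_lemma:information_loss_FIM_3D_position_3D_bth_velocity_uncertainty}. If anything, your justification for why only the $b$th LEO--receiver offsets survive is more explicit than the paper's, which simply asserts the first equality as a direct consequence of Definition~\ref{definition_EFIM}.
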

\begin{proof}
See Appendix \ref{Appendix_lemma_information_loss_FIM_3D_position_3D_bth_velocity_uncertainty}.
\end{proof}

\begin{lemma}
\label{lemma:information_loss_FIM_3D_velocity_3D_velocity}
The loss of information about $3$D velocity of the receiver due to uncertainty in the nuisance parameters ${\bm{\kappa}_{2}}$ is given by (\ref{equ_lemma:information_loss_FIM_3D_velocity_3D_velocity}).

\begin{figure*}
\begin{align}
\begin{split}
\label{equ_lemma:information_loss_FIM_3D_velocity_3D_velocity}
&{\bm{G}_{{{y} }}(\bm{y}_{}| \bm{\eta} ;\bm{v}_{U,0},\bm{v}_{U,0}) = } \medmath{\sum_{b}   \norm{\sum_{k^{},u^{}} \underset{bu^{},k^{}}{\operatorname{SNR}}\bm{\Delta}_{bu^{},k^{}}^{\mathrm{T}} \frac{\Delta_{t}(k) \omega_{bU,k}}{c}}}^{2}    \medmath{   \left(\sum_{u,k} \underset{bu,k}{\operatorname{SNR}} \omega_{bU,k}\right)^{\mathrm{-1}}  }  + \medmath{   \norm{\sum_{q,u^{},k^{}} \underset{qu^{},k^{}}{\operatorname{SNR}}\bm{\Delta}_{qu^{},k^{}}^{\mathrm{T}} \frac{ \Delta_{t}(k)\omega_{qU,k}}{c} }}^{2}    \medmath{   \left(\sum_{q,u,k} \underset{qu,k}{\operatorname{SNR}} \omega_{qU,k}\right)^{\mathrm{-1}}  } \\ &+  
\medmath{\sum_{b}\norm{{\sum_{u^{},k^{} } \underset{bu^{},k^{}}{\operatorname{SNR}} \; \; \bm{\Delta}_{b^{}U^{},k^{}}^{\mathrm{T}}      }  \frac{(f_{c}^{}) (\alpha_{obu,k^{}}^{2})}{2 c} }^{2}  \left(\sum_{u,k} \frac{\underset{bu,k}{\operatorname{SNR}}  \alpha_{obu,k}^{2}}{2}\right)^{-1}}  + \medmath{\norm{{\sum_{q^{},u,k^{} } \underset{q^{},u,k^{}}{\operatorname{SNR}} \; \; \bm{\Delta}_{q^{}U^{},k^{}}^{\mathrm{T}}   }  \frac{(f_{c}^{}) (\alpha_{oqu,k^{}}^{2})}{2 c} }^{2 }  \left(\sum_{q,u,k} \frac{\underset{qu,k}{\operatorname{SNR}}  \alpha_{oqu,k}^{2}}{2 }\right)^{-1}}
\end{split}
\end{align}
\end{figure*}

\end{lemma}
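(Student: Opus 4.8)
The plan is to obtain $\bm{G}_{{{y}}}(\bm{y}_{}|\bm{\eta};\bm{v}_{U,0},\bm{v}_{U,0})$ as the $\bm{v}_{U,0}$--$\bm{v}_{U,0}$ block of the information-loss matrix $\mathbf{J}_{\bm{y};\bm{\kappa}_1}^{nu} = \mathbf{J}_{\bm{y};\bm{\kappa}_1,\bm{\kappa}_2}\mathbf{J}_{\bm{y};\bm{\kappa}_2}^{-1}\mathbf{J}_{\bm{y};\bm{\kappa}_1,\bm{\kappa}_2}^{\mathrm{T}}$ of Definition~\ref{definition_EFIM}, where $\bm{\kappa}_2$ collects the channel gains and the offsets $\delta_{bU},\epsilon_{bU},\delta_{QU},\epsilon_{QU},\delta_{bQ},\epsilon_{bQ}$. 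The first step is to argue that the nuisance block $\mathbf{J}_{\bm{y};\bm{\kappa}_2}$ is block diagonal: from the channel FIM the gain entries vanish against every other parameter, the delay/time-offset part is orthogonal to the Doppler/frequency-offset part, and distinct links and distinct LEOs occupy separate diagonal blocks of $\mathbf{J}_{\bm{y}|\bm{\eta}}$. Consequently the Schur complement collapses into a sum of independent rank-one contributions, one per scalar offset that couples to $\bm{v}_{U,0}$.

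Next I would compute the cross-blocks by the chain rule induced by the transformation $\mathbf{J}_{\bm{y}|\bm{\kappa}} = \mathbf{\Upsilon}_{\bm{\kappa}}\mathbf{J}_{\bm{y}|\bm{\eta}}\mathbf{\Upsilon}_{\bm{\kappa}}^{\mathrm{T}}$, using $\nabla_{\bm{v}_{U,0}}\tau_{bu,k} = \tfrac{(k)\Delta_{t}}{c}\bm{\Delta}_{bu,k}$ and $\nabla_{\bm{v}_{U,0}}\nu_{bU,k} = -\tfrac{1}{c}\bm{\Delta}_{bU,k}$ together with the already-derived channel-FIM values. Since $\bm{F}(\tau_{bu,k},\delta_{bU})=-\underset{bu,k}{\operatorname{SNR}}\,\omega_{bU,k}$ and the Doppler/time-offset coupling is null, this yields $\mathbf{J}_{\bm{y};\bm{v}_{U,0},\delta_{bU}} = -\tfrac{\Delta_{t}}{c}\sum_{u,k}(k)\underset{bu,k}{\operatorname{SNR}}\,\omega_{bU,k}\bm{\Delta}_{bu,k}$, and likewise $\mathbf{J}_{\bm{y};\bm{v}_{U,0},\epsilon_{bU}} = \tfrac{f_{c}}{2c}\sum_{u,k}\underset{bu,k}{\operatorname{SNR}}\,\alpha_{obu,k}^2\bm{\Delta}_{bU,k}$, with diagonal scalars $\mathbf{J}_{\bm{y};\delta_{bU},\delta_{bU}} = \sum_{u,k}\underset{bu,k}{\operatorname{SNR}}\,\omega_{bU,k}$ and $\mathbf{J}_{\bm{y};\epsilon_{bU},\epsilon_{bU}} = \tfrac{1}{2}\sum_{u,k}\underset{bu,k}{\operatorname{SNR}}\,\alpha_{obu,k}^2$. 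The BS--receiver link is identical, except that a single $\delta_{QU}$ and a single $\epsilon_{QU}$ are shared across all BSs, so the corresponding cross-block and diagonal scalar each carry an extra sum over $q$.

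I would then assemble the rank-one terms $\mathbf{J}_{\bm{y};\bm{v}_{U,0},n}\mathbf{J}_{\bm{y};n,n}^{-1}\mathbf{J}_{\bm{y};n,\bm{v}_{U,0}}$ over the four relevant offsets. Each delay-type term gives $\norm{\tfrac{\Delta_{t}}{c}\sum_{u,k}(k)\underset{bu,k}{\operatorname{SNR}}\,\omega_{bU,k}\bm{\Delta}_{bu,k}^{\mathrm{T}}}^2\bigl(\sum_{u,k}\underset{bu,k}{\operatorname{SNR}}\,\omega_{bU,k}\bigr)^{-1}$ and each Doppler-type term gives the matching $\alpha_{obu,k}^2$-weighted outer product, reproducing exactly the four summands of (\ref{equ_lemma:information_loss_FIM_3D_velocity_3D_velocity}); the per-LEO offsets keep the $b$-sum outside the squared norm while the shared BS offsets place the $q$-sum inside it.

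The main obstacle I anticipate is bookkeeping rather than analysis. I must (i) confirm that $\bm{v}_{U,0}$ has no coupling into the LEO--BS link, since the receiver velocity enters neither $\tau_{bq,k}$ nor $\nu_{bq,k}$, so that link contributes nothing to this loss; (ii) keep the shared-versus-per-unit offset structure straight, so the $q$-sum lands inside the BS norms while the $b$-sum stays outside the LEO norms; and (iii) apply the paper's outer-product convention $\norm{\bm{x}^{\mathrm{T}}}^2 = \bm{x}\bm{x}^{\mathrm{T}}$ so the rank-one scalars reassemble into the correct $3\times 3$ matrices.
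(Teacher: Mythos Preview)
Your proposal is correct and follows essentially the same route as the paper: invoke Definition~\ref{definition_EFIM}, exploit the block-diagonal structure of the nuisance FIM so that the Schur complement reduces to independent scalar rank-one contributions from $\delta_{bU},\epsilon_{bU},\delta_{QU},\epsilon_{QU}$, and then substitute the channel-parameter FIM entries together with $\nabla_{\bm{v}_{U,0}}\tau$ and $\nabla_{\bm{v}_{U,0}}\nu$ from Appendix~\ref{Appendix_Entries_in_transformation_matrix}. Your explicit remarks that the LEO--BS link contributes nothing (since $\bm{v}_{U,0}$ does not enter $\tau_{bq,k}$ or $\nu_{bq,k}$) and that the shared $\delta_{QU},\epsilon_{QU}$ force the $q$-sum inside the norm are exactly what the paper's displayed formula encodes implicitly.
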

\begin{proof}
See Appendix \ref{Appendix_lemma_information_loss_FIM_3D_velocity_3D_velocity}.
\end{proof}

\begin{lemma}
\label{lemma:information_loss_FIM_3D_velocity_3D_orientation}
The loss of information about the FIM of the $3$D velocity and $3$D orientation of the receiver due to uncertainty in the nuisance parameters ${\bm{\kappa}_{2}}$ is given by (\ref{equ_lemma:information_loss_FIM_3D_velocity_3D_orientation}).
\begin{figure*}
\begin{align}
\begin{split}
\label{equ_lemma:information_loss_FIM_3D_velocity_3D_orientation}
\medmath{\bm{G}_{{{y} }}(\bm{y}_{}| \bm{\eta} ;\bm{v}_{U,0},\bm{\Phi}_{U})} &= \medmath{\frac{\Delta_{t}}{c}\sum_{b,k^{},u^{}k^{'},u^{'}} \underset{bu^{},k^{}}{\operatorname{SNR}} \underset{bu^{'},k^{'}}{\operatorname{SNR}} } \medmath{ (k)  \bm{\Delta}_{bu^{},k^{}}  \nabla_{\bm{\Phi}_{U}}^{\mathrm{T}} \tau_{bu^{'},k^{'}}  \omega_{bU,k} \omega_{bU,k^{'}}}    \medmath{   \left(\sum_{u,k} \underset{bu,k}{\operatorname{SNR}} \omega_{bU,k}\right)^{\mathrm{-1}}} \\ &+  \medmath{\frac{\Delta_{t}}{c}\sum_{q,q^{'},u^{},k^{}u^{'},k^{'}} \underset{q^{},u,k^{}}{\operatorname{SNR}} \underset{q^{'},u^{'},k^{'}}{\operatorname{SNR}} } \medmath{ (k)   \bm{\Delta}_{q^{}u^{},k^{}}   \nabla_{\bm{\Phi}_{U}}^{\mathrm{T}} \tau_{q^{'}u^{'},k^{'}}} \medmath{ \omega_{qU,k} \omega_{q^{'}U,k^{'}}}    \medmath{   \left(\sum_{qu,k} \underset{qu,k}{\operatorname{SNR}} \omega_{qU,k}\right)^{\mathrm{-1}}} 
\end{split}
\end{align}
\end{figure*}
\end{lemma}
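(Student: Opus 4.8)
The plan is to compute the velocity--orientation block of the information-loss matrix directly from the Schur-complement definition in Definition \ref{definition_EFIM}, namely
$$
\bm{G}_{\bm{y}}(\bm{y}|\bm{\eta};\bm{v}_{U,0},\bm{\Phi}_{U}) = \mathbf{J}_{\bm{y};\bm{v}_{U,0},\bm{\kappa}_2}\,\mathbf{J}_{\bm{y};\bm{\kappa}_2}^{-1}\,\mathbf{J}_{\bm{y};\bm{\Phi}_{U},\bm{\kappa}_2}^{\mathrm{T}}.
$$
First I would exploit the block structure of the nuisance FIM $\mathbf{J}_{\bm{y};\bm{\kappa}_2}$ already established in the channel-FIM derivations: the channel gains $\bm{\beta}$ are orthogonal to every geometric parameter, the time offsets $\delta$ and frequency offsets $\epsilon$ are mutually uncoupled both within and across links, and the LEO--receiver and BS--receiver links form separate blocks. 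Consequently $\mathbf{J}_{\bm{y};\bm{\kappa}_2}^{-1}$ is block diagonal and the Schur complement splits into one contribution per surviving nuisance scalar.

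The key reduction is to identify which offsets couple to \emph{both} parameters of interest. Because the receiver orientation $\bm{\Phi}_U$ enters the signal only through the per-antenna delays $\tau_{bu,k}$ (the centroid Doppler $\nu_{bU,k}$ is orientation-independent), the cross term $\bm{F}_{\bm{y}}(\bm{y}|\bm{\eta};\bm{\Phi}_U,\epsilon_{bU})$ between orientation and any frequency offset vanishes; the gain and frequency-offset channels therefore drop out of the product entirely, and only the time-offset scalars $\delta_{bU}$ (one per LEO) and $\delta_{QU}$ (shared across all synchronized BSs) remain. For each such offset I would assemble three quantities via the chain rule $\mathbf{J}_{\bm{y}|\bm{\kappa}} = \bm{\Upsilon}_{\bm{\kappa}}\mathbf{J}_{\bm{y}|\bm{\eta}}\bm{\Upsilon}_{\bm{\kappa}}^{\mathrm{T}}$ together with the channel entries already derived: (i) the velocity--offset cross term, from $\nabla_{\bm{v}_{U,0}}\tau_{bu,k} = \frac{(k)\Delta_t}{c}\bm{\Delta}_{bu,k}$ contracted with $\bm{F}_{\bm{y}}(\bm{y}|\bm{\eta};\tau_{bu,k},\delta_{bU}) = -\underset{bu,k}{\operatorname{SNR}}\,\omega_{bU,k}$; (ii) the orientation--offset cross term, from $\nabla_{\bm{\Phi}_U}\tau_{bu,k}$ contracted with the same delay--offset entry; and (iii) the offset self-information $\bm{F}_{\bm{y}}(\bm{y}|\bm{\eta};\delta_{bU},\delta_{bU}) = \sum_{u,k}\underset{bu,k}{\operatorname{SNR}}\,\omega_{bU,k}$.

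Substituting these into the Schur complement yields, for each LEO $b$, the product of the summed velocity--$\delta_{bU}$ vector and the summed orientation--$\delta_{bU}$ vector normalized by the scalar self-information, which expands into the double sum over $(k,u)$ and $(k',u')$ with leading factor $\frac{(k)\Delta_t}{c}$; summing over $b$ reproduces the first line of (\ref{equ_lemma:information_loss_FIM_3D_velocity_3D_orientation}). The BS contribution is identical in form but, because a single $\delta_{QU}$ describes all BSs, the outer index is not paired: the velocity term sums over all $(q,u,k)$, the orientation term over all $(q',u',k')$, and the normalization over all $(q,u,k)$, giving the second line. I expect the main obstacle to be the bookkeeping that distinguishes the per-LEO offsets (which keep $b$ as an outer sum) from the single shared BS offset (which lets $q$ and $q'$ range independently), together with verifying that the frequency-offset and gain blocks genuinely drop out; once that is settled, the result mirrors the position--orientation loss of Lemma \ref{lemma:information_loss_FIM_3D_position_3D_orientation} up to the extra $(k)\Delta_t$ velocity weighting inherited from Lemma \ref{lemma:FIM_3D_velocity_3D_orientation}.
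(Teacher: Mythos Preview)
Your proposal is correct and follows essentially the same approach as the paper: apply the Schur-complement Definition~\ref{definition_EFIM}, use the block-diagonal structure of the nuisance FIM so that only the time-offset scalars $\delta_{bU}$ and $\delta_{QU}$ survive (orientation couples only to delays, so the $\epsilon$ and $\beta$ blocks drop out), and then substitute the channel-FIM entries $\bm{F}_{\bm{y}}(\bm{y}|\bm{\eta};\delta_{bU},\tau_{bu,k})$, $\bm{F}_{\bm{y}}(\bm{y}|\bm{\eta};\delta_{bU},\delta_{bU})$ together with $\nabla_{\bm{v}_{U,0}}\tau_{bu,k}=\tfrac{(k)\Delta_t}{c}\bm{\Delta}_{bu,k}$ and $\nabla_{\bm{\Phi}_U}\tau_{bu,k}$. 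Your bookkeeping for the per-LEO versus shared-BS offset is exactly the distinction the paper makes, and your observation that the result is Lemma~\ref{lemma:information_loss_FIM_3D_position_3D_orientation} reweighted by $(k)\Delta_t$ is spot on.
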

\begin{proof}
See Appendix \ref{Appendix_lemma_information_loss_FIM_3D_velocity_3D_orientation}.
\end{proof}

\begin{lemma}
\label{lemma:information_loss_FIM_3D_velocity_3D_bth_position_uncertainty}
The loss of information about the FIM of the $\bm{v}_{U,0}$ and $\check{\bm{p}}_{b,0}$  due to uncertainty in the nuisance parameters ${\bm{\kappa}_{2}}$ is given by (\ref{equ_lemma:information_loss_3D_velocity_3D_bth_position_uncertainty}).

\begin{figure*}
\begin{align}
\begin{split}
\label{equ_lemma:information_loss_3D_velocity_3D_bth_position_uncertainty}
\medmath{\bm{G}_{{{y} }}(\bm{y}_{}| \bm{\eta} ;\bm{v}_{U,0},\check{\bm{p}}_{b,0})} &= \medmath{\frac{-\Delta_{t}}{c^2} \sum_{k^{},u^{}k^{'},u^{'}} (k) \underset{bu^{},k^{}}{\operatorname{SNR}} \underset{bu^{'},k^{'}}{\operatorname{SNR}} } \medmath{   \bm{\Delta}_{bu^{},k^{}}  {\bm{\Delta}_{bu^{'},k^{'}}^{\mathrm{T}}} \omega_{bU,k} \omega_{bU,k^{'}}}    \medmath{   \left(\sum_{u,k} \underset{bu,k}{\operatorname{SNR}} \omega_{bU,k}\right)^{\mathrm{-1}}} \\ &- \frac{1}{c}\medmath{\sum_{k^{},u^{}k^{'},u^{'}}    \underset{bu^{},k^{}}{\operatorname{SNR}} \underset{bu^{'},k^{'}}{\operatorname{SNR}}  }  \medmath{{ \bm{\Delta}_{b^{}U^{},k^{}} \nabla_{\check{\bm{p}}_{b,0}}^{\mathrm{T}} \nu_{bU,k^{'}}   }  \frac{(f_{c}^{2}) (\alpha_{obu,k^{}}^{2}\alpha_{obu^{'},k^{'}}^{2})}{4}   \left(\sum_{u,k} \frac{\underset{b^{}u,k^{}}{\operatorname{SNR}}  \alpha_{obu,k}^{2}}{2}\right)^{-1}}  
\end{split}
\end{align}
\end{figure*}
\end{lemma}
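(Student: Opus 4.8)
The plan is to specialize the EFIM construction of Definition~\ref{definition_EFIM} to the $(\bm{v}_{U,0},\check{\bm{p}}_{b,0})$ block of the information-loss matrix $\mathbf{J}_{\bm{y};\bm{\kappa}_1}^{nu} = \mathbf{J}_{\bm{y};\bm{\kappa}_1,\bm{\kappa}_2}\,\mathbf{J}_{\bm{y};\bm{\kappa}_2}^{-1}\,\mathbf{J}_{\bm{y};\bm{\kappa}_1,\bm{\kappa}_2}^{\mathrm{T}}$, so that $\bm{G}_{y}(\bm{y}|\bm{\eta};\bm{v}_{U,0},\check{\bm{p}}_{b,0})$ is the contraction of the cross-FIM of $\bm{v}_{U,0}$ with the nuisances, the inverse nuisance block, and the cross-FIM of $\check{\bm{p}}_{b,0}$ with the nuisances. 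The first step is to prune $\bm{\kappa}_2$. The channel-gain entries $\bm{\beta}$ are orthogonal to every geometric parameter and offset, so they drop out of the contraction; and since $\mathbf{J}_{\bm{y}|\bm{\eta}}$ is block diagonal across the three links, a nuisance parameter survives only if it couples to \emph{both} $\bm{v}_{U,0}$ and $\check{\bm{p}}_{b,0}$. Because $\bm{v}_{U,0}$ appears in the LEO-receiver and BS-receiver links while $\check{\bm{p}}_{b,0}$ appears in the LEO-receiver and LEO-BS links, the only shared nuisances are the $b$-th LEO-receiver offsets $\delta_{bU}$ and $\epsilon_{bU}$; the offsets $\delta_{QU},\epsilon_{QU}$ and $\delta_{bQ},\epsilon_{bQ}$ each miss one of the two parameters and hence contribute nothing.

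Next I would evaluate the four surviving cross blocks by the chain rule, pushing the channel-parameter FIM entries through the relevant Jacobians. Using $\bm{F}_{y}(\bm{y}|\bm{\eta};\tau_{bu,k},\delta_{bU}) = -\underset{bu,k}{\operatorname{SNR}}\,\omega_{bU,k}$ together with $\nabla_{\bm{v}_{U,0}}\tau_{bu,k} = \tfrac{k\Delta_t}{c}\bm{\Delta}_{bu,k}$ and $\nabla_{\check{\bm{p}}_{b,0}}\tau_{bu,k} = -\tfrac{1}{c}\bm{\Delta}_{bu,k}$ produces the two $\delta_{bU}$-cross blocks; using $\bm{F}_{y}(\bm{y}|\bm{\eta};\nu_{bU,k},\epsilon_{bU}) = -\tfrac{1}{2}\underset{bu,k}{\operatorname{SNR}}f_c\alpha_{obu,k}^2$ with $\nabla_{\bm{v}_{U,0}}\nu_{bU,k} = -\tfrac{1}{c}\bm{\Delta}_{bU,k}$ and the nonzero $\nabla_{\check{\bm{p}}_{b,0}}\nu_{bU,k}$ produces the two $\epsilon_{bU}$-cross blocks. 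The two nuisance self-terms are read directly from the channel FIM as $\mathbf{J}_{\bm{y};\delta_{bU}} = \sum_{u,k}\underset{bu,k}{\operatorname{SNR}}\,\omega_{bU,k}$ and $\mathbf{J}_{\bm{y};\epsilon_{bU}} = \sum_{u,k}\tfrac{1}{2}\underset{bu,k}{\operatorname{SNR}}\,\alpha_{obu,k}^2$, which are precisely the two normalizers appearing in the statement.

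Finally I would assemble the Schur complement. Since delay and Doppler are orthogonal in the channel FIM, the cross term $\mathbf{J}_{\bm{y};\delta_{bU},\epsilon_{bU}}$ vanishes, the two-parameter nuisance block is diagonal, and the $\delta_{bU}$ and $\epsilon_{bU}$ channels contribute additively. The $\delta_{bU}$ channel (product of a $+\tfrac{k\Delta_t}{c}$ and a $-\tfrac{1}{c}$ Jacobian against the negative $\tau$-$\delta$ entries) yields the first summand with prefactor $-\tfrac{\Delta_t}{c^2}$, carrying the factor $(k)$; the $\epsilon_{bU}$ channel yields the second summand with prefactor $-\tfrac{f_c^2}{4c}$, i.e.\ the stated $-\tfrac{1}{c}$ times $\tfrac{f_c^2}{4}$, matching~(\ref{equ_lemma:information_loss_3D_velocity_3D_bth_position_uncertainty}). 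The main obstacle is bookkeeping rather than any hard estimate: correctly arguing via the shared-link criterion that all other nuisances drop out, and tracking the signs through the Jacobians---in particular recalling that $\check{\bm{p}}_{b,0}$ enters the Doppler through the line-of-sight direction vector, which is exactly what prevents the second summand from collapsing to zero.
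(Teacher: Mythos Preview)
Your proposal is correct and follows essentially the same approach as the paper: apply the Schur-complement structure of Definition~\ref{definition_EFIM}, reduce the surviving nuisances to $\delta_{bU}$ and $\epsilon_{bU}$ via the block-diagonal/orthogonality arguments, compute the cross blocks by the chain rule with the Jacobians from Appendix~\ref{Appendix_Entries_in_transformation_matrix}, and substitute the channel-parameter FIM entries. Your write-up is in fact more explicit than the paper's appendix, which simply states the first equality from Definition~\ref{definition_EFIM} and then substitutes; your shared-link pruning argument and the observation that the $(\delta_{bU},\epsilon_{bU})$ nuisance block is diagonal are exactly the implicit steps behind that first equality.
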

\begin{proof}
See Appendix \ref{Appendix_lemma_information_loss_FIM_3D_velocity_3D_bth_position_uncertainty}.
\end{proof}

\begin{lemma}
\label{lemma:information_loss_FIM_3D_velocity_3D_bth_velocity_uncertainty}
The loss of information about the FIM of the $\bm{v}_{U,0}$ and $\check{\bm{v}}_{b,0}$  due to uncertainty in the nuisance parameters ${\bm{\kappa}_{2}}$ is given by (\ref{equ_lemma:information_loss_FIM_3D_velocity_3D_bth_velocity_uncertainty}).

\begin{figure*}
\begin{align}
\begin{split}
\label{equ_lemma:information_loss_FIM_3D_velocity_3D_bth_velocity_uncertainty}
\medmath{\bm{G}_{{{y} }}(\bm{y}_{}| \bm{\eta} ;\bm{v}_{U,0},\check{\bm{v}}_{b,0})} &=   -\medmath{\frac{\Delta_{t}^{2}}{c^2} \sum_{k^{},u^{}k^{'},u^{'}} \underset{bu^{},k^{}}{\operatorname{SNR}} \underset{bu^{'},k^{'}}{\operatorname{SNR}} } \medmath{   \bm{\Delta}_{bu^{},k^{}} (k) (k^{'}) {\bm{\Delta}_{bu^{'},k^{'}}^{\mathrm{T}}} \omega_{bU,k} \omega_{bU,k^{'}}}    \medmath{   \left(\sum_{u,k} \underset{bu,k}{\operatorname{SNR}} \omega_{bU,k}\right)^{\mathrm{-1}}} \\ -& \frac{1}{c^2}\medmath{  \sum_{k^{},u^{}k^{'},u^{'}}}  \medmath{  \underset{bu^{},k^{}}{\operatorname{SNR}} \underset{bu^{'},k^{'}}{\operatorname{SNR}}  }  \medmath{{ \bm{\Delta}_{bU,k^{}} \bm{\Delta}_{bU,k^{'}}^{\mathrm{T}}   }  \frac{(f_{c}^{2}) (\alpha_{obu,k^{}}^{2}\alpha_{obu^{'},k^{'}}^{2})}{4}   \left(\sum_{u,k} \frac{\underset{b^{}u,k^{}}{\operatorname{SNR}}  \alpha_{obu,k}^{2}}{2}\right)^{-1}}  
\end{split}
\end{align}
\end{figure*}
\end{lemma}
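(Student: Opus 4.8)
The plan is to invoke the EFIM information-loss identity from Definition \ref{definition_EFIM}, namely $\mathbf{J}^{nu}_{\bm{y};\bm{\kappa}_1} = \mathbf{J}_{\bm{y};\bm{\kappa}_1,\bm{\kappa}_2}\mathbf{J}_{\bm{y};\bm{\kappa}_2}^{-1}\mathbf{J}_{\bm{y};\bm{\kappa}_1,\bm{\kappa}_2}^{\mathrm{T}}$, and to read off the block of $\mathbf{J}^{nu}_{\bm{y};\bm{\kappa}_1}$ indexed by $(\bm{v}_{U,0},\check{\bm{v}}_{b,0})$. The first step is to determine which entries of $\bm{\kappa}_2$ actually couple to \emph{both} $\bm{v}_{U,0}$ and $\check{\bm{v}}_{b,0}$. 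Since the channel-gain cross-entries with every geometric parameter vanish (as established in the channel-parameter FIM), the gains drop out immediately. The receiver velocity $\bm{v}_{U,0}$ enters only the LEO--receiver and BS--receiver links, whereas the LEO velocity offset $\check{\bm{v}}_{b,0}$ enters only the LEO--receiver and LEO--BS links; hence the BS--receiver offsets $(\delta_{QU},\epsilon_{QU})$ and the LEO--BS offsets $(\delta_{bQ},\epsilon_{bQ})$ each annihilate one of the two cross-factors and contribute nothing. The only nuisance parameters surviving in this block are the LEO--receiver time and frequency offsets $\delta_{bU}$ and $\epsilon_{bU}$, which explains why the statement contains only $bU$-indexed quantities.

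Next I would exploit the block structure of $\mathbf{J}_{\bm{y};\bm{\kappa}_2}$. It is block-diagonal across satellites and links, and within the LEO--receiver block the delay couples only to $\delta_{bU}$ and the Doppler only to $\epsilon_{bU}$ (the time-offset/frequency-offset cross-entry is zero), so the relevant $2\times 2$ offset sub-block is diagonal. This decouples the information loss into an additive $\delta_{bU}$-contribution and an $\epsilon_{bU}$-contribution, each of rank-one outer-product form divided by the corresponding scalar nuisance-FIM entry.

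For the $\delta_{bU}$-contribution I would form the cross-vectors via the chain rule, $\mathbf{J}_{\bm{y};\bm{v}_{U,0},\delta_{bU}} = \sum_{u,k}(\nabla_{\bm{v}_{U,0}}\tau_{bu,k})\,\bm{F}_{\bm{y}}(\bm{y}|\bm{\eta};\tau_{bu,k},\delta_{bU})$ and likewise for $\check{\bm{v}}_{b,0}$, using $\bm{F}_{\bm{y}}(\bm{y}|\bm{\eta};\tau_{bu,k},\delta_{bU}) = -\underset{bu,k}{\operatorname{SNR}}\,\omega_{bU,k}$ together with $\nabla_{\bm{v}_{U,0}}\tau_{bu,k} = \tfrac{k\Delta_{t}}{c}\bm{\Delta}_{bu,k}$ and $\nabla_{\check{\bm{v}}_{b,0}}\tau_{bu,k} = -\tfrac{k\Delta_{t}}{c}\bm{\Delta}_{bu,k}$ (the velocity offset drifts the LEO position over time, supplying the second $(k)$ factor). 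Dividing by $\bm{F}_{\bm{y}}(\bm{y}|\bm{\eta};\delta_{bU},\delta_{bU}) = \sum_{u,k}\underset{bu,k}{\operatorname{SNR}}\,\omega_{bU,k}$ and taking the outer product yields the first term of (\ref{equ_lemma:information_loss_FIM_3D_velocity_3D_bth_velocity_uncertainty}), the opposite signs of the two gradients producing the overall minus sign. The $\epsilon_{bU}$-contribution is obtained identically from $\bm{F}_{\bm{y}}(\bm{y}|\bm{\eta};\nu_{bU,k},\epsilon_{bU}) = -0.5\,\underset{bu,k}{\operatorname{SNR}}\,f_{c}\alpha_{obu,k}^{2}$, $\nabla_{\bm{v}_{U,0}}\nu_{bU,k} = -\bm{\Delta}_{bU,k}/c$, $\nabla_{\check{\bm{v}}_{b,0}}\nu_{bU,k} = \bm{\Delta}_{bU,k}/c$, and the scalar $\bm{F}_{\bm{y}}(\bm{y}|\bm{\eta};\epsilon_{bU},\epsilon_{bU}) = \sum_{u,k}0.5\,\underset{bu,k}{\operatorname{SNR}}\,\alpha_{obu,k}^{2}$, giving the second term. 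Adding the two contributions completes the proof.

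The \textbf{main obstacle} is the careful chain-rule bookkeeping: correctly asserting that the LEO velocity offset enters the delay through the accumulated position drift $k\Delta_{t}\check{\bm{v}}_{b,0}$ (which is what generates the $(k)(k')$ weighting and fixes the sign of the first term, consistent with the direct FIM of Lemma \ref{lemma:FIM_3D_velocity_3D_b_th_velocity_offset}), and simultaneously verifying that every other block of $\bm{\kappa}_2$—the channel gains, the BS--receiver offsets, and the LEO--BS offsets—contributes exactly zero to this particular cross-block, so that only the LEO--receiver offsets $\delta_{bU}$ and $\epsilon_{bU}$ remain.
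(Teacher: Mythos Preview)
Your proposal is correct and follows essentially the same approach as the paper's own proof: invoke the Schur-complement loss term from Definition~\ref{definition_EFIM}, observe that only the LEO--receiver offsets $(\delta_{bU},\epsilon_{bU})$ couple to both $\bm{v}_{U,0}$ and $\check{\bm{v}}_{b,0}$, and then substitute the channel-parameter FIM entries together with the Jacobians from Appendix~\ref{Appendix_Entries_in_transformation_matrix}. The paper's argument is in fact terser than yours---it simply writes the two-term $(\delta_{bU},\epsilon_{bU})$ decomposition in (\ref{equ_lemma:information_loss_FIM_3D_velocity_3D_bth_velocity_uncertainty_1}) and substitutes---while you additionally spell out \emph{why} the channel gains, the BS--receiver offsets $(\delta_{QU},\epsilon_{QU})$, and the LEO--BS offsets $(\delta_{bQ},\epsilon_{bQ})$ drop out of this particular cross-block, which is a helpful clarification but not a different route.
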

\begin{proof}
See Appendix \ref{Appendix_lemma_information_loss_FIM_3D_velocity_3D_bth_velocity_uncertainty}.
\end{proof}

\begin{lemma}
\label{lemma:information_loss_FIM_3D_orientation_3D_orientation}
The loss of information about the FIM of the $3$D orientation of the receiver due to uncertainty in the nuisance parameters ${\bm{\kappa}_{2}}$ is given by (\ref{equ_lemma:information_loss_FIM_3D_orientation_3D_orientation}).

\begin{figure*}
\begin{align}
\begin{split}
\label{equ_lemma:information_loss_FIM_3D_orientation_3D_orientation}
&\medmath{\bm{G}_{{{y} }}(\bm{y}_{}| \bm{\eta} ;\bm{\Phi}_{U},\bm{\Phi}_{U}) = 
\medmath{\sum_{b}   \norm{\sum_{k^{},u^{}} \underset{bu^{},k^{}}{\operatorname{SNR}} \; \omega_{bU,k} \nabla_{\bm{\Phi}_{U}}^{\mathrm{T}} \tau_{bu^{},k^{}}   }}^{2}}    \medmath{   \left(\sum_{u,k} \underset{bu,k}{\operatorname{SNR}} \; \omega_{bU,k}\right)^{\mathrm{-1}}  }  + 
\medmath{  \norm{\sum_{q^{},k^{},u^{}} \underset{qu^{},k^{}}{\operatorname{SNR}} \; \omega_{qU,k} \nabla_{\bm{\Phi}_{U}}^{\mathrm{T}} \tau_{qu^{},k^{}}   }}^{2}    \medmath{   \left(\sum_{q,u,k} \underset{qu,k}{\operatorname{SNR}} \; \omega_{qu,k}\right)^{\mathrm{-1}}  } 
%\medmath{\sum_{b}   \norm{\sum_{q^{},k^{}} \underset{bq^{},k^{}}{\operatorname{SNR}}\bm{\Delta}_{bq^{},k^{}}^{\mathrm{T}} \frac{ \omega_{bq,k}}{c} }}^{2}    \medmath{   \left(\sum_{q,k} \underset{bq,k}{\operatorname{SNR}} \omega_{bq,k}\right)^{\mathrm{-1}}  }  +
%\\& \medmath{   \norm{\sum_{q,u^{},k^{}} \underset{qu^{},k^{}}{\operatorname{SNR}}\bm{\Delta}_{qu^{},k^{}}^{\mathrm{T}} \frac{ \omega_{qU,k}}{c} }}^{2}    \medmath{   \left(\sum_{q,u,k} \underset{qu,k}{\operatorname{SNR}} \omega_{qU,k}\right)^{\mathrm{-1}}  }
\end{split}
\end{align}
\end{figure*}
\end{lemma}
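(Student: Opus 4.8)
The plan is to work directly from the EFIM machinery of Definition~\ref{definition_EFIM}, where the information-loss matrix is $\bm{G}_{\bm{y}}(\bm{y}|\bm{\eta};\bm{\Phi}_{U},\bm{\Phi}_{U}) = \bm{F}_{\bm{y}}(\bm{y}|\bm{\eta};\bm{\Phi}_{U},\bm{\kappa}_{2})\,\bm{F}_{\bm{y}}(\bm{y}|\bm{\eta};\bm{\kappa}_{2},\bm{\kappa}_{2})^{-1}\,\bm{F}_{\bm{y}}(\bm{y}|\bm{\eta};\bm{\kappa}_{2},\bm{\Phi}_{U})$, with $\bm{\kappa}_{2}$ collecting all channel gains, time offsets and frequency offsets across the three links. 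First I would determine which entries of the cross block $\bm{F}_{\bm{y}}(\bm{y}|\bm{\eta};\bm{\Phi}_{U},\bm{\kappa}_{2})$ are nonzero. Since the receiver orientation enters only through the antenna offsets $\bm{s}_{u}=\bm{Q}_{U}\tilde{\bm{s}}_{u}$, it perturbs the delays $\tau_{bu,k}$ and $\tau_{qu,k}$ but leaves the Dopplers, and every LEO--BS quantity, unchanged; by the chain rule $\nabla_{\bm{\Phi}_{U}}(\cdot)$ acts, through $\mathbf{\Upsilon}_{\bm{\kappa}}$, only on the delay channel parameters. In particular the entire LEO--BS link contributes nothing, which is why no $\bm{\beta}_{bQ},\delta_{bQ},\epsilon_{bQ}$ term survives.

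The key structural observation is that, in the channel FIM derived above, each delay couples only to its own link time offset, $\bm{F}_{\bm{y}}(\bm{y}|\bm{\eta};\tau_{bu,k},\delta_{bU}) = -\underset{bu,k}{\operatorname{SNR}}\,\omega_{bU,k}$, with vanishing cross-entries to channel gains, Dopplers and frequency offsets. Hence, among the nuisances, $\bm{\Phi}_{U}$ couples only to the LEO--receiver offsets $\delta_{bU}$ and to the single shared BS--receiver offset $\delta_{QU}$. Moreover these offsets are isolated inside $\bm{\kappa}_{2}$: each $\delta_{bU}$ has no nuisance-to-nuisance coupling (gains and frequency offsets are orthogonal to it, and distinct links and distinct LEOs do not interact), so the principal block of $\bm{F}_{\bm{y}}(\bm{y}|\bm{\eta};\bm{\kappa}_{2},\bm{\kappa}_{2})$ restricted to the time offsets is diagonal. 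Its inverse is therefore diagonal, with entries $\big(\sum_{u,k}\underset{bu,k}{\operatorname{SNR}}\,\omega_{bU,k}\big)^{-1}$ for $\delta_{bU}$ and $\big(\sum_{q,u,k}\underset{qu,k}{\operatorname{SNR}}\,\omega_{qU,k}\big)^{-1}$ for $\delta_{QU}$.

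With the decoupling established, the Schur complement collapses to a sum of rank-one contributions, one per isolated offset. Using $\bm{F}_{\bm{y}}(\bm{y}|\bm{\eta};\bm{\Phi}_{U},\delta_{bU}) = -\sum_{u,k}\underset{bu,k}{\operatorname{SNR}}\,\omega_{bU,k}\,\nabla_{\bm{\Phi}_{U}}\tau_{bu,k}$, each LEO offset yields $\big(\sum_{u,k}\underset{bu,k}{\operatorname{SNR}}\,\omega_{bU,k}\nabla_{\bm{\Phi}_{U}}\tau_{bu,k}\big)\big(\sum_{u,k}\underset{bu,k}{\operatorname{SNR}}\,\omega_{bU,k}\big)^{-1}\big(\sum_{u,k}\underset{bu,k}{\operatorname{SNR}}\,\omega_{bU,k}\nabla_{\bm{\Phi}_{U}}^{\mathrm{T}}\tau_{bu,k}\big)$, which in the paper's outer-product notation is precisely the first summand, aggregated as $\sum_{b}$. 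Because $\delta_{QU}$ is common to every BS, its cross-FIM and self-FIM both aggregate over $q$ as well as $u,k$, so the $q$-summation sits inside the norm and inside the normalizing scalar, giving the single (un-summed-over-$b$) second term of (\ref{equ_lemma:information_loss_FIM_3D_orientation_3D_orientation}).

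The main obstacle I anticipate is the bookkeeping required to make the reduction airtight: one must verify that no nuisance other than the time offsets couples to $\bm{\Phi}_{U}$, and that the time-offset principal block is genuinely diagonal so that inverting the full nuisance FIM does not reintroduce cross-terms through the gains or frequency offsets. The per-LEO versus shared-BS asymmetry—forcing the LEO loss into an explicit $\sum_{b}$ of rank-one matrices while the BS loss remains a single aggregated rank-one term—is the one place where the index combinatorics must be tracked with care.
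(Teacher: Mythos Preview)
Your proposal is correct and follows essentially the same route as the paper: start from the EFIM information-loss formula of Definition~\ref{definition_EFIM}, observe that $\bm{\Phi}_{U}$ couples only to the time offsets $\delta_{bU}$ and $\delta_{QU}$ through the delays, exploit the diagonal (decoupled) structure of the nuisance block so that the Schur complement reduces to a sum of rank-one outer products, and then substitute the channel-parameter FIM entries to obtain (\ref{equ_lemma:information_loss_FIM_3D_orientation_3D_orientation}). Your write-up is in fact more explicit than the paper's appendix, which merely records the first equality (the EFIM expansion in terms of $\bm{F}_{\bm{y}}(\cdot;\delta_{bU},\tau_{bu,k})$ and $\bm{F}_{\bm{y}}(\cdot;\delta_{QU},\tau_{qu,k})$) and then substitutes, without spelling out the decoupling argument or the vanishing of the LEO--BS contribution that you justify.
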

\begin{proof}
See Appendix \ref{Appendix_lemma_information_loss_FIM_3D_orientation}.
\end{proof}

\begin{lemma}
\label{lemma:information_loss_3D_orientation_3D_bth_position_uncertainty}
The loss of information about the FIM of the $\bm{\Phi}_{U,0}$ and $\check{\bm{p}}_{b,0}$  due to uncertainty in the nuisance parameters ${\bm{\kappa}_{2}}$ is given by (\ref{equ_lemma:information_loss_3D_orientation_3D_bth_position_uncertainty}).

\begin{figure*}
\begin{align}
\begin{split}
\label{equ_lemma:information_loss_3D_orientation_3D_bth_position_uncertainty}
\medmath{\bm{G}_{{{y} }}(\bm{y}_{}| \bm{\eta} ;\bm{\Phi}_{U},\check{\bm{p}}_{b,0})} &= \medmath{\frac{-1}{c} \sum_{k^{},u^{}k^{'},u^{'}} \underset{bu^{},k^{}}{\operatorname{SNR}} \underset{bu^{'},k^{'}}{\operatorname{SNR}} } \medmath{   \nabla_{\bm{\Phi}_{U}} \tau_{bu^{},k^{}}  {\bm{\Delta}_{bu^{'},k^{'}}^{\mathrm{T}}} \omega_{bU,k} \omega_{bU,k^{'}}}    \medmath{   \left(\sum_{u,k} \underset{bu,k}{\operatorname{SNR}} \omega_{bU,k}\right)^{\mathrm{-1}}}
\end{split}
\end{align}
\end{figure*}
\end{lemma}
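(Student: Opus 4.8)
The plan is to evaluate the off-diagonal $(\bm{\Phi}_U,\check{\bm{p}}_{b,0})$ block of the information-loss matrix $\mathbf{J}_{\bm{y};\bm{\kappa}_1}^{nu}=\mathbf{J}_{\bm{y};\bm{\kappa}_1,\bm{\kappa}_2}\mathbf{J}_{\bm{y};\bm{\kappa}_2}^{-1}\mathbf{J}_{\bm{y};\bm{\kappa}_1,\bm{\kappa}_2}^{\mathrm{T}}$ from Definition \ref{definition_EFIM}, where $\bm{\kappa}_2$ collects all channel-gain, time-offset, and frequency-offset nuisances. The first step is to identify which nuisance coordinates couple \emph{both} $\bm{\Phi}_U$ and $\check{\bm{p}}_{b,0}$, since only those survive the product. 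Using the channel-parameter FIM already derived, the orientation $\bm{\Phi}_U$ enters the received signal only through the delays $\tau_{bu,k}$ and $\tau_{qu,k}$ (the Doppler $\nu_{bU,k}$ depends on the centroid direction $\bm{\Delta}_{bU,k}$, not on $\bm{\Phi}_U$), and delays couple only to their corresponding time offsets $\delta_{bU}$ and $\delta_{QU}$. The offset $\check{\bm{p}}_{b,0}$ couples to $\delta_{bU}$, $\delta_{bQ}$, $\epsilon_{bU}$, and $\epsilon_{bQ}$. The single nuisance common to both parameters is therefore the LEO-receiver time offset $\delta_{bU}$, which is precisely why the stated result contains only the LEO-receiver sum over $b$ and $u$ and no frequency-offset (Doppler) term.

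The second step is to justify that the nuisance block is effectively diagonal for this purpose, so that $\mathbf{J}_{\bm{y};\bm{\kappa}_2}^{-1}$ does not re-couple $\delta_{bU}$ to $\delta_{QU}$, $\delta_{bQ}$, the frequency offsets, or the channel gains. I would invoke the zero cross-entries reported for the channel FIM (the gains decouple entirely, and the delay/time-offset subsystem is orthogonal to the Doppler/frequency-offset subsystem), so that marginalizing over every nuisance other than $\delta_{bU}$ leaves its self-information unchanged, namely $\bm{F}_{{\bm{y} }}(\bm{y}| \bm{\eta};\delta_{bU},\delta_{bU})=\sum_{u,k}\underset{bu,k}{\operatorname{SNR}}\,\omega_{bU,k}$. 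This supplies the inverse factor $\big(\sum_{u,k}\underset{bu,k}{\operatorname{SNR}}\,\omega_{bU,k}\big)^{-1}$ in the claim.

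The third step is the chain-rule evaluation of the two cross blocks through $\mathbf{\Upsilon}_{\bm{\kappa}}$. Using $\bm{F}_{{\bm{y} }}(\bm{y}| \bm{\eta};\tau_{bu,k},\delta_{bU})=-\underset{bu,k}{\operatorname{SNR}}\,\omega_{bU,k}$, I would obtain $\mathbf{J}_{\bm{y};\bm{\Phi}_U,\delta_{bU}}=-\sum_{u,k}\underset{bu,k}{\operatorname{SNR}}\,\omega_{bU,k}\,\nabla_{\bm{\Phi}_U}\tau_{bu,k}$, and, using $\nabla_{\check{\bm{p}}_{b,0}}\tau_{bu,k}=-\bm{\Delta}_{bu,k}/c$, obtain $\mathbf{J}_{\bm{y};\check{\bm{p}}_{b,0},\delta_{bU}}=\frac{1}{c}\sum_{u',k'}\underset{bu',k'}{\operatorname{SNR}}\,\omega_{bU,k'}\,\bm{\Delta}_{bu',k'}$. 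Forming the triple product $\mathbf{J}_{\bm{y};\bm{\Phi}_U,\delta_{bU}}\,\mathbf{J}_{\bm{y};\delta_{bU}}^{-1}\,\mathbf{J}_{\bm{y};\check{\bm{p}}_{b,0},\delta_{bU}}^{\mathrm{T}}$ and relabeling the summation indices as $(u,k)$ and $(u',k')$ reproduces (\ref{equ_lemma:information_loss_3D_orientation_3D_bth_position_uncertainty}), with the overall $-1/c$ arising from the product of the negative delay/offset coupling and the negative gradient $\nabla_{\check{\bm{p}}_{b,0}}\tau_{bu',k'}$.

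The main obstacle I anticipate is the bookkeeping of the decoupling argument in the second step: establishing cleanly that neither the frequency-offset channel (which reaches $\check{\bm{p}}_{b,0}$ through the Dopplers $\nu_{bU,k}$) nor the LEO-BS time offset $\delta_{bQ}$ contributes, because $\bm{\Phi}_U$ carries no information-bearing coupling to either. Verifying that the relevant sub-block of $\mathbf{J}_{\bm{y};\bm{\kappa}_2}$ is block diagonal, so that its inverse introduces no spurious cross-path between the $\bm{\Phi}_U$-side and $\check{\bm{p}}_{b,0}$-side nuisances, is what legitimizes the single-term, single-link form of the result; the remaining derivative and matrix-product computations are routine.
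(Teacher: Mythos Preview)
Your proposal is correct and follows essentially the same approach as the paper: apply Definition~\ref{definition_EFIM} to form $\mathbf{J}_{\bm{y};\bm{\Phi}_U,\bm{\kappa}_2}\mathbf{J}_{\bm{y};\bm{\kappa}_2}^{-1}\mathbf{J}_{\bm{y};\check{\bm{p}}_{b,0},\bm{\kappa}_2}^{\mathrm{T}}$, observe that only the time offset $\delta_{bU}$ couples to both $\bm{\Phi}_U$ and $\check{\bm{p}}_{b,0}$, and substitute the channel-parameter FIM entries. The paper's proof is terser (it simply writes down the single $\delta_{bU}$ term and substitutes), whereas you spell out more carefully why the other nuisances $\delta_{QU}$, $\delta_{bQ}$, $\epsilon_{bU}$, $\epsilon_{bQ}$, and the channel gains drop out and why the nuisance block inverts without re-coupling; but the route and the computation are the same.
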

\begin{proof}
See Appendix \ref{Appendix_lemma_information_loss_FIM_3D_orientation_3D_bth_position_uncertainty}.
\end{proof}

\begin{lemma}
\label{lemma:information_loss_FIM_3D_orientation_3D_bth_velocity_uncertainty}
The loss of information about the FIM of the $\bm{\Phi}_{U,0}$ and $\check{\bm{v}}_{b,0}$  due to uncertainty in the nuisance parameters ${\bm{\kappa}_{2}}$ is given by (\ref{equ_lemma:information_loss_FIM_3D_orientation_3D_bth_velocity_uncertainty}).

\begin{figure*}
\begin{align}
\begin{split}
\label{equ_lemma:information_loss_FIM_3D_orientation_3D_bth_velocity_uncertainty}
\medmath{\bm{G}_{{{y} }}(\bm{y}_{}| \bm{\eta} ;\bm{\Phi}_{U},\check{\bm{v}}_{b,0})} &= \medmath{-\frac{\Delta_{t}}{c} \sum_{k^{},u^{}k^{'},u^{'}} \underset{bu^{},k^{}}{\operatorname{SNR}} \underset{bu^{'},k^{'}}{\operatorname{SNR}} } \medmath{   \nabla_{\bm{\Phi}_{U}} \tau_{bu^{},k^{}}  (k^{'}) {\bm{\Delta}_{bu^{'},k^{'}}^{\mathrm{T}}} \omega_{bU,k} \omega_{bU,k^{'}}}    \medmath{   \left(\sum_{u,k} \underset{bu,k}{\operatorname{SNR}} \omega_{bU,k}\right)^{\mathrm{-1}}}  
\end{split}
\end{align}
\end{figure*}
\end{lemma}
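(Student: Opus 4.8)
The plan is to specialize the equivalent-FIM loss formula of Definition~\ref{definition_EFIM}, namely $\mathbf{J}^{nu}_{\bm{\kappa}_1}=\mathbf{J}_{\bm{y};\bm{\kappa}_1,\bm{\kappa}_2}\mathbf{J}_{\bm{y};\bm{\kappa}_2}^{-1}\mathbf{J}_{\bm{y};\bm{\kappa}_1,\bm{\kappa}_2}^{\mathrm{T}}$, to the sub-block indexed by $(\bm{\Phi}_{U},\check{\bm{v}}_{b,0})$, so that $\bm{G}_{{{y}}}(\bm{y}|\bm{\eta};\bm{\Phi}_{U},\check{\bm{v}}_{b,0})=\mathbf{J}_{\bm{y};\bm{\Phi}_{U},\bm{\kappa}_2}\,\mathbf{J}_{\bm{y};\bm{\kappa}_2}^{-1}\,\mathbf{J}_{\bm{y};\bm{\kappa}_2,\check{\bm{v}}_{b,0}}$. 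The first step is to exploit the block-diagonal structure of the channel-parameter FIM across the three links, together with the tabulated fact that the channel gains decouple from the offsets, to conclude that $\mathbf{J}_{\bm{y};\bm{\kappa}_2}^{-1}$ is block diagonal; consequently only nuisance coordinates that couple to \emph{both} $\bm{\Phi}_{U}$ and $\check{\bm{v}}_{b,0}$ survive in the triple product.

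Next I would pin down that single surviving coordinate. Since $\bm{\Phi}_{U}$ enters only the delays $\tau_{bu,k}$ and $\tau_{qu,k}$ (the centroid direction $\bm{\Delta}_{bU,k}$, and hence the Doppler, is orientation-independent), its only nonzero couplings into $\bm{\kappa}_2$ are to the time offsets $\delta_{bU}$ and $\delta_{QU}$. The offset $\check{\bm{v}}_{b,0}$ enters both the accumulated LEO position—coupling to $\delta_{bU}$ and $\delta_{bQ}$ through the delays—and the Doppler—coupling to $\epsilon_{bU}$ and $\epsilon_{bQ}$. The intersection of the two coupling sets is the lone LEO--receiver time offset $\delta_{bU}$: the BS--receiver offset $\delta_{QU}$ dies because $\check{\bm{v}}_{b,0}$ does not reach that link, the LEO--BS offsets die because $\bm{\Phi}_{U}$ does not reach that link, and $\mathbf{J}_{\bm{y};\bm{\Phi}_{U},\epsilon_{bU}}=0$ because $\bm{\Phi}_{U}$ never touches the Doppler. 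Thus the loss collapses to the rank-one form $\mathbf{J}_{\bm{y};\bm{\Phi}_{U},\delta_{bU}}\,(\mathbf{J}_{\bm{y};\delta_{bU},\delta_{bU}})^{-1}\,\mathbf{J}_{\bm{y};\delta_{bU},\check{\bm{v}}_{b,0}}$.

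The remaining step is to evaluate the three factors by the chain rule through $\bm{\eta}$, reusing the channel-FIM entries. From $\mathbf{J}_{\bm{y};\tau_{bu,k},\delta_{bU}}=-\underset{bu,k}{\operatorname{SNR}}\,\omega_{bU,k}$ I get $\mathbf{J}_{\bm{y};\bm{\Phi}_{U},\delta_{bU}}=-\sum_{u,k}\underset{bu,k}{\operatorname{SNR}}\,\omega_{bU,k}\,\nabla_{\bm{\Phi}_{U}}\tau_{bu,k}$ and the scalar $\mathbf{J}_{\bm{y};\delta_{bU},\delta_{bU}}=\sum_{u,k}\underset{bu,k}{\operatorname{SNR}}\,\omega_{bU,k}$; using $\nabla_{\check{\bm{v}}_{b,0}}\tau_{bu,k}=-\tfrac{k\Delta_{t}}{c}\bm{\Delta}_{bu,k}$—the velocity uncertainty reaches the delay only through the position it accumulates over $k$ slots—gives $\mathbf{J}_{\bm{y};\delta_{bU},\check{\bm{v}}_{b,0}}=\tfrac{\Delta_{t}}{c}\sum_{u',k'}k'\,\underset{bu',k'}{\operatorname{SNR}}\,\omega_{bU,k'}\,\bm{\Delta}_{bu',k'}^{\mathrm{T}}$. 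Substituting and merging the two sums reproduces (\ref{equ_lemma:information_loss_FIM_3D_orientation_3D_bth_velocity_uncertainty}), the overall minus sign being the product of the sign of $\mathbf{J}_{\bm{y};\tau,\delta_{bU}}$ and that of $\nabla_{\check{\bm{v}}_{b,0}}\tau_{bu,k}$. A clean cross-check is that this is exactly the $\check{\bm{p}}_{b,0}$ computation of Lemma~\ref{lemma:information_loss_3D_orientation_3D_bth_position_uncertainty} with the delay sensitivity rescaled by the factor $k\Delta_{t}$ that turns a position offset into a velocity offset. The main obstacle is bookkeeping: rigorously certifying that every other nuisance block contributes zero and tracking the signs so that the single minus sign emerges rather than cancels.
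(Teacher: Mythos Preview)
Your proposal is correct and follows essentially the same route as the paper: apply the EFIM loss formula of Definition~\ref{definition_EFIM}, observe that only the LEO--receiver time offset $\delta_{bU}$ survives in the triple product, and substitute the channel-FIM entries for $(\tau_{bu,k},\delta_{bU})$ and $(\delta_{bU},\delta_{bU})$ together with the derivatives $\nabla_{\bm{\Phi}_U}\tau_{bu,k}$ and $\nabla_{\check{\bm{v}}_{b,0}}\tau_{bu,k}$. Your write-up is in fact more explicit than the paper's own proof, which simply states the one-term sum and substitutes; your careful justification of why $\delta_{QU}$, $\delta_{bQ}$, $\epsilon_{bU}$, $\epsilon_{bQ}$, and the channel gains all drop is exactly the bookkeeping the paper leaves implicit.
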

\begin{proof}
See Appendix \ref{Appendix_lemma_information_loss_FIM_3D_orientation_3D_bth_velocity_uncertainty}.
\end{proof}

\begin{lemma}
\label{lemma:information_loss_FIM_3D_bth_position_3D_bth_position}
The loss of information about the FIM of the $\check{\bm{p}}_{b,0}$  due to uncertainty in the nuisance parameters ${\bm{\kappa}_{2}}$ is given by (\ref{equ_lemma:information_loss_3D_position_3D_bth_position_uncertainty_1}).

\begin{figure*}
\begin{align}
\begin{split}
\label{equ_lemma:information_loss_3D_position_3D_bth_position_uncertainty_1}
\medmath{\bm{G}_{{{y} }}(\bm{y}_{}| \bm{\eta} ;\check{\bm{p}}_{b,0},\check{\bm{p}}_{b,0})} =& \medmath{   \norm{\sum_{k^{},u^{}} \underset{bu^{},k^{}}{\operatorname{SNR}}\bm{\Delta}_{bu^{},k^{}}^{\mathrm{T}} \frac{ \omega_{bU,k}}{c} }}^{2}    \medmath{   \left(\sum_{u,k} \underset{bu,k}{\operatorname{SNR}} \omega_{bU,k}\right)^{\mathrm{-1}}  }    +
\medmath{   \norm{\sum_{k^{},q^{}} \underset{bq^{},k^{}}{\operatorname{SNR}}\bm{\Delta}_{bq^{},k^{}}^{\mathrm{T}} \frac{ \omega_{bq,k}}{c} }}^{2}    \medmath{   \left(\sum_{q,k} \underset{bu,k}{\operatorname{SNR}} \omega_{bq,k}\right)^{\mathrm{-1}}  } \\&+  
\medmath{\norm{{\sum_{u^{},k^{} } \underset{bu^{},k^{}}{\operatorname{SNR}} \; \; \nabla_{\check{\bm{p}}_{b,0}}^{\mathrm{T}} \nu_{bU,k^{}}   }  \frac{(f_{c}^{}) (\alpha_{obu,k^{}}^{2})}{2} }^{2}  \left(\sum_{u,k} \frac{\underset{bu,k}{\operatorname{SNR}}  \alpha_{obu,k}^{2}}{2}\right)^{-1}}  + \medmath{\norm{{\sum_{q^{},k^{} } \underset{bq^{},k^{}}{\operatorname{SNR}} \; \; \nabla_{\check{\bm{p}}_{b,0}}^{\mathrm{T}} \nu_{bq,k^{}}   }  \frac{(f_{c}^{}) (\alpha_{obq,k^{}}^{2})}{2} }^{2}  \left(\sum_{q,k} \frac{\underset{bu,k}{\operatorname{SNR}}  \alpha_{obq,k}^{2}}{2}\right)^{-1}}
\end{split}
\end{align}
\end{figure*}
\end{lemma}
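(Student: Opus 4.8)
The plan is to apply the EFIM decomposition of Definition \ref{definition_EFIM} directly to the parameter of interest $\check{\bm{p}}_{b,0}$, writing the loss term as $\bm{G}_{\bm{y}}(\bm{y}|\bm{\eta};\check{\bm{p}}_{b,0},\check{\bm{p}}_{b,0}) = \mathbf{J}_{\bm{y};\check{\bm{p}}_{b,0},\bm{\kappa}_2}\mathbf{J}_{\bm{y};\bm{\kappa}_2}^{-1}\mathbf{J}_{\bm{y};\check{\bm{p}}_{b,0},\bm{\kappa}_2}^{\mathrm{T}}$. First I would identify which nuisance parameters actually couple to $\check{\bm{p}}_{b,0}$. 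Since $\check{\bm{p}}_{b,0}$ enters the channel description only through the delays $\tau_{bu,k},\tau_{bq,k}$ and the Dopplers $\nu_{bU,k},\nu_{bq,k}$ of the $b$-th LEO, and since every channel-gain entry is orthogonal to the geometric and offset parameters (established in Section III), the only surviving cross-terms are with $\delta_{bU},\epsilon_{bU}$ in the LEO--receiver link and $\delta_{bQ},\epsilon_{bQ}$ in the LEO--BS link. The BS--receiver offsets $\delta_{QU},\epsilon_{QU}$ and all gains drop out, which is why they do not appear in (\ref{equ_lemma:information_loss_3D_position_3D_bth_position_uncertainty_1}).

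Next I would exploit the block-diagonal structure of the channel-parameter FIM across the three links, together with the fact that within each link the time-offset subsystem $(\tau,\delta)$ is orthogonal to the frequency-offset subsystem $(\nu,\epsilon)$, since entries such as $\mathbf{J}(\tau,\nu)$ and $\mathbf{J}(\delta,\epsilon)$ vanish. This renders $\mathbf{J}_{\bm{y};\bm{\kappa}_2}$ effectively diagonal in the four relevant scalars, so the Schur complement splits into a sum of four independent rank-contributions, one per offset. For the time offset $\delta_{bU}$ the chain rule gives $\nabla_{\check{\bm{p}}_{b,0}}\tau_{bu,k} = -\bm{\Delta}_{bu,k}/c$, and combining this with $\mathbf{J}(\tau_{bu,k},\delta_{bU}) = -\,\underset{bu,k}{\operatorname{SNR}}\,\omega_{bU,k}$ and $\mathbf{J}(\delta_{bU},\delta_{bU}) = \sum_{u,k}\underset{bu,k}{\operatorname{SNR}}\,\omega_{bU,k}$ produces the first outer-product term; the LEO--BS delay coupling through $\delta_{bQ}$ yields the second term by the identical argument.

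For the frequency offsets I would carry the factor $\nabla_{\check{\bm{p}}_{b,0}}\nu_{bU,k}$ through the cross-term using $\mathbf{J}(\nu_{bU,k},\epsilon_{bU}) = -0.5\,\underset{bu,k}{\operatorname{SNR}}\,f_c\,\alpha_{obu,k}^2$ against the diagonal $\mathbf{J}(\epsilon_{bU},\epsilon_{bU}) = \sum_{u,k}0.5\,\underset{bu,k}{\operatorname{SNR}}\,\alpha_{obu,k}^2$; one power of $f_c$ survives while the remaining $\tfrac12$ cancels into the denominator, leaving the $\tfrac{f_c\alpha_{obu,k}^2}{2}$-weighted third term, with the LEO--BS Doppler giving the fourth. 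The main obstacle is bookkeeping rather than analysis: I must confirm that the four couplings genuinely decouple---that no off-diagonal nuisance block survives---so that the single Schur complement factorizes into four separate scalar denominators instead of requiring inversion of a coupled block. Once the orthogonality of the gains, of $(\tau,\delta)$ against $(\nu,\epsilon)$, and of the three links is verified, assembling the four terms into (\ref{equ_lemma:information_loss_3D_position_3D_bth_position_uncertainty_1}) is routine.
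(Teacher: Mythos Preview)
Your proposal is correct and follows essentially the same route as the paper: apply the Schur-complement formula of Definition~\ref{definition_EFIM}, use the block-diagonal structure of the channel-parameter FIM together with the vanishing $(\tau,\nu)$ and $(\delta,\epsilon)$ cross-entries to reduce the loss to four decoupled scalar contributions from $\delta_{bU},\epsilon_{bU},\delta_{bQ},\epsilon_{bQ}$, and then substitute the explicit FIM entries from Section~III. The paper's appendix proof is terser but structurally identical, and your more explicit justification of the decoupling (gain orthogonality, link independence, $(\tau,\delta)\perp(\nu,\epsilon)$) is exactly what underlies the first equality in (\ref{equ_lemma:information_loss_FIM_3D_bth_position_3D_bth_position_1}).
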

\begin{proof}
See Appendix \ref{Appendix_lemma_information_loss_FIM_3D_bth_position_3D_bth_position}.
\end{proof}

\begin{lemma}
\label{lemma:information_loss_FIM_3D_bth_position_3D_bth_velocity_uncertainty}
The loss of information about the FIM of $\check{\bm{p}}_{b,0}$ and $\check{\bm{v}}_{b,0}$  due to uncertainty in the nuisance parameters ${\bm{\kappa}_{2}}$ is given by (\ref{equ_lemma:information_loss_FIM_3D_bth_position_3D_bth_velocity_uncertainty}).

\begin{figure*}
\begin{align}
\begin{split}
\label{equ_lemma:information_loss_FIM_3D_bth_position_3D_bth_velocity_uncertainty}
\medmath{\bm{G}_{{{y} }}(\bm{y}_{}| \bm{\eta} ;\check{\bm{p}}_{b,0},\check{\bm{v}}_{b,0})} &= \medmath{ \frac{\Delta_{t}}{c^2} \sum_{b,k^{},u^{}k^{'},u^{'}} \underset{bu^{},k^{}}{\operatorname{SNR}} \underset{bu^{'},k^{'}}{\operatorname{SNR}} } \medmath{   \bm{\Delta}_{bu^{},k^{}}  (k^{'}) {\bm{\Delta}_{bu^{'},k^{'}}^{\mathrm{T}}} \omega_{bU,k} \omega_{bU,k^{'}}}    \medmath{   \left(\sum_{u,k} \underset{bu,k}{\operatorname{SNR}} \omega_{bU,k}\right)^{\mathrm{-1}}}  \\ & +  \medmath{ \frac{\Delta_{t}}{c^2} \sum_{b,k^{},q^{}k^{'},q^{'}} \underset{bq^{},k^{}}{\operatorname{SNR}} \underset{bq^{'},k^{'}}{\operatorname{SNR}} } \medmath{   \bm{\Delta}_{bq^{},k^{}}  (k^{'}) {\bm{\Delta}_{bq^{'},k^{'}}^{\mathrm{T}}} \omega_{bq,k} \omega_{bq^{'},k^{'}}}    \medmath{   \left(\sum_{q,k} \underset{bq,k}{\operatorname{SNR}} \omega_{bq,k}\right)^{\mathrm{-1}}} \\ &+ \medmath{  \sum_{b,k^{},u^{}k^{'},u^{'}} \underset{bu^{},k^{}}{\operatorname{SNR}} \underset{bu^{'},k^{'}}{\operatorname{SNR}}  }  \medmath{{ \nabla_{\check{\bm{p}}_{b,0}} \nu_{bU,k^{}} \bm{\Delta}_{bU,k^{'}}^{\mathrm{T}}   }  \frac{(f_{c}^{2}) (\alpha_{obu,k^{}}^{2}\alpha_{obu^{'},k^{'}}^{2})}{4c}   \left(\sum_{u,k} \frac{\underset{b^{}u,k^{}}{\operatorname{SNR}}  \alpha_{obu,k}^{2}}{2}\right)^{-1}} \\ & + \medmath{  \sum_{b,k^{},q^{}k^{'},q^{'}} \underset{bq^{},k^{}}{\operatorname{SNR}} \underset{bq^{'},k^{'}}{\operatorname{SNR}}  }  \medmath{{ \nabla_{\check{\bm{p}}_{b,0}} \nu_{bq,k^{}} \bm{\Delta}_{bq^{'},k^{'}}^{\mathrm{T}}   }  \frac{(f_{c}^{2}) (\alpha_{obu,k^{}}^{2}\alpha_{obq^{'},k^{'}}^{2})}{4 c}   \left(\sum_{q,k} \frac{\underset{b^{}q,k^{}}{\operatorname{SNR}}  \alpha_{obq,k}^{2}}{2}\right)^{-1}} 
\end{split}
\end{align}
\end{figure*}
\end{lemma}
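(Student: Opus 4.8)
The plan is to read off $\bm{G}_{{{y} }}(\bm{y}_{}| \bm{\eta} ;\check{\bm{p}}_{b,0},\check{\bm{v}}_{b,0})$ as the $(\check{\bm{p}}_{b,0},\check{\bm{v}}_{b,0})$ block of the nuisance-loss matrix $\mathbf{J}^{nu}_{\bm{y};\bm{\kappa}_1}=\mathbf{J}_{\bm{y};\bm{\kappa}_1,\bm{\kappa}_2}\mathbf{J}_{\bm{y};\bm{\kappa}_2}^{-1}\mathbf{J}_{\bm{y};\bm{\kappa}_1,\bm{\kappa}_2}^{\mathrm{T}}$ of Definition \ref{definition_EFIM}, i.e. $\bm{G}_{{{y} }}(\bm{y}_{}| \bm{\eta} ;\check{\bm{p}}_{b,0},\check{\bm{v}}_{b,0}) = \bm{F}_{{{y} }}(\bm{y}_{}| \bm{\eta} ;\check{\bm{p}}_{b,0},\bm{\kappa}_2)\,\mathbf{J}_{\bm{y};\bm{\kappa}_2}^{-1}\,\bm{F}_{{{y} }}(\bm{y}_{}| \bm{\eta} ;\check{\bm{v}}_{b,0},\bm{\kappa}_2)^{\mathrm{T}}$. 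The first step is to determine which nuisance coordinates actually couple to the $b$-th LEO offsets. The channel gains drop out because the channel-parameter FIM has $\bm{F}(\beta,\tau)=\bm{F}(\beta,\nu)=0$, so the transformed cross term $\bm{F}_{{{y} }}(\bm{y}_{}| \bm{\eta} ;\check{\bm{p}}_{b,0},\bm{\beta})$ vanishes; the BS--receiver offsets $\delta_{QU},\epsilon_{QU}$ drop out because $\check{\bm{p}}_{b,0},\check{\bm{v}}_{b,0}$ do not enter the BS--receiver signal $\mu_{qu,k}$. Hence only the four scalar offsets $\delta_{bU},\epsilon_{bU}$ (LEO--receiver) and $\delta_{bQ},\epsilon_{bQ}$ (LEO--BS) contribute.

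Next I would exploit the block-diagonal structure of $\mathbf{J}_{\bm{y};\bm{\kappa}_2}$. The channel FIM is block diagonal across links, and within each link $\delta$ couples only to delays while $\epsilon$ couples only to Dopplers, with no $\delta$--$\epsilon$ cross entry; therefore the relevant sub-block of $\mathbf{J}_{\bm{y};\bm{\kappa}_2}$ is diagonal in these four scalars. The Schur complement then collapses into a sum of four independent scalar marginalizations, each of the form $\bm{F}(\check{\bm{p}}_{b,0},o)\,\bm{F}(o,o)^{-1}\,\bm{F}(\check{\bm{v}}_{b,0},o)^{\mathrm{T}}$ for $o\in\{\delta_{bU},\epsilon_{bU},\delta_{bQ},\epsilon_{bQ}\}$. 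The self-informations $\bm{F}(\delta_{bU},\delta_{bU})=\sum_{u,k}\underset{bu,k}{\operatorname{SNR}}\,\omega_{bU,k}$, $\bm{F}(\epsilon_{bU},\epsilon_{bU})=\sum_{u,k}\tfrac{1}{2}\underset{bu,k}{\operatorname{SNR}}\,\alpha_{obu,k}^{2}$, and their LEO--BS analogues are exactly the scalar denominators appearing in (\ref{equ_lemma:information_loss_FIM_3D_bth_position_3D_bth_velocity_uncertainty}).

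Then I would assemble the cross vectors by the chain rule through $\mathbf{\Upsilon}_{\bm{\kappa}}$, combining the channel-parameter entries with the geometric gradients $\nabla_{\check{\bm{p}}_{b,0}}\tau_{bu,k}=-\bm{\Delta}_{bu,k}/c$, $\nabla_{\check{\bm{v}}_{b,0}}\tau_{bu,k}=-(k)\Delta_{t}\bm{\Delta}_{bu,k}/c$, and $\nabla_{\check{\bm{v}}_{b,0}}\nu_{bU,k}=\bm{\Delta}_{bU,k}/c$ (and their $u\to q$ counterparts). For the time-offset channel, $\bm{F}(\check{\bm{p}}_{b,0},\delta_{bU})=\sum_{u,k}(\bm{\Delta}_{bu,k}/c)\underset{bu,k}{\operatorname{SNR}}\,\omega_{bU,k}$ and $\bm{F}(\check{\bm{v}}_{b,0},\delta_{bU})=\sum_{u,k}(k)\Delta_{t}(\bm{\Delta}_{bu,k}/c)\underset{bu,k}{\operatorname{SNR}}\,\omega_{bU,k}$, where the two minus signs (from $\bm{F}(\tau_{bu,k},\delta_{bU})=-\underset{bu,k}{\operatorname{SNR}}\,\omega_{bU,k}$ and from $\nabla_{\check{\bm{p}}_{b,0}}\tau$) cancel; forming $\bm{F}(\check{\bm{p}}_{b,0},\delta_{bU})\,\bm{F}(\delta_{bU},\delta_{bU})^{-1}\,\bm{F}(\check{\bm{v}}_{b,0},\delta_{bU})^{\mathrm{T}}$ yields the first term of (\ref{equ_lemma:information_loss_FIM_3D_bth_position_3D_bth_velocity_uncertainty}), and $\delta_{bQ}$ produces the second. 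For the frequency-offset channel, $\bm{F}(\check{\bm{p}}_{b,0},\epsilon_{bU})=-\tfrac{1}{2}f_{c}\sum_{u,k}\nabla_{\check{\bm{p}}_{b,0}}\nu_{bU,k}\,\underset{bu,k}{\operatorname{SNR}}\,\alpha_{obu,k}^{2}$ and $\bm{F}(\check{\bm{v}}_{b,0},\epsilon_{bU})=-\tfrac{1}{2c}f_{c}\sum_{u,k}\bm{\Delta}_{bU,k}\,\underset{bu,k}{\operatorname{SNR}}\,\alpha_{obu,k}^{2}$; the two minus signs again cancel and the prefactors combine into the $f_{c}^{2}/(4c)$ and $(\sum_{u,k}\tfrac{1}{2}\underset{bu,k}{\operatorname{SNR}}\,\alpha_{obu,k}^{2})^{-1}$ seen in the third term, with $\epsilon_{bQ}$ giving the fourth.

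The main obstacle I anticipate is organizational rather than analytic: rigorously justifying the decoupling so that the full Schur complement legitimately reduces to four scalar marginalizations (verifying both that $\check{\bm{p}}_{b,0},\check{\bm{v}}_{b,0}$ are orthogonal to the channel gains and the BS--receiver offsets in the transformed FIM, and that the per-link $2\times2$ offset block is diagonal), together with consistent sign tracking when the negative off-diagonals $\bm{F}(\tau,\delta)$ meet the negative gradients $\nabla_{\check{\bm{p}}_{b,0}}\tau$ and $\nabla_{\check{\bm{v}}_{b,0}}\tau$. The only genuinely nontrivial gradient, $\nabla_{\check{\bm{p}}_{b,0}}\nu_{bU,k}$, need not be expanded: it is carried symbolically into the statement exactly as in Lemmas \ref{lemma:FIM_3D_position_3D_b_th_position_offset} and \ref{lemma:FIM_3D_b_th_position_offset_3D_b_th_position_offset}.
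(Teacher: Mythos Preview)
Your proposal is correct and follows essentially the same approach as the paper: apply Definition~\ref{definition_EFIM} to write the $(\check{\bm{p}}_{b,0},\check{\bm{v}}_{b,0})$ loss block as a Schur complement, reduce it to four independent scalar marginalizations over $\delta_{bU},\epsilon_{bU},\delta_{bQ},\epsilon_{bQ}$ via the block-diagonal structure of the nuisance FIM, and then substitute the channel-parameter FIM entries together with the geometric gradients from Appendix~\ref{Appendix_Entries_in_transformation_matrix}. Your write-up is in fact more explicit than the paper's (which simply states ``use Definition~\ref{definition_EFIM} and substitute''), in that you spell out why the channel gains and the BS--receiver offsets $\delta_{QU},\epsilon_{QU}$ drop out and how the sign cancellations in the $\delta$-channels produce the positive prefactors in~(\ref{equ_lemma:information_loss_FIM_3D_bth_position_3D_bth_velocity_uncertainty}).
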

\begin{proof}
See Appendix \ref{Appendix_lemma_information_loss_FIM_3D_bth_position_3D_bth_veloctiy}.
\end{proof}

\begin{lemma}
\label{lemma:information_loss_FIM_3D_bth_velocity_3D_bth_velocity}
The loss of information about the FIM of $\check{\bm{v}}_{b,0}$ due to uncertainty in the nuisance parameters ${\bm{\kappa}_{2}}$ is given by (\ref{equ_lemma:information_loss_FIM_3D_bth_velocity_3D_bth_velocity_uncertainty}).

\begin{figure*}
\begin{align}
\begin{split}
\label{equ_lemma:information_loss_FIM_3D_bth_velocity_3D_bth_velocity_uncertainty}
\medmath{\bm{G}_{{{y} }}(\bm{y}_{}| \bm{\eta} ;\check{\bm{v}}_{b,0},\check{\bm{v}}_{b,0})} =& \medmath{   \norm{\sum_{k^{},u^{}} \underset{bu^{},k^{}}{\operatorname{SNR}}\bm{\Delta}_{bu^{},k^{}}^{\mathrm{T}} \frac{ (k) \Delta_{t}\omega_{bU,k}}{c} }}^{2}    \medmath{   \left(\sum_{u,k} \underset{bu,k}{\operatorname{SNR}} \omega_{bU,k}\right)^{\mathrm{-1}}  }  +
 \medmath{   \norm{\sum_{k^{},q^{}} \underset{bq^{},k^{}}{\operatorname{SNR}}\bm{\Delta}_{bq^{},k^{}}^{\mathrm{T}} \frac{ (k) \Delta_{t}\omega_{bq,k}}{c} }}^{2}    \medmath{   \left(\sum_{q,k} \underset{bq,k}{\operatorname{SNR}} \omega_{bq,k}\right)^{\mathrm{-1}}  }\\& +  
\medmath{\norm{{\sum_{u^{},k^{} } \underset{bu^{},k^{}}{\operatorname{SNR}} \; \; \bm{\Delta}_{bU^{},k^{}}^{\mathrm{T}}   }  \frac{(f_{c}^{}) (\alpha_{obu,k^{}}^{2})}{2c} }^{2}  \left(\sum_{u,k} \frac{\underset{bu,k}{\operatorname{SNR}}  \alpha_{obu,k}^{2}}{2}\right)^{-1}} 
+ \medmath{\norm{{\sum_{q^{},k^{} } \underset{bq^{},k^{}}{\operatorname{SNR}} \; \; \bm{\Delta}_{bq^{},k^{}}^{\mathrm{T}}   }  \frac{(f_{c}^{}) (\alpha_{obq,k^{}}^{2})}{2c} }^{2}  \left(\sum_{q,k} \frac{\underset{bq,k}{\operatorname{SNR}}  \alpha_{obq,k}^{2}}{2}\right)^{-1}} 
\end{split}
\end{align}
\end{figure*}
\end{lemma}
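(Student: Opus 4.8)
The plan is to read off $\bm{G}_{\bm{y}}(\bm{y}|\bm{\eta};\check{\bm{v}}_{b,0},\check{\bm{v}}_{b,0})$ as the information-loss term $\mathbf{J}_{\bm{y};\check{\bm{v}}_{b,0}}^{nu} = \mathbf{J}_{\bm{y};\check{\bm{v}}_{b,0},\bm{\kappa}_2}\,\mathbf{J}_{\bm{y};\bm{\kappa}_2}^{-1}\,\mathbf{J}_{\bm{y};\check{\bm{v}}_{b,0},\bm{\kappa}_2}^{\mathrm{T}}$ of Definition \ref{definition_EFIM}, where $\bm{\kappa}_2$ gathers all channel gains and all time/frequency offsets. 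The first step is to determine which nuisance coordinates couple to $\check{\bm{v}}_{b,0}$. Since $\check{\bm{v}}_{b,0}$ enters the channel parameters only through the delays $\tau_{bu,k},\tau_{bq,k}$ (via the accumulated LEO displacement, so that the transformation entries of Appendix \ref{Appendix_Entries_in_transformation_matrix} give $\nabla_{\check{\bm{v}}_{b,0}}\tau_{bu,k} = -\tfrac{(k)\Delta_t}{c}\bm{\Delta}_{bu,k}$ and $\nabla_{\check{\bm{v}}_{b,0}}\tau_{bq,k} = -\tfrac{(k)\Delta_t}{c}\bm{\Delta}_{bq,k}$) and through the Dopplers $\nu_{bU,k},\nu_{bq,k}$ (giving $\nabla_{\check{\bm{v}}_{b,0}}\nu_{bU,k}=\tfrac{1}{c}\bm{\Delta}_{bU,k}$ and $\nabla_{\check{\bm{v}}_{b,0}}\nu_{bq,k}=\tfrac{1}{c}\bm{\Delta}_{bq,k}$), the channel-parameter FIM shows that $\mathbf{J}_{\bm{y};\check{\bm{v}}_{b,0},\cdot}$ is nonzero only in the columns of the four scalar offsets $\delta_{bU},\epsilon_{bU},\delta_{bQ},\epsilon_{bQ}$; every cross-block with a channel gain, with a different LEO's offsets, or with the BS--receiver offsets $\delta_{QU},\epsilon_{QU}$ vanishes.

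The key structural step is to show that these four offsets form a diagonal sub-block of $\mathbf{J}_{\bm{y};\bm{\kappa}_2}$ that is decoupled from the remainder of $\bm{\kappa}_2$. This holds because $\bm{\kappa}_2$ consists solely of identity-mapped channel parameters, so its block of the location FIM coincides with the corresponding block of the channel FIM; there the gains are orthogonal to everything, the delay--offset coupling is confined to $F(\tau,\delta)$ and the Doppler--offset coupling to $F(\nu,\epsilon)$ with $F(\delta_{bU},\epsilon_{bU})=0$, and the three links are fully separated by the block-diagonal likelihood factorization of (\ref{equ:likelihood}). Consequently the Schur-adjusted inverse restricted to $\{\delta_{bU},\epsilon_{bU},\delta_{bQ},\epsilon_{bQ}\}$ equals the plain inverse of that diagonal sub-block, and the triple product collapses to a sum of four independent outer products.

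I would then evaluate each piece. For $\delta_{bU}$ I combine $\mathbf{J}_{\check{\bm{v}}_{b,0},\delta_{bU}} = \sum_{u,k}\nabla_{\check{\bm{v}}_{b,0}}\tau_{bu,k}\,F(\tau_{bu,k},\delta_{bU}) = \sum_{u,k}\tfrac{(k)\Delta_t}{c}\,\underset{bu,k}{\operatorname{SNR}}\,\omega_{bU,k}\,\bm{\Delta}_{bu,k}$ with $\mathbf{J}_{\delta_{bU},\delta_{bU}}=\sum_{u,k}\underset{bu,k}{\operatorname{SNR}}\,\omega_{bU,k}$ to get the first term; the LEO--BS delay/$\delta_{bQ}$ pair reproduces it with $b,U\to b,q$, giving the second term. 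For $\epsilon_{bU}$ I use $\mathbf{J}_{\check{\bm{v}}_{b,0},\epsilon_{bU}} = \sum_{u,k}\nabla_{\check{\bm{v}}_{b,0}}\nu_{bU,k}\,F(\nu_{bU,k},\epsilon_{bU}) = -\sum_{u,k}\tfrac{f_c\alpha_{obu,k}^2}{2c}\,\underset{bu,k}{\operatorname{SNR}}\,\bm{\Delta}_{bU,k}$ together with $\mathbf{J}_{\epsilon_{bU},\epsilon_{bU}}=\sum_{u,k}\tfrac12\underset{bu,k}{\operatorname{SNR}}\,\alpha_{obu,k}^2$ to obtain the third term, and the LEO--BS Doppler/$\epsilon_{bQ}$ pair yields the fourth. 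Each sign cancels in the outer product, and summing the four contributions reproduces (\ref{equ_lemma:information_loss_FIM_3D_bth_velocity_3D_bth_velocity_uncertainty}).

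I expect the main obstacle to be the second step: justifying that the global $|\bm{\kappa}_2|\times|\bm{\kappa}_2|$ inversion in the Schur complement reduces exactly to four decoupled scalar inversions. This requires carefully tracking the block-diagonal structure of the channel FIM across all links and offsets and verifying that the identity transformation $\mathbf{\Upsilon}$ on $\bm{\kappa}_2$ preserves it, so that no shared index (e.g. the common antenna index feeding both the delay and Doppler couplings) reintroduces off-diagonal terms among $\{\delta_{bU},\epsilon_{bU},\delta_{bQ},\epsilon_{bQ}\}$ or between them and the rest of $\bm{\kappa}_2$. Once this separation is established, the four evaluations are routine substitutions of the channel-FIM entries and the transformation derivatives, exactly paralleling the argument used for $\bm{G}_{\bm{y}}(\bm{y}|\bm{\eta};\bm{v}_{U,0},\bm{v}_{U,0})$ in Lemma \ref{lemma:information_loss_FIM_3D_velocity_3D_velocity}, with the additional LEO--BS link terms arising because $\check{\bm{v}}_{b,0}$ perturbs both links.
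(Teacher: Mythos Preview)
Your proposal is correct and follows essentially the same route as the paper: the paper's proof simply invokes Definition~\ref{definition_EFIM}, writes $\bm{G}_{\bm{y}}(\bm{y}|\bm{\eta};\check{\bm{v}}_{b,0},\check{\bm{v}}_{b,0})$ directly as the sum of four decoupled terms corresponding to $\delta_{bU},\epsilon_{bU},\delta_{bQ},\epsilon_{bQ}$ (each with its own scalar inverse), and then substitutes the channel-FIM entries and the transformation derivatives from Appendix~\ref{Appendix_Entries_in_transformation_matrix}. Your additional care in justifying the block-diagonal reduction of the $\bm{\kappa}_2$ inverse is more explicit than the paper, which takes that decoupling as read, but the argument and the final evaluation are the same.
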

\begin{proof}
See Appendix \ref{Appendix_lemma_information_loss_FIM_3D_bth_velocity_3D_bth_velocity}.
\end{proof}
The EFIM for the parameters of interest represented by $\mathbf{J}_{\bm{y}|\bm{\kappa_{1}}}^{\mathrm{e}}$ can be obtained by combining the FIM for the parameters of interest, $\mathbf{J}_{\bm{y}|\bm{\kappa_{1}}}$  and the corresponding loss of information represented by $\mathbf{J}_{ \bm{\bm{y}}; \bm{\kappa}_1}^{nu}$. In other words, the entries in $\mathbf{J}_{\bm{y}|\bm{\kappa_{1}}}^{\mathrm{e}}$ can be obtained by appropriately combining Lemmas \ref{lemma:FIM_3D_position} - \ref{lemma:FIM_3D_b_th_velocity_offset_3D_b_th_velocity_offset} with Lemmas \ref{lemma:information_loss_FIM_3D_position} - \ref{lemma:information_loss_FIM_3D_bth_velocity_3D_bth_velocity}. In the next section, simulations will be used to determine the most efficient combinations of $N_B$, $N_K$, $N_Q$, and $N_U$ that produce a positive definite $\mathbf{J}_{\bm{y}|\bm{\kappa_{1}}}^{\mathrm{e}}$. This informs the efficient combinations of $N_B$, $N_K$, $N_Q$, and $N_U$ that allow for $9$D localization and LEO position and velocity estimation.

\section{Numerical Results}
In this section, we use simulations to determine the minimal combinations of $N_B$, $N_K$, $N_Q$, and $N_U$ that produce a positive definite $\mathbf{J}_{\bm{y}|\bm{\kappa_{1}}}^{\mathrm{e}}$. It is important to note that while \cite{Fundamentals_of_LEO_Based_Localization} indicates that using $3$ LEO satellites, $3$ time slots, and $N_U > 1$ is enables the $9$D localization of a receiver, the presence of uncertainty in the LEO ephemeris changes the analysis and the corresponding conclusions. Moreover, the presence of signals from $5$G base stations adds another dimension. Hence, in this section, we investigate the use of signals in the LEO-receiver, LEO-BS, and BS-receiver links for both $9$D localization and ephemeris correction. This investigation is carried out by analyzing the conditions that make $\mathbf{J}_{\bm{y}|\bm{\kappa_{1}}}^{\mathrm{e}}$  positive definite. We notice that $\mathbf{J}_{\bm{y}|\bm{\kappa_{1}}}^{\mathrm{e}}$ is positive definite when $N_B = 1$, if $N_K \geq 3$, $N_U > 1$, and $N_Q \geq 3$. Again, the matrix $\mathbf{J}_{\bm{y}|\bm{\kappa_{1}}}^{\mathrm{e}}$ is positive definite when $N_B = 2$, if $N_K \geq 3$, $N_U > 1$, and $N_Q \geq 3$. Finally, the matrix $\mathbf{J}_{\bm{y}|\bm{\kappa_{1}}}^{\mathrm{e}}$ is positive definite when $N_B = 3$, if $N_K \geq 4$, $N_U > 1$, and $N_Q \geq 3$. These conditions for joint $9$D localization and ephemeris correction are obtained using the following simulation parameters. The 
following frequencies are considered $f_c \in [10,27,40,60] \text{ GHz}$. The following spacings between transmission time slots are considered $\Delta_t \in [25,  50, 100,  1000, 10000,20000,50000] \text{ ms}.$  We consider the following number of LEOs and BSs $N_B \in [1,2,3]$ and $N_Q \in [1,2,3,4]$, respectively. The $3$D coordinates of the LEOs are randomly chosen, but the LEOs are approximately $2000 \text{ km}$ from the receiver. The $3$D coordinates of the receiver and the BSs are also randomly chosen, but their distances are $30 \text{ m}$ and $100 \text{ m}$ from the origin, respectively. The BSs are stationary. However, the $3$D velocity of the LEOs and receiver are randomly chosen, but their speeds are $8000 \text{ m/s}$ and $25 \text{ m/s}$, respectively. The velocity of the LEOs is modeled to change every time slot to capture the acceleration of the LEOs. However, the velocity of the receiver remains constant across all transmission time slots. For all links, the effective baseband bandwidth is $100 \text{ MHz}$, and the BCC is $0 \text{ MHz}$. For the $b^{\text{th}}$ LEO and $q^{\text{th}}$ BS, we assume that the same signal is transmitted across all $N_{K}$ time slots, and the channel gain is constant across all receive antennas and time slots. Hence, the useful SNRs are
$$
\underset{b}{\operatorname{SNR}} \triangleq \frac{8 \pi^2 \left|\beta_{b}\right|^2}{N_{01}} \int_{-\infty}^{\infty}\left|S_{b}[f]\right|^2 d f,
$$
$$
\underset{bq}{\operatorname{SNR}} \triangleq \frac{8 \pi^2 \left|\beta_{bq}\right|^2}{N_{02}} \int_{-\infty}^{\infty}\left|S_{b}[f]\right|^2 d f,
$$
and
$$
\underset{q}{\operatorname{SNR}} \triangleq \frac{8 \pi^2 \left|\beta_{q}\right|^2}{N_{01}} \int_{-\infty}^{\infty}\left|S_{q,k}[f]\right|^2 d f.
$$
The CRLBs for ${\mathbf{p}}_{U,0}$, ${\mathbf{v}}_{U,0}$, ${\mathbf{\Phi}}_{U}$, $\check{\mathbf{p}}_{b,0}$, and  $\check{\mathbf{v}}_{b,0}$ are obtained by inverting $\mathbf{J}_{\bm{y}|\bm{\kappa_{1}}}^{\mathrm{e}}$ and summing the appropriate diagonals. 
\subsection{Observations related to ${\mathbf{p}}_{U,0}$ and ${\mathbf{v}}_{U,0}$. }
The following observations are obtained by examining Figs \ref{Results:PU_NB_1_3_NQ_3_N_K_3_N_U_64_delta_t_index_1_fcIndex_3_SNRIndex_4} - \ref{Results:VU_NB_1_3_NQ_3_N_K_3_N_U_64_delta_t_index_1_fcIndex_3_SNRIndex_4_1}.
\begin{itemize}
    \item The CRLBs decrease with $N_U$.
    \item The CRLBs are more affected by the spacing between the transmission time slots than by any other parameter.
    \item The CRLBs reduce with increasing center frequency.
    \item With $N_B = 1$ and $N_K = 3$, a receiver positioning error of $0.1 \text{ cm}$ is achievable with $N_U = 4$,  $f_c = 40 \text{ GHz}$, SNR of $20 \text{ dB}$ which is constant across all links,  $N_Q = 3$, and $\Delta_t = 1 \text{ s}.$ 
    \item With $N_B = 3$ and $N_K = 4$, a receiver positioning error on the order of $ \text{mm}$ is achievable with $N_U = 4$,  $f_c = 40 \text{ GHz}$, SNR of $20 \text{ dB}$ which is constant across all links,  $N_Q = 3$, and $\Delta_t = 1 \text{ s}.$ 
        \item With $N_B = 1$ and $N_K = 3$, a receiver velocity estimation error on the order of $\text{mm/s}$ is achievable with $N_U = 4$,  $f_c = 40 \text{ GHz}$, SNR of $20 \text{ dB}$ which is constant across all links,  $N_Q = 3$, and $\Delta_t = 1 \text{ s}.$ 
        \item With $N_B = 3$ and $N_K = 4$, a receiver velocity estimation error on the order of $ \text{mm/s}$ is achievable with $N_U = 4$,  $f_c = 40 \text{ GHz}$, SNR of $20 \text{ dB}$ which is constant across all links,  $N_Q = 3$, and $\Delta_t = 1 \text{ s}.$ 
\end{itemize}

\subsection{Observations related to ${\mathbf{\Phi}}_{U}$}
The following observations are obtained by examining Figs \ref{Results:PHIU_NB_1_3_NQ_3_N_K_3_N_U_64_delta_t_index_1_fcIndex_3_SNRIndex_4} - 
 \ref{Results:PHIU_NB_1_3_NQ_3_N_K_3_N_U_64_delta_t_index_1_fcIndex_3_SNRIndex_4_1}. 
\begin{itemize}
    \item The CRLB decreases with $N_U$. This improvement is more substantial than the decrease in the CRLB concerning ${\mathbf{p}}_{U,0}$ and ${\mathbf{v}}_{U,0}$.
    \item The center frequency has a negligible impact on the CRLB.
           \item With $N_B = 1$ and $N_K = 3$, a receiver orientation estimation error of $10^{-3} \text{ rad}$ is achievable with $N_U = 4$,  $f_c = 40 \text{ GHz}$, SNR of $20 \text{ dB}$ which is constant across all links,  $N_Q = 3$, and $\Delta_t = 1 \text{ s}.$ 
        \item With $N_B = 3$ and $N_K = 4$, a receiver orientation estimation error of $10^{-3} \text{ rad}$ is achievable with $N_U = 4$,  $f_c = 40 \text{ GHz}$, SNR of $20 \text{ dB}$ which is constant across all links,  $N_Q = 3$, and $\Delta_t = 1 \text{ s}.$ 
\end{itemize}

\subsection{Observations related to $\check{\mathbf{p}}_{b,0}$}
The following observations are obtained by examining Figs. \ref{Results:CPb_NB_1_3_NQ_3_N_K_3_N_U_64_delta_t_index_1_fcIndex_3_SNRIndex_4} - \ref{Results:CPb_NB_1_3_NQ_3_N_K_3_N_U_64_delta_t_index_1_fcIndex_3_SNRIndex_4_1}.
\begin{itemize}
    \item The CRLBs decrease with $N_U$.
    \item The CRLBs are improved by the spacing between the transmission time slots than by any other parameter.
    \item The CRLBs reduce with increasing center frequency.
           \item With $N_B = 1$ and $N_K = 3$, a LEO position estimation error of $10^{-2} \text{ m}$ is achievable with $N_U = 4$,  $f_c = 40 \text{ GHz}$, SNR of $20 \text{ dB}$ which is constant across all links,  $N_Q = 3$, and $\Delta_t = 20 \text{ s}.$ 
        \item With $N_B = 3$ and $N_K = 4$, a LEO position estimation error of $10^{-3} \text{ m}$ is achievable with $N_U = 4$,  $f_c = 40 \text{ GHz}$, SNR of $20 \text{ dB}$ which is constant across all links,  $N_Q = 3$, and $\Delta_t = 20 \text{ s}.$ 
\end{itemize}

\subsection{Observations related to $\check{\mathbf{v}}_{b,0}$}

The following observations are obtained by examining Figs. \ref{Results:CVb_NB_1_3_NQ_3_N_K_3_N_U_64_delta_t_index_1_fcIndex_3_SNRIndex_4} - \ref{Results:CVb_NB_1_3_NQ_3_N_K_3_N_U_64_delta_t_index_1_fcIndex_3_SNRIndex_4_1}.\begin{itemize}
    \item The number of antennas does not substantially impact the CRLB.
    \item The CRLB is improved by the spacing between the transmission time slots than by any other parameter.
    \item The CRLB reduces with increasing center frequency.
               \item With $N_B = 1$ and $N_K = 3$, a LEO velocity estimation error of $1 \text{ m/s}$ is achievable with $N_U = 4$,  $f_c = 40 \text{ GHz}$, SNR of $20 \text{ dB}$ which is constant across all links,  $N_Q = 3$, and $\Delta_t = 20 \text{ s}.$ 
        \item With $N_B = 3$ and $N_K = 4$, a LEO velocity estimation error of $1 \text{ m/s}$ is achievable with $N_U = 4$,  $f_c = 40 \text{ GHz}$, SNR of $20 \text{ dB}$ which is constant across all links,  $N_Q = 3$, and $\Delta_t = 20 \text{ s}.$ 
\end{itemize}

\begin{figure}[htb!]
\centering
%\subfloat[]{\includegraphics[height=1.8in, width= 1.5in]{Results/ber_ring_ratio_parallel.pdf}
\subfloat[]{\includegraphics[ width= 3.2in]{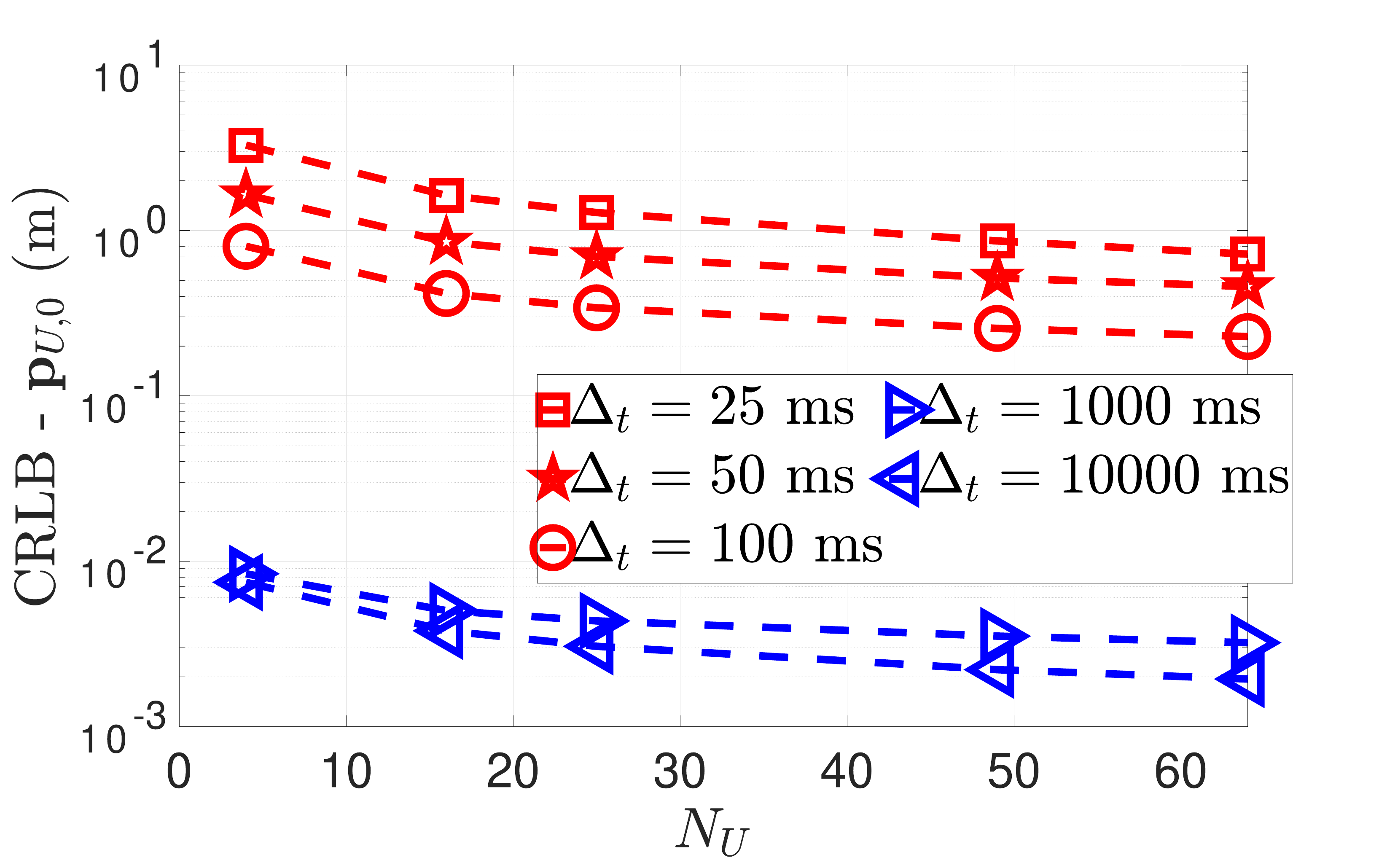}
\label{fig:Results/PU/_NB_1_NQ_3_N_K_3_N_U_64_delta_t_index_1_fcIndex_3_SNRIndex_4}}
\hfil
\subfloat[]{\includegraphics[ width= 3.2in]{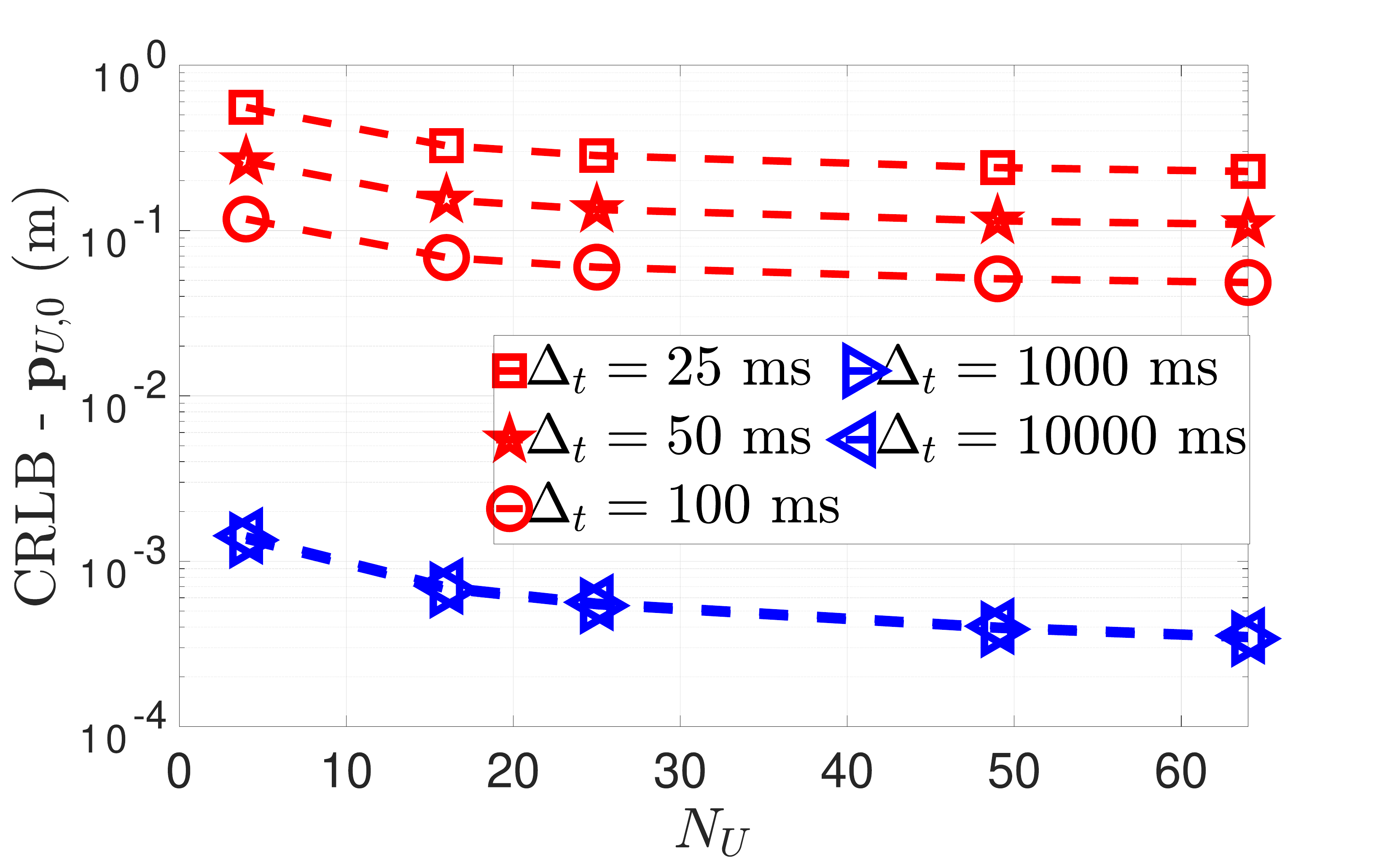}
\label{fig:Results/PU/_NB_3_NQ_3_N_K_4_N_U_64_delta_t_index_1_fcIndex_3_SNRIndex_4}}
\caption{CRLB of ${\mathbf{p}}_{U,0}$ as a function of $N_U$, $f_c = 40 \text{ GHz}$, SNR of $20 \text{ dB}$ which is constant across all links,  and $N_Q = 3$: (a) $N_B = 1$ and $N_K = 3$ and (b) $N_B = 3$ and $N_K = 4$.}
\label{Results:PU_NB_1_3_NQ_3_N_K_3_N_U_64_delta_t_index_1_fcIndex_3_SNRIndex_4}
\end{figure}

\begin{figure}[htb!]
\centering
%\subfloat[]{\includegraphics[height=1.8in, width= 1.5in]{Results/ber_ring_ratio_parallel.pdf}
\subfloat[]{\includegraphics[ width= 3.2in]{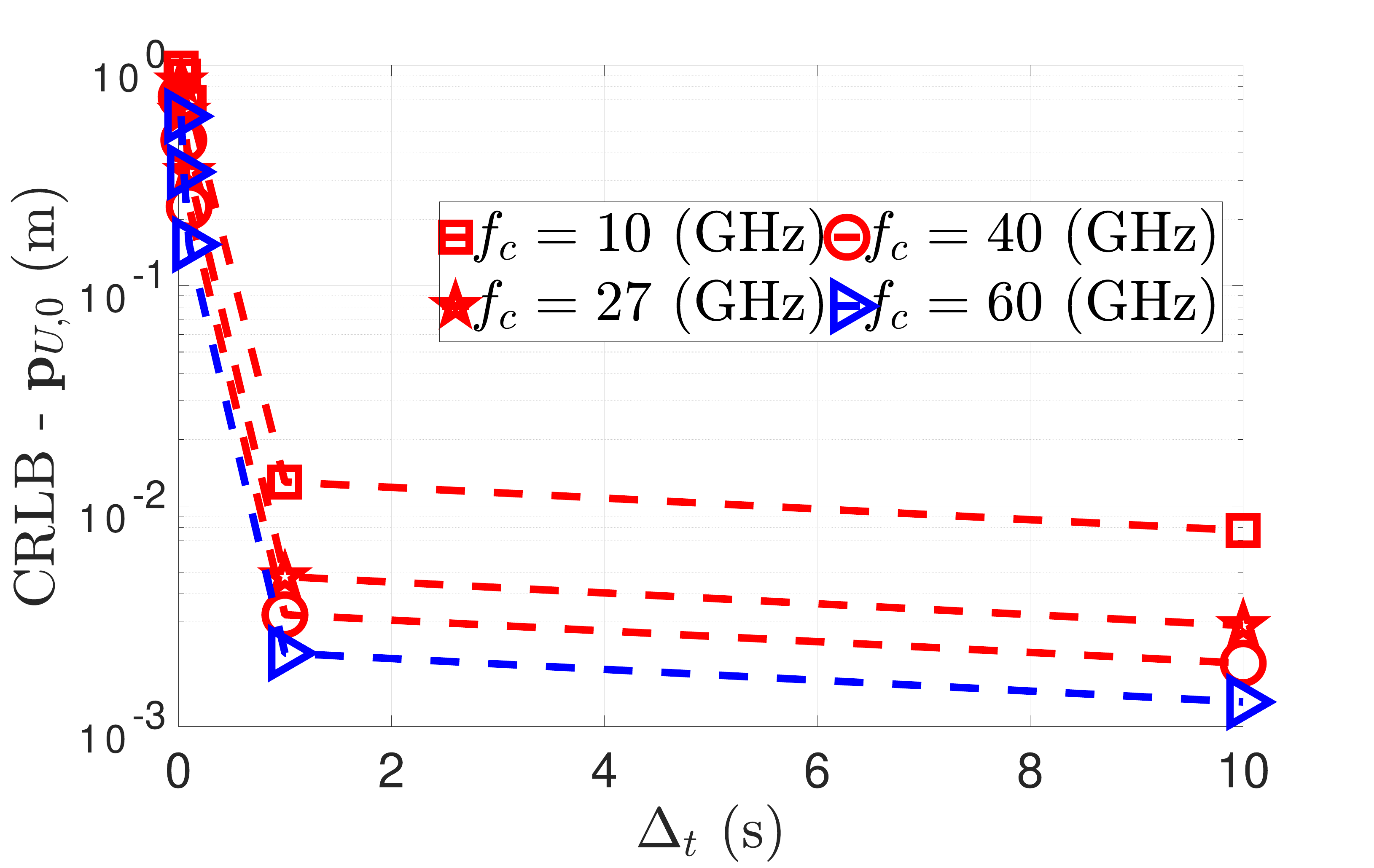}
\label{fig:Results/PU/_NB_1_NQ_3_N_K_3_N_U_64_delta_t_index_1_fcIndex_3_SNRIndex_4_1}}
\hfil
\subfloat[]{\includegraphics[ width= 3.2in]{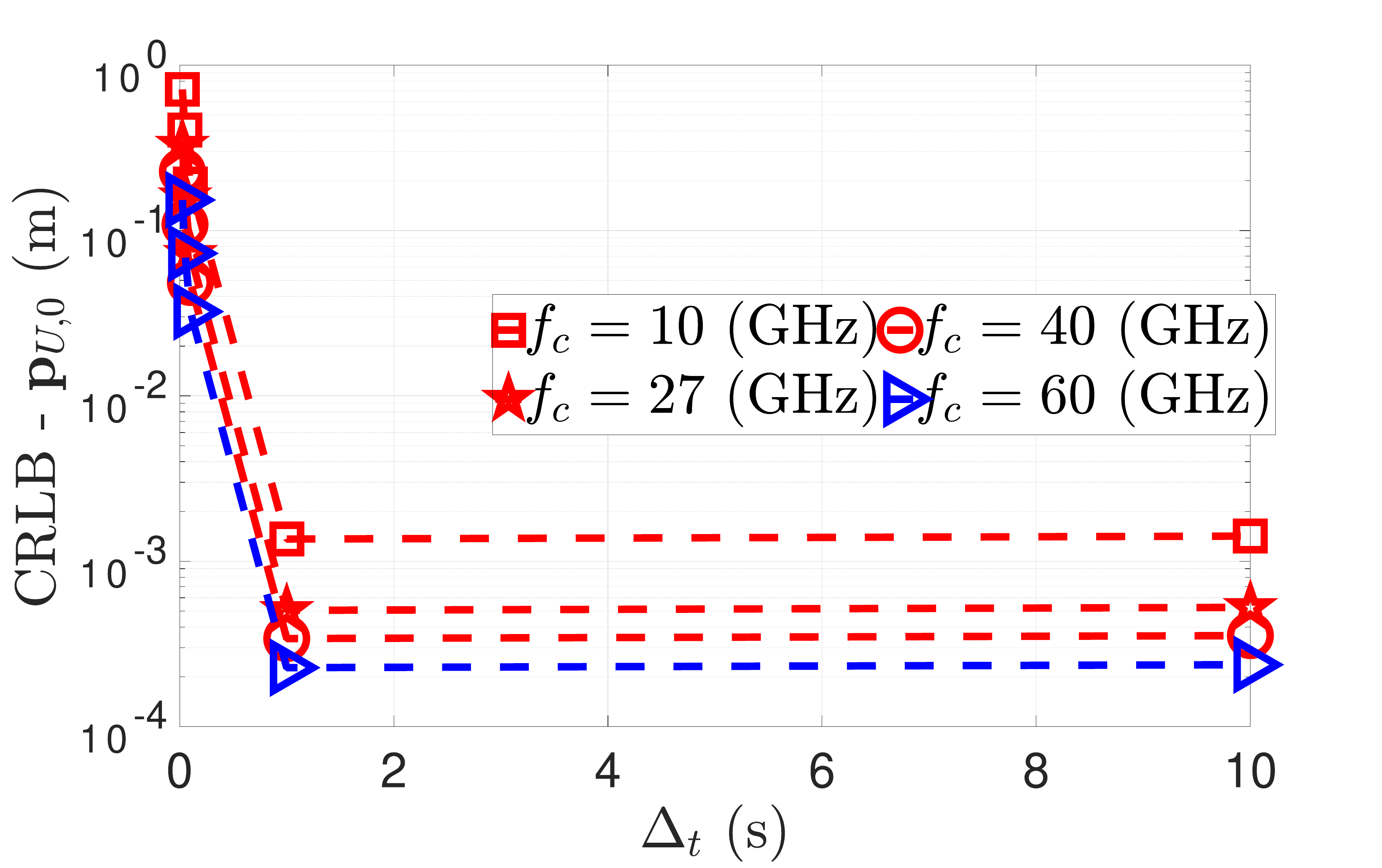}
\label{fig:Results/PU/_NB_3_NQ_3_N_K_4_N_U_64_delta_t_index_1_fcIndex_3_SNRIndex_4_1}}
\caption{CRLB of ${\mathbf{p}}_{U,0}$ as a function of $fc$, $N_U = 64$, SNR of $20 \text{ dB}$ which is constant across all links, and $N_Q = 3$: (a) $N_B = 1$ and $N_K = 3$ and (b) $N_B = 3$ and $N_K = 4$.}
\label{Results:PU_NB_1_3_NQ_3_N_K_3_N_U_64_delta_t_index_1_fcIndex_3_SNRIndex_4_1}
\end{figure}

\begin{figure}[htb!]
\centering
%\subfloat[]{\includegraphics[height=1.8in, width= 1.5in]{Results/ber_ring_ratio_parallel.pdf}
\subfloat[]{\includegraphics[ width= 3.2in]{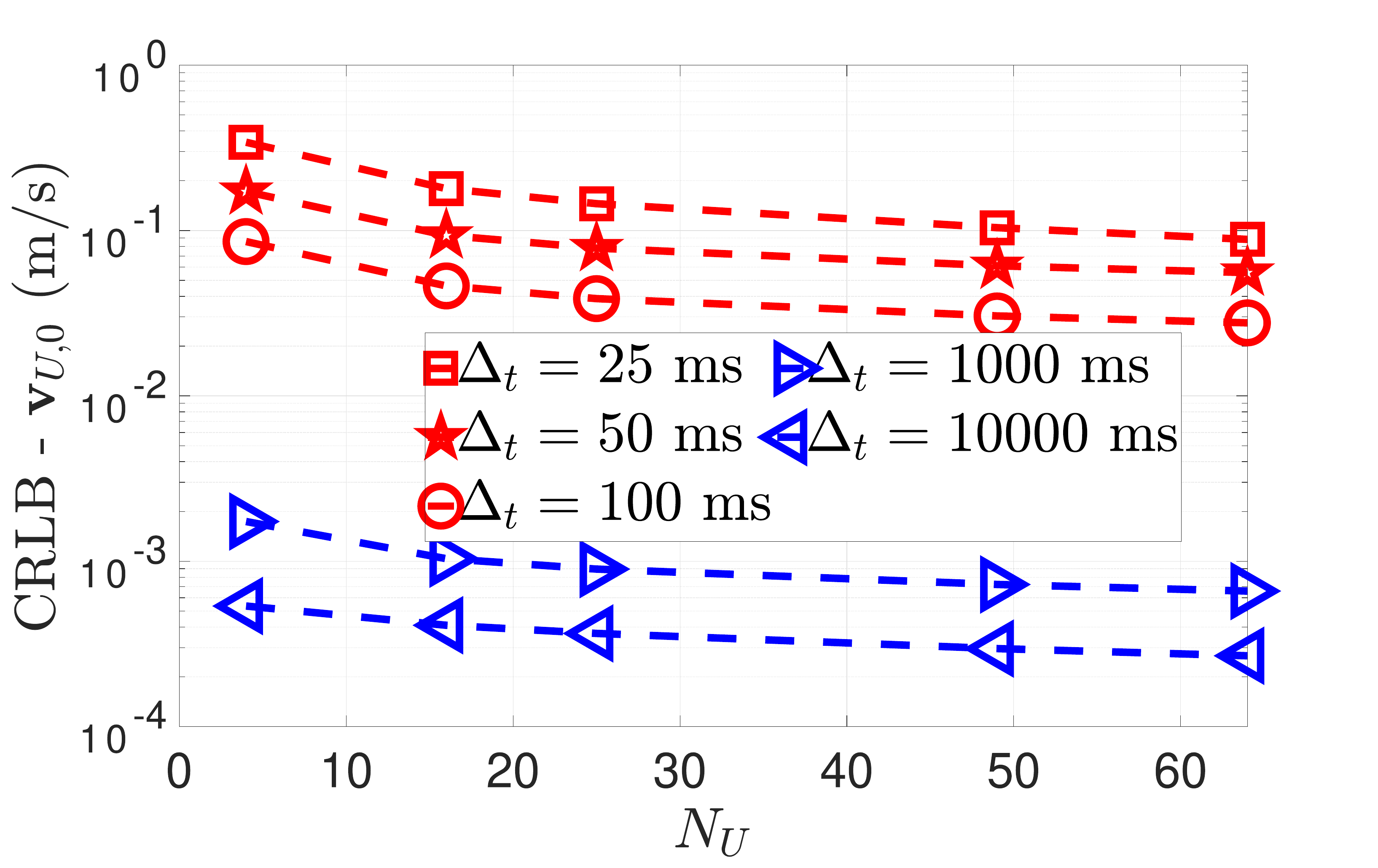}
\label{fig:Results/VU/_NB_1_NQ_3_N_K_3_N_U_64_delta_t_index_1_fcIndex_3_SNRIndex_4}}
\hfil
\subfloat[]{\includegraphics[ width= 3.2in]{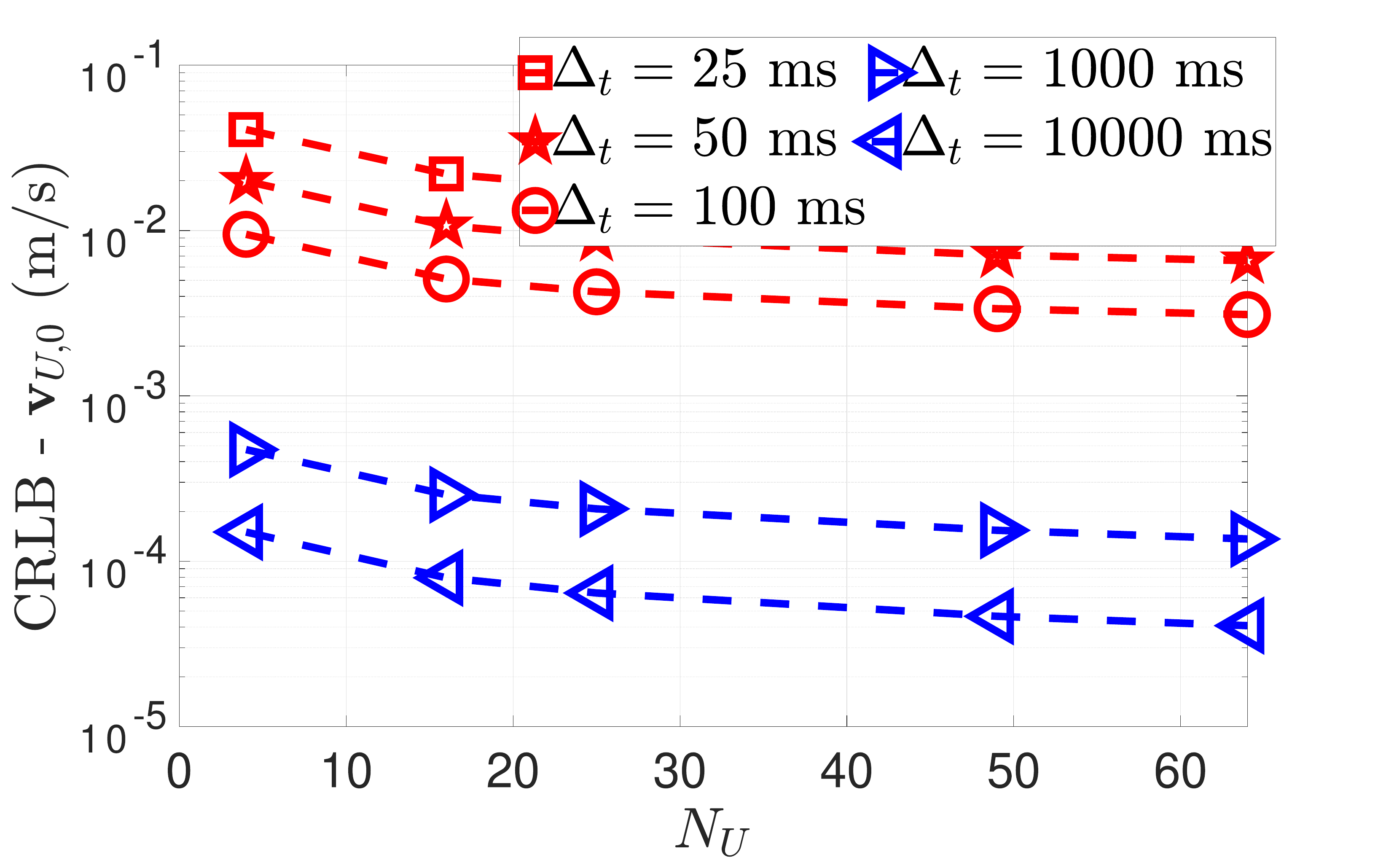}
\label{fig:Results/VU/_NB_3_NQ_3_N_K_4_N_U_64_delta_t_index_1_fcIndex_3_SNRIndex_4}}
\caption{CRLB of ${\mathbf{v}}_{U,0}$ as a function of $N_U$, $f_c = 40 \text{ GHz}$, SNR of $20 \text{ dB}$ which is constant across all links,  and $N_Q = 3$: (a) $N_B = 1$ and $N_K = 3$ and (b) $N_B = 3$ and $N_K = 4$.}
\label{Results:VU_NB_1_3_NQ_3_N_K_3_N_U_64_delta_t_index_1_fcIndex_3_SNRIndex_4}
\end{figure}

\begin{figure}[htb!]
\centering
%\subfloat[]{\includegraphics[height=1.8in, width= 1.5in]{Results/ber_ring_ratio_parallel.pdf}
\subfloat[]{\includegraphics[ width= 3.2in]{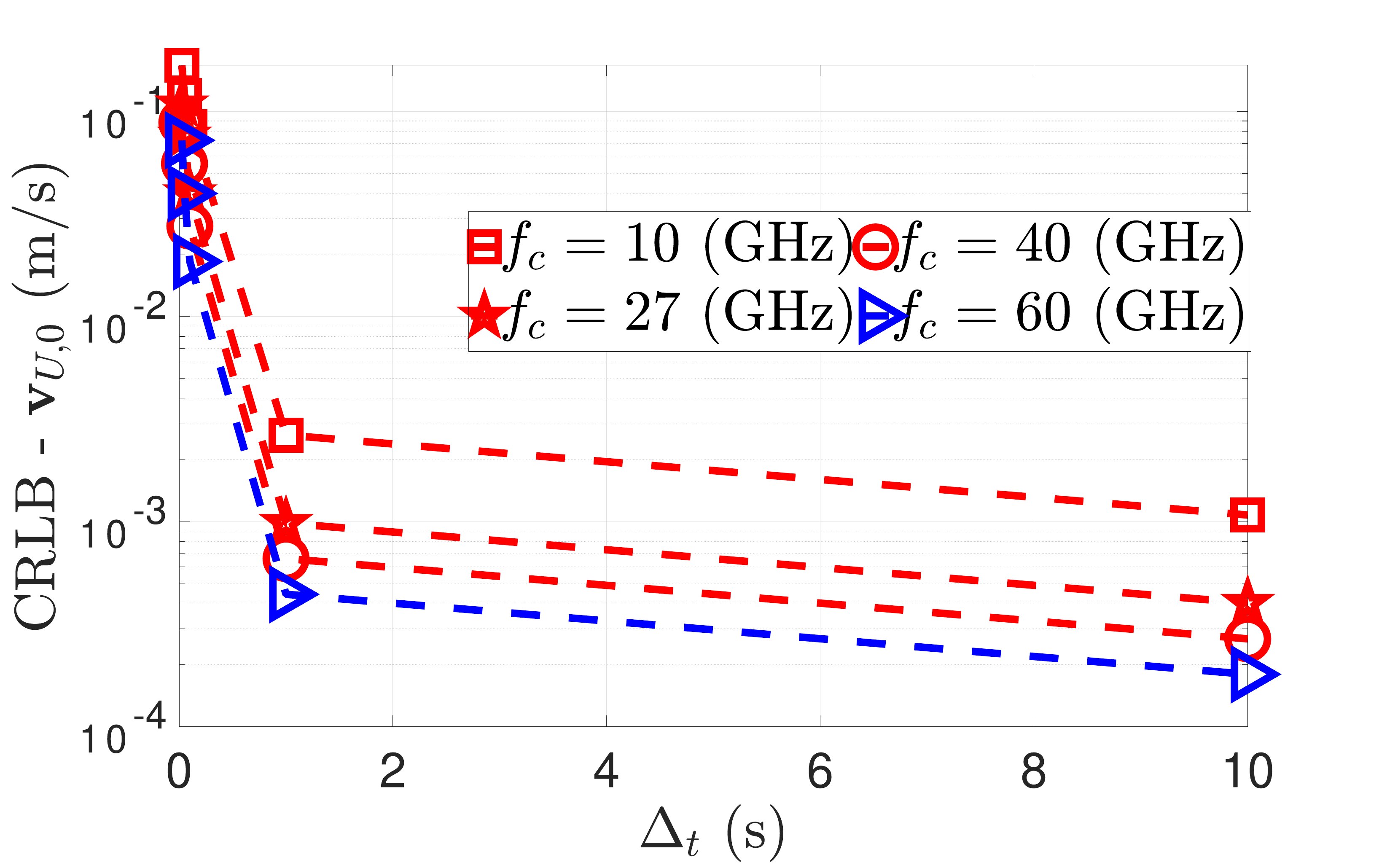}
\label{fig:Results/VU/_NB_1_NQ_3_N_K_3_N_U_64_delta_t_index_1_fcIndex_3_SNRIndex_4_1}}
\hfil
\subfloat[]{\includegraphics[ width= 3.2in]{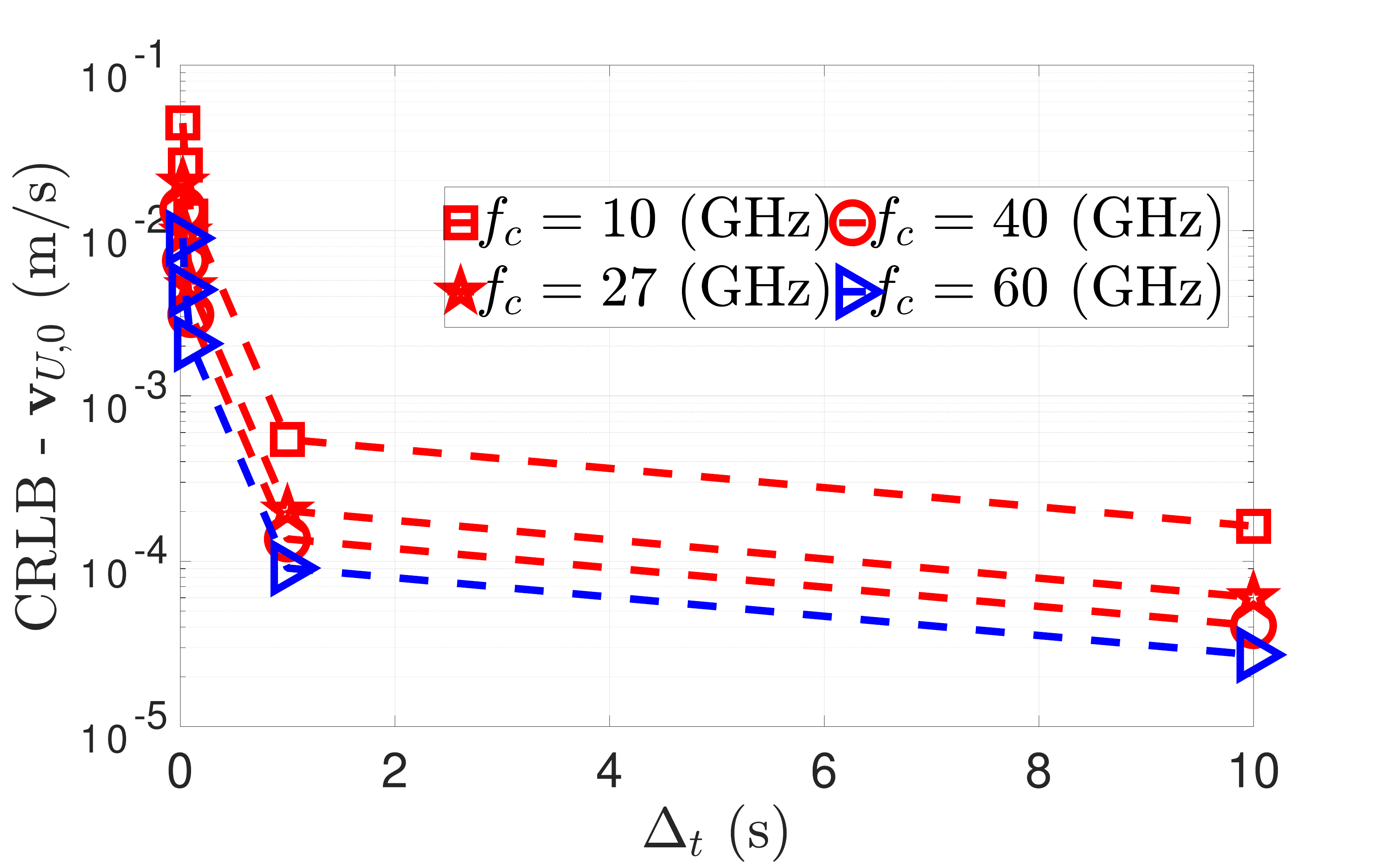}
\label{fig:Results/VU/_NB_3_NQ_3_N_K_4_N_U_64_delta_t_index_1_fcIndex_3_SNRIndex_4_1}}
\caption{CRLB of ${\mathbf{v}}_{U,0}$ as a function of $fc$, $N_U = 64$, SNR of $20 \text{ dB}$ which is constant across all links,  and $N_Q = 3$: (a) $N_B = 1$ and $N_K = 3$ and (b) $N_B = 3$ and $N_K = 4$.}
\label{Results:VU_NB_1_3_NQ_3_N_K_3_N_U_64_delta_t_index_1_fcIndex_3_SNRIndex_4_1}
\end{figure}

\begin{figure}[htb!]
\centering
%\subfloat[]{\includegraphics[height=1.8in, width= 1.5in]{Results/ber_ring_ratio_parallel.pdf}
\subfloat[]{\includegraphics[ width= 3.2in]{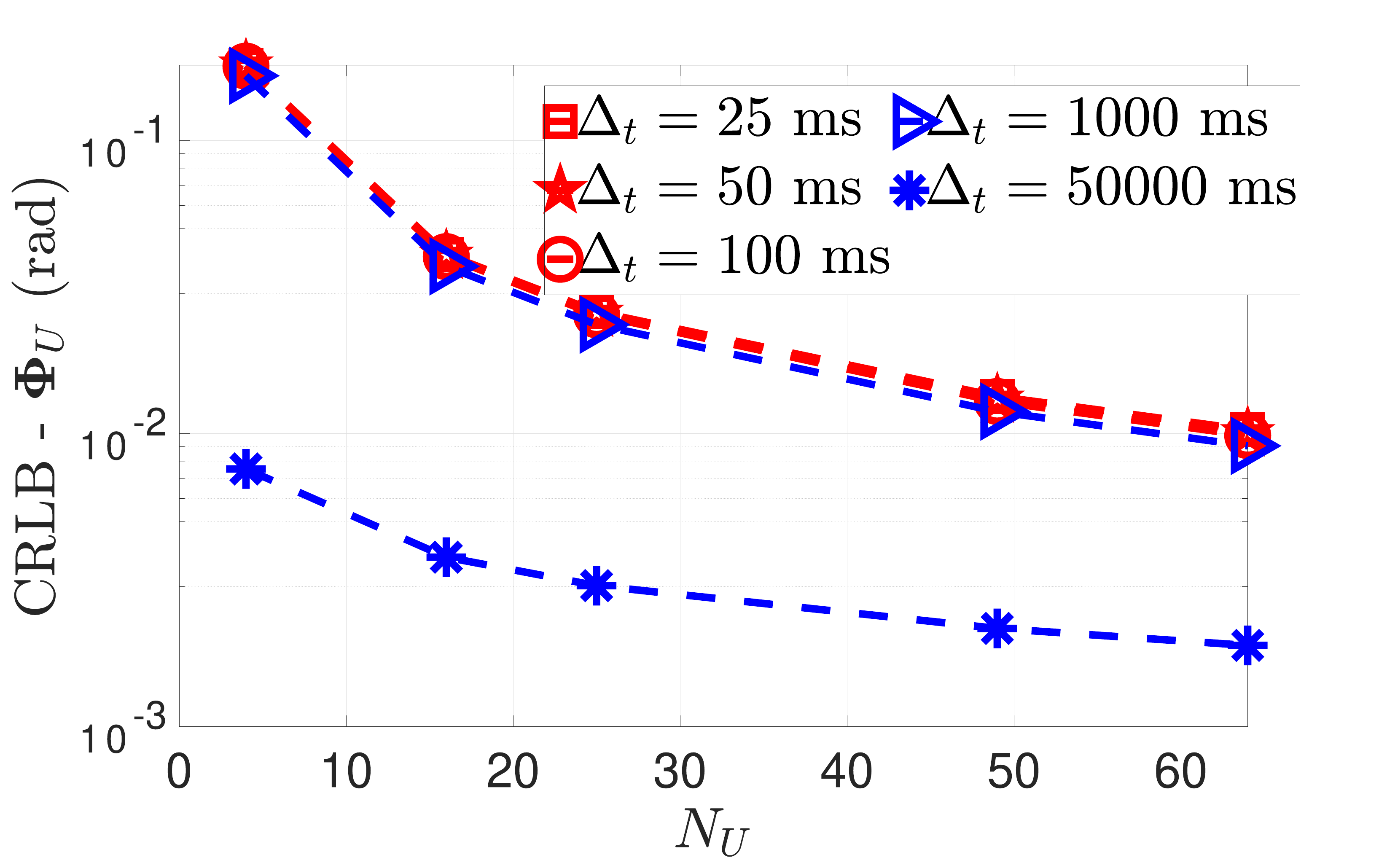}
\label{fig:Results/PHIU/_NB_1_NQ_3_N_K_3_N_U_64_delta_t_index_1_fcIndex_3_SNRIndex_4}}
\hfil
\subfloat[]{\includegraphics[ width= 3.2in]{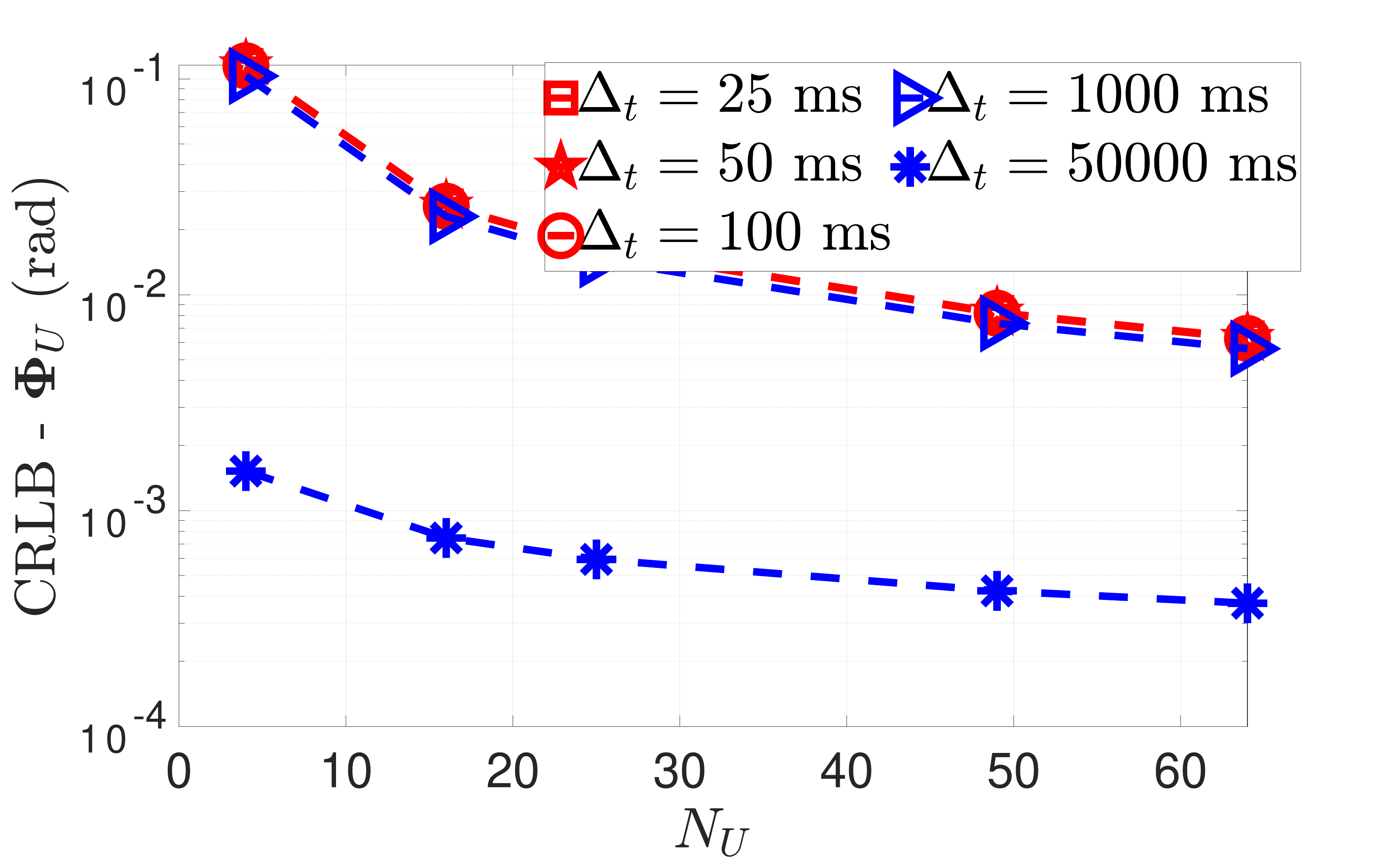}
\label{fig:Results/PHIU/_NB_3_NQ_3_N_K_4_N_U_64_delta_t_index_1_fcIndex_3_SNRIndex_4}}
\caption{CRLB of ${\mathbf{\Phi}}_{U}$ as a function of $N_U$, $f_c = 40 \text{ GHz}$, SNR of $20 \text{ dB}$ which is constant across all links,  and $N_Q = 3$: (a) $N_B = 1$ and $N_K = 3$ and (b) $N_B = 3$ and $N_K = 4$.}
\label{Results:PHIU_NB_1_3_NQ_3_N_K_3_N_U_64_delta_t_index_1_fcIndex_3_SNRIndex_4}
\end{figure}

\begin{figure}[htb!]
\centering
%\subfloat[]{\includegraphics[height=1.8in, width= 1.5in]{Results/ber_ring_ratio_parallel.pdf}
\subfloat[]{\includegraphics[ width= 3.2in]{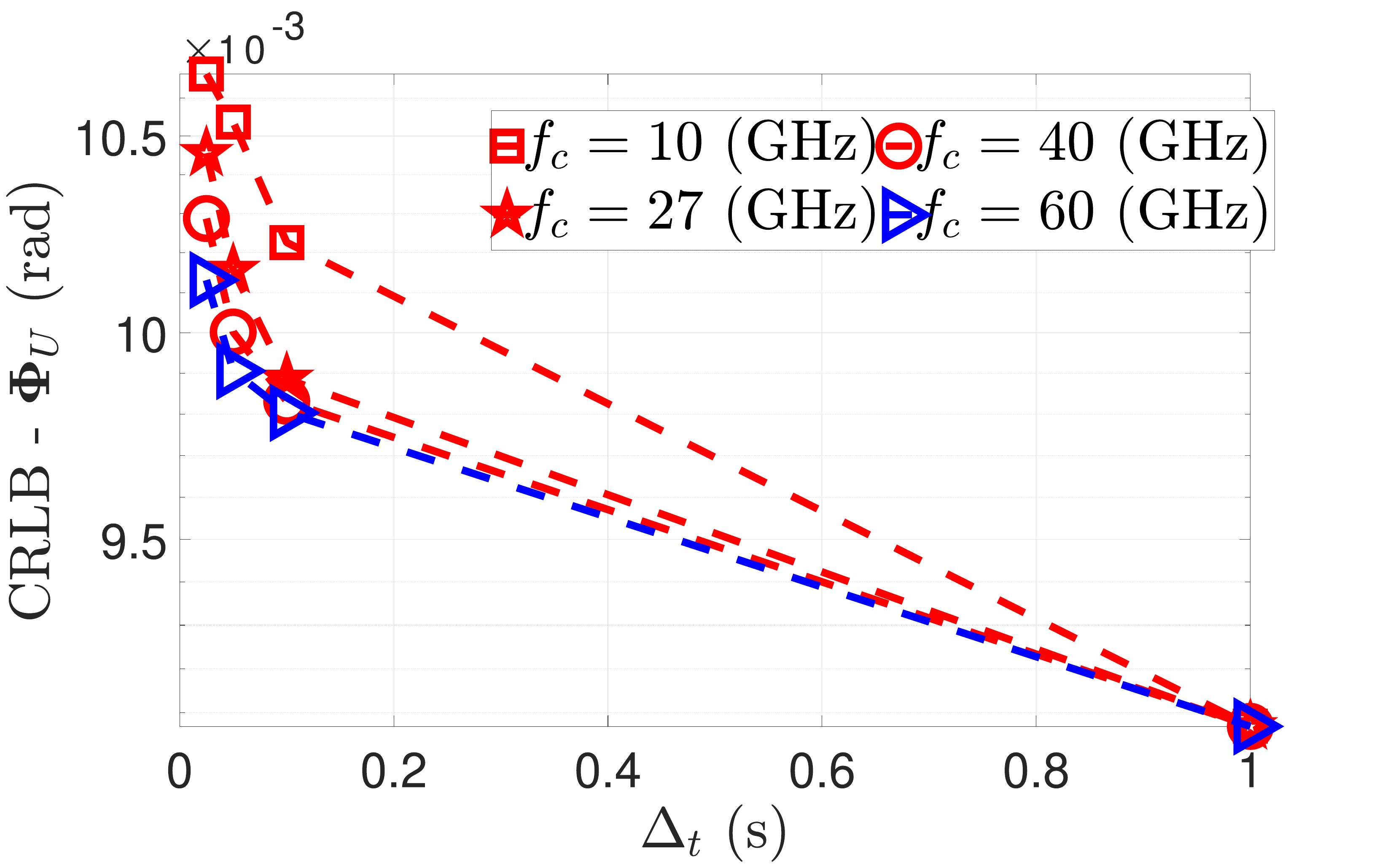}
\label{fig:Results/PHIU/_NB_1_NQ_3_N_K_3_N_U_64_delta_t_index_1_fcIndex_3_SNRIndex_4_1}}
\hfil
\subfloat[]{\includegraphics[ width= 3.2in]{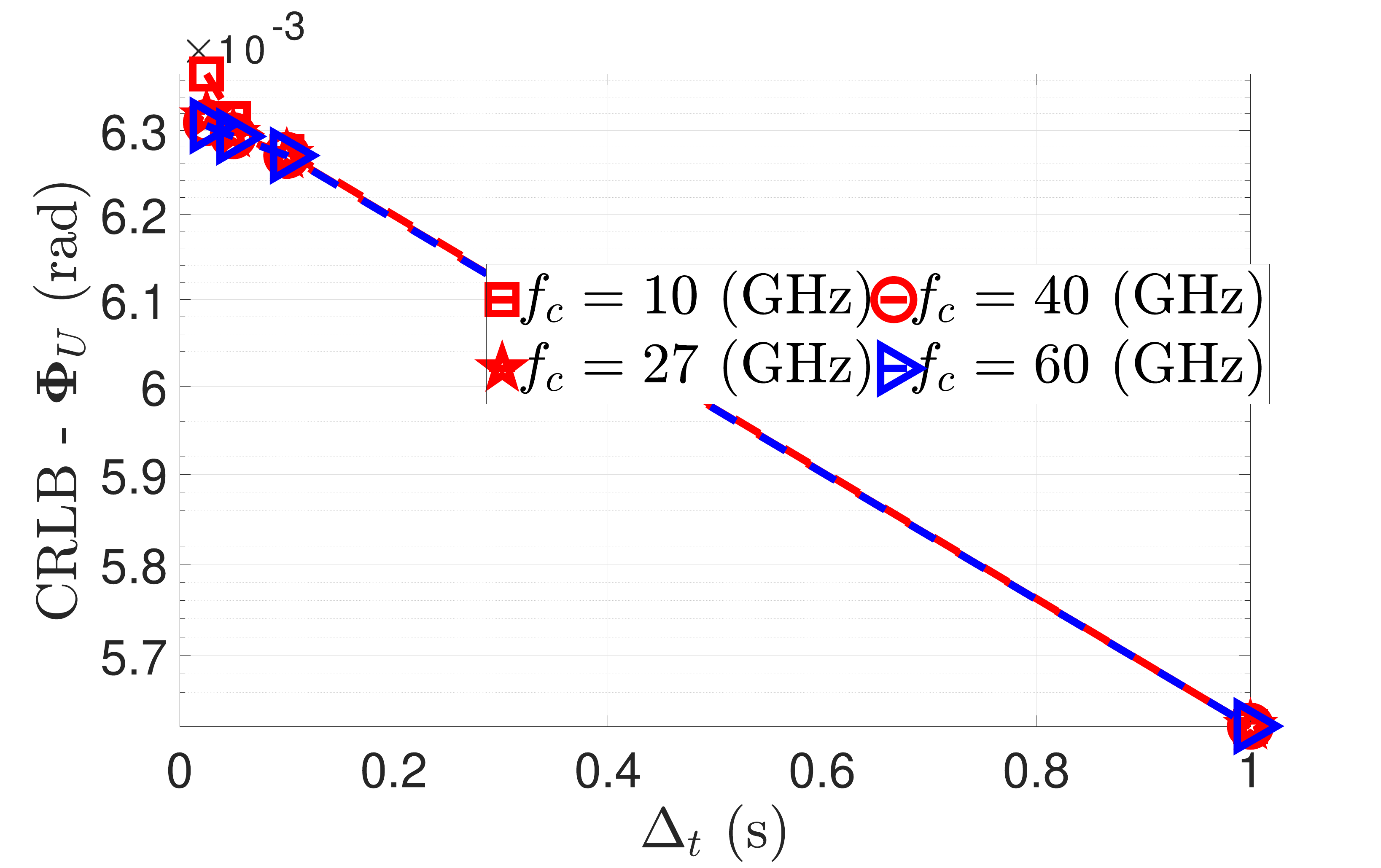}
\label{fig:Results/PHIU/_NB_3_NQ_3_N_K_4_N_U_64_delta_t_index_1_fcIndex_3_SNRIndex_4_1}}
\caption{CRLB of ${\mathbf{\Phi}}_{U}$ as a function of $fc$, $N_U = 64$, SNR of $20 \text{ dB}$ which is constant across all links, and $N_Q = 3$: (a) $N_B = 1$ and $N_K = 3$ and (b) $N_B = 3$ and $N_K = 4$.}
\label{Results:PHIU_NB_1_3_NQ_3_N_K_3_N_U_64_delta_t_index_1_fcIndex_3_SNRIndex_4_1}
\end{figure}

\begin{figure}[htb!]
\centering
%\subfloat[]{\includegraphics[height=1.8in, width= 1.5in]{Results/ber_ring_ratio_parallel.pdf}
\subfloat[]{\includegraphics[ width= 3.2in]{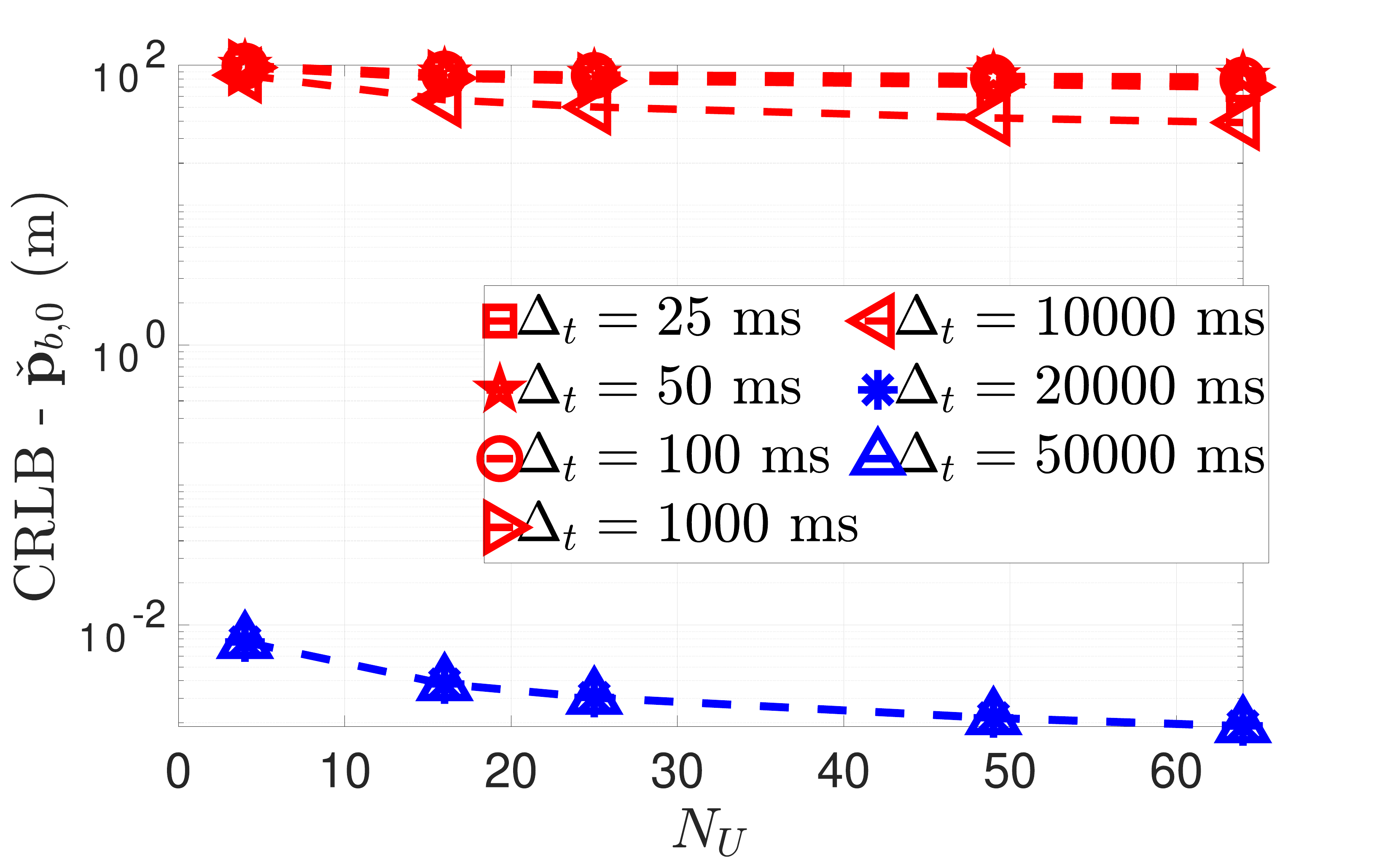}
\label{fig:Results/CPb/_NB_1_NQ_3_N_K_3_N_U_64_delta_t_index_1_fcIndex_3_SNRIndex_4}}
\hfil
\subfloat[]{\includegraphics[ width= 3.2in]{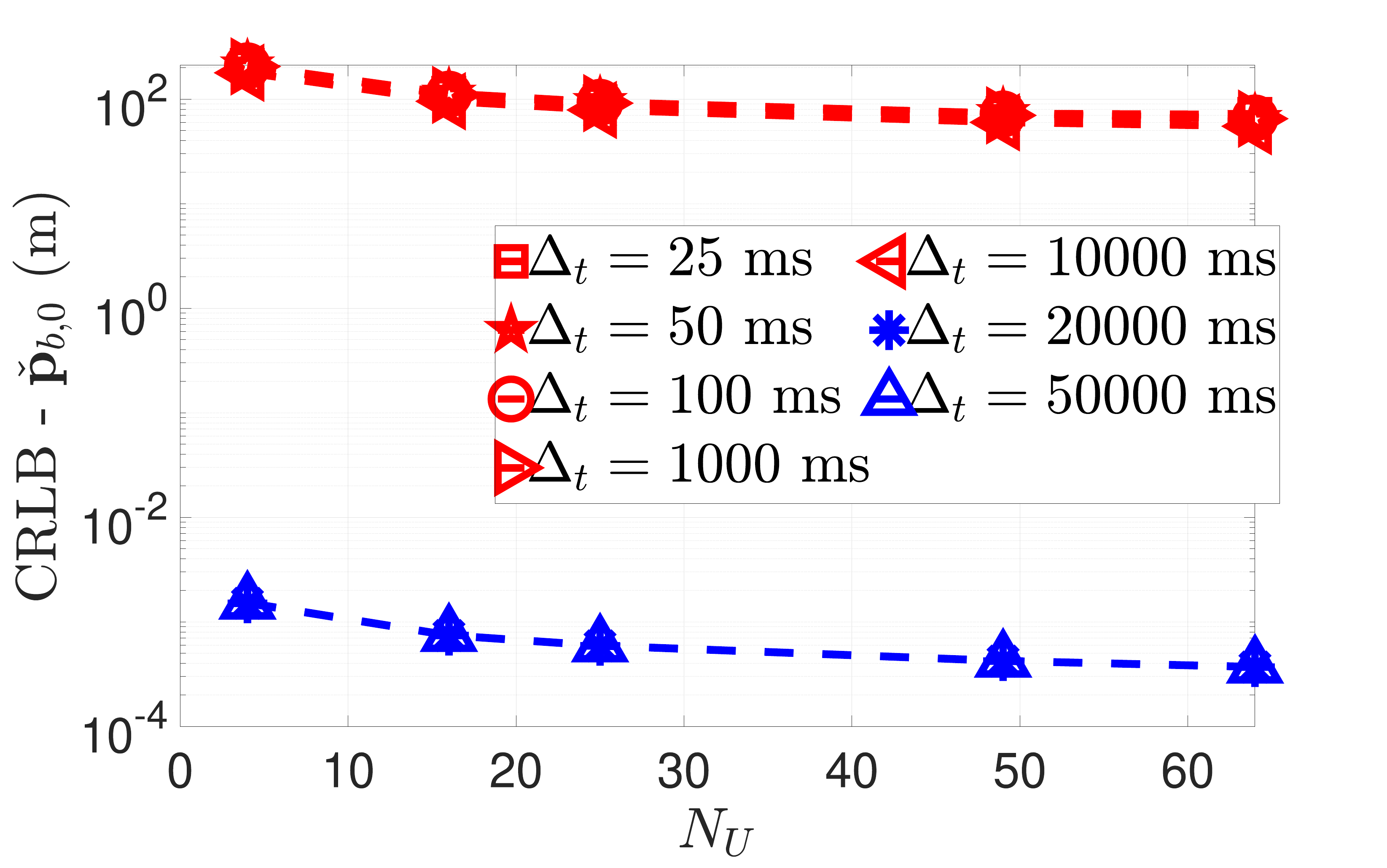}
\label{fig:Results/CPb/_NB_3_NQ_3_N_K_4_N_U_64_delta_t_index_1_fcIndex_3_SNRIndex_4}}
\caption{CRLB of $\check{\mathbf{p}}_{b,0}$ as a function of $N_U$, $f_c = 40 \text{ GHz}$, SNR of $20 \text{ dB}$ which is constant across all links,  and $N_Q = 3$: (a) $N_B = 1$ and $N_K = 3$ and (b) $N_B = 3$ and $N_K = 4$.}
\label{Results:CPb_NB_1_3_NQ_3_N_K_3_N_U_64_delta_t_index_1_fcIndex_3_SNRIndex_4}
\end{figure}

\begin{figure}[htb!]
\centering
%\subfloat[]{\includegraphics[height=1.8in, width= 1.5in]{Results/ber_ring_ratio_parallel.pdf}
\subfloat[]{\includegraphics[ width= 3.2in]{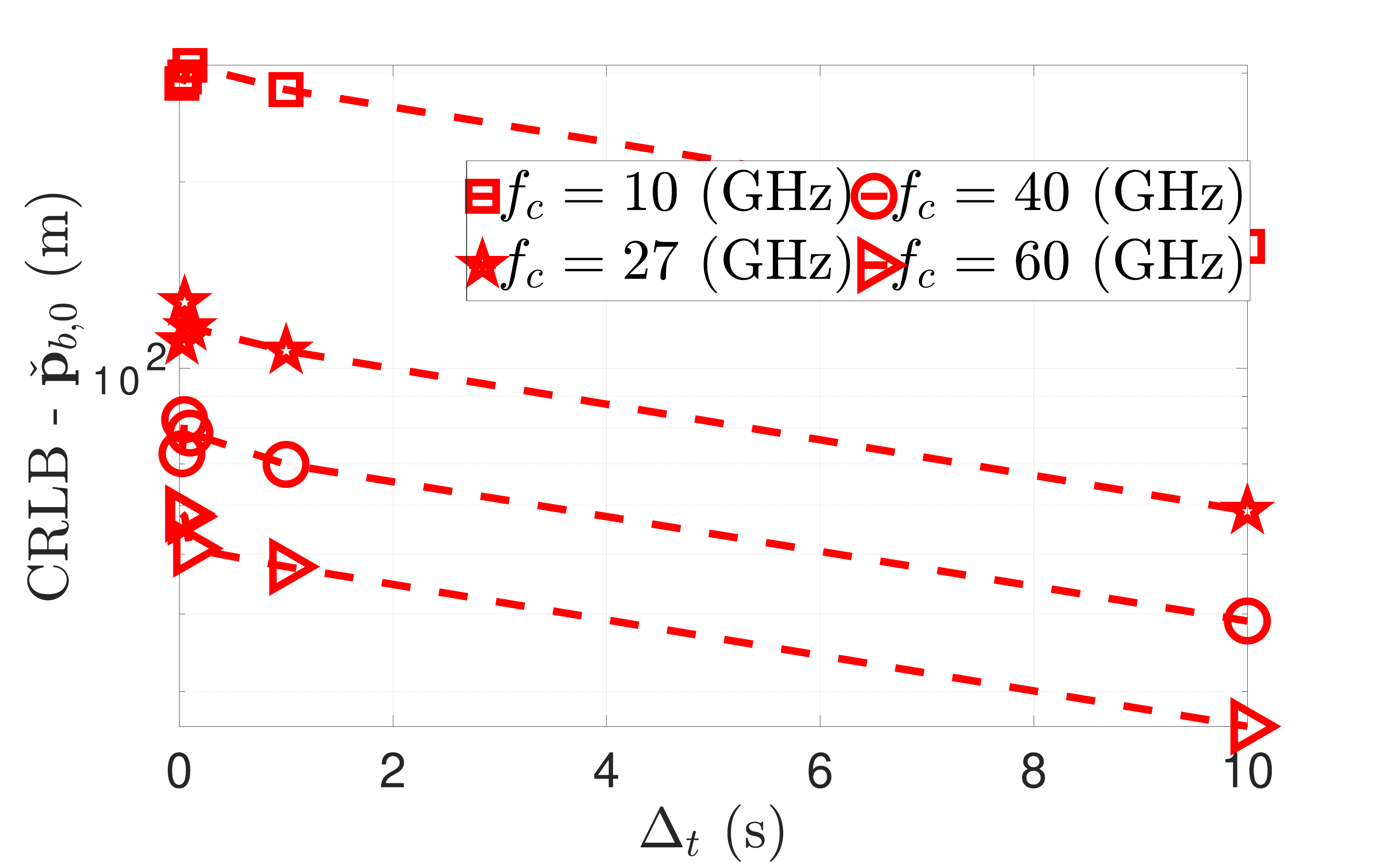}
\label{fig:Results/CPb/_NB_1_NQ_3_N_K_3_N_U_64_delta_t_index_1_fcIndex_3_SNRIndex_4_1}}
\hfil
\subfloat[]{\includegraphics[ width= 3.2in]{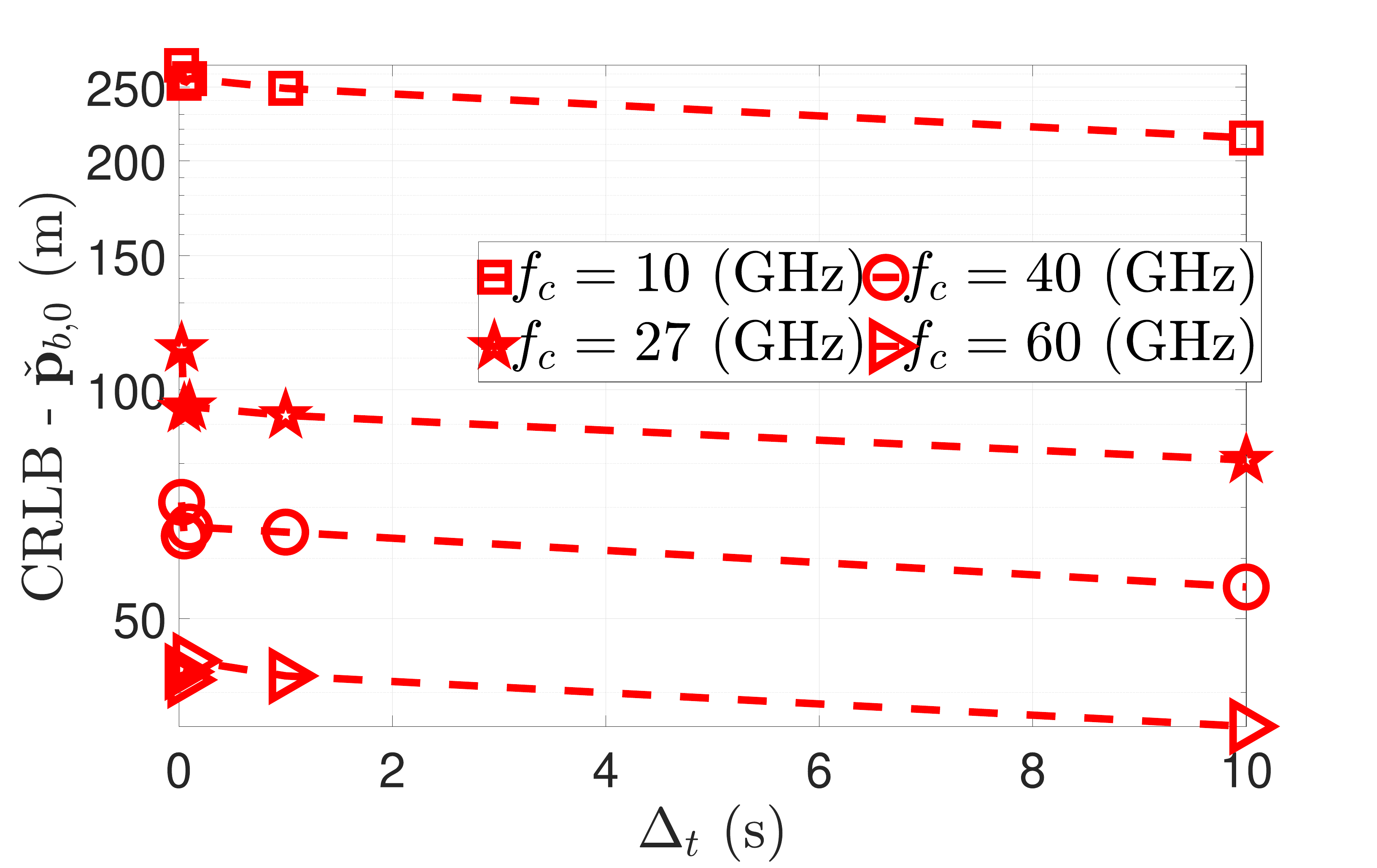}
\label{fig:Results/CPb/_NB_3_NQ_3_N_K_4_N_U_64_delta_t_index_1_fcIndex_3_SNRIndex_4_1}}
\caption{CRLB of $\check{\mathbf{p}}_{b,0}$ as a function of $f_c$, $N_U = 64
$, SNR of $20 \text{ dB}$ which is constant across all links,  and $N_Q = 3$: (a) $N_B = 1$ and $N_K = 3$ and (b) $N_B = 3$ and $N_K = 4$.}
\label{Results:CPb_NB_1_3_NQ_3_N_K_3_N_U_64_delta_t_index_1_fcIndex_3_SNRIndex_4_1}
\end{figure}

\begin{figure}[htb!]
\centering
%\subfloat[]{\includegraphics[height=1.8in, width= 1.5in]{Results/ber_ring_ratio_parallel.pdf}
\subfloat[]{\includegraphics[ width= 3.2in]{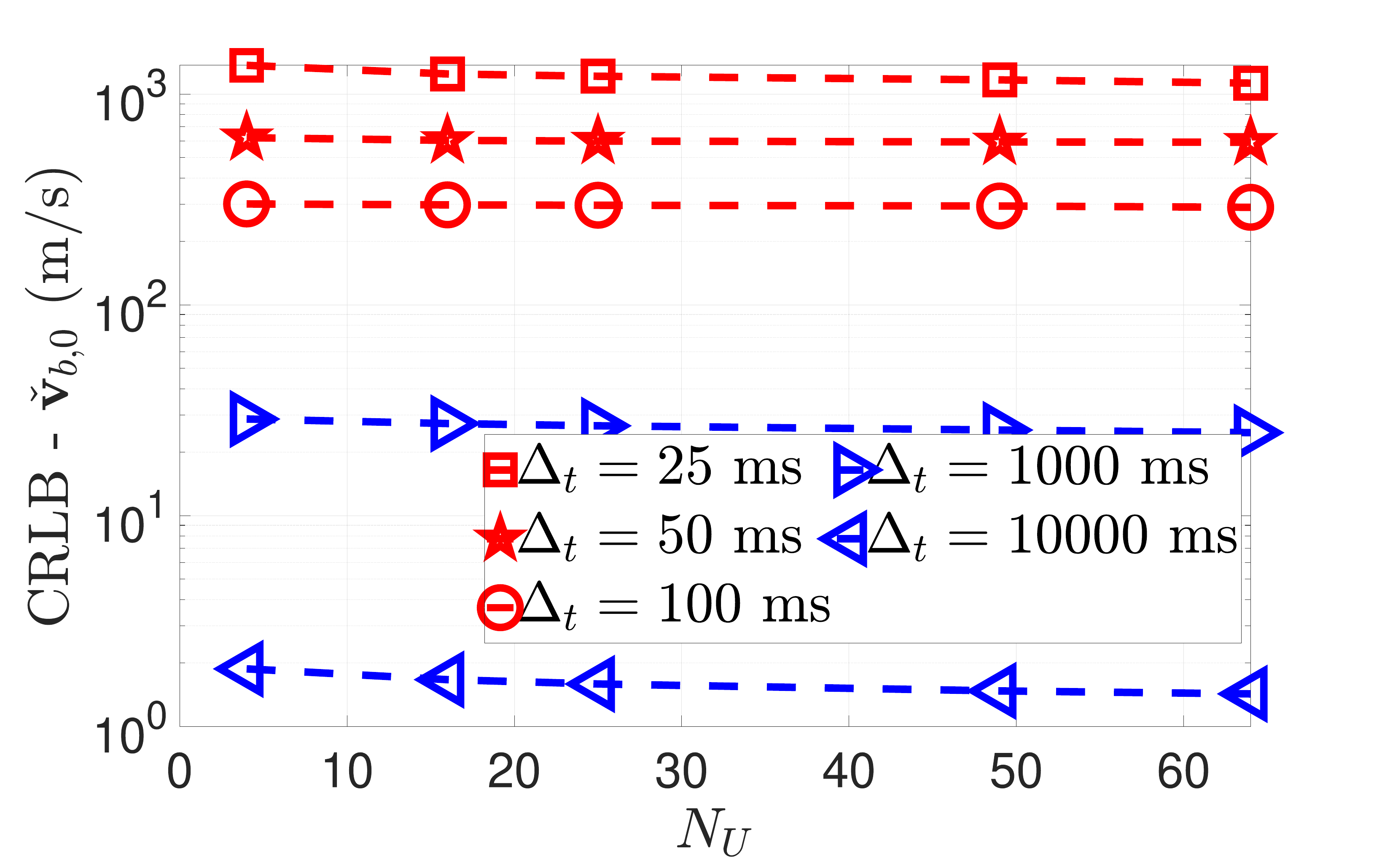}
\label{fig:Results/CVb/_NB_1_NQ_3_N_K_3_N_U_64_delta_t_index_1_fcIndex_3_SNRIndex_4}}
\hfil
\subfloat[]{\includegraphics[ width= 3.2in]{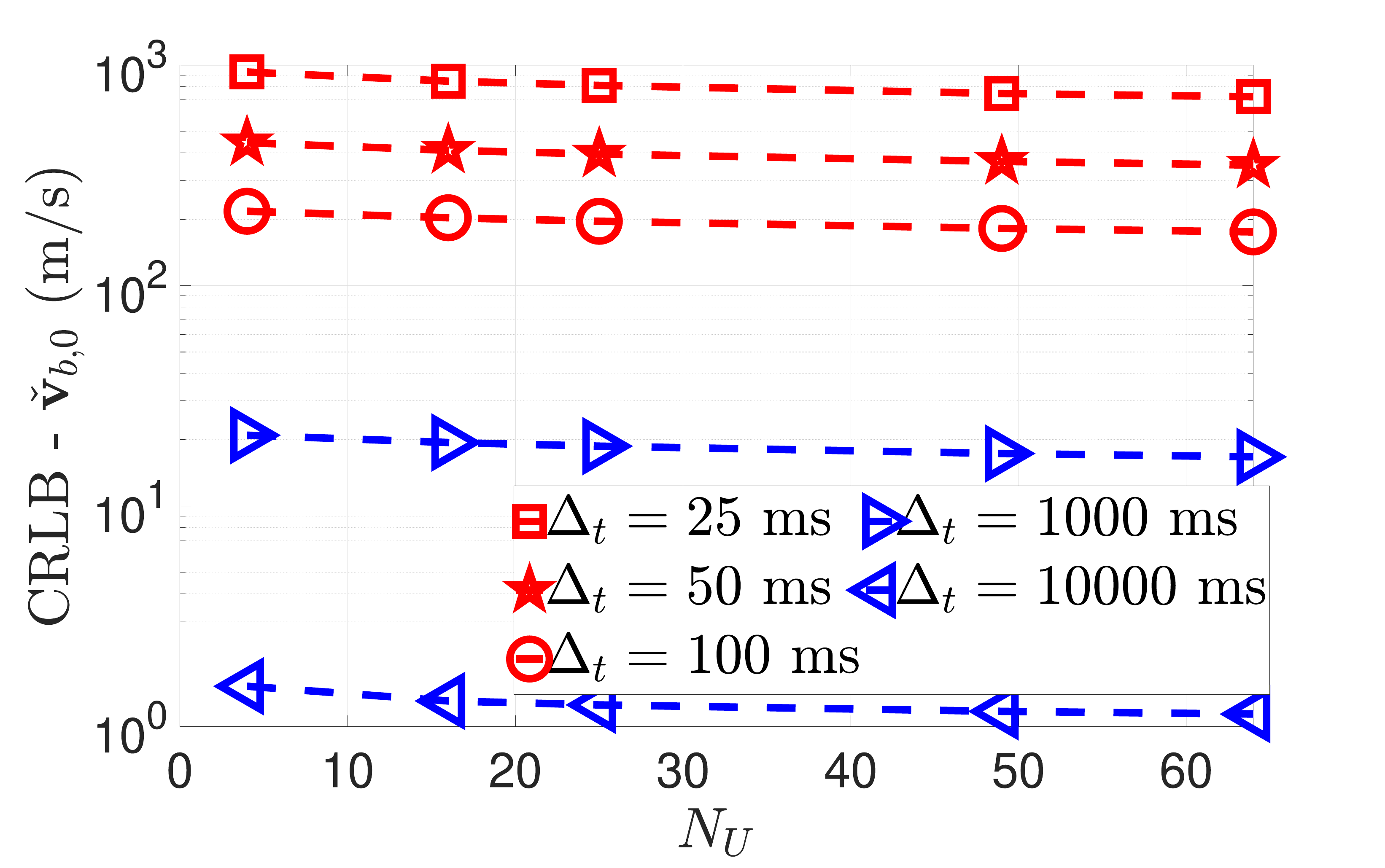}
\label{fig:Results/CVb/_NB_3_NQ_3_N_K_4_N_U_64_delta_t_index_1_fcIndex_3_SNRIndex_4}}
\caption{CRLB of $\check{\mathbf{v}}_{b,0}$ as a function of $N_U$, $f_c = 40 \text{ GHz}$, SNR of $20 \text{ dB}$ which is constant across all links,  and $N_Q = 3$: (a) $N_B = 1$ and $N_K = 3$ and (b) $N_B = 3$ and $N_K = 4$.}
\label{Results:CVb_NB_1_3_NQ_3_N_K_3_N_U_64_delta_t_index_1_fcIndex_3_SNRIndex_4}
\end{figure}

\begin{figure}[htb!]
\centering
%\subfloat[]{\includegraphics[height=1.8in, width= 1.5in]{Results/ber_ring_ratio_parallel.pdf}
\subfloat[]{\includegraphics[ width= 3.2in]{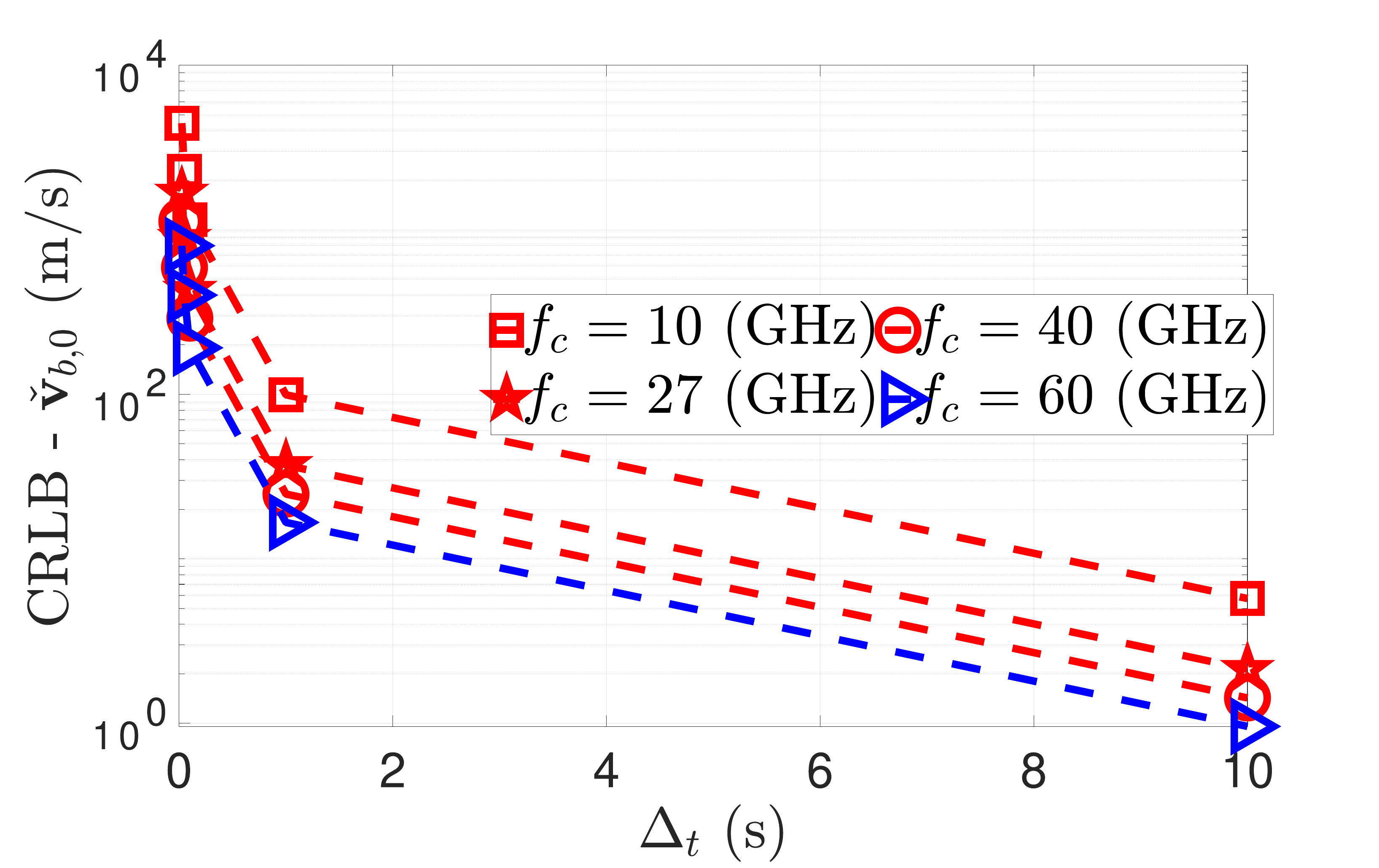}
\label{fig:Results/CVb/_NB_1_NQ_3_N_K_3_N_U_64_delta_t_index_1_fcIndex_3_SNRIndex_4_1}}
\hfil
\subfloat[]{\includegraphics[ width= 3.2in]{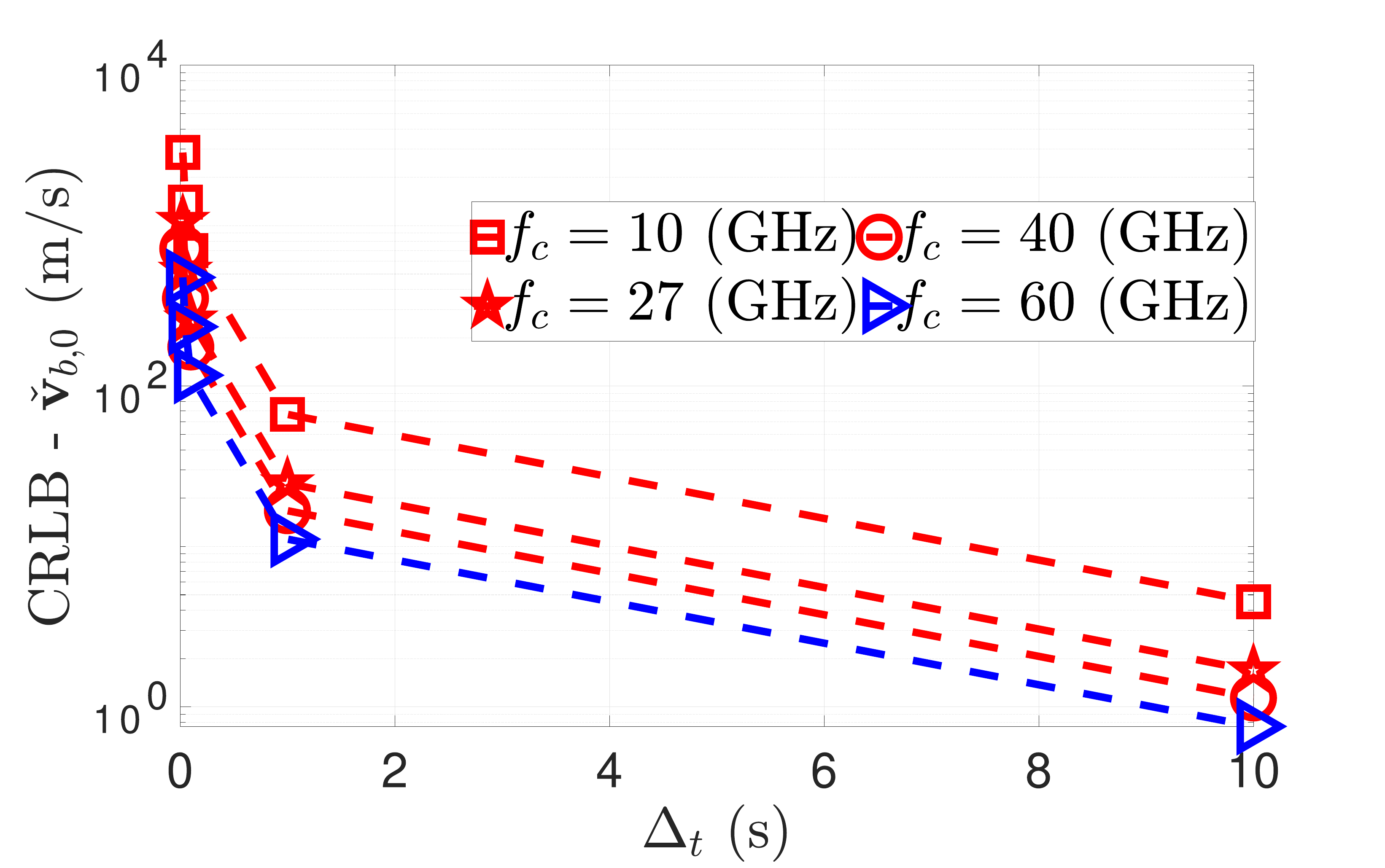}
\label{fig:Results/CVb/_NB_3_NQ_3_N_K_4_N_U_64_delta_t_index_1_fcIndex_3_SNRIndex_4_1}}
\caption{CRLB of $\check{\mathbf{v}}_{b,0}$ as a function of $f_c$, $N_U = 64
$, SNR of $20 \text{ dB}$ which is constant across all links,  and $N_Q = 3$: (a) $N_B = 1$ and $N_K = 3$ and (b) $N_B = 3$ and $N_K = 4$.}
\label{Results:CVb_NB_1_3_NQ_3_N_K_3_N_U_64_delta_t_index_1_fcIndex_3_SNRIndex_4_1}
\end{figure}

\section{Conclusion}

This paper has explored utilizing LEOs and $5$G BSs for both $9$D receiver localization and LEO ephemeris correction. We showed through the FIM that three LEO, three BSs, and four time slots are enough to estimate the $9$D location parameters and correct the LEO position and velocity. We obtained the CRLB and showed that with a single LEO, three time slots, and three BSs, the receiver positioning error, velocity estimation error, orientation error,  LEO position offset estimation error, and LEO velocity offset estimation error is $0.1 \text{ cm}$, $1 \text{ mm/s}$, $10^{-3} \text{ rad}$, $0.01 \text{ m}$, and $1 \text{ m/s}$, respectively. The receiver localization parameters are estimated after $1 \text{ s}$ while the LEO offset parameters are estimated after $20 \text{ s}.$ Our findings illuminate the path forward, revealing that operating frequency has little influence on orientation accuracy and that LEO velocity estimation is unaffected by the number of receiver antennas. This study is a step toward a future where seamless integration of LEO satellites and 5G networks redefines the boundaries of precision and connectivity.

\appendix
	 \label{appendix_channel} 
%\subsection{FIM due to Observations}

\subsection{Entries in transformation matrix}
\label{Appendix_Entries_in_transformation_matrix}
The elements in the transformation matrix are presented in this section. We start by presenting the elements related to the link from the $b^{\text{th}}$ LEO to the receiver. The derivative of the delays from one entity to another with respect to the position is the unit vector from that entity to the other. These derivatives are presented next.
$$
\nabla_{\bm{p}_{U,0}}\tau_{bu,k} \triangleq  \nabla_{\bm{p}_{U.0}}\frac{\left\|\mathbf{p}_{u,k}-\mathbf{p}_{b,k}\right\|}{c} = \nabla_{\bm{p}_{U,0}}\frac{\bm{d}_{bu,k}}{c} = \frac{\bm{\Delta}_{bu,k}}{c},
$$

$$
\nabla_{\bm{p}_{U,0}}\tau_{qu,k} \triangleq  \nabla_{\bm{p}_{U.0}}\frac{\left\|\mathbf{p}_{u,k}-\mathbf{p}_{q,k}\right\|}{c} = \nabla_{\bm{p}_{U,0}}\frac{\bm{d}_{qu,k}}{c} = \frac{\bm{\Delta}_{qu,k}}{c},
$$

$$
\nabla_{\check{\bm{p}}_{b,0}}\tau_{bu,k} \triangleq  \nabla_{\check{\bm{p}}_{b.0}}\frac{\left\|\mathbf{p}_{u,k}-\mathbf{p}_{b,k}\right\|}{c} = \nabla_{\check{\bm{p}}_{b,0}}\frac{\bm{d}_{bu,k}}{c} = -\frac{\bm{\Delta}_{bu,k}}{c},
$$

$$
\nabla_{\check{\bm{p}}_{b,0}}\tau_{bq,k} \triangleq  \nabla_{\check{\bm{p}}_{b.0}}\frac{\left\|\mathbf{p}_{q,k}-\mathbf{p}_{b,k}\right\|}{c} = \nabla_{\check{\bm{p}}_{b,0}}\frac{\bm{d}_{bq,k}}{c} = -\frac{\bm{\Delta}_{bq,k}}{c}.
$$

The derivative of the Dopplers from one entity to another with respect to the position is presented next.
$$
\nabla_{\bm{p}_{U,0}}\nu_{bU,k} \triangleq \frac{(\bm{v}_{b,k} - \bm{v}_{U,k}) - \bm{\Delta}_{bU,k}^{\mathrm{T}}(\bm{v}_{b,k} - \bm{v}_{U,k})\bm{\Delta}_{bU,k}}{c\bm{d}_{bU,k}^{-1}},
$$

$$
\nabla_{\bm{p}_{U,0}}\nu_{qU,k} \triangleq \frac{(\bm{v}_{q,k} - \bm{v}_{U,k}) - \bm{\Delta}_{qU,k}^{\mathrm{T}}(\bm{v}_{q,k} - \bm{v}_{U,k})\bm{\Delta}_{qU,k}}{c\bm{d}_{qU,k}^{-1}},
$$

$$
\nabla_{\check{\bm{p}}_{b,0}}\nu_{bU,k} \triangleq \frac{-(\bm{v}_{b,k} - \bm{v}_{U,k}) +\bm{\Delta}_{bU,k}^{\mathrm{T}}(\bm{v}_{b,k} - \bm{v}_{U,k})\bm{\Delta}_{bU,k}}{c\bm{d}_{bU,k}^{-1}},
$$

$$
\nabla_{\check{\bm{p}}_{b,0}}\nu_{bq,k} \triangleq \frac{-(\bm{v}_{b,k} - \bm{v}_{q,k}) + \bm{\Delta}_{bq,k}^{\mathrm{T}}(\bm{v}_{b,k} - \bm{v}_{q,k})\bm{\Delta}_{bq,k}}{c\bm{d}_{bq,k}^{-1}}.
$$
The derivatives with respect to the receiver's orientation are presented next.
$$
\nabla_{{\alpha}_{U}}\tau_{bu,k} \triangleq   \frac{\bm{\Delta}_{bu,k}^{\mathrm{T}} \nabla_{\alpha_{U}}\bm{Q}_{U}\Tilde{\bm{s}}_{u}}{c},
$$
$$
\nabla_{{\alpha}_{U}}\tau_{qu,k} \triangleq   \frac{\bm{\Delta}_{qu,k}^{\mathrm{T}} \nabla_{\alpha_{U}}\bm{Q}_{U}\Tilde{\bm{s}}_{u}}{c},
$$

$$
\nabla_{{\psi}_{U}}\tau_{bu,k} \triangleq   \frac{\bm{\Delta}_{bu,k}^{\mathrm{T}} \nabla_{\psi_{U}}\bm{Q}_{U}\Tilde{\bm{s}}_{u}}{c},
$$
$$
\nabla_{{\psi}_{U}}\tau_{qu,k} \triangleq   \frac{\bm{\Delta}_{qu,k}^{\mathrm{T}} \nabla_{\psi_{U}}\bm{Q}_{U}\Tilde{\bm{s}}_{u}}{c},
$$

$$
\nabla_{{\varphi}_{U}}\tau_{bu,k} \triangleq   \frac{\bm{\Delta}_{bu,k}^{\mathrm{T}} \nabla_{\varphi_{U}}\bm{Q}_{U}\Tilde{\bm{s}}_{u}}{c},
$$
$$
\nabla_{{\varphi}_{U}}\tau_{qu,k} \triangleq   \frac{\bm{\Delta}_{qu,k}^{\mathrm{T}} \nabla_{\varphi_{U}}\bm{Q}_{U}\Tilde{\bm{s}}_{u}}{c},
$$
$$
\nabla_{\bm{\Phi}_{U}}\tau_{bu,k} \triangleq   
\left[\begin{array}{c}
{\bm{\Delta}_{bu,k}^{\mathrm{T}} \nabla_{\alpha_{U}}\bm{Q}_{U}\Tilde{\bm{s}}_{u}} \\
{\bm{\Delta}_{bu,k}^{\mathrm{T}} \nabla_{\psi_{U}}\bm{Q}_{U}\Tilde{\bm{s}}_{u}} \\
{\bm{\Delta}_{bu,k}^{\mathrm{T}} \nabla_{\varphi_{U}}\bm{Q}_{U}\Tilde{\bm{s}}_{u}}
\end{array}\right].
$$
$$
\nabla_{\bm{\Phi}_{U}}\tau_{qu,k} \triangleq   
\left[\begin{array}{c}
{\bm{\Delta}_{qu,k}^{\mathrm{T}} \nabla_{\alpha_{U}}\bm{Q}_{U}\Tilde{\bm{s}}_{u}} \\
{\bm{\Delta}_{qu,k}^{\mathrm{T}} \nabla_{\psi_{U}}\bm{Q}_{U}\Tilde{\bm{s}}_{u}} \\
{\bm{\Delta}_{qu,k}^{\mathrm{T}} \nabla_{\varphi_{U}}\bm{Q}_{U}\Tilde{\bm{s}}_{u}}
\end{array}\right]
$$
The derivatives of the delays with respect to the velocities are presented next.

$$
\nabla_{\bm{v}_{U,0}}\tau_{bu,k} \triangleq  \nabla_{\bm{v}_{U.0}}\frac{\left\|\mathbf{p}_{u,k}-\mathbf{p}_{b,k}\right\|}{c} = (k) \Delta_{t}\frac{\bm{\Delta}_{bu,k}}{c},
$$

$$
\nabla_{\bm{p}_{U,0}}\tau_{qu,k} \triangleq  \nabla_{\bm{p}_{U.0}}\frac{\left\|\mathbf{p}_{u,k}-\mathbf{p}_{q,k}\right\|}{c} = (k) \Delta_{t}\frac{\bm{\Delta}_{qu,k}}{c},
$$

$$
\nabla_{\check{\bm{p}}_{b,0}}\tau_{bu,k} \triangleq  \nabla_{\check{\bm{p}}_{b.0}}\frac{\left\|\mathbf{p}_{u,k}-\mathbf{p}_{b,k}\right\|}{c} = -(k) \Delta_{t}\frac{\bm{\Delta}_{bu,k}}{c},
$$

$$
\nabla_{\check{\bm{p}}_{b,0}}\tau_{bq,k} \triangleq  \nabla_{\check{\bm{p}}_{b.0}}\frac{\left\|\mathbf{p}_{q,k}-\mathbf{p}_{b,k}\right\|}{c} =  -(k) \Delta_{t}\frac{\bm{\Delta}_{bq,k}}{c}.
$$
The derivatives of the Dopplers with respect to the velocities are presented next.
$$
\nabla_{\bm{v}_{U,0}}\nu_{bU,k} \triangleq   -\frac{\bm{\Delta}_{bU,k}}{c},
$$
$$
\nabla_{\bm{v}_{U,0}}\nu_{qU,k} \triangleq   -\frac{\bm{\Delta}_{qU,k}}{c},
$$
$$
\nabla_{\check{\bm{v}}_{b,0}}\nu_{bU,k} \triangleq  \frac{\bm{\Delta}_{bU,k}}{c},
$$
$$
\nabla_{\check{\bm{v}}_{b,0}}\nu_{bq,k} \triangleq  \frac{\bm{\Delta}_{bq,k}}{c}.
$$
\subsection{Proof of elements in $\mathbf{J}_{ \bm{\bm{y}}; \bm{\kappa}_1}^{}$}
\label{Appendix_Entries_in_the_EFIM_Location_1}
The proof of the elements in $\mathbf{J}_{ \bm{\bm{y}}; \bm{\kappa}_1}^{}$ are presented in this section. We start with the elements that are related to the $3D$ position of the receiver.

\subsubsection{Proof of the FIM related to $\bm{p}_{U,0}$}
\label{Appendix_lemma_FIM_3D_position}
The FIM of the $3$D position of the receiver can be written as (\ref{equ_lemma:FIM_3D_position_1}), which  can be simplified to (\ref{equ_lemma:FIM_3D_position_2}). 
Finally, substituting the FIM for the channel parameters
gives us (\ref{equ_lemma:FIM_3D_position}).

\begin{figure*}
\begin{align}
\begin{split}
\label{equ_lemma:FIM_3D_position_1}
\medmath{\bm{F}_{{{y} }}(\bm{y}_{}| \bm{\eta} ;\bm{p}_{U,0},\bm{p}_{U,0}) }
&= \medmath{\sum_{b,k^{'},u^{'},k^{},u^{}} \frac{1}{c^2}\bm{\Delta}_{bu,k} \bm{F}_{{{y} }}({y}_{bu,k}| {\eta} ;{\tau}_{bu,k},{\tau}_{bu^{'},k^{'}}) \bm{\Delta}_{bu^{'},k^{'}}^{\mathrm{T}} +  \frac{1}{c}\bm{\Delta}_{bu^{},k^{}} \bm{F}_{{{y} }}({y}_{bu,k}| {\eta} ;{\tau}_{bu^{},k^{}},{\nu}_{bU,k^{'}}) \nabla_{\bm{p}_{U,0}}^{\mathrm{T}} \nu_{bU,k^{'}}} \\ & \medmath{+ \frac{1}{c}\nabla_{\bm{p}_{U,0}} \nu_{bU,k} \bm{F}_{{{y} }}({y}_{bu,k}| {\eta} ;{\nu}_{bU,k},{\tau}_{bu^{'},k^{'}}) \bm{\Delta}_{bu^{'},k^{'}}^{\mathrm{T}}
+ \nabla_{\bm{p}_{U,0}} \nu_{bU,k} \bm{F}_{{{y} }}({y}_{bu,k}| {\eta} ;{\nu}_{bU,k},{\nu}_{b^{}U,k^{'}}) \nabla_{\bm{p}_{U,0}}^{\mathrm{T}} \nu_{b^{}U,k^{'}}} \\
& \medmath{+ \sum_{q,k^{'},u^{'},k^{},u^{}}\frac{1}{c^2}\bm{\Delta}_{qu,k} \bm{F}_{{{y} }}({y}_{qu,k}| {\eta} ;{\tau}_{qu,k},{\tau}_{qu^{'},k^{'}}) \bm{\Delta}_{qu^{'},k^{'}}^{\mathrm{T}} +  \frac{1}{c}\bm{\Delta}_{qu^{},k^{}} \bm{F}_{{{y} }}({y}_{qu,k}| {\eta} ;{\tau}_{qu^{},k^{}},{\nu}_{qU,k^{'}}) \nabla_{\bm{p}_{U,0}}^{\mathrm{T}} \nu_{qU,k^{'}}} \\ & \medmath{+ \frac{1}{c} \nabla_{\bm{p}_{U,0}} \nu_{qU,k} \bm{F}_{{{y} }}({y}_{qu,k}| {\eta} ;{\nu}_{qU,k},{\tau}_{qu^{'},k^{'}}) \bm{\Delta}_{qu^{'},k^{'}}^{\mathrm{T}}
+ \nabla_{\bm{p}_{U,0}} \nu_{qU,k} \bm{F}_{{{y} }}({y}_{qu,k}| {\eta} ;{\nu}_{qU,k},{\nu}_{q^{}U,k^{'}}) \nabla_{\bm{p}_{U,0}}^{\mathrm{T}} \nu_{q^{}U,k^{'}}},
\end{split}
\end{align}
\end{figure*}
\begin{figure*}
\begin{align}
\begin{split}
\label{equ_lemma:FIM_3D_position_2}
\medmath{\bm{F}_{{{y} }}(\bm{y}_{}| \bm{\eta} ;\bm{p}_{U,0},\bm{p}_{U,0})} 
&= \medmath{\sum_{b,k^{},u^{}} \frac{1}{c^2}\bm{\Delta}_{bu,k} \bm{F}_{{{y} }}({y}_{bu,k}| {\eta} ;{\tau}_{bu,k},{\tau}_{bu^{},k^{}}) \bm{\Delta}_{bu^{},k^{}}^{\mathrm{T}} +  \frac{1}{c}\bm{\Delta}_{bu^{},k^{}} \bm{F}_{{{y} }}({y}_{bu,k}| {\eta} ;{\tau}_{bu^{},k^{}},{\nu}_{bU,k^{}}) \nabla_{\bm{p}_{U,0}}^{\mathrm{T}} \nu_{bU,k^{}}} \\ & \medmath{+ \frac{1}{c}\nabla_{\bm{p}_{U,0}} \nu_{bU,k} \bm{F}_{{{y} }}({y}_{bu,k}| {\eta} ;{\nu}_{bU,k},{\tau}_{bu^{},k^{}}) \bm{\Delta}_{bu^{},k^{}}^{\mathrm{T}}
+ \nabla_{\bm{p}_{U,0}} \nu_{bU,k} \bm{F}_{{{y} }}({y}_{bu,k}| {\eta} ;{\nu}_{bU,k},{\nu}_{b^{}U,k^{}}) \nabla_{\bm{p}_{U,0}}^{\mathrm{T}} \nu_{b^{}U,k^{}}} \\
&+ \medmath{\sum_{q,k^{},u^{}}\frac{1}{c^2}\bm{\Delta}_{qu,k} \bm{F}_{{{y} }}({y}_{qu,k}| {\eta} ;{\tau}_{qu,k},{\tau}_{qu^{},k^{}}) \Delta_{qu^{},k^{}}^{\mathrm{T}} +  \frac{1}{c}\bm{\Delta}_{qu^{},k^{}} \bm{F}_{{{y} }}({y}_{qu,k}| {\eta} ;{\tau}_{qu^{},k^{}},{\nu}_{qU,k^{}}) \nabla_{\bm{p}_{U,0}}^{\mathrm{T}} \nu_{qU,k^{}}} \\ & \medmath{+ \frac{1}{c} \nabla_{\bm{p}_{U,0}} \nu_{qU,k} \bm{F}_{{{y} }}({y}_{qu,k}| {\eta} ;{\nu}_{qU,k},{\tau}_{qu^{},k^{}}) \bm{\Delta}_{qu^{},k^{}}^{\mathrm{T}}
+ \nabla_{\bm{p}_{U,0}} \nu_{qU,k} \bm{F}_{{{y} }}({y}_{qu,k}| {\eta} ;{\nu}_{qU,k},{\nu}_{q^{}U,k^{}}) \nabla_{\bm{p}_{U,0}}^{\mathrm{T}} \nu_{q^{}U,k^{}}},
\end{split}
\end{align}
\end{figure*}

\subsubsection{Proof of the FIM related to $\bm{p}_{U,0}$ and $\bm{v}_{U,0}$}
\label{Appendix_lemma_FIM_3D_position_3D_velocity}
The FIM related to $\bm{p}_{U,0}$ and $\bm{v}_{U,0}$ can be written as (\ref{equ_lemma:FIM_3D_position_3D_velocity_1}), which  can be simplified to (\ref{equ_lemma:FIM_3D_position_3D_velocity_2}). 
Finally, substituting the FIM for the channel parameters
gives us (\ref{equ_lemma:FIM_3D_position_3D_velocity}).

\begin{figure*}
\begin{align}
\begin{split}
\label{equ_lemma:FIM_3D_position_3D_velocity_1}
\medmath{\bm{F}_{{{y} }}(\bm{y}_{}| \bm{\eta} ;\bm{p}_{U,0},\bm{v}_{U,0})} 
&= \medmath{\sum_{b,k^{'},u^{'},k^{},u^{}} \frac{(k^{'}) \Delta_{t}}{c^2}\bm{\Delta}_{bu,k} \bm{F}_{{{y} }}({y}_{bu,k}| {\eta} ;{\tau}_{bu,k},{\tau}_{bu^{'},k^{'}}) \bm{\Delta}_{bu^{'},k^{'}}^{\mathrm{T}} -  \frac{1}{c^2}\bm{\Delta}_{bu^{},k^{}} \bm{F}_{{{y} }}({y}_{bu,k}| {\eta} ;{\tau}_{bu^{},k^{}},{\nu}_{bU,k^{'}}) \bm{\Delta}_{bU,k^{'}}^{\mathrm{T}}} \\ &+\medmath{\frac{(k^{'})\Delta_{t}}{c}\nabla_{\bm{p}_{U,0}} \nu_{bU,k} \bm{F}_{{{y} }}({y}_{bu,k}| {\eta} ;{\nu}_{bU,k},{\tau}_{bu^{'},k^{'}}) \Delta_{bu^{'},k^{'}}^{\mathrm{T}}
-\frac{1}{c}\nabla_{\bm{p}_{U,0}} \nu_{bU,k} \bm{F}_{{{y} }}({y}_{bu,k}| {\eta} ;{\nu}_{bU,k},{\nu}_{b^{}U,k^{'}}) \bm{\Delta}_{bU,k^{'}}^{\mathrm{T}}} \\
&+ \medmath{\sum_{q,,k^{'},u^{'},k^{},u^{}}\frac{(k^{'})\Delta_t^{}}{c^2}\bm{\Delta}_{qu,k} \bm{F}_{{{y} }}({y}_{qu,k}| {\eta} ;{\tau}_{qu,k},{\tau}_{qu^{'},k^{'}}) \bm{\Delta}_{qu^{'},k^{'}}^{\mathrm{T}} -  \frac{1}{c^2}\bm{\Delta}_{qu^{},k^{}} \bm{F}_{{{y} }}({y}_{qu,k}| {\eta} ;{\tau}_{qu^{},k^{}},{\nu}_{qU,k^{'}}) \bm{\Delta}_{qU,k^{'}}^{\mathrm{T}}} \\ &+ \medmath{\frac{(k^{'})\Delta_{t}}{c}\bm{\Delta}_{qU,k^{}} \bm{F}_{{{y} }}({y}_{qu,k}| {\eta} ;{\nu}_{qU,k},{\tau}_{qu^{'},k^{'}}) \bm{\Delta}_{qu^{'},k^{'}}^{\mathrm{T}}
-  \frac{1}{c}\nabla_{\bm{p}_{U,0}} \nu_{qU,k^{'}} \bm{F}_{{{y} }}({y}_{qu,k}| {\eta} ;{\nu}_{qU,k},{\nu}_{q^{}U,k^{'}}) \bm{\Delta}_{qU,k^{'}}^{\mathrm{T}}},
\end{split}
\end{align}
\end{figure*}
\begin{figure*}
\begin{align}
\begin{split}
\label{equ_lemma:FIM_3D_position_3D_velocity_2}
\medmath{\bm{F}_{{{y} }}(\bm{y}_{}| \bm{\eta} ;\bm{p}_{U,0},\bm{v}_{U,0})} 
&= \medmath{\sum_{b,k^{},u^{}} \frac{(k^{}) \Delta_{t}}{c^2}\bm{\Delta}_{bu,k} \bm{F}_{{{y} }}({y}_{bu,k}| {\eta} ;{\tau}_{bu,k},{\tau}_{bu^{},k^{}}) \bm{\Delta}_{bu^{},k^{}}^{\mathrm{T}} -  \frac{1}{c^2}\bm{\Delta}_{bu^{},k^{}} \bm{F}_{{{y} }}({y}_{bu,k}| {\eta} ;{\tau}_{bu^{},k^{}},{\nu}_{bU,k^{}}) \bm{\Delta}_{bU,k^{}}^{\mathrm{T}}} \\ &+\medmath{\frac{(k^{})\Delta_{t}}{c}\nabla_{\bm{p}_{U,0}} \nu_{bU,k} \bm{F}_{{{y} }}({y}_{bu,k}| {\eta} ;{\nu}_{bU,k},{\tau}_{bu^{},k^{}}) \Delta_{bu^{},k^{}}^{\mathrm{T}}
-\frac{1}{c}\nabla_{\bm{p}_{U,0}} \nu_{bU,k} \bm{F}_{{{y} }}({y}_{bu,k}| {\eta} ;{\nu}_{bU,k},{\nu}_{b^{}U,k^{}}) \bm{\Delta}_{bU,k^{}}^{\mathrm{T}}} \\
&+ \medmath{\sum_{q,k^{},u^{}}\frac{(k^{})\Delta_t^{}}{c^2}\bm{\Delta}_{qu,k} \bm{F}_{{{y} }}({y}_{qu,k}| {\eta} ;{\tau}_{qu,k},{\tau}_{qu^{},k^{}}) \bm{\Delta}_{qu^{},k^{}}^{\mathrm{T}} -  \frac{1}{c^2}\bm{\Delta}_{qu^{},k^{}} \bm{F}_{{{y} }}({y}_{qu,k}| {\eta} ;{\tau}_{qu^{},k^{}},{\nu}_{qU,k^{}}) \bm{\Delta}_{qU,k^{}}^{\mathrm{T}}} \\ &+ \medmath{\frac{(k^{})\Delta_{t}}{c}\bm{\Delta}_{qU,k^{}} \bm{F}_{{{y} }}({y}_{qu,k}| {\eta} ;{\nu}_{qU,k},{\tau}_{qu^{},k^{}}) \bm{\Delta}_{qu^{},k^{}}^{\mathrm{T}}
-  \frac{1}{c}\nabla_{\bm{p}_{U,0}} \nu_{qU,k^{}} \bm{F}_{{{y} }}({y}_{qu,k}| {\eta} ;{\nu}_{qU,k},{\nu}_{q^{}U,k^{}})\bm{\Delta}_{qU,k^{}}^{\mathrm{T}}},
\end{split}
\end{align}
\end{figure*}

\subsubsection{Proof of the FIM related to $\bm{p}_{U,0}$ and $\bm{\Phi}_{U}$}
\label{Appendix_lemma_FIM_3D_position_3D_orientation}
The FIM related to $\bm{p}_{U,0}$ and $\bm{\Phi}_{U}$ can be written as (\ref{equ_lemma:FIM_3D_position_3D_orientation_1}), which  can be simplified to (\ref{equ_lemma:FIM_3D_position_3D_orientation_2}). 
Finally, substituting the FIM for the channel parameters
gives us (\ref{equ_lemma:FIM_3D_position_3D_orientation}).

\begin{figure*}
\begin{align}
\begin{split}
\label{equ_lemma:FIM_3D_position_3D_orientation_1}
\medmath{\bm{F}_{{{y} }}(\bm{y}_{}| \bm{\eta} ;\bm{p}_{U,0},\bm{\Phi}_{U}) 
= \sum_{b,k^{'},u^{'},k^{},u^{}} \frac{1}{c}\bm{\Delta}_{bu,k} \bm{F}_{{{y} }}({y}_{bu,k}| {\eta} ;{\tau}_{bu,k},{\tau}_{bu^{'},k^{'}}) \nabla_{\bm{\Phi}_{U}}^{\mathrm{T}} \tau_{bu^{'},k^{'}} + \sum_{q,k^{'},u^{'},k^{},u^{}}\frac{1}{c}\bm{\Delta}_{qu,k} \bm{F}_{{{y} }}({y}_{qu,k}| {\eta} ;{\tau}_{qu,k},{\tau}_{qu^{'},k^{'}}) \nabla_{\bm{\Phi}_{U}}^{\mathrm{T}} \tau_{qu^{'},k^{'}}},
\end{split}
\end{align}
\end{figure*}
\begin{figure*}
\begin{align}
\begin{split}
\label{equ_lemma:FIM_3D_position_3D_orientation_2}
\medmath{\bm{F}_{{{y} }}(\bm{y}_{}| \bm{\eta} ;\bm{p}_{U,0},\bm{\Phi}_{U}) 
= \sum_{b,k^{},u^{}} \frac{1}{c}\bm{\Delta}_{bu,k} \bm{F}_{{{y} }}({y}_{bu,k}| {\eta} ;{\tau}_{bu,k},{\tau}_{bu^{},k^{}}) \nabla_{\bm{\Phi}_{U}}^{\mathrm{T}} \tau_{bu^{},k^{}} + \sum_{q,k^{},u^{}}\frac{1}{c}\bm{\Delta}_{qu,k} \bm{F}_{{{y} }}({y}_{qu,k}| {\eta} ;{\tau}_{qu,k},{\tau}_{qu^{},k^{}}) \nabla_{\bm{\Phi}_{U}}^{\mathrm{T}} \tau_{qu^{},k^{}}},
\end{split}
\end{align}
\end{figure*}

\subsubsection{Proof of the FIM related to $\bm{p}_{U,0}$ and $\check{\bm{p}}_{b,0}$}
\label{Appendix_lemma_FIM_3D_position_3D_b_th_position_offset}
The FIM relating the $3$D position and the uncertainty in position associated with the $b^{\text{th}}$ LEO can be written as (\ref{equ_lemma:FIM_3D_position_bth_position_1}), which  can be simplified to (\ref{equ_lemma:FIM_3D_position_bth_position_2}). 
Finally, substituting the FIM for the channel parameters
gives us (\ref{equ_lemma:FIM_3D_position_3D_b_th_position_offset}).

\begin{figure*}
\begin{align}
\begin{split}
\label{equ_lemma:FIM_3D_position_bth_position_1}
\medmath{\bm{F}_{{{y} }}(\bm{y}_{}| \bm{\eta} ;\bm{p}_{U,0},\check{\bm{p}}_{b,0})}&= \medmath{\sum_{k^{'},u^{'},k^{},u^{}} \frac{-1}{c^2}\bm{\Delta}_{bu,k} \bm{F}_{{{y} }}({y}_{bu,k}| {\eta} ;{\tau}_{bu,k},{\tau}_{bu^{'},k^{'}}) \bm{\Delta}_{bu^{'},k^{'}}^{\mathrm{T}} +  \frac{1}{c}\bm{\Delta}_{bu^{},k^{}} \bm{F}_{{{y} }}({y}_{bu,k}| {\eta} ;{\tau}_{bu^{},k^{}},{\nu}_{bU,k^{'}}) \nabla_{\check{\bm{p}}_{b,0}}^{\mathrm{T}} \nu_{bU,k^{'}}} \\ &+ \medmath{\frac{-1}{c}\nabla_{\bm{p}_{U,0}} \nu_{bU,k} \bm{F}_{{{y} }}({y}_{bu,k}| {\eta} ;{\nu}_{bU,k},{\tau}_{bu^{'},k^{'}}) \bm{\Delta}_{bu^{'},k^{'}}^{\mathrm{T}}
+ \nabla_{\bm{p}_{U,0}} \nu_{bU,k} \bm{F}_{{{y} }}({y}_{bu,k}| {\eta} ;{\nu}_{bU,k},{\nu}_{b^{}U,k^{'}}) \nabla_{\check{\bm{p}}_{b,0}}^{\mathrm{T}} \nu_{b^{}U,k^{'}}},
\end{split}
\end{align}
\end{figure*}
\begin{figure*}
\begin{align}
\begin{split}
\label{equ_lemma:FIM_3D_position_bth_position_2}
\medmath{\bm{F}_{{{y} }}(\bm{y}_{}| \bm{\eta} ;\bm{p}_{U,0},\check{\bm{p}}_{b,0})}
&= \medmath{\sum_{k^{},u^{}} \frac{-1}{c^2}\bm{\Delta}_{bu,k} \bm{F}_{{{y} }}({y}_{bu,k}| {\eta} ;{\tau}_{bu,k},{\tau}_{bu^{},k^{}}) \bm{\Delta}_{bu^{},k^{}}^{\mathrm{T}} +  \frac{1}{c}\bm{\Delta}_{bu^{},k^{}} \bm{F}_{{{y} }}({y}_{bu,k}| {\eta} ;{\tau}_{bu^{},k^{}},{\nu}_{bU,k^{}}) \nabla_{\check{\bm{p}}_{b,0}}^{\mathrm{T}} \nu_{bU,k^{}}} \\ & \medmath{- \frac{1}{c}\nabla_{\bm{p}_{U,0}} \nu_{bU,k} \bm{F}_{{{y} }}({y}_{bu,k}| {\eta} ;{\nu}_{bU,k},{\tau}_{bu^{},k^{}}) \bm{\Delta}_{bu^{},k^{}}^{\mathrm{T}}
+ \nabla_{\bm{p}_{U,0}} \nu_{bU,k} \bm{F}_{{{y} }}({y}_{bu,k}| {\eta} ;{\nu}_{bU,k},{\nu}_{b^{}U,k^{}}) \nabla_{\check{\bm{p}}_{b,0}}^{\mathrm{T}}} \nu_{b^{}U,k^{}},
\end{split}
\end{align}
\end{figure*}

\subsubsection{Proof of the FIM related to $\bm{p}_{U,0}$ and $\check{\bm{v}}_{b,0}$}
\label{Appendix_lemma_FIM_3D_position_3D_b_th_velocity_offset}
The FIM relating the $3$D position and the uncertainty in velocity associated with the $b^{\text{th}}$ LEO can be written as (\ref{equ_lemma:FIM_3D_position_bth_velocity_1}), which  can be simplified to (\ref{equ_lemma:FIM_3D_position_bth_velocity_2}). 
Finally, substituting the FIM for the channel parameters
gives us (\ref{equ_lemma:FIM_3D_position_3D_b_th_velocity_offset}).

\begin{figure*}
\begin{align}
\begin{split}
\label{equ_lemma:FIM_3D_position_bth_velocity_1}
\medmath{\bm{F}_{{{y} }}(\bm{y}_{}| \bm{\eta} ;\bm{p}_{U,0},\check{\bm{v}}_{b,0})} 
&= \medmath{\sum_{k^{'},u^{'},k^{},u^{}} \frac{-(k^{'}) \Delta_{t}}{c^2}\bm{\Delta}_{bu,k} \bm{F}_{{{y} }}({y}_{bu,k}| {\eta} ;{\tau}_{bu,k},{\tau}_{bu^{'},k^{'}}) \bm{\Delta}_{bu^{'},k^{'}}^{\mathrm{T}} +  \frac{1}{c^2}\bm{\Delta}_{bu^{},k^{}} \bm{F}_{{{y} }}({y}_{bu,k}| {\eta} ;{\tau}_{bu^{},k^{}},{\nu}_{bU,k^{'}}) \bm{\Delta}_{bU,k^{'}}^{\mathrm{T}}} \\ &-\medmath{\frac{(k^{'})\Delta_{t}}{c}\nabla_{\bm{p}_{U,0}} \nu_{bU,k} \bm{F}_{{{y} }}({y}_{bu,k}| {\eta} ;{\nu}_{bU,k},{\tau}_{bu^{'},k^{'}}) \Delta_{bu^{'},k^{'}}^{\mathrm{T}}
+ \frac{1}{c}\nabla_{\bm{p}_{U,0}} \nu_{bU,k} \bm{F}_{{{y} }}({y}_{bu,k}| {\eta} ;{\nu}_{bU,k},{\nu}_{b^{}U,k^{'}}) \bm{\Delta}_{bU,k^{'}}^{\mathrm{T}}} 
\end{split}
\end{align}
\end{figure*}
\begin{figure*}
\begin{align}
\begin{split}
\label{equ_lemma:FIM_3D_position_bth_velocity_2}
\medmath{\bm{F}_{{{y} }}(\bm{y}_{}| \bm{\eta} ;\bm{p}_{U,0},\check{\bm{v}}_{b,0})} 
&= \medmath{\sum_{k^{},u^{}} \frac{-(k^{}) \Delta_{t}}{c^2}\bm{\Delta}_{bu,k} \bm{F}_{{{y} }}({y}_{bu,k}| {\eta} ;{\tau}_{bu,k},{\tau}_{bu^{},k^{}}) \bm{\Delta}_{bu^{},k^{}}^{\mathrm{T}} +  \frac{1}{c^2}\bm{\Delta}_{bu^{},k^{}} \bm{F}_{{{y} }}({y}_{bu,k}| {\eta} ;{\tau}_{bu^{},k^{}},{\nu}_{bU,k^{}}) \bm{\Delta}_{bU,k^{}}^{\mathrm{T}}} \\ &-\medmath{\frac{(k^{})\Delta_{t}}{c}\nabla_{\bm{p}_{U,0}} \nu_{bU,k} \bm{F}_{{{y} }}({y}_{bu,k}| {\eta} ;{\nu}_{bU,k},{\tau}_{bu^{},k^{}}) \Delta_{bu^{},k^{}}^{\mathrm{T}}
+ \frac{1}{c}\nabla_{\bm{p}_{U,0}} \nu_{bU,k} \bm{F}_{{{y} }}({y}_{bu,k}| {\eta} ;{\nu}_{bU,k},{\nu}_{b^{}U,k^{}}) \bm{\Delta}_{bU,k^{}}^{\mathrm{T}}} 
\end{split}
\end{align}
\end{figure*}

\subsubsection{Proof of the FIM related to $\bm{v}_{U,0}$}
\label{Appendix_lemma_FIM_3D_velocity_3D_velocity}
The FIM of the $3$D velocity of the receiver can be written as (\ref{equ_lemma:FIM_3D_velocity_1}), which can be simplified to (\ref{equ_lemma:FIM_3D_velocity_2}). 
Finally, substituting the FIM for the channel parameters
gives us (\ref{equ_lemma:FIM_3D_velocity_3D_velocity}).

\begin{figure*}
\begin{align}
\begin{split}
\label{equ_lemma:FIM_3D_velocity_1}
\medmath{\bm{F}_{{{y} }}(\bm{y}_{}| \bm{\eta} ;\bm{v}_{U,0},\bm{v}_{U,0})} 
&= \medmath{\sum_{b,k^{'},u^{'},k^{},u^{}} \frac{(k)(k^{'}) \Delta_{t}^2}{c^2}\bm{\Delta}_{bu,k} \bm{F}_{{{y} }}({y}_{bu,k}| {\eta} ;{\tau}_{bu,k},{\tau}_{bu^{'},k^{'}}) \bm{\Delta}_{bu^{'},k^{'}}^{\mathrm{T}} -  \frac{(k) \Delta_{t}}{c^2}\bm{\Delta}_{bu^{},k^{}} \bm{F}_{{{y} }}({y}_{bu,k}| {\eta} ;{\tau}_{bu^{},k^{}},{\nu}_{bU,k^{'}}) \bm{\Delta}_{bU,k^{'}}^{\mathrm{T}}} \\ &-\medmath{\frac{(k^{'}) \Delta_{t}}{c^2}\bm{\Delta}_{bU,k} \bm{F}_{{{y} }}({y}_{bu,k}| {\eta} ;{\nu}_{bU,k},{\tau}_{bu^{'},k^{'}}) \Delta_{bu^{'},k^{'}}^{\mathrm{T}}
+ \frac{1}{c^2}\bm{\Delta}_{bU,k^{}} \bm{F}_{{{y} }}({y}_{bu,k}| {\eta} ;{\nu}_{bU,k},{\nu}_{b^{}U,k^{'}}) \bm{\Delta}_{bU,k^{'}}^{\mathrm{T}}} \\
&+ \medmath{\sum_{q,k^{'},u^{'},k^{},u^{}}\frac{(k)(k^{'})\Delta_t^{2}}{c^2}\bm{\Delta}_{qu,k} \bm{F}_{{{y} }}({y}_{qu,k}| {\eta} ;{\tau}_{qu,k},{\tau}_{qu^{'},k^{'}}) \bm{\Delta}_{qu^{'},k^{'}}^{\mathrm{T}} -  \frac{(k) \Delta_{t}}{c^2}\bm{\Delta}_{qu^{},k^{}} \bm{F}_{{{y} }}({y}_{qu,k}| {\eta} ;{\tau}_{qu^{},k^{}},{\nu}_{qU,k^{'}}) \bm{\Delta}_{qU,k^{'}}^{\mathrm{T}}} \\ &- \medmath{\frac{(k^{'})\Delta_{t}}{c^2}\bm{\Delta}_{qU,k^{}} \bm{F}_{{{y} }}({y}_{qu,k}| {\eta} ;{\nu}_{qU,k},{\tau}_{qu^{'},k^{'}}) \bm{\Delta}_{qu^{'},k^{'}}^{\mathrm{T}}
+  \frac{1}{c^2}\bm{\Delta}_{qU,k^{}} \bm{F}_{{{y} }}({y}_{qu,k}| {\eta} ;{\nu}_{qU,k},{\nu}_{q^{}U,k^{'}}) \bm{\Delta}_{qU,k^{'}}^{\mathrm{T}}},
\end{split}
\end{align}
\end{figure*}
\begin{figure*}
\begin{align}
\begin{split}
\label{equ_lemma:FIM_3D_velocity_2}
\medmath{\bm{F}_{{{y} }}(\bm{y}_{}| \bm{\eta} ;\bm{v}_{U,0},\bm{v}_{U,0})} 
&= \medmath{\sum_{b,k^{},u^{}} \frac{(k^{})^{2} \Delta_{t}^{2}}{c^2}\bm{\Delta}_{bu,k} \bm{F}_{{{y} }}({y}_{bu,k}| {\eta} ;{\tau}_{bu,k},{\tau}_{bu^{},k^{}}) \bm{\Delta}_{bu^{},k^{}}^{\mathrm{T}} -  \frac{(k) \Delta_{t}}{c^2}\bm{\Delta}_{bu^{},k^{}} \bm{F}_{{{y} }}({y}_{bu,k}| {\eta} ;{\tau}_{bu^{},k^{}},{\nu}_{bU,k^{}}) \bm{\Delta}_{bU,k^{}}^{\mathrm{T}}} \\ &-\medmath{\frac{(k^{}) \Delta_{t}}{c^2}\bm{\Delta}_{bU,k} \bm{F}_{{{y} }}({y}_{bu,k}| {\eta} ;{\nu}_{bU,k},{\tau}_{bu^{},k^{}}) \bm{\Delta}_{bu^{},k^{}}^{\mathrm{T}}
+ \frac{1}{c^2}\bm{\Delta}_{bU,k^{}} \bm{F}_{{{y} }}({y}_{bu,k}| {\eta} ;{\nu}_{bU,k},{\nu}_{b^{}U,k^{}}) \bm{\Delta}_{bU,k^{}}^{\mathrm{T}}} \\
&+ \medmath{\sum_{q,k^{},u^{}}\frac{(k^{})^2\Delta_t^{2}}{c^2}\bm{\Delta}_{qu,k} \bm{F}_{{{y} }}({y}_{qu,k}| {\eta} ;{\tau}_{qu,k},{\tau}_{qu^{},k^{}}) \bm{\Delta}_{qu^{},k^{}}^{\mathrm{T}} -  \frac{(k) \Delta_t}{c^2}\bm{\Delta}_{qu^{},k^{}} \bm{F}_{{{y} }}({y}_{qu,k}| {\eta} ;{\tau}_{qu^{},k^{}},{\nu}_{qU,k^{}}) \bm{\Delta}_{qU,k^{}}^{\mathrm{T}}} \\ &- \medmath{\frac{(k) \Delta_t}{c^2}\bm{\Delta}_{qU,k^{}} \bm{F}_{{{y} }}({y}_{qu,k}| {\eta} ;{\nu}_{qU,k},{\tau}_{qu^{},k^{}}) \bm{\Delta}_{qu^{},k^{}}^{\mathrm{T}}
+  \frac{1}{c^2}\bm{\Delta}_{qU,k^{}} \bm{F}_{{{y} }}({y}_{qu,k}| {\eta} ;{\nu}_{qU,k},{\nu}_{q^{}U,k^{}}) \bm{\Delta}_{qU,k^{}}^{\mathrm{T}}},
\end{split}
\end{align}
\end{figure*}

\subsubsection{Proof of the FIM related to $\bm{v}_{U,0}$ and $\bm{\Phi}_{U}$}
\label{Appendix_lemma_FIM_3D_velocity_3D_orientation}
The FIM related to $\bm{v}_{U,0}$ and $\bm{\Phi}_{U}$ can be written as (\ref{equ_lemma:FIM_3D_velocity_3D_orientation_1}), which  can be simplified to (\ref{equ_lemma:FIM_3D_velocity_3D_orientation_2}). 
Finally, substituting the FIM for the channel parameters
gives us (\ref{equ_lemma:FIM_3D_velocity_3D_orientation}).

\begin{figure*}
\begin{align}
\begin{split}
\label{equ_lemma:FIM_3D_velocity_3D_orientation_1}
\medmath{\bm{F}_{{{y} }}(\bm{y}_{}| \bm{\eta} ;\bm{v}_{U,0},\bm{\Phi}_{U})} 
&= \medmath{\sum_{b,k^{'},u^{'},k^{},u^{}} \frac{(k)\Delta_{t}}{c}\bm{\Delta}_{bu,k} \bm{F}_{{{y} }}({y}_{bu,k}| {\eta} ;{\tau}_{bu,k},{\tau}_{bu^{'},k^{'}}) \nabla_{\bm{\Phi}_{U}}^{\mathrm{T}} \tau_{bu^{'},k^{'}}} \\ & + \medmath{\sum_{q,k^{'},u^{'},k^{},u^{}}\frac{(k)\Delta_{t}}{c}\bm{\Delta}_{qu,k} \bm{F}_{{{y} }}({y}_{qu,k}| {\eta} ;{\tau}_{qu,k},{\tau}_{qu^{'},k^{'}}) \nabla_{\bm{\Phi}_{U}}^{\mathrm{T}} \tau_{qu^{'},k^{'}}},
\end{split}
\end{align}
\end{figure*}
\begin{figure*}
\begin{align}
\begin{split}
\label{equ_lemma:FIM_3D_velocity_3D_orientation_2}
\medmath{\bm{F}_{{{y} }}(\bm{y}_{}| \bm{\eta} ;\bm{v}_{U,0},\bm{\Phi}_{U}) 
= \sum_{b,k^{},u^{}} \frac{(k)\Delta_{t}}{c}\bm{\Delta}_{bu,k} \bm{F}_{{{y} }}({y}_{bu,k}| {\eta} ;{\tau}_{bu,k},{\tau}_{bu^{},k^{}}) \nabla_{\bm{\Phi}_{U}}^{\mathrm{T}} \tau_{bu^{},k^{}} + \sum_{q,k^{},u^{}}\frac{(k)\Delta_{t}}{c}\bm{\Delta}_{qu,k} \bm{F}_{{{y} }}({y}_{qu,k}| {\eta} ;{\tau}_{qu,k},{\tau}_{qu^{},k^{}}) \nabla_{\bm{\Phi}_{U}}^{\mathrm{T}} \tau_{qu^{},k^{}}},
\end{split}
\end{align}
\end{figure*}

\subsubsection{Proof of the FIM related to $\bm{v}_{U,0}$ and $\check{\bm{p}}_{b,0}$}
\label{Appendix_lemma_FIM_3D_velocity_3D_b_th_position_offset}
The FIM relating the $3$D velocity and the uncertainty in position associated with the $b^{\text{th}}$ LEO can be written as (\ref{equ_lemma:FIM_3D_velocity_bth_position_1}), which  can be simplified to (\ref{equ_lemma:FIM_3D_velocity_bth_position_2}). 
Finally, substituting the FIM for the channel parameters
gives us (\ref{equ_lemma:FIM_3D_velocity_3D_b_th_position_offset}).

\begin{figure*}
\begin{align}
\begin{split}
\label{equ_lemma:FIM_3D_velocity_bth_position_1}
\medmath{\bm{F}_{{{y} }}(\bm{y}_{}| \bm{\eta} ;\bm{v}_{U,0},\check{\bm{p}}_{b,0})}&= \medmath{\sum_{k^{'},u^{'},k^{},u^{}} \frac{-(k)\Delta_{t}}{c^2}\bm{\Delta}_{bu,k} \bm{F}_{{{y} }}({y}_{bu,k}| {\eta} ;{\tau}_{bu,k},{\tau}_{bu^{'},k^{'}}) \bm{\Delta}_{bu^{'},k^{'}}^{\mathrm{T}} +  \frac{(k)\Delta_{t}}{c}\bm{\Delta}_{bu^{},k^{}} \bm{F}_{{{y} }}({y}_{bu,k}| {\eta} ;{\tau}_{bu^{},k^{}},{\nu}_{bU,k^{'}}) \nabla_{\check{\bm{p}}_{b,0}}^{\mathrm{T}} \nu_{bU,k^{'}}} \\ &+ \medmath{\frac{1}{c^{2}}\bm{\Delta}_{bU,k} \bm{F}_{{{y} }}({y}_{bu,k}| {\eta} ;{\nu}_{bU,k},{\tau}_{bu^{'},k^{'}}) \bm{\Delta}_{bu^{'},k^{'}}^{\mathrm{T}}
- \frac{1}{c}\bm{\Delta}_{bU,k} \bm{F}_{{{y} }}({y}_{bu,k}| {\eta} ;{\nu}_{bU,k},{\nu}_{b^{'}U,k^{'}}) \nabla_{\check{\bm{p}}_{b,0}}^{\mathrm{T}} \nu_{b^{}U,k^{'}}},
\end{split}
\end{align}
\end{figure*}
\begin{figure*}
\begin{align}
\begin{split}
\label{equ_lemma:FIM_3D_velocity_bth_position_2}
\medmath{\bm{F}_{{{y} }}(\bm{y}_{}| \bm{\eta} ;\bm{v}_{U,0},\check{\bm{p}}_{b,0})}
&= \medmath{\sum_{k^{},u^{}} \frac{-(k)\Delta_{t}}{c^2}\bm{\Delta}_{bu,k} \bm{F}_{{{y} }}({y}_{bu,k}| {\eta} ;{\tau}_{bu,k},{\tau}_{bu^{},k^{}}) \bm{\Delta}_{bu^{},k^{}}^{\mathrm{T}} +  \frac{(k)\Delta_{t}}{c}\bm{\Delta}_{bu^{},k^{}} \bm{F}_{{{y} }}({y}_{bu,k}| {\eta} ;{\tau}_{bu^{},k^{}},{\nu}_{bU,k^{}}) \nabla_{\check{\bm{p}}_{b,0}}^{\mathrm{T}} \nu_{bU,k^{}}} \\ & \medmath{+ \frac{1}{c^{2}}\nabla_{\bm{p}_{U,0}} \nu_{bU,k} \bm{F}_{{{y} }}({y}_{bu,k}| {\eta} ;{\nu}_{bU,k},{\tau}_{bu^{},k^{}}) \bm{\Delta}_{bu^{},k^{}}^{\mathrm{T}}
- \frac{1}{c} \bm{\Delta}_{bU,k} \bm{F}_{{{y} }}({y}_{bu,k}| {\eta} ;{\nu}_{bU,k},{\nu}_{b^{}U,k^{}}) \nabla_{\check{\bm{p}}_{b,0}}^{\mathrm{T}}} \nu_{b^{}U,k^{}},
\end{split}
\end{align}
\end{figure*}

\subsubsection{Proof of the FIM related to $\bm{v}_{U,0}$ and $\check{\bm{v}}_{b,0}$}
\label{Appendix_lemma_FIM_3D_velocity_3D_b_th_velocity_offset}
The FIM relating the $3$D velocity and the uncertainty in velocity associated with the $b^{\text{th}}$ LEO can be written as (\ref{equ_lemma:FIM_3D_velocity_bth_velocity_1}), which  can be simplified to (\ref{equ_lemma:FIM_3D_velocity_bth_velocity_2}). 
Finally, substituting the FIM for the channel parameters
gives us (\ref{equ_lemma:FIM_3D_velocity_3D_b_th_velocity_offset}).

\begin{figure*}
\begin{align}
\begin{split}
\label{equ_lemma:FIM_3D_velocity_bth_velocity_1}
\medmath{\bm{F}_{{{y} }}(\bm{y}_{}| \bm{\eta} ;\bm{v}_{U,0},\check{\bm{v}}_{b,0})} 
&= \medmath{\sum_{k^{'},u^{'},k^{},u^{}} \frac{-(k)(k^{'}) \Delta_{t}^{2}}{c^2}\bm{\Delta}_{bu,k} \bm{F}_{{{y} }}({y}_{bu,k}| {\eta} ;{\tau}_{bu,k},{\tau}_{bu^{'},k^{'}}) \bm{\Delta}_{bu^{'},k^{'}}^{\mathrm{T}} +  \frac{(k)\Delta_{t}}{c^2}\bm{\Delta}_{bu^{},k^{}} \bm{F}_{{{y} }}({y}_{bu,k}| {\eta} ;{\tau}_{bu^{},k^{}},{\nu}_{bU,k^{'}}) \bm{\Delta}_{bU,k^{'}}^{\mathrm{T}}} \\ &-\medmath{\frac{(k^{'})\Delta_{t}}{c^2}\bm{\Delta}_{bU^{},k^{}} \bm{F}_{{{y} }}({y}_{bu,k}| {\eta} ;{\nu}_{bU,k},{\tau}_{bu^{'},k^{'}}) \Delta_{bu^{'},k^{'}}^{\mathrm{T}}
- \frac{1}{c^2}\bm{\Delta}_{bU,k^{}} \bm{F}_{{{y} }}({y}_{bu,k}| {\eta} ;{\nu}_{bU,k},{\nu}_{b^{}U,k^{'}}) \bm{\Delta}_{bU,k^{'}}^{\mathrm{T}}} 
\end{split}
\end{align}
\end{figure*}
\begin{figure*}
\begin{align}
\begin{split}
\label{equ_lemma:FIM_3D_velocity_bth_velocity_2}
\medmath{\bm{F}_{{{y} }}(\bm{y}_{}| \bm{\eta} ;\bm{v}_{U,0},\check{\bm{v}}_{b,0})} 
&= \medmath{\sum_{k^{},u^{}} \frac{-(k^{})^{2} \Delta_{t}}{c^2}\bm{\Delta}_{bu,k} \bm{F}_{{{y} }}({y}_{bu,k}| {\eta} ;{\tau}_{bu,k},{\tau}_{bu^{},k^{}}) \bm{\Delta}_{bu^{},k^{}}^{\mathrm{T}} +  \frac{(k) \Delta_{t}}{c^2}\bm{\Delta}_{bu^{},k^{}} \bm{F}_{{{y} }}({y}_{bu,k}| {\eta} ;{\tau}_{bu^{},k^{}},{\nu}_{bU,k^{}}) \bm{\Delta}_{bU,k^{}}^{\mathrm{T}}} \\ &\medmath{-\frac{(k)\Delta_{t}}{c^2}\bm{\Delta}_{bU^{},k^{}}\bm{F}_{{{y} }}({y}_{bu,k}| {\eta} ;{\nu}_{bU,k},{\tau}_{bu^{},k^{}}) \bm{\Delta}_{bu^{},k^{}}^{\mathrm{T}}
- \frac{1}{c^2}\bm{\Delta}_{bU^{},k^{}}  \bm{F}_{{{y} }}({y}_{bu,k}| {\eta} ;{\nu}_{bU,k},{\nu}_{b^{}U,k^{}}) \bm{\Delta}_{bU,k^{}}^{\mathrm{T}}}
\end{split}
\end{align}
\end{figure*}

\subsubsection{Proof of the FIM related to $\bm{\Phi}_{U}$}
\label{Appendix_lemma_FIM_3D_orientation_3D_orientation}
The FIM related to $\bm{\Phi}_{U}$ can be written as (\ref{equ_lemma:FIM_3D_orientation_3D_orientation_1}), which  can be simplified to (\ref{equ_lemma:FIM_3D_orientation_3D_orientation_2}). 
Finally, substituting the FIM for the channel parameters
gives us (\ref{equ_lemma:FIM_3D_orientation_3D_orientation}).

\begin{figure*}
\begin{align}
\begin{split}
\label{equ_lemma:FIM_3D_orientation_3D_orientation_1}
\medmath{\bm{F}_{{{y} }}(\bm{y}_{}| \bm{\eta} ;\bm{\Phi}_{U},\bm{\Phi}_{U})} 
&= \medmath{\sum_{b,k^{'},u^{'},k^{},u^{}} \nabla_{\bm{\Phi}_{U}} \tau_{bu^{},k^{}} \bm{F}_{{{y} }}({y}_{bu,k}| {\eta} ;{\tau}_{bu,k},{\tau}_{bu^{'},k^{'}}) \nabla_{\bm{\Phi}_{U}}^{\mathrm{T}} \tau_{bu^{'},k^{'}}} + \medmath{\sum_{q,k^{'},u^{'},k^{},u^{}}\nabla_{\bm{\Phi}_{U}} \tau_{qu^{},k^{}}\bm{\Delta}_{qu,k} \bm{F}_{{{y} }}({y}_{qu,k}| {\eta} ;{\tau}_{qu,k},{\tau}_{qu^{'},k^{'}}) \nabla_{\bm{\Phi}_{U}}^{\mathrm{T}} \tau_{qu^{'},k^{'}}},
\end{split}
\end{align}
\end{figure*}
\begin{figure*}
\begin{align}
\begin{split}
\label{equ_lemma:FIM_3D_orientation_3D_orientation_2}
\medmath{\bm{F}_{{{y} }}(\bm{y}_{}| \bm{\eta} ;\bm{\Phi}_{U},\bm{\Phi}_{U})} 
&= \medmath{\sum_{b,k^{},u^{}} \nabla_{\bm{\Phi}_{U}} \tau_{bu^{},k^{}} \bm{F}_{{{y} }}({y}_{bu,k}| {\eta} ;{\tau}_{bu,k},{\tau}_{bu^{},k^{}}) \nabla_{\bm{\Phi}_{U}}^{\mathrm{T}} \tau_{bu^{},k^{}}} + \medmath{\sum_{q,k^{},u^{}}\nabla_{\bm{\Phi}_{U}} \tau_{qu^{},k^{}}\bm{\Delta}_{qu,k} \bm{F}_{{{y} }}({y}_{qu,k}| {\eta} ;{\tau}_{qu,k},{\tau}_{qu^{},k^{}}) \nabla_{\bm{\Phi}_{U}}^{\mathrm{T}} \tau_{qu^{},k^{}}},
\end{split}
\end{align}
\end{figure*}

\subsubsection{Proof of the FIM related to $\bm{\Phi}_{U}$ and $\check{\bm{p}}_{b,0}$}
\label{Appendix_lemma_FIM_3D_orientation_3D_b_th_position_offset}
The FIM relating the $3$D orientation and the uncertainty in position associated with the $b^{\text{th}}$ LEO can be written as (\ref{equ_lemma:FIM_3D_orientation_bth_position_1}), which  can be simplified to (\ref{equ_lemma:FIM_3D_orientation_bth_position_2}). 
Finally, substituting the FIM for the channel parameters
gives us (\ref{equ_lemma:FIM_3D_orientation_3D_b_th_position_offset}).

\begin{figure*}
\begin{align}
\begin{split}
\label{equ_lemma:FIM_3D_orientation_bth_position_1}
\medmath{\bm{F}_{{{y} }}(\bm{y}_{}| \bm{\eta} ;\bm{\Phi}_{U},\check{\bm{p}}_{b,0})}&= \medmath{\sum_{k^{'},u^{'},k^{},u^{}} -\frac{1}{c}\nabla_{\bm{\Phi}_{U}} \tau_{bu^{},k^{}} \bm{F}_{{{y} }}({y}_{bu,k}| {\eta} ;{\tau}_{bu,k},{\tau}_{bu^{'},k^{'}}) \bm{\Delta}_{bu^{'},k^{'}}^{\mathrm{T}}},
\end{split}
\end{align}
\end{figure*}
\begin{figure*}
\begin{align}
\begin{split}
\label{equ_lemma:FIM_3D_orientation_bth_position_2}
\medmath{\bm{F}_{{{y} }}(\bm{y}_{}| \bm{\eta} ;\bm{\Phi}_{U},\check{\bm{p}}_{b,0})}&= \medmath{\sum_{k^{},u^{}} -\frac{1}{c}\nabla_{\bm{\Phi}_{U}} \tau_{bu^{},k^{}} \bm{F}_{{{y} }}({y}_{bu,k}| {\eta} ;{\tau}_{bu,k},{\tau}_{bu^{},k^{}}) \bm{\Delta}_{bu^{},k^{}}^{\mathrm{T}}},
\end{split}
\end{align}
\end{figure*}

\subsubsection{Proof of the FIM related to $\bm{\Phi}_{U}$ and $\check{\bm{v}}_{b,0}$}
\label{Appendix_lemma_FIM_3D_orientation_3D_b_th_velocity_offset}
The FIM relating the $3$D orientation and the uncertainty in velocity associated with the $b^{\text{th}}$ LEO can be written as (\ref{equ_lemma:FIM_3D_orientation_bth_velocity_1}), which  can be simplified to (\ref{equ_lemma:FIM_3D_orientation_bth_velocity_2}). 
Finally, substituting the FIM for the channel parameters
gives us (\ref{equ_lemma:FIM_3D_orientation_3D_b_th_velocity_offset}).

\begin{figure*}
\begin{align}
\begin{split}
\label{equ_lemma:FIM_3D_orientation_bth_velocity_1}
\medmath{\bm{F}_{{{y} }}(\bm{y}_{}| \bm{\eta} ;\bm{\Phi}_{U},\check{\bm{v}}_{b,0})}&= \medmath{\sum_{k^{'},u^{'},k^{},u^{}} -\frac{(k^{'}) \Delta_{t}}{c}\nabla_{\bm{\Phi}_{U}} \tau_{bu^{},k^{}} \bm{F}_{{{y} }}({y}_{bu,k}| {\eta} ;{\tau}_{bu,k},{\tau}_{bu^{'},k^{'}}) \bm{\Delta}_{bu^{'},k^{'}}^{\mathrm{T}}},
\end{split}
\end{align}
\end{figure*}
\begin{figure*}
\begin{align}
\begin{split}
\label{equ_lemma:FIM_3D_orientation_bth_velocity_2}
\medmath{\bm{F}_{{{y} }}(\bm{y}_{}| \bm{\eta} ;\bm{\Phi}_{U},\check{\bm{v}}_{b,0})}&= \medmath{\sum_{k^{},u^{}} -\frac{(k^{}) \Delta_{t}}{c}\nabla_{\bm{\Phi}_{U}} \tau_{bu^{},k^{}} \bm{F}_{{{y} }}({y}_{bu,k}| {\eta} ;{\tau}_{bu,k},{\tau}_{bu^{},k^{}}) \bm{\Delta}_{bu^{},k^{}}^{\mathrm{T}}},
\end{split}
\end{align}
\end{figure*}

\subsubsection{Proof of the FIM related to $\check{\bm{p}}_{b,0}$}
\label{Appendix_lemma_FIM_3D_b_th_position_offset_3D_b_th_position_offset}
The FIM of the $3$D position of the receiver can be written as (\ref{equ_lemma:FIM_3D_bth_position_1}), which  can be simplified to (\ref{equ_lemma:FIM_3D_bth_position_2}). 
Finally, substituting the FIM for the channel parameters
gives us (\ref{equ_lemma:FIM_3D_b_th_position_offset_3D_b_th_position_offset}).

\begin{figure*}
\begin{align}
\begin{split}
\label{equ_lemma:FIM_3D_bth_position_1}
\medmath{\bm{F}_{{{y} }}(\bm{y}_{}| \bm{\eta} ;\check{\bm{p}}_{b,0},\check{\bm{p}}_{b,0}) }
&= \medmath{\sum_{k^{'},u^{'},k^{},u^{}} \frac{1}{c^2}\bm{\Delta}_{bu,k} \bm{F}_{{{y} }}({y}_{bu,k}| {\eta} ;{\tau}_{bu,k},{\tau}_{bu^{'},k^{'}}) \bm{\Delta}_{bu^{'},k^{'}}^{\mathrm{T}} -  \frac{1}{c}\bm{\Delta}_{bu^{},k^{}} \bm{F}_{{{y} }}({y}_{bu,k}| {\eta} ;{\tau}_{bu^{},k^{}},{\nu}_{bU,k^{'}}) \nabla_{\check{\bm{p}}_{b,0}}^{\mathrm{T}} \nu_{bU,k^{'}}} \\ & \medmath{- \frac{1}{c}\nabla_{\check{\bm{p}}_{b,0}} \nu_{bU,k} \bm{F}_{{{y} }}({y}_{bu,k}| {\eta} ;{\nu}_{bU,k},{\tau}_{bu^{'},k^{'}}) \bm{\Delta}_{bu^{'},k^{'}}^{\mathrm{T}}
+ \nabla_{\check{\bm{p}}_{b,0}} \nu_{bU,k} \bm{F}_{{{y} }}({y}_{bu,k}| {\eta} ;{\nu}_{bU,k},{\nu}_{b^{}U,k^{'}}) \nabla_{\check{\bm{p}}_{b,0}}^{\mathrm{T}} \nu_{b^{}U,k^{'}}} \\
& \medmath{+ \sum_{q^{'}k^{'},q^{},k^{}}\frac{1}{c^2}\bm{\Delta}_{bq,k} \bm{F}_{{{y} }}({y}_{bq,k}| {\eta} ;{\tau}_{bq,k},{\tau}_{bq^{'},k^{'}}) \bm{\Delta}_{bq^{'},k^{'}}^{\mathrm{T}} -  \frac{1}{c}\bm{\Delta}_{bq^{},k^{}} \bm{F}_{{{y} }}({y}_{bq,k}| {\eta} ;{\tau}_{bq^{},k^{}},{\nu}_{bq^{'},k^{'}}) \nabla_{\check{\bm{p}}_{b,0}}^{\mathrm{T}} \nu_{bq^{'},k^{'}}} \\ & \medmath{- \frac{1}{c} \nabla_{\check{\bm{p}}_{b,0}} \nu_{bq,k} \bm{F}_{{{y} }}({y}_{bq,k}| {\eta} ;{\nu}_{bq,k},{\tau}_{bq^{'},k^{'}}) \bm{\Delta}_{bq^{'},k^{'}}^{\mathrm{T}}
+ \nabla_{\check{\bm{p}}_{b,0}} \nu_{bq,k} \bm{F}_{{{y} }}({y}_{bq,k}| {\eta} ;{\nu}_{bq,k},{\nu}_{b^{}q^{'},k^{'}}) \nabla_{\check{\bm{p}}_{b,0}}^{\mathrm{T}} \nu_{b^{}q^{'},k^{'}}},
\end{split}
\end{align}
\end{figure*}
\begin{figure*}
\begin{align}
\begin{split}
\label{equ_lemma:FIM_3D_bth_position_2}
\medmath{\bm{F}_{{{y} }}(\bm{y}_{}| \bm{\eta} ;\check{\bm{p}}_{b,0},\check{\bm{p}}_{b,0})} 
&= \medmath{\sum_{k^{},u^{}} \frac{1}{c^2}\bm{\Delta}_{bu,k} \bm{F}_{{{y} }}({y}_{bu,k}| {\eta} ;{\tau}_{bu,k},{\tau}_{bu^{},k^{}}) \bm{\Delta}_{bu^{},k^{}}^{\mathrm{T}} -  \frac{1}{c}\bm{\Delta}_{bu^{},k^{}} \bm{F}_{{{y} }}({y}_{bu,k}| {\eta} ;{\tau}_{bu^{},k^{}},{\nu}_{bU,k^{}}) \nabla_{\check{\bm{p}}_{b,0}}^{\mathrm{T}} \nu_{bU,k^{}}} \\ & \medmath{- \frac{1}{c}\nabla_{\check{\bm{p}}_{b,0}} \nu_{bU,k} \bm{F}_{{{y} }}({y}_{bu,k}| {\eta} ;{\nu}_{bU,k},{\tau}_{bu^{},k^{}}) \bm{\Delta}_{bu^{},k^{}}^{\mathrm{T}}
+ \nabla_{\check{\bm{p}}_{b,0}} \nu_{bU,k} \bm{F}_{{{y} }}({y}_{bu,k}| {\eta} ;{\nu}_{bU,k},{\nu}_{b^{}U,k^{}}) \nabla_{\check{\bm{p}}_{b,0}}^{\mathrm{T}} \nu_{b^{}U,k^{}}} \\
& \medmath{+ \sum_{q^{}k^{},q^{},k^{}}\frac{1}{c^2}\bm{\Delta}_{bq,k} \bm{F}_{{{y} }}({y}_{bq,k}| {\eta} ;{\tau}_{bq,k},{\tau}_{bq^{},k^{}}) \bm{\Delta}_{bq^{},k^{}}^{\mathrm{T}} -  \frac{1}{c}\bm{\Delta}_{bq^{},k^{}} \bm{F}_{{{y} }}({y}_{bq,k}| {\eta} ;{\tau}_{bq^{},k^{}},{\nu}_{bq^{},k^{}}) \nabla_{\check{\bm{p}}_{b,0}}^{\mathrm{T}} \nu_{bq^{},k^{}}} \\ & \medmath{- \frac{1}{c} \nabla_{\check{\bm{p}}_{b,0}} \nu_{bq,k} \bm{F}_{{{y} }}({y}_{bq,k}| {\eta} ;{\nu}_{bq,k},{\tau}_{bq^{},k^{}}) \bm{\Delta}_{bq^{},k^{}}^{\mathrm{T}}
+ \nabla_{\check{\bm{p}}_{b,0}} \nu_{bq,k} \bm{F}_{{{y} }}({y}_{bq,k}| {\eta} ;{\nu}_{bq,k},{\nu}_{b^{}q^{},k^{}}) \nabla_{\check{\bm{p}}_{b,0}}^{\mathrm{T}} \nu_{b^{}q^{},k^{}}},
\end{split}
\end{align}
\end{figure*}

\subsubsection{Proof of the FIM related to $\check{\bm{p}}_{b,0}$ and $\check{\bm{v}}_{b,0}$}
\label{Appendix_lemma_FIM_3D_b_th_position_offset_3D_b_th_velocity_offset}
The FIM related to $\check{\bm{p}}_{b,0}$ and $\check{\bm{v}}_{b,0}$ can be written as (\ref{equ_lemma:FIM_3D_bth_position_3D_bth_velocity_1}), which  can be simplified to (\ref{equ_lemma:FIM_3D_bth_position_3D_bth_velocity_2}). 
Finally, substituting the FIM for the channel parameters
gives us (\ref{equ_lemma:FIM_3D_b_th_position_offset_3D_b_th_velocity_offset}).

\begin{figure*}
\begin{align}
\begin{split}
\label{equ_lemma:FIM_3D_bth_position_3D_bth_velocity_1}
\medmath{\bm{F}_{{{y} }}(\bm{y}_{}| \bm{\eta} ;\check{\bm{p}}_{b,0},\check{\bm{v}}_{b,0})} 
&= \medmath{\sum_{k^{'},u^{'},k^{},u^{}} \frac{-(k^{'}) \Delta_{t}}{c^2}\bm{\Delta}_{bu,k} \bm{F}_{{{y} }}({y}_{bu,k}| {\eta} ;{\tau}_{bu,k},{\tau}_{bu^{'},k^{'}}) \bm{\Delta}_{bu^{'},k^{'}}^{\mathrm{T}} -  \frac{1}{c^2}\bm{\Delta}_{bu^{},k^{}} \bm{F}_{{{y} }}({y}_{bu,k}| {\eta} ;{\tau}_{bu^{},k^{}},{\nu}_{bU,k^{'}}) \bm{\Delta}_{bU,k^{'}}^{\mathrm{T}}} \\ &-\medmath{\frac{(k^{'})\Delta_{t}}{c}\nabla_{\check{\bm{p}}_{b,0}} \nu_{bU,k} \bm{F}_{{{y} }}({y}_{bu,k}| {\eta} ;{\nu}_{bU,k},{\tau}_{bu^{'},k^{'}}) \bm{\Delta}_{bu^{'},k^{'}}^{\mathrm{T}}
+ \frac{1}{c}\nabla_{\check{\bm{p}}_{b,0}} \nu_{bU,k} \bm{F}_{{{y} }}({y}_{bu,k}| {\eta} ;{\nu}_{bU,k},{\nu}_{b^{}U,k^{'}}) \bm{\Delta}_{bU,k^{'}}^{\mathrm{T}}}
\\ &+\medmath{\sum_{q^{'},k^{'},q^{},k^{}} \frac{-(k^{'}) \Delta_{t}}{c^2}\bm{\Delta}_{bq,k} \bm{F}_{{{y} }}({y}_{bq,k}| {\eta} ;{\tau}_{bq,k},{\tau}_{bq^{'},k^{'}}) \bm{\Delta}_{bq^{'},k^{'}}^{\mathrm{T}} -  \frac{1}{c^2}\bm{\Delta}_{bq^{},k^{}} \bm{F}_{{{y} }}({y}_{bq,k}| {\eta} ;{\tau}_{bq^{},k^{}},{\nu}_{bq^{'},k^{'}}) \bm{\Delta}_{bq^{'},k^{'}}^{\mathrm{T}}} \\ &-\medmath{\frac{(k^{'})\Delta_{t}}{c}\nabla_{\check{\bm{p}}_{b,0}} \nu_{bq,k} \bm{F}_{{{y} }}({y}_{bq,k}| {\eta} ;{\nu}_{bq,k},{\tau}_{bq^{'},k^{'}}) \bm{\Delta}_{bq^{'},k^{'}}^{\mathrm{T}}
+ \frac{1}{c}\nabla_{\check{\bm{p}}_{b,0}} \nu_{bq,k} \bm{F}_{{{y} }}({y}_{bq,k}| {\eta} ;{\nu}_{bq,k},{\nu}_{bq^{'},k^{'}}) \bm{\Delta}_{bq,k^{'}}^{\mathrm{T}}}
\end{split}
\end{align}
\end{figure*}
\begin{figure*}
\begin{align}
\begin{split}
\label{equ_lemma:FIM_3D_bth_position_3D_bth_velocity_2}
\medmath{\bm{F}_{{{y} }}(\bm{y}_{}| \bm{\eta} ;\check{\bm{p}}_{b,0},\check{\bm{v}}_{b,0})} 
&= \medmath{\sum_{k^{},u^{}} \frac{-(k^{}) \Delta_{t}}{c^2}\bm{\Delta}_{bu,k} \bm{F}_{{{y} }}({y}_{bu,k}| {\eta} ;{\tau}_{bu,k},{\tau}_{bu^{},k^{}}) \bm{\Delta}_{bu^{},k^{}}^{\mathrm{T}} -  \frac{1}{c^2}\bm{\Delta}_{bu^{},k^{}} \bm{F}_{{{y} }}({y}_{bu,k}| {\eta} ;{\tau}_{bu^{},k^{}},{\nu}_{bU,k^{}}) \bm{\Delta}_{bU,k^{}}^{\mathrm{T}}} \\ &-\medmath{\frac{(k^{})\Delta_{t}}{c}\nabla_{\check{\bm{p}}_{b,0}} \nu_{bU,k} \bm{F}_{{{y} }}({y}_{bu,k}| {\eta} ;{\nu}_{bU,k},{\tau}_{bu^{},k^{}}) \bm{\Delta}_{bu^{},k^{}}^{\mathrm{T}}
+ \frac{1}{c}\nabla_{\check{\bm{p}}_{b,0}} \nu_{bU,k} \bm{F}_{{{y} }}({y}_{bu,k}| {\eta} ;{\nu}_{bU,k},{\nu}_{b^{}U,k^{}}) \bm{\Delta}_{bU,k^{}}^{\mathrm{T}}}
\\ &+\medmath{\sum_{q^{},k^{}} \frac{-(k^{}) \Delta_{t}}{c^2}\bm{\Delta}_{bq,k} \bm{F}_{{{y} }}({y}_{bq,k}| {\eta} ;{\tau}_{bq,k},{\tau}_{bq^{},k^{}}) \bm{\Delta}_{bq^{},k^{}}^{\mathrm{T}} -  \frac{1}{c^2}\bm{\Delta}_{bq^{},k^{}} \bm{F}_{{{y} }}({y}_{bq,k}| {\eta} ;{\tau}_{bq^{},k^{}},{\nu}_{bq^{},k^{}}) \bm{\Delta}_{bq^{},k^{}}^{\mathrm{T}}} \\ &-\medmath{\frac{(k^{})\Delta_{t}}{c}\nabla_{\check{\bm{p}}_{b,0}} \nu_{bq,k} \bm{F}_{{{y} }}({y}_{bq,k}| {\eta} ;{\nu}_{bq,k},{\tau}_{bq^{},k^{}}) \bm{\Delta}_{bq^{},k^{}}^{\mathrm{T}}
+ \frac{1}{c}\nabla_{\check{\bm{p}}_{b,0}} \nu_{bq,k} \bm{F}_{{{y} }}({y}_{bq,k}| {\eta} ;{\nu}_{bq,k},{\nu}_{bq^{},k^{}}) \bm{\Delta}_{bq,k^{}}^{\mathrm{T}}}
\end{split}
\end{align}
\end{figure*}

\subsubsection{Proof of the FIM related to $\check{\bm{v}}_{b,0}$}
\label{Appendix_lemma_FIM_3D_b_th_velocity_offset_3D_b_th_velocity_offset}
The FIM related to $\check{\bm{p}}_{b,0}$ and $\check{\bm{v}}_{b,0}$ can be written as (\ref{equ_lemma:FIM_3D_bth_velocity_3D_bth_velocity_1}), which  can be simplified to (\ref{equ_lemma:FIM_3D_bth_velocity_3D_bth_velocity_2}). 
Finally, substituting the FIM for the channel parameters
gives us (\ref{equ_lemma:FIM_3D_b_th_velocity_offset_3D_b_th_velocity_offset}).

\begin{figure*}
\begin{align}
\begin{split}
\label{equ_lemma:FIM_3D_bth_velocity_3D_bth_velocity_1}
\medmath{\bm{F}_{{{y} }}(\bm{y}_{}| \bm{\eta} ;\check{\bm{v}}_{b,0},\check{\bm{v}}_{b,0})} 
&= \medmath{\sum_{k^{'},u^{'},k^{},u^{}} \frac{(k^{})(k^{'}) \Delta_{t}^{2}}{c^2}\bm{\Delta}_{bu,k} \bm{F}_{{{y} }}({y}_{bu,k}| {\eta} ;{\tau}_{bu,k},{\tau}_{bu^{'},k^{'}}) \bm{\Delta}_{bu^{'},k^{'}}^{\mathrm{T}} -  \frac{(k^{}) \Delta_{t}^{}}{c^2}\bm{\Delta}_{bu^{},k^{}} \bm{F}_{{{y} }}({y}_{bu,k}| {\eta} ;{\tau}_{bu^{},k^{}},{\nu}_{bU,k^{'}}) \bm{\Delta}_{bU,k^{'}}^{\mathrm{T}}} \\ &-\medmath{\frac{(k^{'})\Delta_{t}}{c^2}\bm{\Delta}_{bU,k} \bm{F}_{{{y} }}({y}_{bu,k}| {\eta} ;{\nu}_{bU,k},{\tau}_{bu^{'},k^{'}}) \bm{\Delta}_{bu^{'},k^{'}}^{\mathrm{T}}
+ \frac{1}{c^2}\bm{\Delta}_{bU,k} \bm{F}_{{{y} }}({y}_{bu,k}| {\eta} ;{\nu}_{bU,k},{\nu}_{b^{}U,k^{'}}) \bm{\Delta}_{bU,k^{'}}^{\mathrm{T}}}
\\ &+\medmath{\sum_{q^{'},k^{'},q^{},k^{}}\frac{(k^{})(k^{'}) \Delta_{t}^{2}}{c^2}\bm{\Delta}_{bq,k} \bm{F}_{{{y} }}({y}_{bq,k}| {\eta} ;{\tau}_{bq,k},{\tau}_{bq^{'},k^{'}}) \bm{\Delta}_{bq^{'},k^{'}}^{\mathrm{T}} -  \frac{(k) \Delta_{t}}{c^2}\bm{\Delta}_{bq^{},k^{}} \bm{F}_{{{y} }}({y}_{bq,k}| {\eta} ;{\tau}_{bq^{},k^{}},{\nu}_{bq^{'},k^{'}}) \bm{\Delta}_{bq^{'},k^{'}}^{\mathrm{T}}} \\ &-\medmath{\frac{(k^{'})\Delta_{t}}{c^2}\bm{\Delta}_{bq,k} \bm{F}_{{{y} }}({y}_{bq,k}| {\eta} ;{\nu}_{bq,k},{\tau}_{bq^{'},k^{'}}) \bm{\Delta}_{bq^{'},k^{'}}^{\mathrm{T}}
+ \frac{1}{c^2}\bm{\Delta}_{bq,k} \bm{F}_{{{y} }}({y}_{bq,k}| {\eta} ;{\nu}_{bq,k},{\nu}_{bq^{'},k^{'}}) \bm{\Delta}_{bq,k^{'}}^{\mathrm{T}}}
\end{split}
\end{align}
\end{figure*}
\begin{figure*}
\begin{align}
\begin{split}
\label{equ_lemma:FIM_3D_bth_velocity_3D_bth_velocity_2}
\medmath{\bm{F}_{{{y} }}(\bm{y}_{}| \bm{\eta} ;\check{\bm{v}}_{b,0},\check{\bm{v}}_{b,0})} 
&= \medmath{\sum_{k^{},u^{}} \frac{(k^{})^2 \Delta_{t}^{2}}{c^2}\bm{\Delta}_{bu,k} \bm{F}_{{{y} }}({y}_{bu,k}| {\eta} ;{\tau}_{bu,k},{\tau}_{bu^{},k^{}}) \bm{\Delta}_{bu^{},k^{}}^{\mathrm{T}} -  \frac{(k^{}) \Delta_{t}^{}}{c^2}\bm{\Delta}_{bu^{},k^{}} \bm{F}_{{{y} }}({y}_{bu,k}| {\eta} ;{\tau}_{bu^{},k^{}},{\nu}_{bU,k^{}}) \bm{\Delta}_{bU,k^{}}^{\mathrm{T}}} \\ &-\medmath{\frac{(k^{})\Delta_{t}}{c^2}\bm{\Delta}_{bU,k} \bm{F}_{{{y} }}({y}_{bu,k}| {\eta} ;{\nu}_{bU,k},{\tau}_{bu^{},k^{}}) \bm{\Delta}_{bu^{},k^{}}^{\mathrm{T}}
+ \frac{1}{c^2}\bm{\Delta}_{bU,k} \bm{F}_{{{y} }}({y}_{bu,k}| {\eta} ;{\nu}_{bU,k},{\nu}_{b^{}U,k^{}}) \bm{\Delta}_{bU,k^{}}^{\mathrm{T}}}
\\ &+\medmath{\sum_{q^{},k^{}}\frac{(k^{})^{2} \Delta_{t}^{2}}{c^2}\bm{\Delta}_{bq,k} \bm{F}_{{{y} }}({y}_{bq,k}| {\eta} ;{\tau}_{bq,k},{\tau}_{bq^{},k^{}}) \bm{\Delta}_{bq^{},k^{}}^{\mathrm{T}} -  \frac{(k) \Delta_{t}}{c^2}\bm{\Delta}_{bq^{},k^{}} \bm{F}_{{{y} }}({y}_{bq,k}| {\eta} ;{\tau}_{bq^{},k^{}},{\nu}_{bq^{},k^{}}) \bm{\Delta}_{bq^{},k^{}}^{\mathrm{T}}} \\ &-\medmath{\frac{(k^{})\Delta_{t}}{c^2}\bm{\Delta}_{bq,k} \bm{F}_{{{y} }}({y}_{bq,k}| {\eta} ;{\nu}_{bq,k},{\tau}_{bq^{},k^{}}) \bm{\Delta}_{bq^{},k^{}}^{\mathrm{T}}
+ \frac{1}{c^2}\bm{\Delta}_{bq,k} \bm{F}_{{{y} }}({y}_{bq,k}| {\eta} ;{\nu}_{bq,k},{\nu}_{bq^{},k^{}}) \bm{\Delta}_{bq,k^{}}^{\mathrm{T}}}
\end{split}
\end{align}
\end{figure*}

\subsection{Proof for the elements in $\mathbf{J}_{ \bm{\bm{y}}; \bm{\kappa}_1}^{nu}$}
Proofs for the elements in $\mathbf{J}_{ \bm{\bm{y}}; \bm{\kappa}_1}^{nu}$ are presented in this section.
\subsubsection{Proof of Lemma \ref{lemma:information_loss_FIM_3D_position}}
\label{Appendix_lemma_information_loss_FIM_3D_position}
Using the definition of the information loss terms in Definition \ref{definition_EFIM}, we get the first equality (\ref{equ_lemma:information_loss_FIM_3D_position_1}) and after substituting the appropriate FIM for channel parameters, we get the second equality.

\begin{figure*}
\begin{align}
\begin{split}
\label{equ_lemma:information_loss_FIM_3D_position_1}
&\medmath{\bm{G}_{{{y} }}(\bm{y}_{}| \bm{\eta} ;\bm{p}_{U,0},\bm{p}_{U,0})  }= \\   &\medmath{\sum_{b}\Bigg[
 \Bigg[\sum_{k^{},u^{}} \bm{F}_{{{y} }}({y}_{bu,k}| {\eta} ;{\delta}_{bU},{\delta}_{bU}) \Bigg]^{-1} \norm{ \sum_{k^{},u^{}}   \bm{F}_{{{y} }}({y}_{bu,k}| {\eta} ;{\delta}_{bU},{\tau}_{bu,k}) \nabla_{\bm{p}_{U,0}}^{\mathrm{T}} \tau_{bu,k^{}}  }^{2}} \\ &+  \medmath{\Bigg[ \sum_{k,u}  \bm{F}_{{{y} }}({y}_{bu,k}| {\eta} ;{\epsilon}_{bU},{\epsilon}_{bU})\Bigg]^{-1}\norm{\sum_{k^{},u^{}}   \bm{F}_{{{y} }}({y}_{bu,k}| {\eta} ;{\epsilon}_{bU},{\nu}_{bU,k}) \nabla_{\bm{p}_{U,0}}^{\mathrm{T}} \nu_{bU,k^{}} }^2 }
 \\ &\medmath{+\Bigg[
 \Bigg[\sum_{q,k^{},u^{}} \bm{F}_{{{y} }}({y}_{qu,k}| {\eta} ;{\delta}_{QU},{\delta}_{QU}) \Bigg]^{-1} \norm{ \sum_{q,k^{},u^{}}   \bm{F}_{{{y} }}({y}_{qu,k}| {\eta} ;{\delta}_{QU},{\tau}_{qu,k}) \nabla_{\bm{p}_{U,0}}^{\mathrm{T}} \tau_{qu,k^{}}  }^{2}} \\ & \medmath{+  \Bigg[ \sum_{q,k,u}  \bm{F}_{{{y} }}({y}_{qu,k}| {\eta} ;{\epsilon}_{QU},{\epsilon}_{QU})\Bigg]^{-1}\norm{\sum_{q,k^{},u^{}}   \bm{F}_{{{y} }}({y}_{qu,k}| {\eta} ;{\epsilon}_{QU},{\nu}_{qU,k}) \nabla_{\bm{p}_{U,0}}^{\mathrm{T}} \nu_{qU,k^{}} }^2}
 \\ &
 =\medmath{\sum_{b}   \norm{\sum_{k^{},u^{}} \underset{bu^{},k^{}}{\operatorname{SNR}}\bm{\Delta}_{bu^{},k^{}}^{\mathrm{T}} \frac{ \omega_{bU,k}}{c} }}^{2}    \medmath{   \left(\sum_{u,k} \underset{bu,k}{\operatorname{SNR}} \omega_{bU,k}\right)^{\mathrm{-1}}  }  +
 \medmath{   \norm{\sum_{q,u^{},k^{}} \underset{qu^{},k^{}}{\operatorname{SNR}}\bm{\Delta}_{qu^{},k^{}}^{\mathrm{T}} \frac{ \omega_{qU,k}}{c} }}^{2}    \medmath{   \left(\sum_{q,u,k} \underset{qu,k}{\operatorname{SNR}} \omega_{qU,k}\right)^{\mathrm{-1}}  } \\& +  
\medmath{\sum_{b}\norm{{\sum_{u^{},k^{} } \underset{bu^{},k^{}}{\operatorname{SNR}} \; \; \nabla_{\bm{p}_{U,0}}^{\mathrm{T}} \nu_{bU,k^{}}   }  \frac{(f_{c}^{}) (\alpha_{obu,k^{}}^{2})}{2} }^{2}  \left(\sum_{u,k} \frac{\underset{bu,k}{\operatorname{SNR}}  \alpha_{obu,k}^{2}}{2}\right)^{-1}}  + \medmath{\norm{{\sum_{q^{},u,k^{} } \underset{q^{},u,k^{}}{\operatorname{SNR}} \; \; \nabla_{\bm{p}_{U,0}}^{\mathrm{T}} \nu_{qU,k^{}}   }  \frac{(f_{c}^{}) (\alpha_{oqu,k^{}}^{2})}{2} }^{2}  \left(\sum_{q,u,k} \frac{\underset{qu,k}{\operatorname{SNR}}  \alpha_{oqu,k}^{2}}{2}\right)^{-1}}
\end{split}
\end{align}
\end{figure*}

\subsubsection{Proof of Lemma \ref{lemma:information_loss_FIM_3D_position_3D_veloctiy}}
\label{Appendix_lemma_information_loss_FIM_3D_position_3D_velocity}
Using the definition of the information loss terms in Definition \ref{definition_EFIM}, we get the first equality (\ref{equ_lemma:information_loss_FIM_3D_position_3D_velocity_1}) and after substituting the appropriate FIM for channel parameters, we get the second equality.

\begin{figure*}
\begin{align}
\begin{split}
\label{equ_lemma:information_loss_FIM_3D_position_3D_velocity_1}
&\medmath{\bm{G}_{{{y} }}(\bm{y}_{}| \bm{\eta} ;\bm{p}_{U,0},\bm{v}_{U,0})  }= \\   &\medmath{\sum_{b}\Bigg[
 \Bigg[\sum_{k^{},u^{}} \bm{F}_{{{y} }}({y}_{bu,k}| {\eta} ;{\delta}_{bU},{\delta}_{bU}) \Bigg]^{-1} \sum_{k^{'},u^{'}
 } \sum_{k^{},u^{}}   \bm{F}_{{{y} }}({y}_{bu,k}| {\eta} ;{\delta}_{bU},{\tau}_{bu,k}) \bm{F}_{{{y} }}({y}_{bu,k}| {\eta} ;{\delta}_{bU},{\tau}_{bu^{'},k^{'}}) \nabla_{\bm{p}_{U,0}} \tau_{bu,k^{}} \nabla_{\bm{v}_{U,0}}^{\mathrm{T}} \tau_{bu^{'},k^{'}}  } \\ &+  \medmath{\Bigg[ \sum_{k,u}  \bm{F}_{{{y} }}({y}_{bu,k}| {\eta} ;{\epsilon}_{bU},{\epsilon}_{bU})\Bigg]^{-1}\sum_{k^{'},u^{'}}\sum_{k^{},u^{}}   \bm{F}_{{{y} }}({y}_{bu,k}| {\eta} ;{\epsilon}_{bU},{\nu}_{bU,k}) \bm{F}_{{{y} }}({y}_{bu,k}| {\eta} ;{\epsilon}_{bU},{\nu}_{bU,k^{'}}) \nabla_{\bm{p}_{U,0}} \nu_{bU,k^{}} \nabla_{\bm{v}_{U,0}}^{\mathrm{T}} \nu_{bU,k^{'}} }
 \\ &\medmath{+\Bigg[
 \Bigg[\sum_{q,k^{},u^{}} \bm{F}_{{{y} }}({y}_{qu,k}| {\eta} ;{\delta}_{QU},{\delta}_{QU}) \Bigg]^{-1}  \sum_{q^{'}u^{'},k^{'}}\sum_{q,k^{},u^{}}   \bm{F}_{{{y} }}({y}_{qu,k}| {\eta} ;{\delta}_{QU},{\tau}_{qu,k}) \bm{F}_{{{y} }}({y}_{qu,k}| {\eta} ;{\delta}_{QU},{\tau}_{q^{'}u^{'},k^{'}}) \nabla_{\bm{p}_{U,0}} \tau_{qu,k^{}} \nabla_{\bm{v}_{U,0}}^{\mathrm{T}} \tau_{q^{'}u^{'},k^{'}}  } \\ & \medmath{+  \Bigg[ \sum_{q,k,u}  \bm{F}_{{{y} }}({y}_{qu,k}| {\eta} ;{\epsilon}_{QU},{\epsilon}_{QU})\Bigg]^{-1}\sum_{q^{'}u^{'},k^{'}}\sum_{q,k^{},u^{}}   \bm{F}_{{{y} }}({y}_{qu,k}| {\eta} ;{\epsilon}_{QU},{\nu}_{qU,k}) \bm{F}_{{{y} }}({y}_{qu,k}| {\eta} ;{\epsilon}_{QU},{\nu}_{q^{'}U,k^{'}}) \nabla_{\bm{p}_{U,0}} \nu_{qU,k^{}} \nabla_{\bm{v}_{U,0}}^{\mathrm{T}} \nu_{q^{'}U,k^{'}}}
 \\ &
= \medmath{\frac{\Delta_{t}}{c^2} \sum_{b,k^{},u^{}k^{'},u^{'}} \underset{bu^{},k^{}}{\operatorname{SNR}} \underset{bu^{'},k^{'}}{\operatorname{SNR}} } \medmath{   \bm{\Delta}_{bu^{},k^{}}  (k^{'}) {\bm{\Delta}_{bu^{'},k^{'}}^{\mathrm{T}}} \omega_{bU,k} \omega_{bU,k^{'}}}    \medmath{   \left(\sum_{u,k} \underset{bu,k}{\operatorname{SNR}} \omega_{bU,k}\right)^{\mathrm{-1}}  + \frac{\Delta_{t}}{c^2} \sum_{q,q^{'},u^{},k^{}u^{'},k^{'}} \underset{q^{},u,k^{}}{\operatorname{SNR}} \underset{q^{'},u^{'},k^{'}}{\operatorname{SNR}} } \medmath{   \bm{\Delta}_{q^{}u^{},k^{}}}  \\& \medmath{(k^{'})} \medmath{{\bm{\Delta}_{q^{'}u^{'},k^{'}}^{\mathrm{T}}} \omega_{qU,k} \omega_{q^{'}U,k^{'}}}    \medmath{   \left(\sum_{qu,k} \underset{qu,k}{\operatorname{SNR}} \omega_{qU,k}\right)^{\mathrm{-1}} - \frac{1}{c} \sum_{b,k^{},u^{}k^{'},u^{'}}}  \medmath{  \underset{bu^{},k^{}}{\operatorname{SNR}} \underset{bu^{'},k^{'}}{\operatorname{SNR}}  }  \medmath{{ \nabla_{\bm{p}_{U,0}} \nu_{bU,k^{}} \bm{\Delta}_{bU,k^{'}}^{\mathrm{T}}   }  \frac{(f_{c}^{2}) (\alpha_{obu,k^{}}^{2}\alpha_{obu^{'},k^{'}}^{2})}{4}   \left(\sum_{u,k} \frac{\underset{b^{}u,k^{}}{\operatorname{SNR}}  \alpha_{obu,k}^{2}}{2}\right)^{-1}}  \\& - \frac{1}{c}\sum_{q^{},u^{},q^{'},u^{'},k^{},k^{'}} \medmath{  \underset{qu^{},k^{}}{\operatorname{SNR}} \underset{q^{'}u^{'},k^{'}}{\operatorname{SNR}}  }  \medmath{{ \nabla_{\bm{p}_{U,0}} \nu_{qU,k^{}} \bm{\Delta}_{q^{'}U^{},k^{'}}^{\mathrm{T}}   }  \frac{(f_{c}^{2}) (\alpha_{oqu,k^{}}^{2}\alpha_{oq^{'}u^{'},k^{'}}^{2})}{4}  }  \medmath{ \left(\sum_{q,u,k} \frac{\underset{q^{}u,k^{}}{\operatorname{SNR}}  \alpha_{oqu,k}^{2}}{2}\right)^{-1}}
\end{split}
\end{align}
\end{figure*}

\subsubsection{Proof of Lemma \ref{lemma:information_loss_FIM_3D_position_3D_orientation}}
\label{Appendix_lemma_information_loss_FIM_3D_position_3D_orientation}
Using the definition of the information loss terms in Definition \ref{definition_EFIM}, we get the first equality (\ref{equ_lemma:information_loss_FIM_3D_position_3D_orientation_1}) and after substituting the appropriate FIM for channel parameters, we get the second equality.

\begin{figure*}
\begin{align}
\begin{split}
\label{equ_lemma:information_loss_FIM_3D_position_3D_orientation_1}
&\medmath{\bm{G}_{{{y} }}(\bm{y}_{}| \bm{\eta} ;\bm{p}_{U,0},\bm{\Phi}_{U})  }= \\   &\medmath{\sum_{b}\Bigg[
 \Bigg[\sum_{k^{},u^{}} \bm{F}_{{{y} }}({y}_{bu,k}| {\eta} ;{\delta}_{bU},{\delta}_{bU}) \Bigg]^{-1} \sum_{k^{'},u^{'}
 } \sum_{k^{},u^{}}   \bm{F}_{{{y} }}({y}_{bu,k}| {\eta} ;{\delta}_{bU},{\tau}_{bu,k}) \bm{F}_{{{y} }}({y}_{bu,k}| {\eta} ;{\delta}_{bU},{\tau}_{bu^{'},k^{'}}) \nabla_{\bm{p}_{U,0}} \tau_{bu,k^{}} \nabla_{\bm{\Phi}_{U}}^{\mathrm{T}} \tau_{bu^{'},k^{'}}  } 
 \\ &\medmath{+\Bigg[
 \Bigg[\sum_{q,k^{},u^{}} \bm{F}_{{{y} }}({y}_{qu,k}| {\eta} ;{\delta}_{QU},{\delta}_{QU}) \Bigg]^{-1}  \sum_{q^{'},k^{'},u^{'}}\sum_{q,k^{},u^{}}   \bm{F}_{{{y} }}({y}_{qu,k}| {\eta} ;{\delta}_{QU},{\tau}_{qu,k}) \bm{F}_{{{y} }}({y}_{qu,k}| {\eta} ;{\delta}_{QU},{\tau}_{q^{'}u^{'},k^{'}}) \nabla_{\bm{p}_{U,0}} \tau_{qu,k^{}} \nabla_{\bm{\Phi}_{U}}^{\mathrm{T}} \tau_{q^{'}u^{'},k^{'}}  } \\&
= \medmath{\frac{1}{c}\sum_{b,k^{},u^{},k^{'},u^{'}} \underset{bu^{},k^{}}{\operatorname{SNR}} \underset{bu^{'},k^{'}}{\operatorname{SNR}} } \medmath{   \bm{\Delta}_{bu^{},k^{}}  \nabla_{\bm{\Phi}_{U}}^{\mathrm{T}} \tau_{bu^{'},k^{'}}  \omega_{bU,k} \omega_{bU,k^{'}}}    \medmath{   \left(\sum_{u,k} \underset{bu,k}{\operatorname{SNR}} \omega_{bU,k}\right)^{\mathrm{-1}}} \\ &+  \medmath{\frac{1}{c}\sum_{q,q^{'},u^{},k^{}u^{'},k^{'}} \underset{q^{},u,k^{}}{\operatorname{SNR}} \underset{q^{'},u^{'},k^{'}}{\operatorname{SNR}} } \medmath{   \bm{\Delta}_{q^{}u^{},k^{}}   \nabla_{\bm{\Phi}_{U}}^{\mathrm{T}} \tau_{q^{'}u^{'},k^{'}}} \medmath{ \omega_{qU,k} \omega_{q^{'}U,k^{'}}}    \medmath{   \left(\sum_{qu,k} \underset{qu,k}{\operatorname{SNR}} \omega_{qU,k}\right)^{\mathrm{-1}}} 
\end{split}
\end{align}
\end{figure*}

\subsubsection{Proof of Lemma \ref{lemma:information_loss_FIM_3D_position_3D_bth_position_uncertainty}}
\label{Appendix_lemma_information_loss_FIM_3D_position_3D_bth_position_uncertainty}
Using the definition of the information loss terms in Definition \ref{definition_EFIM}, we get the first equality (\ref{equ_lemma:information_loss_FIM_3D_position_3D_bth_position_uncertainty_1}) and after substituting the appropriate FIM for channel parameters, we get the second equality.

\begin{figure*}
\begin{align}
\begin{split}
\label{equ_lemma:information_loss_FIM_3D_position_3D_bth_position_uncertainty_1}
&\medmath{\bm{G}_{{{y} }}(\bm{y}_{}| \bm{\eta} ;\bm{p}_{U,0},\check{\bm{p}}_{b,0})  }= \\   &\medmath{\Bigg[
 \Bigg[\sum_{k^{},u^{}} \bm{F}_{{{y} }}({y}_{bu,k}| {\eta} ;{\delta}_{bU},{\delta}_{bU}) \Bigg]^{-1} \sum_{k^{'},u^{'}
 } \sum_{k^{},u^{}}   \bm{F}_{{{y} }}({y}_{bu,k}| {\eta} ;{\delta}_{bU},{\tau}_{bu,k}) \bm{F}_{{{y} }}({y}_{bu,k}| {\eta} ;{\delta}_{bU},{\tau}_{bu^{'},k^{'}}) \nabla_{\bm{p}_{U,0}} \tau_{bu,k^{}} \nabla_{\check{\bm{p}}_{b,0}}^{\mathrm{T}} \tau_{bu^{'},k^{'}}  } \\ &+  \medmath{\Bigg[ \sum_{k,u}  \bm{F}_{{{y} }}({y}_{bu,k}| {\eta} ;{\epsilon}_{bU},{\epsilon}_{bU})\Bigg]^{-1}\sum_{k^{'},u^{'}}\sum_{k^{},u^{}}   \bm{F}_{{{y} }}({y}_{bu,k}| {\eta} ;{\epsilon}_{bU},{\nu}_{bU,k}) \bm{F}_{{{y} }}({y}_{bu,k}| {\eta} ;{\epsilon}_{bU},{\nu}_{bU,k^{'}}) \nabla_{\bm{p}_{U,0}} \nu_{bU,k^{}} \nabla_{\check{\bm{p}}_{b,0}}^{\mathrm{T}} \nu_{bU,k^{'}} }
 \\ &
= \medmath{ \frac{-1}{c^2} \sum_{k^{},u^{},k^{'},u^{'}} \underset{bu^{},k^{}}{\operatorname{SNR}} \underset{bu^{'},k^{'}}{\operatorname{SNR}} } \medmath{   \bm{\Delta}_{bu^{},k^{}}  {\bm{\Delta}_{bu^{'},k^{'}}^{\mathrm{T}}} \omega_{bU,k} \omega_{bU,k^{'}}}    \medmath{   \left(\sum_{u,k} \underset{bu,k}{\operatorname{SNR}} \omega_{bU,k}\right)^{\mathrm{-1}}}  \\ &+ \sum_{k^{},u^{},k^{'},u^{'}}  \medmath{  \underset{bu^{},k^{}}{\operatorname{SNR}} \underset{bu^{'},k^{'}}{\operatorname{SNR}}  }  \medmath{{ \nabla_{\bm{p}_{U,0}} \nu_{bU,k^{}} \nabla_{\check{\bm{p}}_{b,0}}^{\mathrm{T}} \nu_{bU,k^{'}}   }  \frac{(f_{c}^{2}) (\alpha_{obu,k^{}}^{2}\alpha_{obu^{'},k^{'}}^{2})}{4}   \left(\sum_{u,k} \frac{\underset{b^{}u,k^{}}{\operatorname{SNR}}  \alpha_{obu,k}^{2}}{2}\right)^{-1}}
\end{split}
\end{align}
\end{figure*}

\subsubsection{Proof of Lemma \ref{lemma:information_loss_FIM_3D_position_3D_bth_velocity_uncertainty}}
\label{Appendix_lemma_information_loss_FIM_3D_position_3D_bth_velocity_uncertainty}
Using the definition of the information loss terms in Definition \ref{definition_EFIM}, we get the first equality (\ref{equ_lemma:information_loss_FIM_3D_position_3D_bth_velocity_uncertainty_1}), and after substituting the appropriate FIM for channel parameters, we get the second equality.

\begin{figure*}
\begin{align}
\begin{split}
\label{equ_lemma:information_loss_FIM_3D_position_3D_bth_velocity_uncertainty_1}
&\medmath{\bm{G}_{{{y} }}(\bm{y}_{}| \bm{\eta} ;\bm{p}_{U,0},\check{\bm{v}}_{b,0})  }= \\   &\medmath{\Bigg[
 \Bigg[\sum_{k^{},u^{}} \bm{F}_{{{y} }}({y}_{bu,k}| {\eta} ;{\delta}_{bU},{\delta}_{bU}) \Bigg]^{-1} \sum_{k^{'},u^{'}
 } \sum_{k^{},u^{}}   \bm{F}_{{{y} }}({y}_{bu,k}| {\eta} ;{\delta}_{bU},{\tau}_{bu,k}) \bm{F}_{{{y} }}({y}_{bu,k}| {\eta} ;{\delta}_{bU},{\tau}_{bu^{'},k^{'}}) \nabla_{\bm{p}_{U,0}} \tau_{bu,k^{}} \nabla_{\check{\bm{v}}_{b,0}}^{\mathrm{T}} \tau_{bu^{'},k^{'}}  } \\ &+  \medmath{\Bigg[ \sum_{k,u}  \bm{F}_{{{y} }}({y}_{bu,k}| {\eta} ;{\epsilon}_{bU},{\epsilon}_{bU})\Bigg]^{-1}\sum_{k^{'},u^{'}}\sum_{k^{},u^{}}   \bm{F}_{{{y} }}({y}_{bu,k}| {\eta} ;{\epsilon}_{bU},{\nu}_{bU,k}) \bm{F}_{{{y} }}({y}_{bu,k}| {\eta} ;{\epsilon}_{bU},{\nu}_{bU,k^{'}}) \nabla_{\bm{p}_{U,0}} \nu_{bU,k^{}} \nabla_{\check{\bm{v}}_{b,0}}^{\mathrm{T}} \nu_{bU,k^{'}} }
 \\ &
= \medmath{ -\frac{\Delta_{t}}{c^2} \sum_{k^{},u^{}k^{'},u^{'}} \underset{bu^{},k^{}}{\operatorname{SNR}} \underset{bu^{'},k^{'}}{\operatorname{SNR}} } \medmath{   \bm{\Delta}_{bu^{},k^{}}  (k^{'}) {\bm{\Delta}_{bu^{'},k^{'}}^{\mathrm{T}}} \omega_{bU,k} \omega_{bU,k^{'}}}    \medmath{   \left(\sum_{u,k} \underset{bu,k}{\operatorname{SNR}} \omega_{bU,k}\right)^{\mathrm{-1}}} \\&+ \sum_{k^{},u^{}k^{'},u^{'}}  \medmath{  \underset{bu^{},k^{}}{\operatorname{SNR}} \underset{bu^{'},k^{'}}{\operatorname{SNR}}  }  \medmath{{ \nabla_{\bm{p}_{U,0}} \nu_{bU,k^{}} \bm{\Delta}_{bU,k^{'}}^{\mathrm{T}}   }  \frac{(f_{c}^{2}) (\alpha_{obu,k^{}}^{2}\alpha_{obu^{'},k^{'}}^{2})}{4}   \left(\sum_{u,k} \frac{\underset{b^{}u,k^{}}{\operatorname{SNR}}  \alpha_{obu,k}^{2}}{2}\right)^{-1}} 
\end{split}
\end{align}
\end{figure*}

\subsubsection{Proof of Lemma \ref{lemma:information_loss_FIM_3D_velocity_3D_velocity}}
\label{Appendix_lemma_information_loss_FIM_3D_velocity_3D_velocity}
Using the definition of the information loss terms in Definition \ref{definition_EFIM}, we get the first equality (\ref{equ_lemma:information_loss_FIM_3D_3D_veloctiy_1}) and after substituting the appropriate FIM for channel parameters, we get the second equality.

\begin{figure*}
\begin{align}
\begin{split}
\label{equ_lemma:information_loss_FIM_3D_3D_veloctiy_1}
&\medmath{\bm{G}_{{{y} }}(\bm{y}_{}| \bm{\eta} ;\bm{v}_{U,0},\bm{v}_{U,0})  }= \\   &\medmath{\sum_{b}\Bigg[
 \Bigg[\sum_{k^{},u^{}} \bm{F}_{{{y} }}({y}_{bu,k}| {\eta} ;{\delta}_{bU},{\delta}_{bU}) \Bigg]^{-1} \norm{ \sum_{k^{},u^{}}   \bm{F}_{{{y} }}({y}_{bu,k}| {\eta} ;{\delta}_{bU},{\tau}_{bu,k}) \nabla_{\bm{v}_{U,0}}^{\mathrm{T}} \tau_{bu,k^{}}  }^{2}} \\ &+  \medmath{\Bigg[ \sum_{k,u}  \bm{F}_{{{y} }}({y}_{bu,k}| {\eta} ;{\epsilon}_{bU},{\epsilon}_{bU})\Bigg]^{-1}\norm{\sum_{k^{},u^{}}   \bm{F}_{{{y} }}({y}_{bu,k}| {\eta} ;{\epsilon}_{bU},{\nu}_{bU,k}) \nabla_{\bm{v}_{U,0}}^{\mathrm{T}} \nu_{bU,k^{}} }^2 }
 \\ &\medmath{+\Bigg[
 \Bigg[\sum_{q,k^{},u^{}} \bm{F}_{{{y} }}({y}_{qu,k}| {\eta} ;{\delta}_{QU},{\delta}_{QU}) \Bigg]^{-1} \norm{ \sum_{q,k^{},u^{}}   \bm{F}_{{{y} }}({y}_{qu,k}| {\eta} ;{\delta}_{QU},{\tau}_{qu,k}) \nabla_{\bm{v}_{U,0}}^{\mathrm{T}} \tau_{qu,k^{}}  }^{2}} \\ & \medmath{+  \Bigg[ \sum_{q,k,u}  \bm{F}_{{{y} }}({y}_{qu,k}| {\eta} ;{\epsilon}_{QU},{\epsilon}_{QU})\Bigg]^{-1}\norm{\sum_{q,k^{},u^{}}   \bm{F}_{{{y} }}({y}_{qu,k}| {\eta} ;{\epsilon}_{QU},{\nu}_{qU,k}) \nabla_{\bm{v}_{U,0}}^{\mathrm{T}} \nu_{qU,k^{}} }^2}
 \\ &
 =\medmath{\sum_{b}   \norm{\sum_{k^{},u^{}} \underset{bu^{},k^{}}{\operatorname{SNR}}\bm{\Delta}_{bu^{},k^{}}^{\mathrm{T}} \frac{\Delta_{t}(k) \omega_{bU,k}}{c}}}^{2}    \medmath{   \left(\sum_{u,k} \underset{bu,k}{\operatorname{SNR}} \omega_{bU,k}\right)^{\mathrm{-1}}  }  + \medmath{   \norm{\sum_{q,u^{},k^{}} \underset{qu^{},k^{}}{\operatorname{SNR}}\bm{\Delta}_{qu^{},k^{}}^{\mathrm{T}} \frac{ \Delta_{t}(k)\omega_{qU,k}}{c} }}^{2}    \medmath{   \left(\sum_{q,u,k} \underset{qu,k}{\operatorname{SNR}} \omega_{qU,k}\right)^{\mathrm{-1}}  } \\ &+  
\medmath{\sum_{b}\norm{{\sum_{u^{},k^{} } \underset{bu^{},k^{}}{\operatorname{SNR}} \; \; \bm{\Delta}_{b^{}U^{},k^{}}^{\mathrm{T}}      }  \frac{(f_{c}^{}) (\alpha_{obu,k^{}}^{2})}{2} }^{2}  \left(\sum_{u,k} \frac{\underset{bu,k}{\operatorname{SNR}}  \alpha_{obu,k}^{2}}{2}\right)^{-1}}  + \medmath{\norm{{\sum_{q^{},u,k^{} } \underset{q^{},u,k^{}}{\operatorname{SNR}} \; \; \bm{\Delta}_{q^{}U^{},k^{}}^{\mathrm{T}}   }  \frac{(f_{c}^{}) (\alpha_{oqu,k^{}}^{2})}{2} }^{2}  \left(\sum_{q,u,k} \frac{\underset{qu,k}{\operatorname{SNR}}  \alpha_{oqu,k}^{2}}{2}\right)^{-1}}
\end{split}
\end{align}
\end{figure*}

\subsubsection{Proof of Lemma \ref{lemma:information_loss_FIM_3D_velocity_3D_orientation}}
\label{Appendix_lemma_information_loss_FIM_3D_velocity_3D_orientation}
Using the definition of the information loss terms in Definition \ref{definition_EFIM}, we get the first equality (\ref{equ_lemma:information_loss_FIM_3D_velocity_3D_orientation_1}) and after substituting the appropriate FIM for channel parameters, we get the second equality.

\begin{figure*}
\begin{align}
\begin{split}
\label{equ_lemma:information_loss_FIM_3D_velocity_3D_orientation_1}
&\medmath{\bm{G}_{{{y} }}(\bm{y}_{}| \bm{\eta} ;\bm{v}_{U,0},\bm{\Phi}_{U})  }= \\   &\medmath{\sum_{b}\Bigg[
 \Bigg[\sum_{k^{},u^{}} \bm{F}_{{{y} }}({y}_{bu,k}| {\eta} ;{\delta}_{bU},{\delta}_{bU}) \Bigg]^{-1} \sum_{k^{'},u^{'}
 } \sum_{k^{},u^{}}   \bm{F}_{{{y} }}({y}_{bu,k}| {\eta} ;{\delta}_{bU},{\tau}_{bu,k}) \bm{F}_{{{y} }}({y}_{bu,k}| {\eta} ;{\delta}_{bU},{\tau}_{bu^{'},k^{'}}) \nabla_{\bm{v}_{U,0}} \tau_{bu,k^{}} \nabla_{\bm{\Phi}_{U}}^{\mathrm{T}} \tau_{bu^{'},k^{'}}  } 
 \\ &\medmath{+\Bigg[
 \Bigg[\sum_{q,k^{},u^{}} \bm{F}_{{{y} }}({y}_{qu,k}| {\eta} ;{\delta}_{QU},{\delta}_{QU}) \Bigg]^{-1}  \sum_{q^{'}}\sum_{q,k^{},u^{}}   \bm{F}_{{{y} }}({y}_{qu,k}| {\eta} ;{\delta}_{QU},{\tau}_{qu,k}) \bm{F}_{{{y} }}({y}_{qu,k}| {\eta} ;{\delta}_{QU},{\tau}_{q^{'}u^{'},k^{'}}) \nabla_{\bm{v}_{U,0}} \tau_{qu,k^{}} \nabla_{\bm{\Phi}_{U}}^{\mathrm{T}} \tau_{q^{'}u^{'},k^{'}}  } \\&
= \medmath{\frac{\Delta_{t}}{c}\sum_{b,k^{},u^{}k^{'},u^{'}} \underset{bu^{},k^{}}{\operatorname{SNR}} \underset{bu^{'},k^{'}}{\operatorname{SNR}} } \medmath{ (k)  \bm{\Delta}_{bu^{},k^{}}  \nabla_{\bm{\Phi}_{U}}^{\mathrm{T}} \tau_{bu^{'},k^{'}}  \omega_{bU,k} \omega_{bU,k^{'}}}    \medmath{   \left(\sum_{u,k} \underset{bu,k}{\operatorname{SNR}} \omega_{bU,k}\right)^{\mathrm{-1}}} \\ &+  \medmath{\frac{\Delta_{t}}{c}\sum_{q,q^{'},u^{},k^{}u^{'},k^{'}} \underset{q^{},u,k^{}}{\operatorname{SNR}} \underset{q^{'},u^{'},k^{'}}{\operatorname{SNR}} } \medmath{ (k)   \bm{\Delta}_{q^{}u^{},k^{}}   \nabla_{\bm{\Phi}_{U}}^{\mathrm{T}} \tau_{q^{'}u^{'},k^{'}}} \medmath{ \omega_{qU,k} \omega_{q^{'}U,k^{'}}}    \medmath{   \left(\sum_{qu,k} \underset{qu,k}{\operatorname{SNR}} \omega_{qU,k}\right)^{\mathrm{-1}}} 
\end{split}
\end{align}
\end{figure*}

\subsubsection{Proof of Lemma \ref{lemma:information_loss_FIM_3D_velocity_3D_bth_position_uncertainty}}
\label{Appendix_lemma_information_loss_FIM_3D_velocity_3D_bth_position_uncertainty}
Using the definition of the information loss terms in Definition \ref{definition_EFIM}, we get the first equality (\ref{equ_lemma:information_loss_FIM_3D_velocity_3D_bth_position_uncertainty_1}) and after substituting the appropriate FIM for channel parameters, we get the second equality.

\begin{figure*}
\begin{align}
\begin{split}
\label{equ_lemma:information_loss_FIM_3D_velocity_3D_bth_position_uncertainty_1}
&\medmath{\bm{G}_{{{y} }}(\bm{y}_{}| \bm{\eta} ;\bm{v}_{U,0},\check{\bm{p}}_{b,0})  }= \\   &\medmath{\Bigg[
 \Bigg[\sum_{k^{},u^{}} \bm{F}_{{{y} }}({y}_{bu,k}| {\eta} ;{\delta}_{bU},{\delta}_{bU}) \Bigg]^{-1} \sum_{k^{'},u^{'}
 } \sum_{k^{},u^{}}   \bm{F}_{{{y} }}({y}_{bu,k}| {\eta} ;{\delta}_{bU},{\tau}_{bu,k}) \bm{F}_{{{y} }}({y}_{bu,k}| {\eta} ;{\delta}_{bU},{\tau}_{bu^{'},k^{'}}) \nabla_{\bm{v}_{U,0}} \tau_{bu,k^{}} \nabla_{\check{\bm{p}}_{b,0}}^{\mathrm{T}} \tau_{bu^{'},k^{'}}  } \\ &+  \medmath{\Bigg[ \sum_{k,u}  \bm{F}_{{{y} }}({y}_{bu,k}| {\eta} ;{\epsilon}_{bU},{\epsilon}_{bU})\Bigg]^{-1}\sum_{k^{'},u^{'}}\sum_{k^{},u^{}}   \bm{F}_{{{y} }}({y}_{bu,k}| {\eta} ;{\epsilon}_{bU},{\nu}_{bU,k}) \bm{F}_{{{y} }}({y}_{bu,k}| {\eta} ;{\epsilon}_{bU},{\nu}_{bU,k^{'}}) \nabla_{\bm{v}_{U,0}} \nu_{bU,k^{}} \nabla_{\check{\bm{p}}_{b,0}}^{\mathrm{T}} \nu_{bU,k^{'}} }
 \\ &
= \medmath{\frac{-\Delta_{t}}{c^2} \sum_{k^{},u^{}k^{'},u^{'}} (k) \underset{bu^{},k^{}}{\operatorname{SNR}} \underset{bu^{'},k^{'}}{\operatorname{SNR}} } \medmath{   \bm{\Delta}_{bu^{},k^{}}  {\bm{\Delta}_{bu^{'},k^{'}}^{\mathrm{T}}} \omega_{bU,k} \omega_{bU,k^{'}}}    \medmath{   \left(\sum_{u,k} \underset{bu,k}{\operatorname{SNR}} \omega_{bU,k}\right)^{\mathrm{-1}}} \\ &- \frac{1}{c}\medmath{\sum_{k^{},u^{}k^{'},u^{'}}    \underset{bu^{},k^{}}{\operatorname{SNR}} \underset{bu^{'},k^{'}}{\operatorname{SNR}}  }  \medmath{{ \bm{\Delta}_{b^{}U^{},k^{}} \nabla_{\check{\bm{p}}_{b,0}}^{\mathrm{T}} \nu_{bU,k^{'}}   }  \frac{(f_{c}^{2}) (\alpha_{obu,k^{}}^{2}\alpha_{obu^{'},k^{'}}^{2})}{4}   \left(\sum_{u,k} \frac{\underset{b^{}u,k^{}}{\operatorname{SNR}}  \alpha_{obu,k}^{2}}{2}\right)^{-1}}  
\end{split}
\end{align}
\end{figure*}

\subsubsection{Proof of Lemma \ref{lemma:information_loss_FIM_3D_velocity_3D_bth_velocity_uncertainty}}
\label{Appendix_lemma_information_loss_FIM_3D_velocity_3D_bth_velocity_uncertainty}
Using the definition of the information loss terms in Definition \ref{definition_EFIM}, we get the first equality (\ref{equ_lemma:information_loss_FIM_3D_velocity_3D_bth_velocity_uncertainty_1}), and after substituting the appropriate FIM for channel parameters, we get the second equality.

\begin{figure*}
\begin{align}
\begin{split}
\label{equ_lemma:information_loss_FIM_3D_velocity_3D_bth_velocity_uncertainty_1}
&\medmath{\bm{G}_{{{y} }}(\bm{y}_{}| \bm{\eta} ;\bm{v}_{U,0},\check{\bm{v}}_{b,0})  }= \\   &\medmath{\Bigg[
 \Bigg[\sum_{k^{},u^{}} \bm{F}_{{{y} }}({y}_{bu,k}| {\eta} ;{\delta}_{bU},{\delta}_{bU}) \Bigg]^{-1} \sum_{k^{'},u^{'}
 } \sum_{k^{},u^{}}   \bm{F}_{{{y} }}({y}_{bu,k}| {\eta} ;{\delta}_{bU},{\tau}_{bu,k}) \bm{F}_{{{y} }}({y}_{bu,k}| {\eta} ;{\delta}_{bU},{\tau}_{bu^{'},k^{'}}) \nabla_{\bm{v}_{U,0}} \tau_{bu,k^{}} \nabla_{\check{\bm{v}}_{b,0}}^{\mathrm{T}} \tau_{bu^{'},k^{'}}  } \\ &+  \medmath{\Bigg[ \sum_{k,u}  \bm{F}_{{{y} }}({y}_{bu,k}| {\eta} ;{\epsilon}_{bU},{\epsilon}_{bU})\Bigg]^{-1}\sum_{k^{'},u^{'}}\sum_{k^{},u^{}}   \bm{F}_{{{y} }}({y}_{bu,k}| {\eta} ;{\epsilon}_{bU},{\nu}_{bU,k}) \bm{F}_{{{y} }}({y}_{bu,k}| {\eta} ;{\epsilon}_{bU},{\nu}_{bU,k^{'}}) \nabla_{\bm{v}_{U,0}} \nu_{bU,k^{}} \nabla_{\check{\bm{v}}_{b,0}}^{\mathrm{T}} \nu_{bU,k^{'}} }
 \\ &
=  -\medmath{\frac{\Delta_{t}^{2}}{c^2} \sum_{k^{},u^{}k^{'},u^{'}} \underset{bu^{},k^{}}{\operatorname{SNR}} \underset{bu^{'},k^{'}}{\operatorname{SNR}} } \medmath{   \bm{\Delta}_{bu^{},k^{}} (k) (k^{'}) {\bm{\Delta}_{bu^{'},k^{'}}^{\mathrm{T}}} \omega_{bU,k} \omega_{bU,k^{'}}}    \medmath{   \left(\sum_{u,k} \underset{bu,k}{\operatorname{SNR}} \omega_{bU,k}\right)^{\mathrm{-1}}} \\ -& \frac{1}{c^2}\medmath{  \sum_{k^{},u^{}k^{'},u^{'}}}  \medmath{  \underset{bu^{},k^{}}{\operatorname{SNR}} \underset{bu^{'},k^{'}}{\operatorname{SNR}}  }  \medmath{{ \bm{\Delta}_{bU,k^{}} \bm{\Delta}_{bU,k^{'}}^{\mathrm{T}}   }  \frac{(f_{c}^{2}) (\alpha_{obu,k^{}}^{2}\alpha_{obu^{'},k^{'}}^{2})}{4}   \left(\sum_{u,k} \frac{\underset{b^{}u,k^{}}{\operatorname{SNR}}  \alpha_{obu,k}^{2}}{2}\right)^{-1}}  
\end{split}
\end{align}
\end{figure*}

\subsubsection{Proof of Lemma \ref{lemma:information_loss_FIM_3D_orientation_3D_orientation}}
\label{Appendix_lemma_information_loss_FIM_3D_orientation}
Using the definition of the information loss terms in Definition \ref{definition_EFIM}, we get the first equality (\ref{equ_lemma:information_loss_FIM_3D_orientation_1}) and after substituting the appropriate FIM for channel parameters, we get the second equality.

\begin{figure*}
\begin{align}
\begin{split}
\label{equ_lemma:information_loss_FIM_3D_orientation_1}
&\medmath{\bm{G}_{{{y} }}(\bm{y}_{}| \bm{\eta} ;\bm{\Phi}_{U},\bm{\Phi}_{U})  }= \\   &\medmath{\sum_{b}\Bigg[
 \Bigg[\sum_{k^{},u^{}} \bm{F}_{{{y} }}({y}_{bu,k}| {\eta} ;{\delta}_{bU},{\delta}_{bU}) \Bigg]^{-1} \norm{ \sum_{k^{},u^{}}   \bm{F}_{{{y} }}({y}_{bu,k}| {\eta} ;{\delta}_{bU},{\tau}_{bu,k}) \nabla_{\bm{\Phi}_{U}}^{\mathrm{T}} \tau_{bu,k^{}}  }^{2}}
 \\ &\medmath{+\Bigg[
 \Bigg[\sum_{q,k^{},u^{}} \bm{F}_{{{y} }}({y}_{qu,k}| {\eta} ;{\delta}_{QU},{\delta}_{QU}) \Bigg]^{-1} \norm{ \sum_{q,k^{},u^{}}   \bm{F}_{{{y} }}({y}_{qu,k}| {\eta} ;{\delta}_{QU},{\tau}_{qu,k}) \nabla_{\bm{\Phi}_{U}}^{\mathrm{T}} \tau_{qu,k^{}}  }^{2}} 
 \\ &
 =\sum_{b}   \norm{\sum_{k^{},u^{}} \underset{bu^{},k^{}}{\operatorname{SNR}} \; \omega_{bU,k} \nabla_{\bm{\Phi}_{U}}^{\mathrm{T}} \tau_{bu^{},k^{}}   }^{2}   \medmath{   \left(\sum_{u,k} \underset{bu,k}{\operatorname{SNR}} \; \omega_{bU,k}\right)^{\mathrm{-1}}  }  + 
\medmath{  \norm{\sum_{q^{},k^{},u^{}} \underset{qu^{},k^{}}{\operatorname{SNR}} \; \omega_{qU,k} \nabla_{\bm{\Phi}_{U}}^{\mathrm{T}} \tau_{qu^{},k^{}}   }}^{2}    \medmath{   \left(\sum_{q,u,k} \underset{qu,k}{\operatorname{SNR}} \; \omega_{qu,k}\right)^{\mathrm{-1}}  } 
\end{split}
\end{align}
\end{figure*}

\subsubsection{Proof of Lemma \ref{lemma:information_loss_3D_orientation_3D_bth_position_uncertainty}}
\label{Appendix_lemma_information_loss_FIM_3D_orientation_3D_bth_position_uncertainty}
Using the definition of the information loss terms in Definition \ref{definition_EFIM}, we get the first equality (\ref{equ_lemma:information_loss_FIM_3D_orientation_3D_bth_position_uncertainty_1}) and after substituting the appropriate FIM for channel parameters, we get the second equality.

\begin{figure*}
\begin{align}
\begin{split}
\label{equ_lemma:information_loss_FIM_3D_orientation_3D_bth_position_uncertainty_1}
&\medmath{\bm{G}_{{{y} }}(\bm{y}_{}| \bm{\eta} ;\bm{\Phi}_{U},\check{\bm{p}}_{b,0})  }= \\   &\medmath{\Bigg[
 \Bigg[\sum_{k^{},u^{}} \bm{F}_{{{y} }}({y}_{bu,k}| {\eta} ;{\delta}_{bU},{\delta}_{bU}) \Bigg]^{-1} \sum_{k^{'},u^{'}
 } \sum_{k^{},u^{}}   \bm{F}_{{{y} }}({y}_{bu,k}| {\eta} ;{\delta}_{bU},{\tau}_{bu,k}) \bm{F}_{{{y} }}({y}_{bu,k}| {\eta} ;{\delta}_{bU},{\tau}_{bu^{'},k^{'}}) \nabla_{\bm{\Phi}_{U}} \tau_{bu,k^{}} \nabla_{\check{\bm{p}}_{b,0}}^{\mathrm{T}} \tau_{bu^{'},k^{'}}  } \\ &
=  \medmath{\frac{-1}{c} \sum_{k^{},u^{}k^{'},u^{'}} \underset{bu^{},k^{}}{\operatorname{SNR}} \underset{bu^{'},k^{'}}{\operatorname{SNR}} } \medmath{   \nabla_{\bm{\Phi}_{U}} \tau_{bu^{},k^{}}  {\bm{\Delta}_{bu^{'},k^{'}}^{\mathrm{T}}} \omega_{bU,k} \omega_{bU,k^{'}}}    \medmath{   \left(\sum_{u,k} \underset{bu,k}{\operatorname{SNR}} \omega_{bU,k}\right)^{\mathrm{-1}}}
\end{split}
\end{align}
\end{figure*}

\subsubsection{Proof of Lemma \ref{lemma:information_loss_FIM_3D_orientation_3D_bth_velocity_uncertainty}}
\label{Appendix_lemma_information_loss_FIM_3D_orientation_3D_bth_velocity_uncertainty}
Using the definition of the information loss terms in Definition \ref{definition_EFIM}, we get the first equality (\ref{equ_lemma:information_loss_FIM_3D_orientation_3D_bth_velocity_uncertainty_1}) and after substituting the appropriate FIM for channel parameters, we get the second equality.

\begin{figure*}
\begin{align}
\begin{split}
\label{equ_lemma:information_loss_FIM_3D_orientation_3D_bth_velocity_uncertainty_1}
&\medmath{\bm{G}_{{{y} }}(\bm{y}_{}| \bm{\eta} ;\bm{\Phi}_{U},\check{\bm{v}}_{b,0})  }= \\   &\medmath{\Bigg[
 \Bigg[\sum_{k^{},u^{}} \bm{F}_{{{y} }}({y}_{bu,k}| {\eta} ;{\delta}_{bU},{\delta}_{bU}) \Bigg]^{-1} \sum_{k^{'},u^{'}
 } \sum_{k^{},u^{}}   \bm{F}_{{{y} }}({y}_{bu,k}| {\eta} ;{\delta}_{bU},{\tau}_{bu,k}) \bm{F}_{{{y} }}({y}_{bu,k}| {\eta} ;{\delta}_{bU},{\tau}_{bu^{'},k^{'}}) \nabla_{\bm{\Phi}_{U}} \tau_{bu,k^{}} \nabla_{\check{\bm{v}}_{b,0}}^{\mathrm{T}} \tau_{bu^{'},k^{'}}  } 
 \\ &
= \medmath{ -\frac{\Delta_{t}}{c} \sum_{k^{},u^{}k^{'},u^{'}} \underset{bu^{},k^{}}{\operatorname{SNR}} \underset{bu^{'},k^{'}}{\operatorname{SNR}} } \medmath{   \nabla_{\bm{\Phi}_{U}} \tau_{bu,k^{}}   (k^{'}) {\bm{\Delta}_{bu^{'},k^{'}}^{\mathrm{T}}} \omega_{bU,k} \omega_{bU,k^{'}}}    \medmath{   \left(\sum_{u,k} \underset{bu,k}{\operatorname{SNR}} \omega_{bU,k}\right)^{\mathrm{-1}}} 
\end{split}
\end{align}
\end{figure*}

\subsubsection{Proof of Lemma \ref{lemma:information_loss_FIM_3D_bth_position_3D_bth_position}}
\label{Appendix_lemma_information_loss_FIM_3D_bth_position_3D_bth_position}
Using the definition of the information loss terms in Definition \ref{definition_EFIM}, we get the first equality (\ref{equ_lemma:information_loss_FIM_3D_bth_position_3D_bth_position_1}) and after substituting the appropriate FIM for channel parameters, we get the second equality.

\begin{figure*}
\begin{align}
\begin{split}
\label{equ_lemma:information_loss_FIM_3D_bth_position_3D_bth_position_1}
&\medmath{\bm{G}_{{{y} }}(\bm{y}_{}| \bm{\eta} ;\check{\bm{p}}_{b,0},\check{\bm{p}}_{b,0})  }= \\   &\medmath{\Bigg[
 \Bigg[\sum_{k^{},u^{}} \bm{F}_{{{y} }}({y}_{bu,k}| {\eta} ;{\delta}_{bU},{\delta}_{bU}) \Bigg]^{-1} \sum_{k^{'},u^{'}
 } \sum_{k^{},u^{}}   \bm{F}_{{{y} }}({y}_{bu,k}| {\eta} ;{\delta}_{bU},{\tau}_{bu,k}) \bm{F}_{{{y} }}({y}_{bu,k}| {\eta} ;{\delta}_{bU},{\tau}_{bu^{'},k^{'}}) \nabla_{\check{\bm{p}}_{b,0}} \tau_{bu,k^{}} \nabla_{\check{\bm{p}}_{b,0}}^{\mathrm{T}} \tau_{bu^{'},k^{'}}  } \\ &+  \medmath{\Bigg[ \sum_{k,u}  \bm{F}_{{{y} }}({y}_{bu,k}| {\eta} ;{\epsilon}_{bU},{\epsilon}_{bU})\Bigg]^{-1}\sum_{k^{'},u^{'}}\sum_{k^{},u^{}}   \bm{F}_{{{y} }}({y}_{bu,k}| {\eta} ;{\epsilon}_{bU},{\nu}_{bU,k}) \bm{F}_{{{y} }}({y}_{bu,k}| {\eta} ;{\epsilon}_{bU},{\nu}_{bU,k^{'}}) \nabla_{\check{\bm{p}}_{b,0}} \nu_{bU,k^{}} \nabla_{\check{\bm{p}}_{b,0}}^{\mathrm{T}} \nu_{bU,k^{'}} }
 \\ &\medmath{+\Bigg[
 \Bigg[\sum_{q,k^{}} \bm{F}_{{{y} }}({y}_{bq,k}| {\eta} ;{\delta}_{bQ},{\delta}_{bQ}) \Bigg]^{-1}  \sum_{q^{'},k^{'}}\sum_{q,k^{}}   \bm{F}_{{{y} }}({y}_{bq,k}| {\eta} ;{\delta}_{bQ},{\tau}_{bq,k}) \bm{F}_{{{y} }}({y}_{bq,k}| {\eta} ;{\delta}_{bQ},{\tau}_{bq^{'},k^{'}}) \nabla_{\check{\bm{p}}_{b,0}} \tau_{bq,k^{}} \nabla_{\check{\bm{p}}_{b,0}}^{\mathrm{T}} \tau_{bq^{'},k^{'}}  } \\ & \medmath{+  \Bigg[ \sum_{q,k}  \bm{F}_{{{y} }}({y}_{bq,k}| {\eta} ;{\epsilon}_{bQ},{\epsilon}_{bQ})\Bigg]^{-1}\sum_{q^{'},k^{'}}\sum_{q,k^{}}   \bm{F}_{{{y} }}({y}_{bq,k}| {\eta} ;{\epsilon}_{bQ},{\nu}_{bq,k}) \bm{F}_{{{y} }}({y}_{bq,k}| {\eta} ;{\epsilon}_{bQ},{\nu}_{bq^{'},k^{'}}) \nabla_{\check{\bm{p}}_{b,0}} \nu_{bq,k^{}} \nabla_{\check{\bm{p}}_{b,0}}^{\mathrm{T}} \nu_{bq^{'},k^{'}}}
 \\ &
 = \medmath{  \norm{\sum_{k^{},u^{}} \underset{bu^{},k^{}}{\operatorname{SNR}}\bm{\Delta}_{bu^{},k^{}}^{\mathrm{T}} \frac{ \omega_{bU,k}}{c} }}^{2}    \medmath{   \left(\sum_{u,k} \underset{bu,k}{\operatorname{SNR}} \omega_{bU,k}\right)^{\mathrm{-1}}  }    +
\medmath{   \norm{\sum_{k^{},q^{}} \underset{bq^{},k^{}}{\operatorname{SNR}}\bm{\Delta}_{bq^{},k^{}}^{\mathrm{T}} \frac{ \omega_{bq,k}}{c} }}^{2}    \medmath{   \left(\sum_{q,k} \underset{bu,k}{\operatorname{SNR}} \omega_{bq,k}\right)^{\mathrm{-1}}  } \\&+  
\medmath{\norm{{\sum_{u^{},k^{} } \underset{bu^{},k^{}}{\operatorname{SNR}} \; \; \nabla_{\check{\bm{p}}_{b,0}}^{\mathrm{T}} \nu_{bU,k^{}}   }  \frac{(f_{c}^{}) (\alpha_{obu,k^{}}^{2})}{2} }^{2}  \left(\sum_{u,k} \frac{\underset{bu,k}{\operatorname{SNR}}  \alpha_{obu,k}^{2}}{2}\right)^{-1}}  + \medmath{\norm{{\sum_{q^{},k^{} } \underset{bq^{},k^{}}{\operatorname{SNR}} \; \; \nabla_{\check{\bm{p}}_{b,0}}^{\mathrm{T}} \nu_{bq,k^{}}   }  \frac{(f_{c}^{}) (\alpha_{obq,k^{}}^{2})}{2} }^{2}  \left(\sum_{q,k} \frac{\underset{bu,k}{\operatorname{SNR}}  \alpha_{obq,k}^{2}}{2}\right)^{-1}} 
\end{split}
\end{align}
\end{figure*}

\subsubsection{Proof of Lemma \ref{lemma:information_loss_FIM_3D_bth_position_3D_bth_velocity_uncertainty}}
\label{Appendix_lemma_information_loss_FIM_3D_bth_position_3D_bth_veloctiy}
Using the definition of the information loss terms in Definition \ref{definition_EFIM}, we get the first equality (\ref{equ_lemma:information_loss_FIM_3D_bth_position_3D_bth_veloctiy_1}) and after substituting the appropriate FIM for channel parameters, we get the second equality.

\begin{figure*}
\begin{align}
\begin{split}
\label{equ_lemma:information_loss_FIM_3D_bth_position_3D_bth_veloctiy_1}
&\medmath{\bm{G}_{{{y} }}(\bm{y}_{}| \bm{\eta} ;\check{\bm{p}}_{b,0},\check{\bm{v}}_{b,0})  }= \\   &\medmath{\Bigg[
 \Bigg[\sum_{k^{},u^{}} \bm{F}_{{{y} }}({y}_{bu,k}| {\eta} ;{\delta}_{bU},{\delta}_{bU}) \Bigg]^{-1} \sum_{k^{'},u^{'}
 } \sum_{k^{},u^{}}   \bm{F}_{{{y} }}({y}_{bu,k}| {\eta} ;{\delta}_{bU},{\tau}_{bu,k}) \bm{F}_{{{y} }}({y}_{bu,k}| {\eta} ;{\delta}_{bU},{\tau}_{bu^{'},k^{'}}) \nabla_{\check{\bm{p}}_{b,0}} \tau_{bu,k^{}} \nabla_{\check{\bm{v}}_{b,0}}^{\mathrm{T}} \tau_{bu^{'},k^{'}}  } \\ &+  \medmath{\Bigg[ \sum_{k,u}  \bm{F}_{{{y} }}({y}_{bu,k}| {\eta} ;{\epsilon}_{bU},{\epsilon}_{bU})\Bigg]^{-1}\sum_{k^{'},u^{'}}\sum_{k^{},u^{}}   \bm{F}_{{{y} }}({y}_{bu,k}| {\eta} ;{\epsilon}_{bU},{\nu}_{bU,k}) \bm{F}_{{{y} }}({y}_{bu,k}| {\eta} ;{\epsilon}_{bU},{\nu}_{bU,k^{'}}) \nabla_{\check{\bm{p}}_{b,0}} \nu_{bU,k^{}} \nabla_{\check{\bm{v}}_{b,0}}^{\mathrm{T}} \nu_{bU,k^{'}} }
 \\ &\medmath{+\Bigg[
 \Bigg[\sum_{q,k^{}} \bm{F}_{{{y} }}({y}_{bq,k}| {\eta} ;{\delta}_{bQ},{\delta}_{bQ}) \Bigg]^{-1}  \sum_{q^{'},k^{'}}\sum_{q,k^{}}   \bm{F}_{{{y} }}({y}_{bq,k}| {\eta} ;{\delta}_{bQ},{\tau}_{bq,k}) \bm{F}_{{{y} }}({y}_{bq,k}| {\eta} ;{\delta}_{bQ},{\tau}_{bq^{'},k^{'}}) \nabla_{\check{\bm{p}}_{b,0}} \tau_{bq,k^{}} \nabla_{\check{\bm{v}}_{b,0}}^{\mathrm{T}} \tau_{bq^{'},k^{'}}  } \\ & \medmath{+  \Bigg[ \sum_{q,k}  \bm{F}_{{{y} }}({y}_{bq,k}| {\eta} ;{\epsilon}_{bQ},{\epsilon}_{bQ})\Bigg]^{-1}\sum_{q^{'},k^{'}}\sum_{q,k^{}}   \bm{F}_{{{y} }}({y}_{bq,k}| {\eta} ;{\epsilon}_{bQ},{\nu}_{bq,k}) \bm{F}_{{{y} }}({y}_{bq,k}| {\eta} ;{\epsilon}_{bQ},{\nu}_{bq^{'},k^{'}}) \nabla_{\check{\bm{p}}_{b,0}} \nu_{bq,k^{}} \nabla_{\check{\bm{v}}_{b,0}}^{\mathrm{T}} \nu_{bq^{'},k^{'}}}
 \\ &
 = \medmath{ \frac{\Delta_{t}}{c^2} \sum_{k^{},u^{}k^{'},u^{'}} \underset{bu^{},k^{}}{\operatorname{SNR}} \underset{bu^{'},k^{'}}{\operatorname{SNR}} } \medmath{   \bm{\Delta}_{bu^{},k^{}}  (k^{'}) {\bm{\Delta}_{bu^{'},k^{'}}^{\mathrm{T}}} \omega_{bU,k} \omega_{bU,k^{'}}}    \medmath{   \left(\sum_{u,k} \underset{bu,k}{\operatorname{SNR}} \omega_{bU,k}\right)^{\mathrm{-1}}}  \\ & +  \medmath{ \frac{\Delta_{t}}{c^2} \sum_{k^{},q^{}k^{'},q^{'}} \underset{bq^{},k^{}}{\operatorname{SNR}} \underset{bq^{'},k^{'}}{\operatorname{SNR}} } \medmath{   \bm{\Delta}_{bq^{},k^{}}  (k^{'}) {\bm{\Delta}_{bq^{'},k^{'}}^{\mathrm{T}}} \omega_{bq,k} \omega_{bq^{'},k^{'}}}    \medmath{   \left(\sum_{q,k} \underset{bq,k}{\operatorname{SNR}} \omega_{bq,k}\right)^{\mathrm{-1}}} \\ &+ \frac{1}{c} \medmath{  \sum_{k^{},u^{}k^{'},u^{'}} \underset{bu^{},k^{}}{\operatorname{SNR}} \underset{bu^{'},k^{'}}{\operatorname{SNR}}  }  \medmath{{ \nabla_{\check{\bm{p}}_{b,0}} \nu_{bU,k^{}} \bm{\Delta}_{bU,k^{'}}^{\mathrm{T}}   }  \frac{(f_{c}^{2}) (\alpha_{obu,k^{}}^{2}\alpha_{obu^{'},k^{'}}^{2})}{4}   \left(\sum_{u,k} \frac{\underset{b^{}u,k^{}}{\operatorname{SNR}}  \alpha_{obu,k}^{2}}{2}\right)^{-1}} \\ & + \frac{1}{c}\medmath{  \sum_{k^{},q^{}k^{'},q^{'}} \underset{bq^{},k^{}}{\operatorname{SNR}} \underset{bq^{'},k^{'}}{\operatorname{SNR}}  }  \medmath{{ \nabla_{\check{\bm{p}}_{b,0}} \nu_{bq,k^{}} \bm{\Delta}_{bq^{'},k^{'}}^{\mathrm{T}}   }  \frac{(f_{c}^{2}) (\alpha_{obu,k^{}}^{2}\alpha_{obq^{'},k^{'}}^{2})}{4}   \left(\sum_{q,k} \frac{\underset{b^{}q,k^{}}{\operatorname{SNR}}  \alpha_{obq,k}^{2}}{2}\right)^{-1}} 
\end{split}
\end{align}
\end{figure*}

\subsubsection{Proof of Lemma \ref{lemma:information_loss_FIM_3D_bth_velocity_3D_bth_velocity}}
\label{Appendix_lemma_information_loss_FIM_3D_bth_velocity_3D_bth_velocity}
Using the definition of the information loss terms in Definition \ref{definition_EFIM}, we get the first equality (\ref{equ_lemma:information_loss_FIM_3D_bth_velocity_3D_bth_velocity_1}) and after substituting the appropriate FIM for channel parameters, we get the second equality.

\begin{figure*}
\begin{align}
\begin{split}
\label{equ_lemma:information_loss_FIM_3D_bth_velocity_3D_bth_velocity_1}
&\medmath{\bm{G}_{{{y} }}(\bm{y}_{}| \bm{\eta} ;\check{\bm{v}}_{b,0},\check{\bm{v}}_{b,0})  }= \\   &\medmath{\Bigg[
 \Bigg[\sum_{k^{},u^{}} \bm{F}_{{{y} }}({y}_{bu,k}| {\eta} ;{\delta}_{bU},{\delta}_{bU}) \Bigg]^{-1} \sum_{k^{'},u^{'}
 } \sum_{k^{},u^{}}   \bm{F}_{{{y} }}({y}_{bu,k}| {\eta} ;{\delta}_{bU},{\tau}_{bu,k}) \bm{F}_{{{y} }}({y}_{bu,k}| {\eta} ;{\delta}_{bU},{\tau}_{bu^{'},k^{'}}) \nabla_{\check{\bm{v}}_{b,0}} \tau_{bu,k^{}} \nabla_{\check{\bm{v}}_{b,0}}^{\mathrm{T}} \tau_{bu^{'},k^{'}}  } \\ &+  \medmath{\Bigg[ \sum_{k,u}  \bm{F}_{{{y} }}({y}_{bu,k}| {\eta} ;{\epsilon}_{bU},{\epsilon}_{bU})\Bigg]^{-1}\sum_{k^{'},u^{'}}\sum_{k^{},u^{}}   \bm{F}_{{{y} }}({y}_{bu,k}| {\eta} ;{\epsilon}_{bU},{\nu}_{bU,k}) \bm{F}_{{{y} }}({y}_{bu,k}| {\eta} ;{\epsilon}_{bU},{\nu}_{bU,k^{'}}) \nabla_{\check{\bm{v}}_{b,0}} \nu_{bU,k^{}} \nabla_{\check{\bm{v}}_{b,0}}^{\mathrm{T}} \nu_{bU,k^{'}} }
 \\ &\medmath{+\Bigg[
 \Bigg[\sum_{q,k^{}} \bm{F}_{{{y} }}({y}_{bq,k}| {\eta} ;{\delta}_{bQ},{\delta}_{bQ}) \Bigg]^{-1}  \sum_{q^{'},k^{'}}\sum_{q,k^{}}   \bm{F}_{{{y} }}({y}_{bq,k}| {\eta} ;{\delta}_{bQ},{\tau}_{bq,k}) \bm{F}_{{{y} }}({y}_{bq,k}| {\eta} ;{\delta}_{bQ},{\tau}_{bq^{'},k^{'}}) \nabla_{\check{\bm{v}}_{b,0}} \tau_{bq,k^{}} \nabla_{\check{\bm{v}}_{b,0}}^{\mathrm{T}} \tau_{bq^{'},k^{'}}  } \\ & \medmath{+  \Bigg[ \sum_{q,k}  \bm{F}_{{{y} }}({y}_{bq,k}| {\eta} ;{\epsilon}_{bQ},{\epsilon}_{bQ})\Bigg]^{-1}\sum_{q^{'},k^{'}}\sum_{q,k^{}}   \bm{F}_{{{y} }}({y}_{bq,k}| {\eta} ;{\epsilon}_{bQ},{\nu}_{bq,k}) \bm{F}_{{{y} }}({y}_{bq,k}| {\eta} ;{\epsilon}_{bQ},{\nu}_{bq^{'},k^{'}}) \nabla_{\check{\bm{v}}_{b,0}} \nu_{bq,k^{}} \nabla_{\check{\bm{v}}_{b,0}}^{\mathrm{T}} \nu_{bq^{'},k^{'}}}
 \\ &
 = \medmath{   \norm{\sum_{k^{},u^{}} \underset{bu^{},k^{}}{\operatorname{SNR}}\bm{\Delta}_{bu^{},k^{}}^{\mathrm{T}} \frac{ (k) \Delta_{t}\omega_{bU,k}}{c} }}^{2}    \medmath{   \left(\sum_{u,k} \underset{bu,k}{\operatorname{SNR}} \omega_{bU,k}\right)^{\mathrm{-1}}  }  +
 \medmath{   \norm{\sum_{k^{},q^{}} \underset{bq^{},k^{}}{\operatorname{SNR}}\bm{\Delta}_{bq^{},k^{}}^{\mathrm{T}} \frac{ (k) \Delta_{t}\omega_{bq,k}}{c} }}^{2}    \medmath{   \left(\sum_{q,k} \underset{bq,k}{\operatorname{SNR}} \omega_{bq,k}\right)^{\mathrm{-1}}  }\\& +  
\medmath{\norm{{\sum_{u^{},k^{} } \underset{bu^{},k^{}}{\operatorname{SNR}} \; \; \bm{\Delta}_{bU^{},k^{}}^{\mathrm{T}}   }  \frac{(f_{c}^{}) (\alpha_{obu,k^{}}^{2})}{2} }^{2}  \left(\sum_{u,k} \frac{\underset{bu,k}{\operatorname{SNR}}  \alpha_{obu,k}^{2}}{2}\right)^{-1}} 
+ \medmath{\norm{{\sum_{q^{},k^{} } \underset{bq^{},k^{}}{\operatorname{SNR}} \; \; \bm{\Delta}_{bq^{},k^{}}^{\mathrm{T}}   }  \frac{(f_{c}^{}) (\alpha_{obq,k^{}}^{2})}{2} }^{2}  \left(\sum_{q,k} \frac{\underset{bq,k}{\operatorname{SNR}}  \alpha_{obq,k}^{2}}{2}\right)^{-1}} 
\end{split}
\end{align}
\end{figure*}

%\clearpage
{
\bibliographystyle{IEEEtran}
\bibliography{refs}
}
% that's all folks
\end{document}